\definecolor{cblRed}{RGB}{154,0,3}
\definecolor{cblBlue}{RGB}{41,59,107}
\definecolor{cblBlueLt}{RGB}{0,144,156}
\definecolor{cblYellow}{RGB}{255,176,0}
\definecolor{cblPink}{RGB}{216,0,106}
\definecolor{cblPurple}{RGB}{25,0,138}
\definecolor{cblPurpleLt}{RGB}{193,99,228}
\definecolor{cblOrange}{RGB}{189,74,0}
\definecolor{cblGreen}{RGB}{47,80,18}
\colorlet{cChanClr}{cblRed}
\colorlet{dChanClr}{cblBlue}
\colorlet{aChanClr}{cblGreen}
\colorlet{bChanClr}{cblPurpleLt}
\newcommand{\etal}{\emph{et al.}\xspace}
\newcommand{\lGV}{\ensuremath{\lambda^{\mkern-5mu\text{sess}}}\xspace}
\newcommand\ourGV{CGV\xspace}
\newcommand\ourGVlong{Concurrent GV\xspace}
\newcommand{\sff}[1]{\relax\ifmmode\mathsf{#1}\else\textsf{#1}\fi}
\newcommand{\mathsc}[1]{\text{\normalfont\scshape#1}}
\newcommand{\scc}[1]{\relax\ifmmode\mathsc{#1}\else\textsc{#1}\fi}
\newcommand{\mbb}[1]{\mathbb{#1}}
\newcommand{\<}{\langle}
\renewcommand{\>}{\rangle}
\newcommand{\sepr}{\;\mbox{\large{$\mid$}}\;}
\newcommand{\redd}{\mathbin{\longrightarrow}}
\newcommand{\nredd}{{\centernot\longrightarrow}}
\newcommand{\rred}[1]{\shortrightarrow_{#1}}
\newcommand{\fn}{\mathrm{fn}}
\newcommand{\an}{\mathrm{an}}
\newcommand{\live}{\mathrm{live}}
\newcommand{\ol}[1]{\overline{#1}}
\newcommand{\rott}[1]{\mathpalette\rot{#1}}
\newcommand{\rot}[2]{\rotatebox[origin=c]{180}{$#1{#2}$}}
\newcommand{\pr}{\mathrm{pr}}
\newcommand{\figref}[1]{Fig.~\labelcref{#1}}
\let\oprod\prod
\renewcommand{\prod}{\mathchoice{\textstyle}{}{}{}{\oprod}}
\let\omax\max
\renewcommand{\max}{\mathchoice{\textstyle}{}{}{}{\omax}}
\newcommand{\puts}{\mathbin{\triangleleft}}
\renewcommand{\gets}{\mathbin{\triangleright}}
\newcommand{\call}[1]{\<#1\>}
\let\onu\nu
\renewcommand{\nu}[1]{(\mkern-1mu\bm{\onu} #1)}
\let\oprl\|
\renewcommand{\|}{\mathbin{|}}
\newcommand{\prl}{\mathbin{\oprl}}
\newcommand{\0}{\bm{0}}
\newcommand{\fwd}{\mathbin{\leftrightarrow}}
\newcommand{\subst}[1]{\{#1\}}
\newcommand{\tensor}{\ensuremath{\mathbin{\otimes}}}
\newcommand{\parr}{\mathbin{\rott{\&}}}
\newcommand{\pri}{\mathsf{o}}
\newcommand{\opri}{\kappa}
\newcommand{\1}{\bm{1}}
\newtheorem{theorem}{Theorem}[section]
\newtheorem{lemma}[theorem]{Lemma}
\newtheorem{proposition}[theorem]{Proposition}
\newtheorem{corollary}[theorem]{Corollary}
\newtheorem{definition}[theorem]{Definition}
\newtheorem{example}[theorem]{Example}
\newcommand{\sdot}{\mathbin{.}}
\newcommand{\dom}{\mathrm{dom}}
\newcommand{\term}[1]{\textcolor{cblPurple}{#1}}
\newcommand{\type}[1]{\textcolor{cblOrange}{#1}}
\newcommand{\mcl}[1]{\mathcal{#1}}
\newcommand{\lolli}{\mathbin{\multimap}}
\newcommand{\enc}[2]{{\color{cblRed}\left\llbracket\normalcolor #2 \color{cblRed}\right\rrbracket\normalcolor}{#1}}
\newcommand{\encc}[2]{{\color{cblRed}\llbracket\normalcolor \term{#2} \color{cblRed}\rrbracket\normalcolor}{#1}}
\newcommand{\enct}[1]{{\color{cblRed}\llbracket\normalcolor \type{#1} \color{cblRed}\rrbracket\normalcolor}}
\newcommand{\benc}[3]{{\color{cblRed}\left\llbracket\normalcolor #3 \color{cblRed}\right\rrbracket\normalcolor}_{#1}^{#2}}
\newcommand{\bencb}[3]{{\color{cblRed}\llbracket\normalcolor \term{#3} \color{cblRed}\rrbracket\normalcolor}_{#1}^{#2}}
\newcommand{\cmdtt}[2]{#1_\mathtt{#2}}
\newcommand{\equivtt}[1]{\cmdtt{\equiv}{#1}}
\newcommand{\reddtt}[1]{\cmdtt{\redd}{#1}}
\newcommand{\vdashtt}[1]{\cmdtt{\vdash}{#1}}
\newcommand{\equivM}{\equivtt{M}}
\newcommand{\equivC}{\equivtt{C}}
\newcommand{\reddM}{\reddtt{M}}
\newcommand{\reddC}{\reddtt{C}}
\newcommand{\reddL}{\reddtt{L}}
\newcommand{\nreddL}{\cmdtt{\nredd}{L}}
\newcommand{\vdashM}{\vdashtt{M}}
\newcommand{\vdashB}{\vdashtt{B}}
\newcommand{\vdashC}[1]{\vdashtt{C}^{\term{#1}}}
\newcommand{\nuf}[1]{(\overset{\leftrightarrow}{\bm{\onu}} #1)}
\newcommand{\ih}[1]{IH\textsubscript{#1}\xspace}
\newcommand{\dashes}{\vspace{-.5em}\hbox to \textwidth{\leaders\hbox to 3pt{\hss . \hss}\hfil}\smallskip}
\newcommand*\dashline{\rotatebox[origin=c]{90}{$\dabar@\dabar@\dabar@$}}
\newcommand{\nunil}{\nu{\_\,\_}}
\newcommand{\refwpage}[1]{\Cref{#1} on \Cpageref{#1}\xspace}
\newenvironment{tarr}[2][t]{\begin{array}[#1]{@{}#2@{}}}{\end{array}}
\newcommand{\encpropref}[1]{\Cref{l:transProps} (\labelcref{#1})}
  \DeclareSymbolFont{stix@largesymbols}{LS2}{stixex}{m}{n}
  \DeclareMathDelimiter{\lBrace}{\mathopen} {stix@largesymbols}{"E8}%
                                            {stix@largesymbols}{"0E}
  \DeclareMathDelimiter{\rBrace}{\mathclose}{stix@largesymbols}{"E9}%
                                            {stix@largesymbols}{"0F}
\newcommand{\bcont}[1]{\mathrm{bcont}_{#1}}
\newcommand{\vdashAst}{\vdash^{\mkern-3mu\ast}}
\newcommand{\xsub}[1]{{\lBrace #1 \rBrace}}
\newcommand{\bfr}[1]{[#1\>}
\newcommand{\main}{{\mathchoice{\scriptstyle}{\scriptstyle}{\scriptscriptstyle}{}\blacklozenge}}
\newcommand{\child}{{\mathchoice{\scriptstyle}{\scriptstyle}{\scriptscriptstyle}{}\lozenge}}
\newcommand{\rLab}[1]{{#1}}
\newcommand{\sccsm}[1]{{\scriptstyle\scc{#1}}}
\newcommand{\fwded}[1]{{\scriptstyle\langle}#1{\scriptstyle\rangle}}
\renewcommand{\implies}{\mathbin{\Rightarrow}}
\title{Asynchronous Functional Sessions: Cyclic and Concurrent \texorpdfstring{\\}{} (Extended Version\thanks{This is an extended version of our workshop paper~\cite{conf/express/vdHeuvelP22}, with technical details in the appendix.}\,\,)}
\author{
    Bas van den Heuvel \qquad\qquad Jorge A.\ P\'erez
    \institute{University of Groningen, The Netherlands}
    \email{\{b.van.den.heuvel, j.a.perez\} @ rug.nl}
}
\begin{document}
\maketitle

\begin{abstract}
We present \ourGVlong (\ourGV), a functional calculus with message-passing concurrency governed by session types.
With respect to prior calculi, \ourGV has increased support for concurrent evaluation and for cyclic network topologies.
The design of \ourGV draws on APCP, a session-typed asynchronous $\pi$-calculus developed in prior work.
Technical contributions are (i)~the syntax, semantics, and type system of \ourGV; (ii)~a correct translation of \ourGV into APCP; (iii) a technique for establishing deadlock-free \ourGV programs, by resorting to APCP's priority-based type system.
\end{abstract}

\section{Introduction}
\label{s:intro}

The goal of this paper is to introduce a new functional calculus with message-passing concurrency governed by linearity and session types.
Our work contributes to a research line initiated by Gay and Vasconcelos~\cite{journal/jfp/GayV10}, who proposed a functional calculus with sessions here referred to as \lGV; this line of work has received much recent attention thanks to Wadler's GV calculus~\cite{conf/icfp/Wadler12}, which is  a variation of \lGV.

Our new calculus is dubbed \ourGVlong (\ourGV);
with respect to previous work, it
presents three intertwined novelties: asynchronous (buffered) communication; a highly concurrent reduction strategy; and thread configurations with cyclic topologies.
The design of \ourGV rests upon a solid basis: an operationally correct translation into APCP (Asynchronous Priority-based Classical Processes), a session-typed $\pi$-calculus in which asynchronous processes communicate by forming cyclic  networks~\cite{report/vdHeuvelP21B}.


We discuss the salient features of \ourGV by example, using a simplified syntax.
As in \lGV, communication in \ourGV is asynchronous: send operations place their messages in buffers, and receive operations read the messages from these buffers.
Let us write $\term{\sff{send}~(u,x)}$ to denote the output of message $\term{u}$ along channel $\term{x}$, and $\term{\sff{recv}~y}$ to denote an input on $y$.
The following program expresses the parallel composition ($\term{\parallel}$) of two threads:
\[
    \term{
        \left.
            \begin{array}{@{}l@{}}
                \sff{let} \, x = \sff{send}~(u,x) \, \sff{in}
                \\
                \quad \sff{let} \, (v,y) = \sff{recv}~y \, \sff{in} \, ()
            \end{array} \qquad \middle|\!\middle| \qquad \sff{let} \, y = \underline{\sff{send}~(w,y)} \, \sff{in} \, ()
        \right.
    }
\]
In variants of \lGV with \emph{synchronous} communication,
such as GV and Kokke and Dardha's PGV~\cite{conf/forte/KokkeD21,report/KokkeD21},
this program is stuck: the send on $\term{y}$ (underlined, on the right) cannot synchronize with the receive on $\term{y}$ (on the left): it is blocked by the send on $\term{x}$ (on the left), and there is no receive on $\term{x}$.
In contrast, in \ourGV the send on $\term{x}$ can be buffered after which the communication on $\term{y}$ can take place.

In \ourGV, reduction  is  ``more concurrent'' than usual call-by-value or call-by-name strategies.
Consider the following  program:
\[
    \term{
        \left(
            \begin{array}{@{}l@{}l@{}}
                \lambda x \sdot {}
                & \sff{let} \, (u,y) = \sff{recv}~y \, \sff{in}
                \\
                & \quad \sff{let} \, x = \sff{send}~(u,x) \, \sff{in} ()
            \end{array}
        \right) \quad \big( \sff{send}~(v,z) \big)
    }
\]
In  \lGV, reduction is call-by-value and so the function on $\term{x}$ can only be applied on a value.
However, the function's parameter (send on $\term{z}$) is not a value, so it needs to be evaluated before the function on $\term{x}$ can be applied.
Hence, this program can only be evaluated in one order: first the send on $\term{z}$, then the receive on $\term{y}$.
In contrast, the semantics of \ourGV evaluates a function and its parameters \emph{concurrently}: the send on $\term{z}$ and the receive on $\term{y}$ can be evaluated in any order.
Note that asynchrony plays no role here: both buffering a message and synchronous communication entail a reduction in the function's parameter.

The third novelty is cyclic thread configurations: threads can be connected by channels to form cyclic networks.
Consider the following program:
\[
    \term{
        \left.
            \begin{array}{@{}l@{}}
                \sff{let} \, (u,x) = \sff{recv}~x \, \sff{in}
                \\
                \quad \sff{let} \, y = \sff{send}~(u,y) \, \sff{in} \, ()
            \end{array} \qquad \middle|\!\middle| \qquad \begin{array}{@{}l@{}}
                \sff{let} \, x = \sff{send}~(v,x) \, \sff{in}
                \\
                \quad \sff{let} \, (w,y) = \sff{recv}~y \, \sff{in} \, ()
            \end{array}
        \right.
    }
\]
Here we have two threads connected on channels $\term{x}$ and $\term{y}$, thus forming a cyclic thread configuration.
Clearly, this program is deadlock-free.
In \lGV, the program is well-typed, but there is no deadlock-freedom guarantee: the type system of \lGV admits deadlocked cyclic thread configurations.
In GV and Fowler~\etal's EGV~\mbox{\cite{conf/popl/FowlerLMD19}} (an extension of Fowler's AGV~\mbox{\cite{thesis/Fowler19}}) there is a deadlock-freedom guarantee for well-typed programs; however, their type systems only support \emph{tree-shaped} thread configurations---this limitation is studied in~\cite{conf/express/DardhaP15,journal/jlamp/DardhaP22}. Hence, the program above is not well-typed in GV and EGV.

These  novelties  are \emph{intertwined}, in the following sense.
Asynchronous communication reduces the synchronization points in programs (as output-like operations are non-blocking), therefore increasing concurrent  evaluation.
In turn, reduced synchronization points can  streamline verification techniques for deadlock-freedom based on \emph{priorities}~\cite{conf/concur/Kobayashi06,conf/csl/Padovani14,conf/forte/PadovaniN15,conf/fossacs/DardhaG18}, which unlock the analysis of process networks with cyclic topologies.
Indeed, in an asynchronous setting only input-like operations require priorities.

We endow \ourGV with a type system with functional types and session types; we opted for a design in which well-typed terms enjoy subject reduction / type preservation but not deadlock-freedom.
To validate our semantic design and attain the three  novelties motivated above, we resort to APCP.
In our developments, APCP operates as a ``low-level'' reference programming model.
We give a typed translation of \ourGV into APCP, which satisfies strong correctness properties, in the sense of  Gorla~\cite{journal/ic/Gorla10}. In particular, it enjoys operational correspondence, which provides a significant sanity check to justify our key design decisions in \ourGV's operational semantics.
Interestingly, using our correct translation and the deadlock-freedom guarantees for well-typed processes in APCP, we obtain a   technique for transferring the deadlock-freedom property to \ourGV programs.
That is,
given a \ourGV program $\term{C}$,
we prove that if the APCP translation of $\term{C}$  is typable (and hence, deadlock-free), then $\term{C}$ itself is deadlock-free.
This result thus delineates a class of deadlock-free \ourGV programs that includes cyclic thread configurations.

In summary, this paper presents the following technical contributions:
(1)~\ourGV, a new functional calculus with session-based asynchronous concurrency;
(2)~A typed translation of \ourGV into APCP, which is proven to satisfy well-studied encodability criteria;
(3)~A transference result for the deadlock-freedom property from APCP to \ourGV programs.
\ifappendix Appendices contain omitted technical details.
\else An extended version contains omitted technical details~\cite{report/vdHeuvelP22}.
\fi


\section{\ourGVlong}
\label{s:ourGV}


\subsection{Syntax and Semantics}
\label{ss:ourGVSyntaxSemantics}

\begin{figure}[t!]
    \begin{mdframed}
        Terms ($\term{L},\term{M},\term{N}$):
        \begin{align*}
            \term{x} &\quad \text{(variable)}
            &
            \term{\sff{new}} &\quad \text{(create new channel)}
            \\
            \term{()} &\quad \text{(unit value)}
            &
            \term{\sff{spawn}~M} &\quad \text{(execute pair $\term{M}$ in parallel)}
            \\
            \term{\lambda x \sdot M} &\quad \text{(abstraction)}
            &
            \term{\sff{send}~(M,N)} &\quad \text{(send $\term{M}$ along $\term{N}$)}
            \\
            \term{M~N} &\quad \text{(application)}
            &
            \term{\sff{recv}~M} &\quad \text{(receive along $\term{M}$)}
            \\
            \term{(M,N)} &\quad \text{(pair construction)}
            &
            \term{\sff{select}\,\ell\,M} &\quad \text{(select label $\ell$ along $\term{M}$)}
            \\
            \term{\sff{let}\,(x,y)=M\,\sff{in}\,N} &\quad \text{(pair deconstruction)}
            &
            \term{\sff{case}\,M\,\sff{of}\,\{i:M\}_{i \in I}} &\quad \text{(offer labels in $I$ along $\term{M}$)}
        \end{align*}

        \dashes

        Runtime terms ($\term{\mbb{L}},\term{\mbb{M}},\term{\mbb{N}}$) and reduction contexts ($\term{\mcl{R}}$):
        \begin{align*}
            \term{\mbb{L}},\term{\mbb{M}},\term{\mbb{N}} ::= {}
            &~ \ldots \sepr \term{\mbb{M}\xsub{ \mbb{N}/x }} \sepr \term{\sff{send}'(\mbb{M},\mbb{N})}
            \\
            \term{\mcl{R}} ::= {}
            &~ \term{[]} \sepr \term{\mcl{R}~\mbb{M}} \sepr \term{\sff{spawn}~\mcl{R} \sepr \term{\sff{send}~\mcl{R}}} \sepr \term{\sff{recv}~\mcl{R}} \sepr \term{\sff{let}\, (x,y) = \mcl{R}\, \sff{in}\, \mbb{M}}
            \\
            \sepr
            &~ \term{\sff{select}\, \ell\, \mcl{R}} \sepr \term{\sff{case}\, \mcl{R}\, \sff{of}\, \{i: \mbb{M}\}_{i \in I}} \sepr \term{\mcl{R}\xsub{ \mbb{M}/x }} \sepr \term{\mbb{M}\xsub{ \mcl{R}/x }} \sepr \term{\sff{send}'(\mbb{M},\mcl{R})}
        \end{align*}

        \dashes

        Structural congruence for terms ($\equivM$) and  term reduction ($\reddM$):
        \begin{align*}
            & \scc{SC-SubExt}
            &
            \term{x} \notin \fn(\term{\mcl{R}}) \implies
            \term{(\mcl{R}[\mbb{M}])\xsub{ \mbb{N}/x }}
            &\equivM \term{\mcl{R}[\mbb{M}\xsub{ \mbb{N}/x }]}
%
%
\\
            & \scc{E-Lam}
            &
            \term{(\lambda x \sdot \mbb{M})~\mbb{N}}
            &\reddM \term{\mbb{M}\xsub{ \mbb{N}/x }}
            \\
            & \scc{E-Pair}
            &
            \term{\sff{let}\, (x,y) = (\mbb{M}_1,\mbb{M}_2)\, \sff{in}\, \mbb{N}}
            &\reddM \term{\mbb{N}\xsub{ \mbb{M}_1/x,\mbb{M}_2/y }}
            \\
            & \scc{E-SubstName}
            &
            \term{\mbb{M}\xsub{ y/x }}
            &\reddM \term{\mbb{M}\{y/x\}}
            \\
            & \scc{E-NameSubst}
            &
            \term{x\xsub{ \mbb{M}/x }}
            &\reddM \term{\mbb{M}}
            \\
            & \scc{E-Send}
            &
            \term{\sff{send}~(\mbb{M},\mbb{N})}
            &\reddM \term{\sff{send}'(\mbb{M},\mbb{N})}
            \\
            & \scc{E-Lift}
            &
            \term{\mbb{M}} \reddM \term{\mbb{N}} \implies \term{\mcl{R}[\mbb{M}]}
            &\reddM \term{\mcl{R}[\mbb{N}]}
            \\
            & \scc{E-LiftSC}
            &
            \term{\mbb{M}} \equivM \term{\mbb{M}'} \wedge \term{\mbb{M}'} \reddM \term{\mbb{N}'} \wedge \term{\mbb{N}'} \equivM \term{\mbb{N}} \implies
            \term{\mbb{M}}
            &\reddM \term{\mbb{N}}
        \end{align*}
    \end{mdframed}
    \caption{The \ourGV term language.}\label{f:gvTerms}
\end{figure}

\begin{figure}[t!]
    \begin{mdframed}
        Markers ($\term{\phi}$), messages ($\term{m},\term{n}$), configurations ($\term{C},\term{D},\term{E}$), thread ($\term{\mcl{F}}$) and configuration ($\term{\mcl{G}}$) contexts:
        \begin{align*}
            \term{\phi} ::= {}
            &~ \term{\main} \sepr \term{\child}
            &
            \term{m},\term{n} ::= {}
            &~ \term{\mbb{M}} \sepr \term{\ell}
            \\
            \term{C},\term{D},\term{E} ::= {}
            &~ \term{\phi\, \mbb{M}} \sepr \term{C \prl D} \sepr \term{\nu{x\bfr{\vec{m}}y}C} \sepr \term{C\xsub{ \mbb{M}/x }}
            \\
            \term{\mcl{F}} ::= {}
            &~ \term{\phi\, \mcl{R}} \sepr \term{C\xsub{ \mcl{R}/x }}
            &
            \term{\mcl{G}} ::= {}
            &~ \term{[]} \sepr \term{\mcl{G} \prl C} \sepr \term{\nu{x\bfr{\vec{m}}y}\mcl{G}} \sepr \term{\mcl{G}\xsub{ \mbb{M}/x }}
        \end{align*}

        \dashes

        Structural congruence for configurations ($\equivC$):
        \begin{align*}
            & \scc{SC-TermSC}
            &
            \term{\mbb{M}} \equivM \term{\mbb{M}'} \implies
            \term{\phi\, \mbb{M}}
            &\equivC \term{\phi\, \mbb{M}'}
            \\
            & \scc{SC-ResSwap}
            &
            \term{\nu{x\bfr{\epsilon}y}C}
            &\equivC \term{\nu{y\bfr{\epsilon}x}C}
            \\
            & \scc{SC-ResComm}
            &
            \term{\nu{x\bfr{\vec{m}}y}\nu{z\bfr{\vec{n}}w}C}
            &\equivC \term{\nu{z\bfr{\vec{n}}w}\nu{x\bfr{\vec{m}}y}C}
            \\
            & \scc{SC-ResExt}
            &
            \term{x},\term{y} \notin \fn(\term{C}) \implies
            \term{\nu{x\bfr{\vec{m}}y}(C \prl D)}
            &\equivC \term{C \prl \nu{x\bfr{\vec{m}}y}D}
            \\
            & \scc{SC-ResNil}
            &
            \term{x},\term{y} \notin \fn(\term{C}) \implies
            \term{\nu{x\bfr{\epsilon}y}C}
            &\equivC \term{C}
            \\
            & \scc{SC-Send'}
            &
            \term{\nu{x\bfr{\vec{m}}y}(\hat{\mcl{F}}[\sff{send}'(\mbb{M},x)] \prl C)}
            &\equivC \term{\nu{x\bfr{\mbb{M},\vec{m}}y}(\hat{\mcl{F}}[x] \prl C)}
            \\
            & \scc{SC-Select}
            &
            \term{\nu{x\bfr{\vec{m}}y}(\mcl{F}[\sff{select}\, \ell\, x] \prl C)}
            &\equivC \term{\nu{x\bfr{\ell,\vec{m}}y}(\mcl{F}[x] \prl C)}
            \\
            & \scc{SC-ParNil}
            &
            \term{C \prl \child\, ()}
            &\equivC \term{C}
            \\
            & \scc{SC-ParComm}
            &
            \term{C \prl D}
            &\equivC \term{D \prl C}
            \\
            & \scc{SC-ParAssoc}
            &
            \term{C \prl (D \prl E)}
            &\equivC \term{(C \prl D) \prl E}
            \\
            & \scc{SC-ConfSubst}
            &
            \term{\phi\,(\mbb{M}\xsub{ \mbb{N}/x })}
            &\equivC \term{(\phi\,\mbb{M})\xsub{ \mbb{N}/x }}
            \\
            & \scc{SC-ConfSubstExt}
            &
            \term{x} \notin \fn(\term{\mcl{G}}) \implies
            \term{(\mcl{G}[C])\xsub{ \mbb{M}/x }}
            &\equivC \term{\mcl{G}[C\xsub{ \mbb{M}/x }]}
        \end{align*}

        \dashes

        Configuration reduction ($\reddC$):
        \begin{align*}
            & \scc{E-New}
            &
            \term{\mcl{F}[\sff{new}]}
            &\reddC \term{\nu{x\bfr{\epsilon}y}(\mcl{F}[(x,y)])}
            \\
            & \scc{E-Spawn}
            &
            \term{\hat{\mcl{F}}[\sff{spawn}~(\mbb{M},\mbb{N})]}
            &\reddC \term{\hat{\mcl{F}}[\mbb{N}] \prl \child\,\mbb{M}}
            \\
            & \scc{E-Recv}
            &
            \term{\nu{x\bfr{\vec{m},\mbb{M}}y}(\hat{\mcl{F}}[\sff{recv}~y] \prl C)}
            &\reddC \term{\nu{x\bfr{\vec{m}}y}(\hat{\mcl{F}}[(\mbb{M},y)] \prl C)}
            \\
            & \scc{E-Case}
            &
            j \in I \implies
            \term{\nu{x\bfr{\vec{m},j}y}(\mcl{F}[\sff{case}\, y\, \sff{of}\, \{i:\mbb{M}_i\}_{i \in I}] \prl C)}
            &\reddC \term{\nu{x\bfr{\vec{m}}y}(\mcl{F}[\mbb{M}_j~y] \prl C)}
            \\
            & \scc{E-LiftC}
            &
            \term{C} \reddC \term{C'} \implies
            \term{\mcl{G}[C]}
            &\reddC \term{\mcl{G}[C']}
            \\
            & \scc{E-LiftM}
            &
            \term{\mbb{M}} \reddM \term{\mbb{M}'} \implies
            \term{\mcl{F}[\mbb{M}]}
            &\reddC \term{\mcl{F}[\mbb{M'}]}
            \\
            & \scc{E-ConfLiftSC}
            &
            \term{C} \equivC \term{C'} \wedge \term{C'} \reddC \term{D'} \wedge \term{D'} \equivC \term{D} \implies
            \term{C}
            &\reddC \term{D}
        \end{align*}
    \end{mdframed}
    \caption{The \ourGV configuration language.}\label{f:gvConfs}
\end{figure}

The main syntactic entities in \ourGV are \emph{terms}, \emph{runtime terms}, and \emph{configurations}.
Intuitively, terms reduce to runtime terms; configurations correspond to the parallel composition of a main thread and several child threads, each executing a runtime term. Buffered messages are part of configurations. We define two reduction relations: one is on terms, which is then subsumed by reduction on configurations.

The syntax of terms ($\term{L},\term{M},\term{N}$) is given and described in \Cref{f:gvTerms}. 
We use $x,y,\ldots$ for \emph{variables}; we write  \emph{endpoint} to refer to a variable used for session operations (send, receive, select, offer).
Let $\fn(\term{M})$ denote the free variables of a term.
All variables are free unless bound: $\term{\lambda x \sdot M}$ binds $\term{x}$ in $\term{M}$, and $\term{\sff{let}\,(x,y)=M\,\sff{in}\,N}$ binds $\term{x}$ and $\term{y}$ in $\term{N}$.
We introduce syntactic sugar for applications of abstractions: $\term{\sff{let}\, x = M\, \sff{in}\, N}$ denotes $\term{(\lambda x \sdot N)~M}$.
For $\term{(\lambda x \sdot M)~N}$, we assume $\term{x} \notin \fn(\term{N})$, and for $\term{\sff{let}\, (x,y) = M\, \sff{in}\, N}$, we assume $x \neq y$ and $\term{x},\term{y} \notin \fn(\term{M})$.

\Cref{f:gvTerms} also gives the reduction semantics of \ourGV terms ($\reddM$), which relies on runtime terms ($\term{\mbb{L}},\term{\mbb{M}},\term{\mbb{N}}$), reduction contexts ($\term{\mcl{R}}$), and structural congruence ($\equivM$).
Note that this semantics comprises the functional fragment of \ourGV; we define the concurrent semantics of \ourGV hereafter.

Runtime terms, whose syntax extends that of terms, guide the evaluation strategy of \ourGV; we discuss an example evaluation of a term using runtime terms after introducing the reduction rules (\mbox{\Cref{x:runtime}}).
Explicit substitution $\term{\mbb{M}\xsub{ \mbb{N}/x }}$ enables the concurrent execution of a function and its parameters.
The intermediate primitive $\term{\sff{send}'(\mbb{M},\mbb{N})}$ enables $\term{\mbb{N}}$ to reduce to an endpoint; the $\term{\sff{send}}$ primitive takes a pair of terms as an argument, inside which reduction is not permitted (\mbox{cf.~\cite{journal/jfp/GayV10}}).
Reduction contexts define the \emph{non-blocking} parts of terms, where subterms may reduce.
We write $\term{\mcl{R}[\mbb{M}]}$ to denote the runtime term obtained by replacing the hole $\term{[]}$ in $\term{\mcl{R}}$ by $\term{\mbb{M}}$, and $\fn(\term{\mcl{R}})$ to denote $\fn(\term{\mcl{R}[()]})$; we will use similar notation for other kinds of contexts later.

We discuss the reduction rules.
The Structural congruence rule~\scc{SC-SubExt} allows the scope extrusion of explicit substitutions along reduction contexts.
Rule~\scc{E-Lam} enforces application, resulting in an explicit substitution.
Rule~\scc{E-Pair} unpacks the elements of a pair into two explicit substitutions (arbitrarily ordered, due to the syntactical assumptions introduced above).
Rules~\scc{E-SubstName} and \mbox{\scc{E-NameSubst}} convert explicit substitutions of or on variables into standard substitutions.
Rule~\scc{E-Send} reduces a $\term{\sff{send}}$ into a $\term{\sff{send}'}$ primitive.
Rules~\scc{E-Lift} and \scc{E-LiftSC} close term reduction under contexts and structural congruence, respectively.
We write $\term{\mbb{M}} \reddM^k \term{\mbb{N}}$ to denote that $\term{\mbb{M}}$ reduces to $\term{\mbb{N}}$ in $k$~steps.

\begin{example}
\label{x:runtime}
    We illustrate the evaluation of terms using runtime terms through the following example, which contains a $\term{\sff{send}}$ primitive and nested abstractions and applications.
    In each reduction step, we underline the subterm that reduces and give the applied rule:
    \begin{align*}
        & \underline{ \term{ \big( \lambda x \sdot \sff{send}~((), x) \big)~\big( (\lambda y \sdot y)~z \big) } }
        & \scc{(E-Lam)}
        \\
        & {} \reddM \term{ \big( {\color{black} \underline{ \term{ \sff{send}~((), x) } } } \big) \xsub{ \big( (\lambda y \sdot y)~z \big) / x } }
        & \scc{(E-Send)}
        \\
        & {} \reddM \term{ \big( \sff{send'}((), x) \big) \xsub{ {\color{black} \underline{ \term{ \big( (\lambda y \sdot y)~z \big) } } } / x } }
        & \scc{(E-Lam)}
        \\
        & {} \reddM \term{ \big( \sff{send'}((), x) \big) \xsub{ ( y \xsub{ z / y } ) / x } }
        & \scc{(SC-SubExt)}
        \\
        & {} \equivM \term{ \sff{send'}\big((), {\color{black} \underline{ \term{ x \xsub{ ( y \xsub{ z / y } ) / x } } } } \big) }
        & \scc{(E-NameSubst)}
        \\
        & {} \reddM \term{ \sff{send'}((), {\color{black} \underline{ \term{ y \xsub{ z / y } } } } ) }
        & \scc{(E-SubstName)}
        \\
        & {} \reddM \term{ \sff{send'}((), z) }
    \end{align*}
    Notice how the $\term{\sff{send}}$ primitive needs to reduce to a $\term{\sff{send'}}$ runtime primitive such that the explicit substitution of $\term{x}$ can be applied.
    Also, note that the concurrency of \ourGV allows many more paths of reduction.
\end{example}

Note that the concurrent evaluation strategy of \ourGV may also be defined without explicit substitutions.
In principle, this would require additional reduction contexts specific to applications on abstractions and pair deconstruction, as well as variants of Rules~\scc{E-SubstName} and \scc{E-NameSubst} specific to these contexts.
However, it is not clear how to define scope extrusion (Rule~\scc{SC-SubExt}) for such a semantics.
Hence, we find that using explicit substitutions drastically simplifies the semantics of \ourGV.

Concurrency in \ourGV allows the parallel execution of terms that communicate through buffers.
The syntax of configurations ($\term{C},\term{D},\term{E}$) is given in \Cref{f:gvConfs}.
The configuration $\term{\phi\,\mbb{M}}$ denotes a \emph{thread}: a concurrently executed term.
The thread marker helps to distinguish the \emph{main} thread ($\term{\phi} = \term{\main}$) from \emph{child} threads ($\term{\phi} = \term{\child}$).
The configuration $\term{C \prl D}$ denotes parallel composition.
The configuration $\term{\nu{x\bfr{\vec{m}}y}C}$ denotes a \emph{buffered restriction}: it connects the endpoints $\term{x}$ and $\term{y}$ through a buffer $\term{\bfr{\vec{m}}}$, binding $\term{x}$ and $\term{y}$ in $\term{C}$.
The buffer's content, $\term{\vec{m}}$, is a sequence of messages (terms and labels).
Buffers are directed: in $\term{x[\vec{m}\>y}$, messages can be added to the front of the buffer on $x$, and they can be taken from the back of the buffer on $y$.
We write $\term{\bfr{\epsilon}}$ for the empty buffer.
The configuration $\term{C\xsub{ \mbb{M}/x }}$ lifts explicit substitution to the level of configurations: this allows spawning and sending terms under explicit substitution, such that the substitution can be moved to the context of the spawned or sent term.

The reduction semantics for configurations ($\reddC$, also in \Cref{f:gvConfs}) relies on thread and configuration contexts ($\term{\mcl{F}}$ and $\term{\mcl{G}}$, respectively) and structural congruence ($\equivC$).
We write $\term{\hat{\mcl{F}}}$ to denote a thread context in which the hole does not occur under explicit substitution, i.e.\ the context is not constructed using the clause $\term{\mcl{R}\xsub{ \mbb{M}/x }}$; this is used in rules for $\term{\sff{send}'}$, $\term{\sff{spawn}}$, and $\term{\sff{recv}}$, effectively forcing the scope extrusion of explicit substitutions when terms are moved between contexts (\mbox{cf.\ \Cref{x:restrictedThread}}).

\begin{sloppypar}
We comment on some of the congruences and reduction rules.
\mbox{Rule~\scc{SC-ResSwap}} allows to swap the direction of an empty buffer; this way, the endpoint that could read from the buffer before the swap can now write to it.
\mbox{Rule~\scc{SC-ResComm}} allows to interchange buffers, and \mbox{Rule~\scc{SC-ResExt}} allows to extrude their scope.
Rule~\scc{SC-ResNil} garbage collects buffers of closed sessions.
\mbox{Rule~\scc{SC-ConfSubst}} lifts explicit substitution at the level of terms to the level of threads, and Rule~\scc{SC-ConfSubstExt} allows the scope extrusion of explicit substitution along configuration contexts.
Notably, putting messages in buffers is not a reduction: Rules~\scc{SC-Send'} and \scc{SC-Select} \emph{equate} sends and selects on an endpoint $\term{x}$ with terms and labels in the buffer for $\term{x}$, as asynchronous outputs are computationally equivalent to messages in buffers.
\end{sloppypar}

Reduction rule~\scc{E-New} creates a new buffer, leaving a reference to the newly created endpoints in the thread.
Rule~\scc{E-Spawn} spawns a child thread (the parameter pair's first element) and continues (as the pair's second element) inside the calling thread.
Rule~\scc{E-Recv} retrieves a term from a buffer, resulting in a pair containing the term and a reference to the receiving endpoint.
Rule~\scc{E-Case} retrieves a label from a buffer, resulting in a function application of the label's corresponding branch to a reference to the receiving endpoint.
There are no reduction rules for closing sessions, as they are closed silently.
We write $\term{C} \reddC^k \term{D}$ to denote that $\term{C}$ reduces to $\term{D}$ in $k$ steps.
Also, we write $\reddC^+$ to denote the transitive closure of $\reddC$ (i.e., reduction in at least one step).

We illustrate \ourGV's semantics by giving some examples.
The following discusses a cyclic thread configuration which does not deadlock due to asynchrony:

\begin{example}\label{x:gvRed}
    Consider configuration $\term{C_1}$ below, in which two threads are spawned and cyclically connected through two channels.
    One thread first sends on the first channel and then receives on the second, while the other thread first sends on the second channel and then receives on the first.
    Under synchronous communication, this would determine a configuration that deadlocks; however,  under asynchronous communication, this is not the case (cf.\ the third example in Sec.\ \labelcref{s:intro}).
    We detail some interesting reductions:
    \begin{align}
        \term{C_1} =~ & \term{\main\, (\sff{let}\, (f,g) = \sff{new}\, \sff{in}\,
        \sff{let}\, (h,k) = \sff{new}\, \sff{in}\,
        \sff{spawn}~\left(\begin{tarr}[c]{l}
                \sff{let}\, f' = (\sff{send}~(u,f))\, \sff{in} \\
                \quad \sff{let}\, (v',h') = (\sff{recv}~h)\, \sff{in}\, (),
                \\[4pt]
                \sff{let}\, k' = (\sff{send}~(v,k))\, \sff{in} \\
                \quad \sff{let}\, (u',g') = (\sff{recv}~g)\, \sff{in}\, ()
        \end{tarr}\right))}
        \nonumber
        \allowdisplaybreaks[1]
        \\
        {}\reddC^8{}
        & \term{\nu{x\bfr{\epsilon}y}\nu{w\bfr{\epsilon}z}(\main\, (\sff{spawn}~\left(\begin{tarr}[c]{l}
                \sff{let}\, f' = (\sff{send}~(u,x))\, \sff{in} \\
                \quad \sff{let}\, (v',h') = (\sff{recv}~w)\, \sff{in}\, (),
            \end{tarr}~\begin{tarr}[c]{l}
                \sff{let}\, k' = (\sff{send}~(v,z))\, \sff{in} \\
                \quad \sff{let}\, (u',g') = (\sff{recv}~y)\, \sff{in}\, ()
        \end{tarr}\right))}
        \label{eq:exGvRedNew}
        \allowdisplaybreaks[1]
        \\
        {}\reddC{}
        & \term{\nu{x\bfr{\epsilon}y}\nu{w\bfr{\epsilon}z}(\main\, \left(\begin{tarr}[c]{l}
                \sff{let}\, k' = (\sff{send}~(v,z))\, \sff{in} \\
                \quad \sff{let}\, (u',g') = (\sff{recv}~y)\, \sff{in}\, ()
        \end{tarr}\right) \prl \child\, \left(\begin{tarr}[c]{l}
                \sff{let}\, f' = (\sff{send}~(u,x))\, \sff{in} \\
                \quad \sff{let}\, (v',h') = (\sff{recv}~w)\, \sff{in}\, ()
        \end{tarr}\right))}
        \label{eq:exGvRedSpawn}
        \allowdisplaybreaks[1]
        \\
        {}\reddC^2{}
        & \term{\nu{x\bfr{\epsilon}y}\nu{w\bfr{\epsilon}z}(\main\, \left(
            \begin{array}{@{}l@{}}
                (\sff{let}\, (u',g') = (\sff{recv}~y)\, \sff{in}\, ())
                \\
                \quad \xsub{ \sff{send}~(v,z)/k' }
            \end{array}
        \right) \prl \child\, \left(\begin{tarr}[c]{l}
                (\sff{let}\, (v',h') = (\sff{recv}~w)\, \sff{in}\, ())
                \\
                \quad \xsub{ \sff{send}~(u,x)/f' }
        \end{tarr}\right))}
        \label{eq:exGvRedLet}
        \allowdisplaybreaks[1]
        \\
        {}\reddC^2{}
        & \term{\nu{x\bfr{\epsilon}y}\nu{w\bfr{\epsilon}z}(\main\, \left(
            \begin{array}{@{}l@{}}
                (\sff{let}\, (u',g') = (\sff{recv}~y)\, \sff{in}\, ())
                \\
                \quad \xsub{ \sff{send}'(v,z)/k' }
            \end{array}
        \right) \prl \child\, \left(\begin{tarr}[c]{l}
                (\sff{let}\, (v',h') = (\sff{recv}~w)\, \sff{in}\, ())
                \\
                \quad \xsub{ \sff{send'}(u,x)/f' }
        \end{tarr}\right))}
        \label{eq:exGvRedSend}
        \allowdisplaybreaks[1]
        \\
        {}\equiv{}
        & \term{\nu{x\bfr{u}y}\nu{z\bfr{v}w}(\main\, ((\sff{let}\, (u',g') = (\sff{recv}~y)\, \sff{in}\, ())\xsub{ z/k' }) \prl \child\, ((\sff{let}\, (v',h') = (\sff{recv}~w)\, \sff{in}\, ()) \xsub{ x/f' }))}
        \label{eq:exGvRedBuf}
        \allowdisplaybreaks[1]
        \\
        {}\reddC^2{}
        & \term{\nu{x\bfr{u}y}\nu{z\bfr{v}w}(\main\, (\sff{let}\, (u',g') = (\sff{recv}~y)\, \sff{in}\, ()) \prl \child\, (\sff{let}\, (v',h') = (\sff{recv}~w)\, \sff{in}\, ()))}
        \nonumber
        \allowdisplaybreaks[1]
        \\
        {}\reddC^2{}
        & \term{\nu{x\bfr{\epsilon}y}\nu{z\bfr{\epsilon}w}(\main\, (\sff{let}\, (u',g') = (v,y)\, \sff{in}\, ()) \prl \child\, (\sff{let}\, (v',h') = (v,w)\, \sff{in}\, ()))}
        \reddC^4 \term{\main\,()}
        \label{eq:exGvRedRecv}
    \end{align}

    \noindent
    Intuitively, reduction \labelcref{eq:exGvRedNew} instantiates two buffers and assigns the endpoints through explicit substitutions.
    Reduction \labelcref{eq:exGvRedSpawn} spawns the left term as a child thread.
    Reduction \labelcref{eq:exGvRedLet} turns $\term{\sff{let}}$s into explicit substitutions.
    Reduction \labelcref{eq:exGvRedSend} turns the $\term{\sff{send}}$s into $\term{\sff{send}'}$s.
    Structural congruence \labelcref{eq:exGvRedBuf} equates the $\term{\sff{send}'}$s with messages in the buffers.
    Reduction \labelcref{eq:exGvRedRecv} retrieves the messages from the buffers.
    Note that many of these steps represent several reductions that may happen in any order.
\end{example}

The following example illustrates \ourGV's flexibility for communicating functions over channels:
\begin{example}
    \label{x:nontrivial}
    In the following configuration, a buffer and two threads have already been set up (cf.\ \Cref{x:gvRed} for an illustration of such an initialization).
    The main thread sends an interesting term to the child thread: it contains the $\term{\sff{send}}$ primitive from which the main thread will subsequently receive from the child thread.
    We give the configuration's major reductions, with the reducing parts underlined:
    \begin{align*}
        & \term{
            \nu{x\bfr{\epsilon}y} ( \main\, \left(
                \begin{array}{@{}l@{}}
                    {\color{black} \underline{ \term{ \sff{let}\, x' = \sff{send}~\big(\lambda z \sdot \sff{send}~((),z),x\big)\, \sff{in} } } }
                    \\
                    \quad \sff{let}\, (v,x'') = \sff{recv}~x'\, \sff{in}\, v
                \end{array}
            \right) \prl \child\, \left(
                \begin{array}{@{}l@{}}
                    \sff{let}\, (w,y') = \sff{recv}~y\, \sff{in}
                    \\
                    \quad \sff{let}\, y'' = (w~y')\, \sff{in}\, ()
                \end{array}
            \right) )
        }
        \\
        {}\reddC^3{} & \term{
            \nu{x\bfr{\lambda z \sdot \sff{send}~((),z)}y} \Big( \main\, \big( \sff{let}\, (v,x'') = \sff{recv}~x\, \sff{in}\, v \big) \prl \child\, \left(
                \begin{array}{@{}l@{}}
                    \sff{let}\, (w,y') = {\color{black} \underline{ \term{ \sff{recv}~y } } }\, \sff{in}
                    \\
                    \quad \sff{let}\, y'' = (w~y')\, \sff{in}\, ()
                \end{array}
            \right) \Big)
        }
        \\
        {}\reddC{} & \term{
            \nu{y\bfr{\epsilon}x} \Big( \main\, \big( \sff{let}\, (v,x'') = \sff{recv}~x\, \sff{in}\, v \big) \prl \child\, \left(
                \begin{array}{@{}l@{}}
                    {\color{black} \underline{ \term{ \sff{let}\, (w,y') = (\lambda z \sdot \sff{send}~((),z),y)\, \sff{in} } } }
                    \\
                    \quad \sff{let}\, y'' = (w~y')\, \sff{in}\, ()
                \end{array}
            \right) \Big)
        }
        \\
        {}\reddC{} & \term{
            \nu{y\bfr{\epsilon}x} \Big( \main\, \big( \sff{let}\, (v,x'') = \sff{recv}~x\, \sff{in}\, v \big) \prl \child\, \Big( {\color{black} \underline{ \term{ \big( \sff{let}\, y'' = (w~y')\, \sff{in}\, () \big) \xsub{\big(\lambda z \sdot \sff{send}~((),z)\big) / w, y/y'} } } } \Big) \Big)
        }
        \\
        {}\reddC^2{} & \term{
            \nu{y\bfr{\epsilon}x} \Big( \main\, \big( \sff{let}\, (v,x'') = \sff{recv}~x\, \sff{in}\, v \big) \prl \child\, \Big( \sff{let}\, y'' = \Big( {\color{black} \underline{ \term{ \big(\lambda z \sdot \sff{send}~((),z)\big)~y } } } \Big)\, \sff{in}\, () \Big) \Big)
        }
        \\
        {}\reddC^2{} & \term{
            \nu{y\bfr{\epsilon}x} \Big( \main\, \big( \sff{let}\, (v,x'') = \sff{recv}~x\, \sff{in}\, v \big) \prl \child\, \big( {\color{black} \underline{ \term{ \sff{let}\, y'' = \sff{send}~((),y)\, \sff{in} } } }\, () \big) \Big)
        }
        \\
        {}\reddC^3{} & \term{
            \nu{y\bfr{()}x} \Big( \main\, \big( {\color{black} \underline{ \term{ \sff{let}\, (v,x'') = \sff{recv}~x\, \sff{in} } } }\, v \big) \Big)
        }
        \reddC^4 \term{ \main\, () }
    \end{align*}
\end{example}

The following example illustrates why the restricted thread context $\term{\hat{\mcl{F}}}$ is used:

\begin{example}
\label{x:restrictedThread}
    Consider the configuration $\term{C} = \term{ \nu{x \bfr{\epsilon} y} \bigg( \main\, \Big( \big( \sff{send'}(z,x) \big) \xsub{ v / z } \Big) \prl D \bigg) }$.
    Suppose Structural congruence rule~$\scc{SC-Send'}$ were defined on unrestricted thread contexts; then the rule applies under the explicit substitution of $\term{z}$:
    $
        \term{C} \equivC \term{ \nu{x \bfr{z} y} \Big( \main\, \big( x \xsub{ v / z } \big) \prl D \Big) }
    $.
    Here, $\term{C}$ and the right-hand-side are inconsistent with each other: in $\term{C}$, the variable $\term{z}$ is bound by the explicit substitution, whereas $\term{z}$ is free on the right-hand-side.
    With the restricted thread contexts we are forced to first extrude the scope of the explicit substitution before applying Rule~$\scc{SC-Send'}$, making sure that $\term{z}$ remains bound:
    \[
        \term{C} \equivC \term{ \bigg( \nu{x \bfr{\epsilon} y} \Big( \main\, \big( \sff{send'}(z,x) \big) \prl D \Big) \bigg) \xsub{v / z} }
        \equivC \term{ \big( \nu{x \bfr{z} y} ( \main\, x \prl D ) \big) \xsub{v / z} }.
    \]
\end{example}

\subsection{Type System}
\label{ss:ourGVTypeSys}

We define a type system for \ourGV, with functional types for functions and pairs and session types for communication.
The syntax and meaning of functional types ($\type{T},\type{U}$) and session types ($\type{S}$) are as follows:
\begin{align*}
    \type{T},\type{U} &::=
    \type{T \times U} & \text{(pair)}
    & ~\sepr
    \type{T \lolli U} & \text{(function)}
    & ~\sepr
    \type{\1} & \text{(unit)}
    & ~\sepr
    \type{S} & \text{(session)}
    \\
    \type{S} &::=
    \type{{!}T \sdot S} & \text{(output)}
    & ~\sepr
    \type{{?}T \sdot S} & \text{(input)}
    & ~\sepr
    \type{\oplus\{i:T\}_{i \in I}} & \text{(select)}
    & ~\sepr
    \type{\&\{i:T\}_{i \in I}} & \text{(case)}
    & ~\sepr
    \type{\sff{end}} 
\end{align*}

\noindent
Session type duality ($\type{\ol{S}}$) is defined as usual; note that only the continuations, and not the messages, of output and input types are dualized.
\begin{align*}
    \type{\ol{{!}T \sdot S}}
    &= \type{{?}T \sdot \ol{S}}
    &
    \type{\ol{{?}T \sdot S}}
    &= \type{{!}T \sdot \ol{S}}
    &
    \type{\ol{\oplus\{i:S_i\}_{i \in I}}}
    &= \type{{\&}\{i:\ol{S_i}\}_{i \in I}}
    &
    \type{\ol{{\&}\{i:S_i\}_{i \in I}}}
    &= \type{\oplus\{i:\ol{S_i}\}_{i \in I}}
    &
    \type{\ol{\sff{end}}}
    &= \type{\sff{end}}
\end{align*}

\begin{figure}[t!]
    \begin{mdframed}\small
        \begin{mathpar}
            \inferrule[T-Var]{ }{
                \term{x}: \type{T} \vdashM \term{x}: \type{T}
            }
            \and
            \inferrule[T-Abs]{
                \type{\Gamma}, \term{x}: \type{T} \vdashM \term{\mbb{M}}: \type{U}
            }{
                \type{\Gamma} \vdashM \term{\lambda x \sdot \mbb{M}}: \type{T \lolli U}
            }
            \and
            \inferrule[T-App]{
                \type{\Gamma} \vdashM \term{\mbb{M}}: \type{T \lolli U}
                \\
                \type{\Delta} \vdashM \term{\mbb{N}}: \type{T}
            }{
                \type{\Gamma}, \type{\Delta} \vdashM \term{\mbb{M}~\mbb{N}}: \type{U}
            }
            \and
            \inferrule[T-Unit]{ }{
                \type{\emptyset} \vdashM \term{()}: \type{\1}
            }
            \and
            \inferrule[T-Pair]{
                \type{\Gamma} \vdashM \term{\mbb{M}}: \type{T}
                \\
                \type{\Delta} \vdashM \term{\mbb{N}}: \type{U}
            }{
                \type{\Gamma}, \type{\Delta} \vdashM \term{(\mbb{M},\mbb{N})}: \type{T \times U}
            }
            \and
            \inferrule[T-Split]{
                \type{\Gamma} \vdashM \term{\mbb{M}}: \type{T \times T'}
                \\
                \type{\Delta}, \term{x}: \type{T}, \term{y}: \type{T'} \vdashM \term{\mbb{N}}: \type{U}
            }{
                \type{\Gamma}, \type{\Delta} \vdashM \term{\sff{let}\, (x,y) = \mbb{M}\, \sff{in}\, \mbb{N}}: \type{U}
            }
            \and
            \inferrule[T-New]{
            }{
                \type{\emptyset} \vdashM \term{\sff{new}}: \type{S \times \ol{S}}
            }
            \and
            \inferrule[T-Spawn]{
                \type{\Gamma} \vdashM \term{\mbb{M}}: \type{\1 \times T}
            }{
                \type{\Gamma} \vdashM \term{\sff{spawn}~\mbb{M}}: \type{T}
            }
            \and
            \inferrule[T-EndL]{
                \type{\Gamma} \vdashM \term{\mbb{M}}: \type{T}
            }{
                \type{\Gamma}, \term{x}: \type{\sff{end}} \vdashM \term{\mbb{M}}: \type{T}
            }
            \and
            \inferrule[T-EndR]{ }{
                \type{\emptyset} \vdashM \term{x}: \type{\sff{end}}
            }
            \and
            \inferrule[T-Send]{
                \type{\Gamma} \vdashM \term{\mbb{M}}: \type{T \times {!}T \sdot S}
            }{
                \type{\Gamma} \vdashM \term{\sff{send}~\mbb{M}}: \type{S}
            }
            \and
            \inferrule[T-Recv]{
                \type{\Gamma} \vdashM \term{\mbb{M}}: \type{{?}T \sdot S}
            }{
                \type{\Gamma} \vdashM \term{\sff{recv}~\mbb{M}}: \type{T \times S}
            }
            \and
            \inferrule[T-Select]{
                \type{\Gamma} \vdashM \term{\mbb{M}}: \type{\oplus\{i:T_i\}_{i \in I}}
                \\
                j \in I
            }{
                \type{\Gamma} \vdashM \term{\sff{select}\, j\, \mbb{M}}: \type{T_j}
            }
            \and
            \inferrule[T-Case]{
                \type{\Gamma} \vdashM \term{\mbb{M}}: \type{{\&}\{i:T_i\}_{i \in I}}
                \\
                \forall i \in I.~ \type{\Delta} \vdashM \term{\mbb{N}_i}: \type{T_i \lolli U}
            }{
                \type{\Gamma}, \type{\Delta} \vdashM \term{\sff{case}\, \mbb{M}\, \sff{of}\, \{i:\mbb{N}_i\}_{i \in I}}: \type{U}
            }
            \and
            \inferrule[T-Sub]{
                \type{\Gamma}, \term{x}: \type{T} \vdashM \term{\mbb{M}}: \type{U}
                \\
                \type{\Delta} \vdashM \term{\mbb{N}}: \type{T}
            }{
                \type{\Gamma}, \type{\Delta} \vdashM \term{\mbb{M}\xsub{ \mbb{N}/x }}: \type{U}
            }
            \and
            \inferrule[T-Send']{
                \type{\Gamma} \vdashM \term{\mbb{M}}: \type{T}
                \\
                \type{\Delta} \vdashM \term{\mbb{N}}: \type{{!}T \sdot S}
            }{
                \type{\Gamma}, \type{\Delta} \vdashM \term{\sff{send}'(\mbb{M},\mbb{N})}: \type{S}
            }
        \end{mathpar}

        \dashes

        \vspace{-3ex}
        \begin{mathpar}
            \inferrule[T-Buf]{ }{
                \type{\emptyset} \vdashB \term{\bfr{\epsilon}}: \type{S'} > \type{S'}
            }
            \and
            \inferrule[T-BufSend]{
                \type{\Gamma} \vdashM \term{M}: \type{T}
                \\
                \type{\Delta} \vdashB \term{\bfr{\vec{m}}}: \type{S'} > \type{S}
            }{
                \type{\Gamma}, \type{\Delta} \vdashB \term{\bfr{\vec{m},M}}: \type{S'} > \type{{!}T \sdot S}
            }
            \and
            \inferrule[T-BufSelect]{
                \type{\Gamma} \vdashB \term{\bfr{\vec{m}}}: \type{S'} > \type{S_j}
                \\
                j \in I
            }{
                \type{\Gamma} \vdashB \term{\bfr{\vec{m},j}}: \type{S'} > \type{\oplus\{i:S_i\}_{i \in I}}
            }
        \end{mathpar}

        \dashes

        \vspace{-3ex}
        \begin{mathpar}
            \inferrule[T-Main]{
                \type{\Gamma} \vdashM \term{\mbb{M}}: \type{T}
            }{
                \type{\Gamma} \vdashC{\main} \term{\main\, \mbb{M}}: \type{T}
            }
            \and
            \inferrule[T-Child]{
                \type{\Gamma} \vdashM \term{\mbb{M}}: \type{\1}
            }{
                \type{\Gamma} \vdashC{\child} \term{\child\, \mbb{M}}: \type{\1}
            }
            \and
            \inferrule[T-ParL]{
                \type{\Gamma} \vdashC{\child} \term{C}: \type{\1}
                \\
                \type{\Delta} \vdashC{\phi} \term{D}: \type{T}
            }{
                \type{\Gamma}, \type{\Delta} \vdashC{\child + \phi} \term{C \prl D}: \type{T}
            }
            \and
            \inferrule[T-ParR]{
                \type{\Gamma} \vdashC{\phi} \term{C}: \type{T}
                \\
                \type{\Delta} \vdashC{\child} \term{D}: \type{\1}
            }{
                \type{\Gamma}, \type{\Delta} \vdashC{\phi + \child} \term{C \prl D}: \type{T}
            }
            \and
            \inferrule[T-Res]{
                \type{\Gamma} \vdashB \term{\bfr{\vec{m}}}: \type{S'} > \type{S}
                \\
                \type{\Delta}, \term{x}: \type{S'}, \term{y}: \type{\ol{S}} \vdashC{\phi} \term{C}: \type{T}
            }{
                \type{\Gamma}, \type{\Delta} \vdashC{\phi} \term{\nu{x\bfr{\vec{m}}y}C}: \type{T}
            }
            \and
            \inferrule[T-ResBuf]{
                \type{\Gamma}, \term{y}: \type{\ol{S}} \vdashB \term{\bfr{\vec{m}}}: \type{S'} > \type{S}
                \\
                \type{\Delta}, \term{x}: \type{S'} \vdashC{\phi} \term{C}: \type{T}
            }{
                \type{\Gamma}, \type{\Delta} \vdashC{\phi} \term{\nu{x\bfr{\vec{m}}y}C}: \type{T}
            }
            \and
            \inferrule[T-ConfSub]{
                \type{\Gamma}, \term{x}: \type{T} \vdashC{\phi} \term{C}: \type{U}
                \\
                \type{\Delta} \vdashM \term{\mbb{M}}: \type{T}
            }{
                \type{\Gamma}, \type{\Delta} \vdashC{\phi} \term{C\xsub{ \mbb{M}/x }}: \type{U}
            }
        \end{mathpar}
    \end{mdframed}
    \caption{Typing rules for terms (top), buffers (center), and configurations (bottom).}\label{f:gvType}
\end{figure}

Typing judgments use typing environments ($\type{\Gamma},\type{\Delta},\type{\Lambda}$) consisting of types assigned to variables ($\term{x}: \type{T}$).
We write $\type{\emptyset}$ to denote the empty  environment; in writing `$\type{\Gamma},\type{\Delta}$', we assume that the variables in $\type{\Gamma}$ and $\type{\Delta}$ are pairwise distinct.
\Cref{f:gvType} (top) gives the type system for (runtime) terms.
Judgments are denoted $\type{\Gamma} \vdashM \term{\mbb{M}}: \type{T}$ and have a \emph{use-provide} reading: term $\term{\mbb{M}}$ \emph{uses} the variables in $\type{\Gamma}$ to \emph{provide} a behavior of type $\type{T}$ (cf.\ Caires and Pfenning~\cite{conf/concur/CairesP10}).
When a term provides type $\type{T}$, we often say that the term is of type $\type{T}$.

Typing rules~\scc{T-Var}, \scc{T-Abs}, \scc{T-App}, \scc{T-Unit}, \scc{T-Pair}, and \scc{T-Split} are standard.
Rule~\scc{T-New} types a pair of dual session types $\type{S \times \ol{S}}$.
Rule~\scc{T-Spawn} types spawning a $\type{\1}$-typed term as a child thread, continuing as a term of type $\type{T}$.
Rules~\scc{T-EndL} and \scc{T-EndR} type finished sessions.
Rule~\scc{T-Send} (resp.\ \scc{T-Recv}) uses a term of type $\type{{!}T \sdot S}$ (resp.\ $\type{{?}T \sdot S}$) to type a send (resp.\ receive) of a term of type $\type{T}$, continuing as type $\type{S}$.
Rule~\scc{T-Select} uses a term of type $\type{\oplus\{i:T_i\}_{i \in I}}$ to type selecting a label $j \in I$, continuing as type $\type{T_j}$.
Rule~\scc{T-Case} uses a term of type $\type{{\&}\{i:T_i\}_{i \in I}}$ to type branching on labels $i \in I$, continuing as type $\type{U}$---each branch is typed $\type{T_i \lolli U}$.
Rule~\scc{T-Sub} types an explicit substitution.
Rule~\mbox{\scc{T-Send'}} types sending directly, not requiring a pair but two separate terms.

\Cref{f:gvType} (bottom) gives the typing rules for configurations.
The typing judgments here are annotated with a thread marker: $\type{\Gamma} \vdashC{\phi} \term{C}: \type{T}$.
The thread marker serves to keep track of whether the typed configuration contains the main thread or not (i.e.\ $\term{\phi} = \term{\main}$ if so, and $\term{\phi} = \term{\child}$ otherwise).
When typing the parallel composition of two configurations, we thus have to combine the thread markers of their judgments.
This combination of thread markers ($\term{\phi + \phi'}$) is defined as follows:
\begin{mathpar}
    \term{\main + \child} = \term{\main}
    \and
    \term{\child + \main} = \term{\main}
    \and
    \term{\child + \child} = \term{\child}
    \and
    (\term{\main + \main} ~\text{is undefined})
\end{mathpar}

Typing rules~\scc{T-Main} and \scc{T-Child} turn a typed term into a thread, where child threads may only be of type $\type{\1}$.
Rules~\scc{T-ParL} and \scc{T-ParR} compose configurations: one configuration must be of type~$\type{\1}$ and have thread marker $\term{\child}$ (i.e., it does not contain a main thread), providing the other configuration's type.
Rule~\scc{T-Res} types buffered restriction, with output endpoint $\term{x}$ and input endpoint $\term{y}$ used in the configuration.
It is possible to send the endpoint $\term{y}$ on $\term{x}$, so there is also a Rule~\scc{T-ResBuf} where $\term{y}$ is used in the buffer.
Unlike with usual typing rules for restriction, the types $\type{S'}$ of $\term{x}$ and $\type{\ol{S}}$ of $\term{y}$ do not necessarily have to be duals.
This is because the restriction's buffer may already contain messages sent on $\term{x}$ but not yet received on $\term{y}$, such that the restricted configuration only needs to use $\term{x}$ according to a continuation of $\type{S}$.
To ensure that $\type{S'}$ is indeed a continuation of $\type{S}$ in accordance with the messages in the buffer, we have additional typing rules for buffers, which we explain hereafter.
Finally, Rule~\scc{T-ConfSub} types an explicit substitution on the level of configurations.

For typing buffers, in \Cref{f:gvType} (center), we have judgments of the form: $\type{\Gamma} \vdashB \term{\bfr{\vec{m}}}: \type{S'} > \type{S}$.
The judgment denotes that $\type{S'}$ is a continuation of $\type{S}$, in accordance with the messages $\term{\vec{m}}$, which use the variables in $\type{\Gamma}$.
The idea of the typing rules is that, starting with an empty buffer at the top of the typing derivation (Rule~\scc{T-Buf}) where $\type{S'} = \type{S}$, Rules~\scc{T-BufSend} and \scc{T-BufSelect} add messages to the end of the buffer.
Rule~\scc{T-BufSend} then prefixes $\type{S}$ with an output of the sent term's type, and Rule~\scc{T-BufSelect} prefixes $\type{S}$ with a selection such that the sent label's continuation is $\type{S}$.

\begin{figure}[t]
    \begin{mdframed}
        \vspace{-1.5em}
        \begin{mathpar}
            \mprset{sep=0.8em}
            \inferrule
            {
                \inferrule*
                {
                    \inferrule*
                    { }
                    { \type{\emptyset} \vdashM \term{z'}: \type{\sff{end}} }
                    \\
                    \inferrule*
                    { }
                    { \type{\emptyset} \vdashB \term{\bfr{\epsilon}}: \type{\sff{end}} > \type{\sff{end}} }
                }
                { \type{\emptyset} \vdashB \term{\bfr{z'}}: \type{\sff{end}} > \type{\ol{S}} }
                \\
                \inferrule*
                {
                    \inferrule*
                    {
                        \inferrule*
                        {
                            \inferrule*
                            {
                                \inferrule*
                                { }
                                { \term{y}:\type{S} \vdashM \term{y} : \type{S} }
                            }
                            { \term{y}:\type{S} \vdashM \term{\sff{revc}~y} : \type{\sff{end} \times \sff{end}} }
                            \\
                            \inferrule*
                            {
                                \inferrule*
                                {
                                    \inferrule*
                                    { }
                                    { \type{\emptyset} \vdashM \term{()}: \type{\1} }
                                }
                                { \term{y_0}: \type{\sff{end}} \vdashM \term{()}: \type{\1} }
                            }
                            { \term{z}: \type{\sff{end}}, \term{y_0}: \type{\sff{end}} \vdashM \term{()}: \type{\1} }
                        }
                        { \term{y}: \type{S} \vdashM \term{\sff{let}\, (z,y_0) = \sff{recv}~y\, \sff{in}\, ()}: \type{\1} }
                    }
                    { \term{y'}: \type{\sff{end}}, \term{y}: \type{S} \vdashM \term{\sff{let}\, (z,y_0) = \sff{recv}~y\, \sff{in}\, ()}: \type{\1} }
                }
                { \term{y'}: \type{\sff{end}}, \term{y}: \type{S} \vdashC{\main} \term{\main\, (\sff{let}\, (z,y_0) = \sff{recv}~y\, \sff{in}\, ())}: \type{\1} }
            }
            { \type{\emptyset} \vdashC{\main} \term{\nu{y'\bfr{z'}y}(\main\, (\sff{let}\, (z,y_0) = \sff{recv}~y\, \sff{in}\, ()))}: \type{\1} }
        \end{mathpar}

        \dashes

        \vspace{-2.5em}
        \begin{mathpar}
            \mprset{sep=1.4em}
            \inferrule
            {
                \inferrule*
                {
                    \inferrule*
                    { }
                    { \type{\emptyset} \vdashM \term{()}: \type{\1} }
                    \\
                    \inferrule*
                    {
                        \inferrule*
                        {
                            \type{\Gamma} \vdashM \term{M}: \type{T}
                            \\
                            \inferrule*
                            { }
                            { \type{\emptyset} \vdashB \term{\bfr{\epsilon}}: \type{S'} > \type{S'} }
                        }
                        { \type{\Gamma} \vdashB \term{\bfr{M}}: \type{S'} > \type{{!}T \sdot S'} }
                    }
                    { \type{\Gamma} \vdashB \term{\bfr{M,\ell}}: \type{S'} > \type{\oplus\{\ell: {!}T \sdot S', \ell': S''\}} }
                }
                { \type{\Gamma} \vdashB \term{\bfr{M,\ell,()}}: \type{S'} > \type{{!}\1 \sdot \oplus\{\ell: {!}T \sdot S', \ell': S''\}} }
                \\
                \type{\Delta}, \term{x}:\type{S'}, \term{y}:\type{{?}\1 \sdot {\&}\{\ell: {?}T \sdot \ol{S'}, \ell': \ol{S''}\}} \vdashC{\phi} \term{C}: \type{U}
            }
            { \type{\Gamma}, \type{\Delta} \vdashC{\phi} \term{\nu{x\bfr{M,\ell,()}y}C}: \type{U} }
        \end{mathpar}
    \end{mdframed}
    \caption{Derivation of configurations (cf.\ \Cref{x:typings}): (top) reduced from the initial one in \Cref{x:gvRed} ($\type{S} = \type{{?}\sff{end} \sdot \sff{end}}$); (bottom) a buffer containing several messages.}
    \label{f:gvExType}
\end{figure}

\begin{example}
    \label{x:typings}
    \Cref{f:gvExType} (top) shows the typing derivation of a configuration reduced from $\term{C_1}$ in \Cref{x:gvRed} (following an alternative path after Reduction~\labelcref{eq:exGvRedBuf}).
    \mbox{\Cref{f:gvExType}} (bottom) shows the typing of the configuration $\term{\nu{x\bfr{M,\ell,()}y}C}$, which has some messages in a buffer; notice how the type of $\term{x}$ in $\term{C}$ is a continuation of the dual of the type of $\term{y}$.

    In the configuration \mbox{$\term{ \nu{x \bfr{ \sff{let}\, (z,y) = \sff{recv}~y\, \sff{in}\, y } y} C }$}, the endoint $\term{y}$ is inside the buffer connecting it with $\term{x}$; to type it, we need \mbox{Rule~$\scc{T-ResBuf}$} (omitting the derivation of the buffer):
    \[
        \inferrule*{%
                \term{y}:\type{{?}\sff{end} \sdot \sff{end}} \vdashB \term{\bfr{ \sff{let}\, (z,y) = \sff{recv}~y\, \sff{in}\, y }}:\type{\sff{end}} > \type{{!}\sff{end} \sdot \sff{end}}
            \\
            \type{\Gamma}, \term{x}:\type{\sff{end}} \vdashC{\phi} \term{C}:\type{U}
        }{%
            \type{\Gamma} \vdashC{\phi} \term{\nu{x \bfr{\sff{let}\, (z,y) = \sff{recv}~y\, \sff{in}\, y} y} C}:\type{U}
        }
    \]
    Note that such buffers will always deadlock: the message in the buffer can never be received.
\end{example}

\paragraph*{Type Preservation}

Well-typed \ourGV terms and configurations satisfy protocol fidelity and communication safety.
These properties follow from type preservation: typing is consistent across structural congruence and reduction.
In both cases the proof is   by induction on the derivation of the congruence and reduction, respectively;
\ifappendix we include full proofs in \mbox{\Cref{a:ourGVtypePresProofs}}.
\else we include full proofs in the extended version of this paper~\cite{report/vdHeuvelP22}.
\fi

\begin{theorem}
    If $\type{\Gamma} \vdashC{\phi} \term{C}: \type{T}$ and $\term{C} \equivC \term{D}$ or $\term{C} \reddC \term{D}$, then $\type{\Gamma} \vdashC{\phi} \term{D}: \type{T}$.
\end{theorem}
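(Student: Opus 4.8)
The plan is to prove the statement in two layers that mirror the two reduction relations of \ourGV: first a term-level version (if $\type{\Gamma}\vdashM\term{\mbb{M}}:\type{T}$ and $\term{\mbb{M}}\equivM\term{\mbb{N}}$ or $\term{\mbb{M}}\reddM\term{\mbb{N}}$, then $\type{\Gamma}\vdashM\term{\mbb{N}}:\type{T}$), and then the configuration-level statement, each by induction on the derivation of the congruence or reduction, as anticipated in the text. Before that I would establish the standard structural lemmas for the term system of \Cref{f:gvType}: a substitution lemma (from $\type{\Gamma},\term{x}:\type{T}\vdashM\term{\mbb{M}}:\type{U}$ and $\type{\Delta}\vdashM\term{\mbb{N}}:\type{T}$ conclude $\type{\Gamma},\type{\Delta}\vdashM\term{\mbb{M}\{\mbb{N}/x\}}:\type{U}$), its specialization to variable renaming, and an $\type{\sff{end}}$-strengthening lemma (if $\type{\Gamma},\term{x}:\type{\sff{end}}\vdashM\term{\mbb{M}}:\type{U}$ and $\term{x}\notin\fn(\term{\mbb{M}})$ then $\type{\Gamma}\vdashM\term{\mbb{M}}:\type{U}$), together with their configuration-level counterparts. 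I also need inversion lemmas for each typing rule --- stated ``up to interleaved uses of \scc{T-EndL}'', which is the recurring nuisance --- and, for every context former ($\term{\mcl{R}}$, $\term{\mcl{F}}$ / $\term{\hat{\mcl{F}}}$, $\term{\mcl{G}}$), a decomposition/replacement lemma: a typing of a filled context factors into a typing $\type{\Delta}\vdashM\term{\mbb{M}}:\type{V}$ of the hole's contents plus a ``typing of the context with a hole of type $\type{V}$'', so that $\term{\mbb{M}}$ may be replaced by any $\term{\mbb{N}}$ with $\type{\Delta}\vdashM\term{\mbb{N}}:\type{V}$; these are proved by induction on the context.

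For the term-level induction: \scc{SC-SubExt} relocates part of the environment using $\term{x}\notin\fn(\term{\mcl{R}})$; \scc{E-Lam} and \scc{E-Pair} are inversion on \scc{T-App}/\scc{T-Abs} respectively \scc{T-Split}/\scc{T-Pair} followed by one respectively two uses of \scc{T-Sub}; \scc{E-SubstName} and \scc{E-NameSubst} invert \scc{T-Sub} and then apply the renaming lemma, respectively observe that typing $\term{x}$ forces the environment to be $\term{x}:\type{T}$ modulo \scc{T-EndL} and conclude by $\type{\sff{end}}$-strengthening; \scc{E-Send} inverts \scc{T-Send} then \scc{T-Pair} and rebuilds via \scc{T-Send'}; \scc{E-Lift} and \scc{E-LiftSC} use the replacement lemma and the induction hypothesis (the latter also invoking the congruence case).

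For the configuration-level induction on $\term{C}\equivC\term{D}$ / $\term{C}\reddC\term{D}$: \scc{SC-TermSC}, \scc{E-LiftM}, \scc{E-LiftC} and \scc{E-ConfLiftSC} reduce to the term-level result or the inner induction via the configuration replacement lemma. The buffered-restriction congruences are environment-reshuffling arguments; \scc{SC-ResSwap} and \scc{SC-ResNil} additionally use that an empty buffer forces $\type{S'}=\type{S}$ in \scc{T-Buf}, which with the free-variable side condition and $\type{\sff{end}}$-strengthening/weakening pins the two endpoints to $\type{\sff{end}}$. \scc{SC-ResExt} and \scc{SC-ConfSubstExt} use their free-variable side conditions; \scc{SC-ResComm}, \scc{SC-ParNil}, \scc{SC-ParComm} and \scc{SC-ParAssoc} are commutativity/associativity of environment splitting together with the algebra of the marker combination $\term{\phi+\phi'}$; \scc{SC-ConfSubst} matches \scc{T-Sub} under \scc{T-Main}/\scc{T-Child} against \scc{T-ConfSub}. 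For \scc{E-New} and \scc{E-Spawn} one inverts the thread-context typing at \scc{T-New} respectively \scc{T-Spawn}, introduces the fresh restriction with an empty buffer (\scc{T-Buf}) respectively splits off the child with \scc{T-ParL}/\scc{T-ParR}, and re-plugs via the replacement lemma.

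The delicate cases --- and the step I expect to be the main obstacle --- are \scc{SC-Send'}, \scc{SC-Select} and their reduction counterparts \scc{E-Recv} and \scc{E-Case}, which relate a send or select on an endpoint with a message sitting in that endpoint's buffer. Here one must simultaneously locate the endpoint inside a (substitution-free) thread context, invert the thread-context typing to expose the relevant \scc{T-Send'}/\scc{T-Select} (or \scc{T-Recv}/\scc{T-Case}) occurrence, invert \scc{T-Res} versus \scc{T-ResBuf} (the latter needed exactly when the receiving endpoint occurs inside the buffer) together with the buffer judgment $\type{S'}>\type{S}$, and check that prepending the message to the buffer --- reflected by \scc{T-BufSend}/\scc{T-BufSelect} prefixing the buffer's output type with an output respectively a selection --- exactly consumes the prefix contributed by the send or select, so that the endpoint type stays a continuation of the dual of the new buffer type. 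Keeping the linear environment correctly partitioned among buffer, thread and remainder across all these manipulations, while \scc{T-EndL} keeps the inversions from being literally syntax-directed, is where the real bookkeeping lies.
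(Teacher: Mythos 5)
Your plan matches the paper's proof essentially step for step: a term-level subject congruence/reduction result followed by the configuration-level one, each by induction on the derivation, using an $\type{\sff{end}}$-strengthening lemma, inversion up to \scc{T-EndL}, and inner inductions on the structure of the contexts $\term{\mcl{R}}$, $\term{\hat{\mcl{F}}}$, $\term{\mcl{F}}$, $\term{\mcl{G}}$. The only device you leave implicit is that, for \scc{SC-Send'}/\scc{SC-Select}, the paper re-derives the buffer judgment by an extra induction on the length of $\term{\vec{m}}$ (since \scc{T-BufSend}/\scc{T-BufSelect} append at the consumption end while the congruence inserts the new message at the production end), but you correctly identify exactly this bookkeeping as the crux, so the proposal is sound.
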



\section{APCP (Asynchronous Priority-based Classical Processes)}
\label{s:apcp}

APCP~\cite{report/vdHeuvelP21B} is a linear type system for $\pi$-calculus processes that communicate asynchronously (i.e., the output of messages is non-blocking) on connected channel endpoints.
The  type system assigns to endpoints types that specify two-party protocols, in the style of binary session types~\cite{conf/concur/Honda93}.
In APCP, well-typed processes may be cyclically connected: types rely on \emph{priority} annotations, which enable cyclic connections while ruling out circular dependencies between sessions.
Properties of well-typed APCP processes are \emph{type preservation} (\Cref{t:APCPsubjRed}) and \emph{deadlock-freedom} (\Cref{t:APCPdlFree}).




\begin{figure}[t]
    \begin{mdframed}\small
        Process syntax:
        \begin{align*}
            P,Q ::=
            &~ x[y,z]
            & \text{(output)}
            & ~~~ \sepr ~~
              x(y,z) \sdot P
            & \text{(input)}
            \\[-3pt]
            \sepr\!
            &~ x[z] \triangleleft i
            & \text{(selection)}
            & ~~~ \sepr ~~
              x(z) \triangleright \{i: P\}_{i \in I}
            & \text{(branching)}
            & ~~~ \sepr ~~
              \nu{x y}P
            & \text{(restriction)}
            \\[-3pt]
            \sepr\!
            &~ P \| Q
            & \text{(parallel)}
            & ~~~ \sepr ~~
              \0
            & \text{(inaction)}
            & ~~~ \sepr ~~
              x \fwd y
            & \text{(forwarder)}
        \end{align*}

        \vspace{-1em}

        \smallskip
        Structural congruence:
        \begin{align*}
            P
            &\equiv P'
            \quad\text{(if $P \equiv_\alpha P'$)}
            &
            P \| Q
            &\equiv Q \| P
            &
            x \fwd y
            &\equiv y \fwd x
            \\
            P \| (Q \| R)
            &\equiv (P \| Q) \| R
            &
            P \| \0
            &\equiv P
            &
            \nu{x y} x \fwd y
            &\equiv \0
            \\
            P \| \nu{x y}Q
            &\equiv \nu{x y}(P \| Q)
            \quad\text{(if $x,y \notin \fn(P)$)}
            &
            \nu{x y}\0
            &\equiv \0
            \\
            \nu{x y}\nu{z w} P
            &\equiv \nu{z w}\nu{x y} P
            &
            \nu{x y}P
            &\equiv \nu{y x}P
        \end{align*}

        \vspace{-1em}

        \smallskip
        Reduction:
        \begin{mathpar}
            \inferrule*[right=\rLab{$\rred{\scc{Id}}$},vcenter]{%
                z,y \neq x
            }{
                \nu{yz}(x \fwd y \| P) \redd P \{x/z\}
            }
            \and
            \inferrule*[right=\rLab{$\rred{\tensor \parr}$},vcenter]{ }{%
                \nu{xy}(x[a,b] \| y(v,z) \sdot P) \redd P \{a/v,b/z\}
            }
            \and
            \inferrule*[right=\rLab{$\rred{\oplus \&}$},vcenter]{%
                j \in I
            }{
                \nu{xy}(x[b] \puts j \| y(z) \gets \{i:P_i\}_{i \in I}) \redd P_j \{b/z\}
            }
            \and
            \inferrule*[right=\rLab{$\rred{\equiv}$},vcenter]
            {
                P \equiv P'
                \\
                P' \redd Q'
                \\
                Q' \equiv Q
            }
            { P \redd Q }
            \and
            \inferrule*[right=\rLab{$\rred{\onu}$},vcenter]
            { P \redd Q }
            { \nu{x y} P \redd \nu{x y} Q }
            \and
            \inferrule*[right=\rLab{$\rred{\|}$},vcenter]
            { P \redd Q }
            { P \| R \redd Q \| R }
        \end{mathpar}
    \end{mdframed}

    \caption{Definition of APCP's process language.}
    \label{f:procdef}
\end{figure}

\paragraph*{Syntax and Semantics}
We write $x, y, z, \ldots$ to denote \emph{endpoints} (or \emph{names}), and write $\tilde{x}, \tilde{y}, \tilde{z}, \ldots$ to denote sequences of endpoints.
Also, we write $i, j, k, \ldots$ to denote \emph{labels} and $I, J, K, \ldots$ to denote sets of labels.

\Cref{f:procdef} (top) gives the syntax and meaning of processes.
In APCP, all endpoints are used strictly linearly: each endpoint can be used for exactly one communication only.
However, we want to assign session types to endpoints, so we have to be able to implement sequences of communications.
Therefore, each communication action carries an additional \emph{continuation endpoint} to continue the session on.

The output action $x[y,z]$ sends a message endpoint $y$ and a continuation endpoint $z$ along $x$.
The input prefix $x(y,z) \sdot  P$ blocks until a message and a continuation endpoint are received on $x$, binding $y$ and $z$ in $P$.
The selection action $x[z] \puts i$ sends a label $i$ and a continuation endpoint $z$ along $x$.
The branching prefix $x(z) \gets \{i: P_i\}_{i \in I}$ blocks until it receives a label $i \in I$ and a continuation endpoint $z$ on $x$, binding $z$ in each $P_i$.
Restriction $\nu{x y} P$ binds $x$ and $y$ in $P$ to form a channel for communication.
The process $P \| Q$ denotes parallel composition.
The process $\0$ denotes inaction.
The forwarder process $x \fwd y$ is a primitive copycat process that links together $x$ and $y$.

Endpoints are free unless they are bound somehow.
We write $\fn(P)$ for the set of free names of $P$.
Also, we write $P \subst{x/y}$ to denote the capture-avoiding substitution of the free occurrences of $y$ in $P$ for $x$.
We write sequences of substitutions $P \subst{x_1/y_1} \ldots \subst{x_n/y_n}$ as $P \subst{x_1/y_1, \ldots, x_n/y_n}$.


The reduction relation for processes ($P \redd Q$)  formalizes how complementary actions on connected endpoints may synchronize.
As usual for $\pi$-calculi, reduction relies on \emph{structural congruence} ($P \equiv Q$), which relates processes with minor syntactic differences; it is the smallest congruence on the syntax of processes (\figref{f:procdef} (top)) satisfying the axioms in \Cref{f:procdef} (center).


We define the reduction relation $P \redd Q$ by the axioms and closure rules in \Cref{f:procdef} (bottom).
Rule~$\rred{\scc{Id}}$ implements the forwarder as a substitution.
Rule~$\rred{\tensor \parr}$ synchronizes an output and an input on connected endpoints and substitutes the message and continuation endpoints.
Rule~$\rred{\oplus \&}$ synchronizes a selection and a branch:
the received label determines the continuation process, substituting the continuation endpoint appropriately.
Rules~$\rred{\equiv}$, $\rred{\onu}$, and $\rred{\|}$ close reduction under congruence, restriction, and parallel composition, respectively.
We write $\redd^\ast$ for the reflexive, transitive closure of~$\redd$.

\paragraph*{The Type System}

APCP types processes by assigning binary session types to channel endpoints.
Following Curry-Howard interpretations, we present session types as linear logic propositions (cf. Caires \etal~\cite{journal/mscs/CairesPT16} and  Wadler~\cite{conf/icfp/Wadler12})
extended with  \emph{priority} annotations.
Intuitively, actions typed with lower priority cannot be blocked by those with higher priority.

We write $\pri, \opri, \pi, \rho, \ldots$ to denote priorities, and $\omega$ to denote the ultimate priority that is greater than all other priorities  and cannot be increased further.
That is, $\forall \pri \in \mbb{N}.~\omega > \pri$ and $\forall \pri \in \mbb{N}.~\omega + \pri = \omega$.

\begin{definition}
\label{d:props}
    The following grammar defines the syntax of \emph{session types} $A,B$.
    Let $\pri \in \mbb{N}$.
    \begin{align*}
        A,B &::=
        A \tensor^\pri B & \text{(output)}
        & \sepr
        A \parr^\pri B & \text{(input)}
        & \sepr
        \oplus^\pri \{i: A\}_{i \in I} & \text{(select)}
        & \sepr
        \&^\pri \{i: A\}_{i \in I} & \text{(branch)}
        & \sepr
        \bullet & \text{(end)}
    \end{align*}
\end{definition}

\noindent
Note that type $\bullet$ does not require a priority.

\emph{Duality}, the cornerstone of session types and linear logic, ensures that the two endpoints of a channel have matching actions.
Furthermore, dual types must have matching priority annotations.

\begin{definition}
\label{d:duality}
    The \emph{dual} of session type $A$, denoted $\ol{A}$, is defined inductively as follows:
    \begin{align*}
        \ol{A \tensor^\pri B}
        &:= \ol{A} \parr^\pri \ol{B}
        &
        \ol{\oplus^\pri \{ i: A_i \}_{i \in I}}
        &:= \&^\pri \{ i: \ol{A_i} \}_{i \in I}
        &
        \ol{\bullet}
        &:= \bullet
        \\
        \ol{A \parr^\pri B}
        &:= \ol{A} \tensor^\pri \ol{B}
        &
        \ol{\&^\pri \{ i: A_i \}_{i \in I}}
        &:= \oplus^\pri \{ i: \ol{A_i} \}_{i \in I}
        &
        &
    \end{align*}
\end{definition}

The priority of a type is determined by the priority of the type's outermost connective:

\begin{definition}
\label{d:priority}
    For session type $A$, $\pr(A)$ denotes its \emph{priority}:
    \begin{align*}
        \pr(A \tensor^\pri B)
        := \pr(A \parr^\pri B)
        := \pr(\oplus^\pri\{i:A_i\}_{i \in I})
        := \pr(\&^\pri\{i:A_i\}_{i \in I})
        &:= \pri
        &
        \pr(\bullet)
        &:= \omega
    \end{align*}
\end{definition}

\noindent
The priority of $\bullet$ is $\omega$: it denotes a ``final'' action of protocols without blocking behavior.
Although associated with non-blocking behavior, $\tensor$ and $\oplus$ do have a non-constant priority: they are connected to $\parr$ and $\&$, respectively, which denote blocking actions.


The typing rules of APCP ensure that actions with lower priority are not blocked by those with higher priority (cf.\ Dardha and Gay~\cite{conf/fossacs/DardhaG18}).
To this end, typing rules enforce the following laws:
\begin{enumerate}
    \item\label{i:prioLawLower}
        An action with priority $\pri$ must be prefixed only by inputs and branches with priority strictly smaller than $\pri$---this law does not hold for output and selection, as they are not prefixes;

    \item
        dual actions leading to a synchronization must have equal priorities (cf.\ Def.\ \labelcref{d:duality}).
\end{enumerate}
Judgments are of the form $P \vdash \Gamma$, where $P$ is a process and $\Gamma$ is a context that assigns types to endpoints ($x: A$).
A judgment $P \vdash \Gamma$ then means that $P$ can be typed in accordance with the type assignments for names recorded in $\Gamma$.
The context $\Gamma$ obeys \emph{exchange}: assignments may be silently reordered.
$\Gamma$ is \emph{linear}, disallowing \emph{weakening} (i.e., all assignments must be used) and \emph{contraction} (i.e., assignments may not be duplicated).
The empty context is written $\emptyset$.
In writing $\Gamma, x: A$ we assume that $x \notin \dom(\Gamma)$.
We write $\pr(\Gamma)$ to denote the least priority of all types in $\Gamma$ (cf.\ Def.\ \labelcref{d:priority}).

\begin{figure}[t]
    \begin{mdframed}
        \vspace{-4mm}
        {\small%
            \begin{mathpar}
                \inferrule*[right=\rLab{\scc{Empty}}]
                { }
                { \0 \vdash \emptyset }
                \and
                \inferrule*[right=\rLab{$\bullet$}]
                { P \vdash \Gamma }
                { P \vdash \Gamma, x: \bullet }
                \and
                \inferrule*[right=\rLab{\scc{Id}}]
                { }
                { x \fwd y \vdash x: \ol{A}, y: A }
                \and
                \inferrule*[right=\rLab{\scc{Mix}}]
                {
                    P \vdash \Gamma
                    \\
                    Q \vdash \Delta
                }
                { P \| Q \vdash \Gamma, \Delta }
                \and
                \inferrule*[right=\rLab{\scc{Cycle}}]
                { P \vdash \Gamma, x: A, y: \ol{A} }
                { \nu{x y} P \vdash \Gamma }
                \and
                \inferrule*[right=\rLab{$\tensor$}]
                { }
                { x[y,z] \vdash x: A \tensor^\pri B, y: \ol{A}, z: \ol{B} }
                \and
                \inferrule*[right=\rLab{$\parr$}]
                {
                    P \vdash \Gamma, y: A, z: B
                    \\
                    \pri < \pr(\Gamma)
                }
                { x(y, z) \sdot P \vdash \Gamma, x: A \parr^\pri B }
                \and
                \inferrule*[right=\rLab{$\oplus$}]
                { j \in I }
                { x[z] \puts j \vdash x: \oplus^\pri\{i: A_i\}_{i \in I}, z: \ol{A_j} }
                \and
                \inferrule*[right=\rLab{$\&$}]
                {
                    \forall i \in I.~ P_i \vdash \Gamma, z: A_i
                    \\
                    \pri < \pr(\Gamma)
                }
                { x(z) \gets \{i: P_i\}_{i \in I} \vdash \Gamma, x: \&^\pri\{i: A_i\}_{i \in I} }
            \end{mathpar}
        }%
    \end{mdframed}

    \caption{The typing rules of APCP.}
    \label{f:apcpInf}
\end{figure}

\Cref{f:apcpInf} gives the typing rules.
Rule~\scc{Empty} types an inactive process with no endpoints.
Rule~$\bullet$ silently removes a closed endpoint from the typing context.
Rule~\scc{Id} types forwarding between endpoints of dual type.
Rule~\scc{Mix} types the parallel composition of two processes that do not share assignments on the same endpoints.
Rule~\scc{Cycle} types a restriction, where the two restricted endpoints must be of dual type.
Rule~$\tensor$ types an output action; this rule does not have premises to provide a continuation process, leaving the free endpoints to be bound to a continuation process using \scc{Mix} and \scc{Cycle}.
Similarly, Rule~$\oplus$ types an unbound selection action.
Priority checks are confined to Rules~$\parr$ and $\&$, which type input and branching prefixes, respectively.
In both cases, the used endpoint's priority must be lower than the priorities of the other types in the continuation's typing context, thus enforcing Law~\labelcref{i:prioLawLower} above.


Well-typed processes satisfy protocol fidelity, communication safety, and deadlock-freedom.
The first two properties follow from \emph{type preservation}.
Here we only state these results; see~\cite{report/vdHeuvelP21B} for details.

\begin{theorem}[Type Preservation]\label{t:APCPsubjRed}
    If $P \vdash \Gamma$ and $P \equiv Q$ or $P \redd Q$, then $Q \vdash \Gamma$.
\end{theorem}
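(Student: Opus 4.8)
The plan is to prove the statement in two stages, matching its two clauses: \emph{subject congruence} (if $P \vdash \Gamma$ and $P \equiv Q$ then $Q \vdash \Gamma$) and \emph{subject reduction} (if $P \vdash \Gamma$ and $P \redd Q$ then $Q \vdash \Gamma$), each by induction on the derivation of $P \equiv Q$, resp.\ $P \redd Q$, with inversion on the typing of $P$ at every step. Two auxiliary lemmas are set up first. The \emph{renaming lemma}: if $P \vdash \Gamma, y : A$ and $x \notin \dom(\Gamma)$ then $P\subst{x/y} \vdash \Gamma, x : A$, by a routine induction on the typing derivation. And an \emph{inversion lemma modulo $\bullet$}: from a typing of a process whose top operator is, say, an output $x[a,b]$, one recovers (after reordering and folding back applications of Rule~$\bullet$) a context of the form $x : A \tensor^{\pri} B,\ a : \ol A,\ b : \ol B$, and analogously for inputs, selections, branchings, forwarders, $\0$, parallel composition, and restriction. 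This second lemma is needed because Rule~$\bullet$ may be interleaved anywhere in a derivation.

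For subject congruence I would go axiom by axiom through \Cref{f:procdef} (center). Alpha-equivalence is immediate, since bound names never occur in typing contexts. Commutativity and associativity of $\|$ follow from Rule~\scc{Mix} with exchange on $\Gamma$, and $P \| \0 \equiv P$ from Rules~\scc{Mix} and~\scc{Empty}. Scope extrusion $P \| \nu{xy}Q \equiv \nu{xy}(P \| Q)$ uses Rules~\scc{Mix} and~\scc{Cycle}, with the side condition $x, y \notin \fn(P)$ ensuring $x, y$ are absent from $P$'s context so the restricted pair can move in or out without clashes. The axioms $\nu{xy}\nu{zw}P \equiv \nu{zw}\nu{xy}P$ and $\nu{xy}P \equiv \nu{yx}P$ use two applications of~\scc{Cycle} together with $\ol{\ol A} = A$ and the fact that duality preserves priorities (Def.~\labelcref{d:duality}), so the \scc{Cycle} side conditions transfer; $x \fwd y \equiv y \fwd x$ uses $\ol{\ol A} = A$ on Rule~\scc{Id}. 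For the garbage-collection axioms $\nu{xy}\0 \equiv \0$ and $\nu{xy}(x \fwd y) \equiv \0$: typing $\nu{xy}\0$ forces, via \scc{Cycle} and (since $\0$ is typable only from $\emptyset$) two uses of Rule~$\bullet$, the carried type to be $\bullet = \ol{\bullet}$, after which both sides type the same $\Gamma$; typing $\nu{xy}(x \fwd y)$ forces $\Gamma = \emptyset$, matching $\0 \vdash \emptyset$. Each axiom is symmetric so both directions are covered, and closure under the process constructors is by immediate appeals to the induction hypothesis.

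For subject reduction the three reduction axioms are the substance. For $\rred{\scc{Id}}$: from $\nu{yz}(x \fwd y \| P) \vdash \Gamma$, invert \scc{Cycle} to get $x \fwd y \| P \vdash \Gamma, y : A, z : \ol A$, then \scc{Mix} to split $x \fwd y \vdash x : \ol B, y : B$; matching on $y$ gives $B = A$, hence $x : \ol A \in \Gamma$ and $P \vdash \Gamma', z : \ol A$ with $\Gamma = \Gamma', x : \ol A$, so the renaming lemma yields $P\subst{x/z} \vdash \Gamma$. For $\rred{\tensor\parr}$: inverting \scc{Cycle} and \scc{Mix} gives $x[a,b] \vdash x : C \tensor^{\pri} D,\ a : \ol C,\ b : \ol D$ and $y(v,z)\sdot P \vdash \Delta,\ y : C' \parr^{\pri'} D'$ with $P \vdash \Delta,\ v : C',\ z : D'$ and $\pri' < \pr(\Delta)$; since the restricted endpoints are dual, $C' = \ol C$, $D' = \ol D$, $\pri' = \pri$, and two uses of the renaming lemma give $P\subst{a/v, b/z} \vdash \Delta,\ a : \ol C,\ b : \ol D$, which reassembles to exactly $\Gamma$. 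The case $\rred{\oplus\&}$ is analogous, using that branch $P_j$ was typed with $z : A_j$ while the selection carried $z : \ol{A_j}$. The closure rules $\rred{\onu}$ and $\rred{\|}$ follow from the induction hypothesis with a re-application of \scc{Cycle}, resp.\ \scc{Mix}; $\rred{\equiv}$ uses the induction hypothesis together with subject congruence, which is why the latter is proved first.

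The main obstacle, and the point where APCP needs slightly more care than a GV-style sequential calculus, is that outputs and selections are \emph{non-blocking}: Rules~$\tensor$ and~$\oplus$ have no continuation premise, so a single session is typically spread over the process through \scc{Mix} and \scc{Cycle}, and Rule~$\bullet$ may appear anywhere. Hence the inversion lemma must be formulated and applied modulo $\bullet$-weakenings and modulo the free continuation endpoints, and after each reduction one must verify that the residual subprocesses' contexts reassemble to \emph{precisely} $\Gamma$ rather than to something merely isomorphic. By contrast, the priority side conditions of Rules~$\parr$ and~$\&$ cause no trouble: a reduction step removes the very input or branching prefix whose priority constrained the continuation context, so no new priority obligation arises, and in subject congruence priorities are only ever matched through duality in \scc{Cycle}, which Def.~\labelcref{d:duality} preserves. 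The remaining work is routine bookkeeping.
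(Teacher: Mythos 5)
The paper itself contains no proof of this theorem: it is imported from the APCP development, and the text explicitly defers to~\cite{report/vdHeuvelP21B} for details, so there is no in-paper argument to compare against. Your outline is the standard proof for this kind of result---induction on the derivation of $\equiv$ resp.\ $\redd$, inversion modulo Rule~$\bullet$, a renaming lemma for the substitutions introduced by $\rred{\scc{Id}}$, $\rred{\tensor\parr}$, $\rred{\oplus\&}$, and duality transporting the priority annotations---and it is sound; it is essentially what the cited APCP paper does. One small imprecision: for $\nu{xy}(x \fwd y) \equiv \0$ (and likewise $P \| \0 \equiv P$, $\nu{xy}\0 \equiv \0$) typing does not force $\Gamma = \emptyset$, since Rule~$\bullet$ may have padded the context with $\bullet$-typed endpoints; but then $\0$ types the same padded context via Rule~\scc{Empty} followed by Rule~$\bullet$, so the case still closes under your ``modulo $\bullet$'' convention.
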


\begin{theorem}[Deadlock-freedom]\label{t:APCPdlFree}
    If $P \vdash \emptyset$, then either $P \equiv \0$ or $P \redd Q$ for some $Q$.
\end{theorem}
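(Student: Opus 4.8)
The plan is to follow the priority-based deadlock-freedom analyses of Kobayashi~\cite{conf/concur/Kobayashi06} and of Dardha and Gay~\cite{conf/fossacs/DardhaG18}, adapted to APCP's asynchronous presentation in which outputs and selections are not prefixes. The first ingredient is a \emph{canonical form lemma}: if $P \vdash \Gamma$, then $P \equiv \nu{\tilde{x}\tilde{y}}(Q_1 \| \cdots \| Q_n)$ where each $Q_k$ is a \emph{thread}---a forwarder $a \fwd b$, an output $a[b,c]$, a selection $a[c] \puts j$, an input $a(b,c) \sdot Q_k'$, or a branching $a(c) \gets \{i : Q_{k,i}'\}_{i \in I}$---and, after deleting $\0$-threads via $P \| \0 \equiv P$, no $Q_k$ is $\0$. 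I would prove this by induction on the typing derivation, floating restrictions and parallel compositions outward with the structural-congruence axioms for restriction and $\|$ (scope extrusion, commutation of restrictions, the $\0$- and forwarder-axioms, associativity/commutativity/unit of $\|$); Rules~\scc{Mix} and \scc{Cycle} are precisely what makes this possible. When $P \vdash \emptyset$ the canonical form has no free names, so every name free in some $Q_k$ is bound by exactly one outer restriction and---by linearity of contexts---occurs in exactly one thread.

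Now suppose $P \vdash \emptyset$, $P \not\equiv \0$, and, towards a contradiction, that $P$ is stuck; fix a canonical form with $n \geq 1$ threads. First I would eliminate forwarders: if $Q_k = a \fwd b$ with $a,b$ bound by \emph{distinct} restrictions, Rule~$\rred{\scc{Id}}$ fires (after reordering restrictions and closing under $\rred{\onu}$, $\rred{\|}$, $\rred{\equiv}$), contradicting stuckness; if $a,b$ are bound by the \emph{same} restriction, then by linearity $a,b$ occur nowhere else, so $\nu{ab}(a \fwd b \| \cdots) \equiv \cdots$ (via the forwarder- and $\0$-axioms and scope extrusion), which strictly lowers the thread count without affecting stuckness or closedness, so it may be dropped. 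For the remaining threads, associate to $Q_k$ its \emph{head subject} (the subject of its outermost action) and its \emph{head priority} (the priority of that subject's type, which by Rules~$\parr$ and $\&$ is the least priority of any action in $Q_k$). Choose a thread $Q_i$ of minimal head priority $\pri$, with head subject $c$, and let $\nu{cd}$ be the restriction binding $c$. Then $d$ has the type dual to $c$'s, hence also priority $\pri$, and $d$ occurs in exactly one thread $Q_j$; a short check using Rules~$\parr$, $\&$ (and the shape of outputs and selections, whose object endpoints cannot be their own subject's partner without forcing infinite types) shows $j \neq i$.

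The crux is a case analysis on where $d$ sits in $Q_j$. (1)~If $d$ is the head subject of $Q_j$, then by duality the head actions of $Q_i$ and $Q_j$ are complementary actions on the two ends of $\nu{cd}$ (an output facing an input, or a selection facing a branching); since $c,d$ occur in no other thread, I can isolate $\nu{cd}(Q_i \| Q_j)$ by commuting and re-associating $\|$, fire Rule~$\rred{\tensor\parr}$ or $\rred{\oplus\&}$, and lift it back---contradicting stuckness. (2)~If $Q_j = \alpha \sdot Q_j'$ is an input/branching prefix and $d$ occurs in the continuation $Q_j'$, then Rule~$\parr$ or $\&$ at $\alpha$ forces the head priority of $Q_j$ to be strictly below the priority of $d$'s type, namely $\pri$---contradicting minimality of $\pri$. (3)~The remaining possibility---and the one I expect to be the main obstacle, since minimality of $\pri$ is not contradicted outright---is that $d$ occurs in \emph{object position} of a ready output or selection of $Q_j$ whose subject is some $a$. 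I would resolve this by iteration: $a$ is bound by some $\nu{ae}$, the endpoint $e$ has a $\parr$- or $\&$-typed dual to $a$'s $\tensor$- or $\oplus$-type and occurs in exactly one thread, where cases (1)--(3) reapply; since each step consumes a fresh restriction and there are finitely many, the chain must terminate in case (1) (a redex) or case (2) (a minimality violation). Equivalently, one first observes that a closed well-typed process cannot consist solely of outputs and selections---an output subject's $\parr$-typed dual has nowhere to live---so at least one input/branching thread always anchors the descent. Each branch yields a contradiction, so $P$ is not stuck; hence $P \equiv \0$ or $P \redd Q$ for some $Q$. The full argument is carried out in~\cite{report/vdHeuvelP21B}.
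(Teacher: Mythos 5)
Your overall skeleton (canonical form, discarding top-level forwarders, choosing a thread of minimal head priority $\pri$, and analysing where the dual endpoint $d$ of its subject sits) matches the strategy of the APCP development this theorem is imported from---note that the present paper does not prove \Cref{t:APCPdlFree} at all, but defers to~\cite{report/vdHeuvelP21B}. The genuine gap is exactly your case~(3), the one you flag as the obstacle. Your chain does terminate (along it the types strictly grow, so no restriction recurs), but it terminates either in a redex or in a configuration where the dual of some output subject $a_k$ lies under an input/branching prefix; there Rule~$\parr$ (or $\&$) only yields that the guarding thread's head priority is below $\pr(a_k)$, and $\pr(a_k)$ is entirely unconstrained relative to $\pri$, because the rules for $\tensor$ and $\oplus$ in \Cref{f:apcpInf} carry no priority premises. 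So ``case~(2) $=$ minimality violation'' is only valid at the first step, and the iteration produces no contradiction.

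This is not a presentational quibble: from the rules exactly as recapped in \Cref{f:apcpInf}, the statement is not derivable, because they admit a closed, well-typed, irreducible process distinct from $\0$. Take
\[
\nu{cd}\nu{ae}\nu{fg}\nu{bh}\big(\, c(u,v) \sdot h(m,n) \sdot m[p,q] \;\|\; a[d,d'] \;\|\; f(s,t) \sdot e(m',n') \sdot m'[r,w] \;\|\; b[g,g'] \,\big)
\]
with $c: \bullet \parr^{1} \bullet$, $a: (\bullet \parr^{1} \bullet) \tensor^{5} \bullet$ (so $d: \bullet \tensor^{1} \bullet$, $e: (\bullet \tensor^{1} \bullet) \parr^{5} \bullet$), $f: \bullet \parr^{3} \bullet$, $b: (\bullet \parr^{3} \bullet) \tensor^{7} \bullet$ (so $g: \bullet \tensor^{3} \bullet$, $h: (\bullet \tensor^{3} \bullet) \parr^{7} \bullet$), and each of $p,q,r,w,d',g'$ bound at top level to a fresh $\bullet$-typed partner introduced by Rule~$\bullet$. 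Every restriction connects duals with equal priorities, and the only priority checks incurred are $1 < 7$, $7 < \omega$, $3 < 5$, $5 < \omega$, all satisfied; yet the unguarded outputs (subjects $a,b$) face guarded inputs (subjects $e,h$), and the unguarded inputs (subjects $c,f$) face endpoints ($d,g$) occurring only in object position, so no rule of \Cref{f:procdef} applies. Consequently, any correct proof must exploit priority information about outputs and selections that neither the paper's summary nor your proposal ever invokes: in the full APCP system of~\cite{report/vdHeuvelP21B} the output/selection rules additionally constrain the action's priority to lie strictly below the priorities of the endpoints it emits (equivalently, the $\parr$/$\&$ checks must also cover the received endpoints). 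With such a condition your case~(3) closes immediately and needs no iteration at all: if $d$, of priority $\pri$, is an object of an unguarded output or selection thread, that thread's head priority is strictly below $\pri$, contradicting minimality. As it stands, your argument cannot be completed from the rules you (and the recap) work with.
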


\section{Translating \ourGV into APCP}
\label{s:translation}

\subsection{The Translation}
\label{ss:ourGVintoAPCP}

In this section, we translate \ourGV into APCP.
We translate entire typing derivations, following, e.g., Wadler~\cite{conf/icfp/Wadler12}.
Given the structure of \ourGV and its type system, the translation is defined in parts: for (runtime) terms, for configurations, and for buffers.
The translation is defined on well-typed configurations which may be deadlocked, so our  translation does not consider priority requirements.
As we will see, typability in APCP will enable us to identify deadlock-free configurations in \ourGV (cf.\ Sec.\ \labelcref{ss:ourGVdf}).

The translation is informed by the semantics of \ourGV.
It is crucial that subterms may only reduce when they occur in reduction contexts.
For example, $\term{M_1}$ and $\term{M_2}$ may not reduce if they appear in a pair $\term{(M_1,M_2)}$.
The translation must thus ensure that subterms are blocked when they do not occur in reduction contexts.
Translations such as Wadler's hinge on blocking outputs and inputs;
for example, the pair $\term{(M_1,M_2)}$ is translated as an output that blocks the translations of $\term{M_1}$ and $\term{M_2}$.
However, outputs in APCP are non-blocking and so
we use additional inputs to disable the reduction of  subterms.
For example, the translation of $\term{(M_1,M_2)}$ adds extra inputs to block the translations of $\term{M_1}$ and $\term{M_2}$.

\begin{figure}[t!]
    \begin{mdframed}
    \vspace{-6mm}
        \begin{align*}
            \enct{T \times U} &= (\enct{T} \parr \bullet) \tensor (\enct{U} \parr \bullet)
            &&\dashline
            &
            \enct{{!}T \sdot S} &= (\ol{\enct{T}} \tensor \bullet) \parr \enct{S}
            &
            \enct{\oplus\{i:T_i\}_{i \in I}} &= {\&}\{i:\enct{T_i}\}_{i \in I}
            \\[-.9ex]
            \enct{T \lolli U} &= (\ol{\enct{T}} \tensor \bullet) \parr \enct{U}
            &&\dashline
            &
            \enct{{?}T \sdot S} &= (\enct{T} \parr \bullet) \tensor \enct{S}
            &
            \enct{{\&}\{i:T_i\}_{i \in I}} &= \oplus\{i:\enct{T_i}\}_{i \in I}
            \\[-.9ex]
            \enct{\1} &= \bullet
            &&\dashline
            &
            \enct{\sff{end}} &= \bullet
        \end{align*}
    \end{mdframed}
    \caption{Translation of \ourGV types into session types.}\label{f:transTypes}
\end{figure}

\Cref{f:transTypes} gives the translation of \ourGV types into APCP types ($\enct{T}$), which already captures the operation of the translation: our translation is similar to the one by Wadler, but includes the aforementioned additional inputs.
It may seem odd that this translation dualizes \mbox{\ourGV} session types (e.g., an output `$\type{!}$' becomes an input `$\parr$').
To understand this, consider that a variable $\term{x}$ typed $\type{{!}T \sdot S}$ represents access to a session which expects the user to send a term of type $\type{T}$ and continue as $\type{S}$, but not the output itself.
Hence, to translate an output on $\term{x}$ into APCP, we need to connect the translation of $\term{x}$ to an actual output.
Since this actual output would be typed with $\tensor$, this means that the translation of $\term{x}$ would need to be dually typed, i.e., typed with $\parr$.
A more technical explanation is that the translation moves from two-sided \mbox{\ourGV} judgments to one-sided APCP judgments, which requires dualization (see, e.g., \mbox{\cite{journal/apal/Girard93,conf/places/vdHeuvelP20}}).

Importantly, the translation preserves duality of session types (by induction on their structure):

\begin{proposition}\label{p:transDuality}
    Given a \ourGV session type $\type{S}$, $\ol{\enct{S}} = \enct{\ol{S}}$.
\end{proposition}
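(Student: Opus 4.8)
The plan is to prove Proposition~\ref{p:transDuality} by straightforward structural induction on the \ourGV session type $\type{S}$. Since the claim only concerns session types, there are exactly five cases to consider, corresponding to the five productions in the grammar for $\type{S}$: output $\type{{!}T\sdot S'}$, input $\type{{?}T\sdot S'}$, select $\type{\oplus\{i:S_i\}_{i\in I}}$, case $\type{\&\{i:S_i\}_{i\in I}}$, and $\type{\sff{end}}$. The base case $\type{\sff{end}}$ is immediate: $\ol{\enct{\sff{end}}} = \ol{\bullet} = \bullet = \enct{\sff{end}} = \enct{\ol{\sff{end}}}$, using $\ol{\sff{end}} = \sff{end}$ on the \ourGV side (Def.~of \ourGV duality) and $\ol{\bullet} = \bullet$ on the APCP side (Def.~\labelcref{d:duality}).

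For the inductive cases, I would in each case unfold the APCP translation of the \ourGV type, apply APCP duality (Def.~\labelcref{d:duality}) to push the bar inward, and then recognize the result as the translation of the \ourGV dual. Take the output case: $\enct{{!}T\sdot S'} = (\ol{\enct{T}}\tensor\bullet)\parr\enct{S'}$, so $\ol{\enct{{!}T\sdot S'}} = \ol{(\ol{\enct{T}}\tensor\bullet)\parr\enct{S'}} = (\enct{T}\parr\bullet)\tensor\ol{\enct{S'}}$, using $\ol{\ol{\enct{T}}} = \enct{T}$ (involutivity of APCP duality, itself a trivial induction). By the induction hypothesis $\ol{\enct{S'}} = \enct{\ol{S'}}$, so this equals $(\enct{T}\parr\bullet)\tensor\enct{\ol{S'}} = \enct{{?}T\sdot\ol{S'}} = \enct{\ol{{!}T\sdot S'}}$, as required. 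The input case is symmetric. For the select case: $\enct{\oplus\{i:S_i\}_{i\in I}} = \&\{i:\enct{S_i}\}_{i\in I}$, so $\ol{\enct{\oplus\{i:S_i\}_{i\in I}}} = \oplus\{i:\ol{\enct{S_i}}\}_{i\in I} = \oplus\{i:\enct{\ol{S_i}}\}_{i\in I}$ by the IH applied to each $S_i$, and this is $\enct{\&\{i:\ol{S_i}\}_{i\in I}} = \enct{\ol{\oplus\{i:S_i\}_{i\in I}}}$. The case (branch) case is symmetric.

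I do not anticipate a genuine obstacle here; the proposition is essentially a bookkeeping check that the translation table in \Cref{f:transTypes} was set up consistently with the two notions of duality. The one point deserving mild care is that the translation dualizes the \emph{message} component of output/input types (note $\enct{{!}T\sdot S} = (\ol{\enct{T}}\tensor\bullet)\parr\enct{S}$ carries $\ol{\enct{T}}$, not $\enct{T}$), which is exactly why the involutivity of APCP duality is needed to make the output and input cases go through; I would state that involutivity lemma explicitly (or cite it as standard) before running the induction. A secondary subtlety is that \ourGV duality acts only on the continuations of $\type{!}$/$\type{?}$ and not on the messages $\type{T}$ — this matches the translation, since on the APCP side the message translation $\ol{\enct{T}}$ stays put under duality because duality only commutes the outermost $\tensor$/$\parr$ and recurses into \emph{both} arguments, and the double bar on $\enct{T}$ cancels. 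Everything else is routine unfolding.
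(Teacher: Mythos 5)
Your proof is correct and follows exactly the route the paper takes: the paper establishes Proposition~\labelcref{p:transDuality} by induction on the structure of the session type, which is precisely your case analysis (with the involutivity of APCP duality handling the dualized message component $\ol{\enct{T}}$ in the output/input cases). Nothing is missing.
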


We extend the translation of types to typing environments,  defined as expected.
Similarly, we extend duality to typing environments: $\ol{\Gamma}$ denotes $\Gamma$ with each type dualized.
In this section, we give simplified presentations of the translations, showing only the conclusions of the source and target derivations;
\ifappendix \mbox{\Cref{a:transFull}} presents the translations with full derivations.
\else we include the translations with full derivations in the extended version of this paper~\cite{report/vdHeuvelP22}.
\fi

A remark on notation.
Some translated terms include  annotated restrictions $\nuf{xy}$.
These so-called \emph{forwarder-enabled} restrictions can be ignored in this subsection, but will be useful later when proving soundness (one of the correctness properties of the translation; cf.\ \Cref{ss:opCorr}).

We define the translation of (the typing rules of) terms.
Since a term has a provided type, the translation takes as a parameter a name on which the translation provides this type.
\Cref{f:transTermShort} gives the translation of terms, denoted $\enc{z}{\type{\Gamma} \vdashM \term{\mbb{M}}: \type{T}}$,
where the type $\type{T}$ is provided on $z$.
By abuse of notation, we write $\encc{z}{\mbb{M}}$ to denote the process translation of the term $\term{\mbb{M}}$, and similary for configurations and buffers.
Notice the aforementioned additional inputs to block behavior of subterms in rules such as Rule~\scc{T-Pair}.
Before moving to buffers and configurations, we illustrate the translation of terms by an example:

\begin{example}
    \label{x:termTrans}
    Consider the following subterm from \Cref{x:nontrivial}: $\term{ \big( \lambda z \sdot \sff{send}~((),z) \big)~y }$.
    We gradually discuss how this term translates to APCP, and how the translation is set up to mimick the term's behavior.
    \[
        \encc{q}{\big( \lambda z \sdot \sff{send}~((),z) \big)~y} = \nu{ab}\big( \encc{a}{\lambda z \sdot \sff{send}~((),z)} \| \nu{cd}(b[c,q] \| d(e,\_) \sdot \encc{e}{y}) \big)
    \]
    The function application translates the function on $a$, which is connected to $b$.
    The output on $b$ serves to activate the function, which will subsequently activate the functions parameter ($\encc{e}{y} = y \fwd e$) by means of an output that will be received on $d$.
    \[
        \encc{a}{\lambda z \sdot \sff{send}~((),z)} = a(f,g) \sdot \nuf{hz}( \nunil f[h,\_] \| \encc{g}{\sff{send}~((),z)} )
    \]
    The translation of the function is indeed blocked until it receives on $a$.
    It then outputs on $f$ to activate the function's parameter (which receives on $d$), while the function's body appears in parallel.
    \[
        \encc{g}{\sff{send}~((),z)} = \nu{kl}\Big( \encc{k}{((),z)} \| l(m,n) \sdot \nu{op}\big( \nunil n[o,\_] \| \nu{rs}( p[m,r] \| s \fwd g ) \big) \Big)
    \]
    The translation of the $\term{\sff{send}}$ primitive connects the translation of the pair $\term{((),z)}$ on $k$ to an input on $l$, receiving endpoints for the output term ($m$) and the output endpoint ($n$).
    Once activated by the input on $l$, the term representing the output endpoint is activated by means of an output on~$n$.
    In parallel, the actual output (on $p$) sends the endpoint of the output term ($m$) and a fresh endpoint ($r$) representing the continuation channel after the message has been placed in a buffer (the forwarder~$s \fwd g$).
    \[
        \encc{k}{((),z)} = \nu{tu}\nu{vw}( k[t,v] \| u(a',\_) \sdot \encc{a'}{()} \| w(b',\_) \sdot \encc{b'}{z} )
    \]
    The translation of the pair outputs on $k$ two endpoints for the two terms it contains (to be received by whatever intends to use the pair in the context, e.g., the $\term{\sff{send}}$ primitive on $l$).
    The translations of the two terms inside the pair ($\encc{a'}{()} = \0$ and $\encc{b'}{z} = z \fwd b'$) are both guarded by an input, preventing the terms from reducing until the context explicitly activates them by means of outputs.

    Analogously to the reductions from \Cref{x:nontrivial}---$\term{ \big(\lambda z \sdot \sff{send}~((),z) \big)~y } \reddM^3 \term{ \sff{send'}((),y) }$---we have
    \[
        \encc{q}{\big(\lambda z \sdot \sff{send}~((),z) \big)~y} \redd^5 \encc{q}{\sff{send'}((),y)}.
    \]
\end{example}

\Cref{f:transConfBufShort} (top) gives the translation of configurations, denoted $\encc{z}{\type{\Gamma} \vdashC{\phi} \term{C}: \type{T}}$.
We omit the translation of Rule~\scc{T-ParR}.
Noteworthy are the translations of buffered restrictions: the translation of $\term{\nu{x\bfr{\vec{m}}y}C}$ relies on the translation of   $\term{\bfr{\vec{m}}}$, which is given the translation of $\term{C}$ as its continuation.

\begin{figure}[t]
    \begin{mdframed}[innerrightmargin=0ex,innerleftmargin=.1ex]
        \begin{align*}
            &\sccsm{T-Var}
            &
            \enc{z}{
                \term{x}: \type{T} \vdashM \term{x}: \type{T}
            }
            &= x \fwd z \vdash x: \ol{\enct{\type{T}}}, z: \enct{\type{T}}
            \qquad
            \sccsm{T-Unit}
            \quad
            \enc{z}{
                \type{\emptyset} \vdashM \term{()}: \type{\1}
            }
            = \0 \vdash z: \bullet
            \\
            &\sccsm{T-Abs}
            &
            \enc{z}{
                \type{\Gamma} \vdashM \term{\lambda x \sdot M}: \type{T \lolli U}
            }
            &= z(a,b) \sdot \nuf{cx}(\nu{ef}a[c,e] \| \encc{b}{M}) \vdash \ol{\enct{\type{\Gamma}}}, z: (\ol{\enct{\type{T}}} \tensor \bullet) \parr \enct{\type{U}}
            \\
            &\sccsm{T-App}
            &
            \enc{z}{
                \type{\Gamma}, \type{\Delta} \vdashM \term{M~N}: \type{U}
            }
            &= \nu{ab}(\encc{a}{M} \| \nu{cd}(b[c,z] \| d(e,f) \sdot \encc{e}{N})) \vdash \ol{\enct{\type{\Gamma}}}, \ol{\enct{\type{\Delta}}}, z: \enct{\type{U}}
            \\
            &\sccsm{T-Pair}
            &
            \enc{z}{
                \begin{array}{@{}r@{}}
                    \type{\Gamma}, \type{\Delta} \vdashM \term{(M,N)} \\
                    {}: \type{T \times U}
                \end{array}
            }
            &= \begin{array}{@{}l@{}}
                \nu{ab}\nu{cd}(z[a,c] \|
                b(e,f) \sdot \encc{e}{M} \| d(g,h) \sdot \encc{g}{N})
                \\
                \quad {} \vdash \ol{\enct{\type{\Gamma}}}, \ol{\enct{\type{\Delta}}}, z: (\enct{\type{T}} \parr \bullet) \tensor (\enct{\type{U}} \parr \bullet)
            \end{array}
            \\
            &\sccsm{T-Split}
            &
            \enc{z}{
                \begin{array}{@{}r@{}}
                    \type{\Gamma}, \type{\Delta} \vdashM \term{\sff{let}\, (x,y)} \\
                    \term{{}= M\, \sff{in}\, N}: \type{U}
                \end{array}
            }
            &= \begin{array}{@{}l@{}}
                \nu{ab}(\encc{a}{M} \| b(c,d) \sdot \nuf{ex}\nuf{fy}
                ( \\
                \qquad \nu{gh}c[e,g] \| \nu{kl}d[f,k] \| \encc{z}{N})) \vdash \ol{\enct{\type{\Gamma}}}, \ol{\enct{\type{\Delta}}}, z: \enct{\type{U}}
            \end{array}
            \\
            &\sccsm{T-New}
            &
            \enc{z}{
                \type{\emptyset} \vdashM \term{\sff{new}}: \type{S \times \ol{S}}
            }
            &= \begin{array}[t]{@{}l@{}}
                \nu{ab}(\nu{cd}a[c,d] \| b(e,f) \sdot
                \nu{xy}\encc{z}{(x,y)})
                \\
                \quad {} \vdash z: (\enct{\type{S}} \parr \bullet) \tensor (\enct{\type{\ol{S}}} \parr \bullet)
            \end{array}
            \\
            &\sccsm{T-Spawn}
            &
            \enc{z}{
                \type{\Gamma} \vdashM \term{\sff{spawn}~M}: \type{T}
            }
            &= \nu{ab}(\encc{a}{M} \| b(c,d) \sdot (\nu{ef}c[e,f] \| \nu{gh}d[z,g])) \vdash \ol{\enct{\type{\Gamma}}}, z: \enct{\type{T}}
            \\
            &\sccsm{T-EndL}
            &
            \enc{z}{
                \type{\Gamma}, \term{x}: \type{\sff{end}} \vdashM \term{M}: \type{T}
            }
            &= \encc{z}{M} \vdash \ol{\enct{\type{\Gamma}}}, x: \bullet, z: \enct{\type{T}}
            \qquad
            \sccsm{T-EndR}
            \quad
            \enc{z}{
                \type{\emptyset} \vdashM \term{x}: \type{\sff{end}}
            }
            = \0 \vdash z: \bullet
            \\
            &\sccsm{T-Send}
            &
            \enc{z}{
                \type{\Gamma} \vdashM \term{\sff{send}~M}: \type{S}
            }
            &= \begin{array}[t]{@{}l@{}}
                \nu{ab}(\encc{a}{M} \| b(c,d) \sdot \nu{ef}
                (\nu{gh}d[e,g] \\
                \qquad {}\| \nu{kl}(f[c,k] \| l \fwd z))) \vdash \ol{\enct{\type{\Gamma}}}, z: \enct{\type{S}}
            \end{array}
            \\
            &\sccsm{T-Recv}
            &
            \enc{z}{
                \type{\Gamma} \vdashM \term{\sff{recv}~M}: \type{T \times S}
            }
            &= \begin{array}[t]{@{}l@{}}
                \nu{ab}(\encc{a}{M} \| b(c,d) \sdot \nu{ef}(
                z[c,e] \| f(g,h) \sdot d \fwd g))
                \\
                \quad {} \vdash \ol{\enct{\type{\Gamma}}}, z: (\enct{\type{T}} \parr \bullet) \tensor (\enct{\type{S}} \parr \bullet)
            \end{array}
            \\
            &\sccsm{T-Select}
            &
            \enc{z}{
                \type{\Gamma} \vdashM \term{\sff{select}\, j\, M}: \type{T_j}
            }
            &= \nu{ab}(\encc{a}{M} \| \nu{cd}(b[c] \puts j \| d \fwd z)) \vdash \ol{\enct{\type{\Gamma}}}, z: \enct{\type{T_j}}
            \\
            &\sccsm{T-Case}
            &
            \enc{z}{
                {\begin{array}{@{}r@{}}
                    \type{\Gamma}, \type{\Delta} \vdashM \term{\sff{case}\, M} \\
                    \term{\sff{of}\, \{i:N_i\}_{i \in I}}: \type{U}
                \end{array}}
            }
            &= \nu{ab}(\encc{a}{M} \| b(c) \gets \{i:\encc{z}{N_i~c}\}_{i \in I}) \vdash \ol{\enct{\type{\Gamma}}}, \ol{\enct{\type{\Delta}}}, z:\enct{\type{U}}
            \\
            &\sccsm{T-Sub}
            &
            \enc{z}{
                \type{\Gamma}, \type{\Delta} \vdashM \term{\mbb{M}\xsub{ \mbb{N}/x }}: \type{U}
            }
            &= \nuf{xa}(\encc{z}{\mbb{M}} \| \encc{a}{\mbb{N}}) \vdash \ol{\enct{\type{\Gamma}}}, \ol{\enct{\type{\Delta}}}, z: \enct{\type{U}}
            \\
            &\sccsm{T-Send'}
            &
            \enc{z}{
                \begin{array}{@{}r@{}}
                    \type{\Gamma}, \type{\Delta} \vdashM \term{\sff{send}'} \\
                    \term{(M,\mbb{N})}: \type{S}
                \end{array}
            }
            &= \begin{array}{@{}l@{}}
                \nu{ab}(a(c,d) \sdot \encc{c}{M} \| \nu{ef}(\encc{e}{\mbb{N}} \| \nu{gh}(
                f[b,g] \| h \fwd z)))
                \\
                \quad {} \vdash \ol{\enct{\type{\Gamma}}}, \ol{\enct{\type{\Delta}}}, z: \enct{\type{S}}
            \end{array}
        \end{align*}
    \end{mdframed}\vspace{-.9em}
    \caption{%
        Translation of (runtime) term typing rules.
        \ifappendix See \Cref{a:transFull} for typing derivations.
        \else See~\cite{report/vdHeuvelP22} for typing derivations.
        \fi%
    }\label{f:transTermShort}
\end{figure}

The translation of buffers requires care: each message in the buffer is translated as an output in APCP, where the output of the following messages is on the former output's continuation endpoint.
Once there are no more messages in the buffer, the translation uses a typed APCP process---a parameter of the translation---to provide the behavior of the continuation of the lastmost output.
The translation has no requirements for the continuation process and its typing, except for the type of the buffer's endpoint.
With this in mind, \Cref{f:transConfBufShort} (bottom) gives the translation of the typing rules of buffers, denoted
$
\benc{x}{P \vdashAst \Lambda, x: \ol{\enct{\type{S'}}}}{\type{\Gamma} \vdashB \term{\bfr{\vec{m}}}: \type{S'} > \type{S}}
$,
where $x$ is the endpoint on which the buffer outputs, and $P$ is the continuation of the buffer's last message.
Note that we never use the typing rules for buffers by themselves: they always accompany the typing of endpoint restriction, of which the translation properly instantiates the continuation process.

\begin{figure}[t!]
    \begin{mdframed}
        \begin{align*}
            &\sccsm{T-Main/Child}
            &
            \enc{z}{
                \type{\Gamma} \vdashC{\phi} \term{\phi\, \mbb{M}}: \type{T}
            }
            &= \encc{z}{\mbb{M}} \vdash \ol{\enct{\type{\Gamma}}}, z: \enct{\type{T}}
            \\
            &\sccsm{T-ParL}
            &
            \enc{z}{
                \type{\Gamma}, \type{\Delta} \vdashC{\child + \phi} \term{C \prl D}: \type{T}
            }
            &= \nu{ab}\encc{a}{C} \| \encc{z}{D} \vdash \ol{\enct{\type{\Gamma}}}, \ol{\enct{\type{\Delta}}}, z: \enct{\type{T}}
            \\
            &\sccsm{T-Res/T-ResBuf}
            &
            \enc{z}{
                \type{\Gamma}, \type{\Delta} \vdashC{\phi} \term{\nu{x\bfr{\vec{m}}y}C}: \type{T}
            }
            &= \inferrule{}{
                \nu{xy}\bencb{x}{\encc{z}{C}}{\bfr{\vec{m}}} \vdash \ol{\enct{\type{\Gamma}}}, \ol{\enct{\type{\Delta}}}, z: \enct{T}
            }
            \\
            &\sccsm{T-ConfSub}
            &
            \enc{z}{
                \type{\Gamma}, \type{\Delta} \vdashC{\phi} \term{C\xsub{ \mbb{M}/x }}: \type{U}
            }
            &= \nuf{xa}(\encc{z}{C} \| \encc{z}{\mbb{M}}) \vdash \ol{\enct{\type{\Gamma}}}, \ol{\enct{\type{\Delta}}}, z: \enct{\type{U}}
            \\[8pt]
            &\sccsm{T-Buf}
            &
            \benc{x}{P \vdash \Lambda, x: \ol{\enct{\type{S'}}}}{
                \type{\emptyset} \vdashB \term{\bfr{\epsilon}}: \type{S'} > \type{S'}
            }
            &= P \vdash \Lambda, x: \ol{\enct{\type{S'}}}
            \\
            &\sccsm{T-BufSend}
            &
            \benc{x}{P \vdash \Lambda, x: \ol{\enct{\type{S'}}}}{
                \begin{array}{@{}r@{}}
                    \type{\Gamma}, \type{\Delta} \vdashB \term{\bfr{\vec{m},M}} \\
                    {}: \type{S'}
                    {}> \type{{!}T \sdot S}
                \end{array}
            }
            &= \begin{array}{@{}l@{}}
                \nu{ab}\nu{cd}
                (\nu{gh}(x \fwd g \| h[a,c]) \| b(e,f) \sdot \encc{e}{M} \\
                \quad {}\| \bencb{d}{P\{d/x\}}{\bfr{\vec{m}}}) \vdash
                    \ol{\enct{\type{\Gamma}}}, \ol{\enct{\type{\Delta}}}, \Lambda,
                    x: (\enct{\type{T}} \parr \bullet) \tensor \ol{\enct{\type{S}}}
            \end{array}
            \\
            &\sccsm{T-BufSelect}
            &
            \benc{x}{P \vdash \Lambda, x: \ol{\enct{\type{S'}}}}{
                \begin{array}{@{}r@{}}
                    \type{\Gamma} \vdashB \term{\bfr{\vec{m},j}} \\
                    {}: \type{S'}
                    {}> \type{\oplus\{i:S_i\}_{i \in I}}
                \end{array}
            }
            &= \begin{array}{@{}l@{}}
                \nu{ab}
                (\nu{cd}(x \fwd c \| d[a] \puts j) \\
                \quad {}\| \bencb{b}{P\{b/x\}}{\bfr{\vec{m}}}) \vdash \ol{\enct{\type{\Gamma}}}, \Lambda, x: \oplus\{i:\ol{\enct{\type{S_i}}}\}_{i \in I}
            \end{array}
        \end{align*}
    \end{mdframed}
    \caption{%
        Translation of configuration and buffer typing rules.
        \ifappendix See \Cref{a:transFull} for typing derivations.
        \else See~\cite{report/vdHeuvelP22} for typing derivations.
        \fi%
    }\label{f:transConfBufShort}
\end{figure}

Because \ourGV configurations may deadlock, the type preservation result of our translation holds up to priority requirements.
To formalize this, we have the following definition:

\begin{definition}
\label{d:vdashAst}
    Let $P$ be a process.
    We write $P \vdashAst \Gamma$ to denote that $P$ is well-typed according to the typing rules in \Cref{f:apcpInf} where Rules~$\parr$ and $\&$ are modified by erasing priority checks.
\end{definition}

Hence, if $P \vdash \Gamma$ then $P \vdashAst \Gamma$ but the converse does not hold.
Our translation  correctly preserves the typing of terms, configurations, and buffers:

\begin{theorem}[Type Preservation for the Translation]\label{t:transTypePres}
    ~
    \begin{itemize}
        \item[$\bullet$]
            $\enc{z}{\type{\Gamma} \vdashM \term{\mbb{M}}: \type{T}} = \encc{z}{\mbb{M}} \vdashAst \ol{\enct{\Gamma}}, z: \enct{T}$
         \qquad \qquad \qquad \qquad     $\bullet ~~\encc{z}{\type{\Gamma} \vdashC{\phi} \term{C}: \type{T}} = \encc{z}{C} \vdashAst \ol{\enct{\Gamma}}, z: \enct{T}$
        \item[$\bullet$]
            $\benc{x}{P\, \vdashAst \Lambda, x: \ol{\enct{S'}}}{\type{\Gamma} \vdashB \term{\bfr{\vec{m}}}: \type{S'} > \type{S}} = \bencb{x}{P}{\bfr{\vec{m}}} \vdashAst \ol{\enct{\Gamma}}, \Lambda, x: \enct{S}$
    \end{itemize}
\end{theorem}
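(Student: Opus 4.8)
The plan is to prove all three statements simultaneously by mutual induction on the structure of the typing derivations for terms, configurations, and buffers, respectively. Since the translation is defined by recursion on typing derivations (each clause in Figures~\ref{f:transTermShort} and~\ref{f:transConfBufShort} corresponds to exactly one typing rule), the induction naturally follows the last rule applied. For each rule, I would unfold the definition of the translation on the left-hand side, invoke the induction hypotheses on the translated subderivations to obtain APCP typings (in the $\vdashAst$ system, i.e.\ without priority checks), and then assemble the target typing by applying the APCP typing rules of Figure~\ref{f:apcpInf} (again with Rules~$\parr$ and $\&$ stripped of their priority side-conditions, per Definition~\ref{d:vdashAst}). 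Throughout, I would use Proposition~\ref{p:transDuality} ($\ol{\enct{S}} = \enct{\ol{S}}$) to reconcile dualized session types where the translation introduces them, and the fact that duality extends pointwise to environments.

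The routine cases are the term rules: e.g.\ for \scc{T-Pair}, the IH gives $\encc{e}{M} \vdashAst \ol{\enct{\Gamma}}, e:\enct{T}$ and $\encc{g}{N} \vdashAst \ol{\enct{\Delta}}, g:\enct{U}$; guarding each under an input (Rule~$\parr$, no priority check) yields $b(e,f)\sdot\encc{e}{M} \vdashAst \ol{\enct{\Gamma}}, b:\enct{T}\parr\bullet$ and similarly for $d$, then the output $z[a,c]$ (Rule~$\tensor$) has type $z:(\enct{T}\parr\bullet)\tensor(\enct{U}\parr\bullet) = \enct{T\times U}$ with dangling endpoints $a:\ol{\enct{T}\parr\bullet}$, $b':\ol{\enct{U}\parr\bullet}$, and two applications each of \scc{Mix} and \scc{Cycle} tie the knots, giving exactly the claimed conclusion. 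The \scc{T-Send}, \scc{T-Recv}, \scc{T-Select}, \scc{T-Case}, \scc{T-New}, \scc{T-Spawn}, \scc{T-App}, \scc{T-Abs}, \scc{T-Split}, \scc{T-Sub}, and \scc{T-Send'} cases follow the same pattern, modulo bookkeeping of which endpoints are bound by \scc{Cycle} versus passed along; the \scc{T-Var}, \scc{T-Unit}, \scc{T-EndL}, \scc{T-EndR} cases are immediate from Rules~\scc{Id}, \scc{Empty}, and $\bullet$. The configuration rules \scc{T-Main/Child}, \scc{T-ParL} (and the omitted \scc{T-ParR}), and \scc{T-ConfSub} are equally direct, with \scc{T-ParL} using \scc{Mix} and \scc{Cycle} to connect the child's translation along $a \leftrightarrow b$.

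The genuinely delicate part — and the main obstacle — is the interaction between the buffer translation and the two restriction rules \scc{T-Res}/\scc{T-ResBuf}. Here the buffer translation $\benc{x}{P\vdashAst\Lambda,x:\ol{\enct{S'}}}{\cdots}$ is parametric in a continuation process $P$, and at each step (\scc{T-BufSend}, \scc{T-BufSelect}) it threads $P$ through with a renaming $P\{d/x\}$ (resp.\ $P\{b/x\}$) onto the continuation endpoint. I would need a substitution lemma for $\vdashAst$ — renaming a free endpoint in a well-typed process preserves typing — which is standard but must be stated. More subtly, the recursive call on $\bfr{\vec m}$ takes as \emph{its} continuation $P\{d/x\}$ but must be given the \emph{correct} typing context for that parameter, namely $P\{d/x\}\vdashAst\Lambda, d:\ol{\enct{S'}}$ where now $S'$ plays the role of the inner continuation; verifying that the type annotation on the carried endpoint matches what the IH expects — and that $\enct{S}$ on the buffer's exposed endpoint $x$ comes out as $(\enct{T}\parr\bullet)\tensor\ol{\enct{S}}$ for \scc{T-BufSend} — requires carefully matching Figure~\ref{f:transTypes} against the buffer typing rules. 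Once this buffer lemma is in place, the \scc{T-Res}/\scc{T-ResBuf} cases instantiate $P$ with $\encc{z}{C}$ (typed by the configuration IH), wrap the result in $\nu{xy}$ via Rule~\scc{Cycle} using Proposition~\ref{p:transDuality} to see that $x:\enct{S'}$ and $y:\enct{\ol{S}}=\ol{\enct{S}}$ are dual after the buffer has consumed the appropriate prefix, and conclude. I would present the buffer clause as the centerpiece of the proof and relegate the term and configuration cases to ``by analysis of the typing rules, using the induction hypotheses.''
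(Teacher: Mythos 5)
Your proposal is correct and follows essentially the same route as the paper: the paper establishes \Cref{t:transTypePres} by induction on the typing derivations, with each case witnessed by the full APCP derivations of the translated rules given in \Cref{a:transFull} (including the buffer rules, where the renaming $P\{d/x\}$ of the continuation parameter is handled inline rather than via an explicitly stated renaming lemma). Your identification of the buffer/restriction interaction as the delicate point matches where the paper's full derivations do the corresponding bookkeeping.
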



\begin{example}\label{x:trans}
    Consider again the configuration
    $
    \term{\nu{x\bfr{M,\ell,()}y}C}
    $.
    We illustrate the translation of buffers into APCP by giving the translation of this configuration (writing $\fwded{x}[a,b]$ to denote the forwarded output $\nu{cd}(x \fwd c \| d[a,b])$):
    \begin{align*}
        & \encc{z}{\nu{x\bfr{M,\ell,()}y}C}
        = \nu{xy}\bencb{x}{\encc{z}{C}}{\bfr{M,\ell,()}}
        = \nu{xy} ~ \nu{ab}\nu{cx'} ( \fwded{x}[a,c] \| b(d,\_) \sdot \0
        \\
        &\qquad {} \| \nu{ex''} ( \fwded{x'}[e] \puts \ell \| \nu{fg}\nu{hx'''} ( \fwded{x''}[f,h] \| g(k,\_) \sdot \encc{k}{M} \| \encc{z}{C} \{x'''/x\} ) ) )
    \end{align*}
    Notice how the (forwarded) outputs are sequenced by continuation endpoints, and how the translation of~$\term{C}$ uses the last continuation endpoint $x'''$ to interact with the buffer.
\end{example}

\subsection{Operational Correctness}
\label{ss:opCorr}

Following Gorla~\cite{journal/ic/Gorla10}, we focus   on \emph{operational correspondence}: a translated configuration can reproduce all of the source configuration's reductions (completeness; \Cref{t:completeness}), and any of the translated configuration's reductions can be traced back to reductions of the source configuration (soundness; \Cref{t:soundness}).
With the soundness result, our translation is stronger than related prior translations~\cite{book/Milner89,book/SangiorgiW03,conf/esop/LindleyM15}.


Our completeness result states that the reductions of a well-typed configuration   can be mimicked by its translation in zero or more steps.

\begin{restatable}[Completeness] 
{theorem}{thmTransConfRed}\label{t:completeness}
    Given $\type{\Gamma} \vdashC{\phi} \term{C}: \type{T}$, if $\term{C} \reddC \term{D}$, then $\encc{z}{C} \redd^\ast \encc{z}{D}$.
\end{restatable}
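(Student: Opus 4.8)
The plan is to prove completeness by induction on the derivation of $\term{C} \reddC \term{D}$, following the structure of the configuration reduction rules in \Cref{f:gvConfs}. Since each reduction case produces a specific process on the left that must reduce in zero or more APCP steps to the translation of the right-hand side, the proof is essentially a large case analysis; the real content is in checking that the ``blocking inputs'' introduced by the translation (\Cref{f:transTermShort}, \Cref{f:transConfBufShort}) line up exactly so that the intended APCP reductions fire.

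First I would dispatch the congruence-closure rules. For \scc{E-LiftC} and \scc{E-LiftM} (reduction under a configuration or thread context), I would establish a compositionality lemma stating that translation commutes with plugging into contexts up to the obvious renaming of the hole-name: $\encc{z}{\mcl{G}[C]}$ contains $\encc{a}{C}$ as a subprocess (for a suitable $a$), so a reduction $\encc{a}{C} \redd^\ast \encc{a}{C'}$ lifts through Rules $\rred{\onu}$, $\rred{\|}$ of \Cref{f:procdef}; similarly for thread contexts, using that term reduction is already known to be mirrored (this is where I would invoke the term-level completeness statement, $\encc{q}{\mbb{M}} \redd^\ast \encc{q}{\mbb{N}}$ when $\term{\mbb{M}} \reddM \term{\mbb{N}}$, referenced implicitly by \Cref{x:termTrans}). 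For \scc{E-ConfLiftSC} I would need an auxiliary result that $\term{C} \equivC \term{C'}$ implies $\encc{z}{C} \equiv \encc{z}{C'}$ (or at worst $\redd^\ast$ in both directions); this handles the structural-congruence rules \scc{SC-Send'} and \scc{SC-Select}, whose translations move a ``forwarded output'' from inside $\hat{\mcl{F}}$ into the buffer encoding --- one checks these are related by APCP's $\equiv$ together with the forwarder axioms $\nu{xy}x\fwd y \equiv \0$ and commutation of $\nu$ with $\|$, plus possibly one $\rred{\scc{Id}}$ step.

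Then I would handle the genuine reduction rules one at a time. For \scc{E-New}: $\encc{z}{\mcl{F}[\sff{new}]}$ unfolds via the \scc{T-New} clause to $\nu{ab}(\nu{cd}a[c,d] \| b(e,f)\sdot\nu{xy}\encc{z}{(x,y)})$ inside the context translation; one $\rred{\tensor\parr}$ synchronization on $ab$ fires, substituting $c,d$ for $e,f$, leaving $\nu{xy}\encc{z}{(x,y)}$ with a fresh channel --- matching $\encc{z}{\nu{x\bfr{\epsilon}y}(\mcl{F}[(x,y)])}$ after identifying the $\nu{xy}$ with the empty-buffer translation (Rule \scc{T-Buf} gives the buffer translation as just its continuation). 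For \scc{E-Spawn}: the \scc{T-Spawn} clause produces $\nu{ab}(\encc{a}{M} \| b(c,d)\sdot(\nu{ef}c[e,f] \| \nu{gh}d[z,g]))$; here $M$ reduces to a pair, so first apply term-completeness to get $\encc{a}{(M_1,M_2)}$, which is an output on $a$; then $\rred{\tensor\parr}$ on $ab$ followed by two more synchronizations on the pair's components yields $\encc{z}{\mbb{N}} \| \nu{??}\encc{?}{\mbb{M}}$, i.e.\ the translation of $\hat{\mcl{F}}[\mbb{N}] \prl \child\,\mbb{M}$. The $\hat{\mcl{F}}$ (no explicit substitution over the hole) is what makes the output actually available rather than trapped under a $\nuf{xa}$ --- I would note exactly where that hypothesis is used. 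For \scc{E-Recv} and \scc{E-Case} the argument is analogous but now the synchronization is between $\encc{\cdot}{\sff{recv}~y}$ (resp.\ $\sff{case}$) and the head output (resp.\ selection) of the buffer translation from \scc{T-BufSend} (resp.\ \scc{T-BufSelect}): the ``forwarded output'' $\fwded{x}[a,c]$ at the front of the buffer synchronizes with the receive, delivering the message endpoint $a$ and continuation endpoint, and one cleans up with a forwarder reduction $\rred{\scc{Id}}$; the residual is the translation of the shortened buffer plugged into $\mcl{F}[(\mbb{M},y)]$ (resp.\ $\mcl{F}[\mbb{M}_j~y]$).

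The main obstacle I anticipate is the bookkeeping around \emph{explicit substitutions and forwarder-enabled restrictions} $\nuf{xy}$. Several \ourGV rules (\scc{E-Spawn}, \scc{E-Recv} via $\hat{\mcl{F}}$, and the $\equivC$ rules \scc{SC-ConfSubstExt}, \scc{SC-SubExt}) rely on scope-extruding explicit substitutions so that a term can migrate between contexts; on the APCP side these correspond to the $\nuf{xa}(\encc{z}{C}\|\encc{a}{\mbb{M}})$ pattern of \scc{T-Sub}/\scc{T-ConfSub}, and I will need a lemma that these forwarder-restrictions can be pushed around (commuted with other $\nu$'s and $\|$'s, absorbed via $\rred{\scc{Id}}$ when the forwarded name is used linearly) so that the APCP term after reduction is literally $\equiv$, not merely behaviorally equal, to the target translation. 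Establishing this ``administrative reductions commute'' lemma cleanly --- and making sure the zero-or-more-steps slack in the statement absorbs exactly the right number of $\rred{\scc{Id}}$ and $\equiv$ steps --- is where the proof will demand the most care. Everything else is a mechanical unfolding of \Cref{f:transTermShort} and \Cref{f:transConfBufShort} and matching against \Cref{f:procdef}.
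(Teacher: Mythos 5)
Your proposal is correct and follows essentially the same route as the paper: induction on the derivation of $\term{C} \reddC \term{D}$, a separate term-level completeness theorem used for \scc{E-LiftM}, preservation of $\equivM$/$\equivC$ under the translation (covering \scc{SC-SubExt}, \scc{SC-Send'}, \scc{SC-Select}, \scc{SC-ConfSubstExt}), compositionality of the translation with respect to contexts for the lifting rules, and direct unfolding of the translation with explicit APCP reduction sequences for \scc{E-New}, \scc{E-Spawn}, \scc{E-Recv}, \scc{E-Case}. The only discrepancies are cosmetic: in \scc{E-Spawn} the argument is already a literal pair so no preliminary term-level reduction is needed, and the paper matches \scc{SC-Send'}/\scc{SC-Select} and the explicit-substitution extrusions by APCP structural congruence alone (no $\rred{\scc{Id}}$ steps), a difference your zero-or-more-steps slack would in any case absorb.
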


\begin{proof}[Proof (Sketch)]
    By induction on the derivation of the configuration's reduction.
    In each case, we infer the shape of the configuration from the reduction and well-typedness.
    We then consider the translation of the configuration, and show that the resulting process reduces in zero or more steps to the translation of the reduced configuration.
    \ifappendix See \mbox{\Cref{as:completeness}} for a full proof.
    \else See the extended version of this paper~\cite{report/vdHeuvelP22} for a full proof.
    \fi
\end{proof}


Soundness states that any sequence of reductions from the translation of a well-typed configuration eventually leads to the translation of another configuration, which the initial configuration also reduces to.
Asynchrony in APCP requires us to be careful, specifically concerning the semantics of variables in \ourGV.
Variables can only cause reductions under specific circumstances.
On the other hand, variables translate to forwarders in APCP, which reduce as soon as they are bound by restriction.
This semantics for forwarders turns out to be too eager for soundness.
As a result, soundness only holds for an alternative, so-called \emph{lazy semantics} for APCP, denoted $\reddL$, in which forwarders may only cause reductions under specific circumstances.
It is here that the forwarder-enabled restrictions $\nuf{xy}$ anticipated in \Cref{ss:ourGVintoAPCP} come into play.
As we will see in \Cref{ss:ourGVdf}, this alternative semantics does not prevent us from identifying a class of deadlock-free \ourGV configurations through the translation into APCP.
\ifappendix Due to space limitations,  the definition of the lazy semantics appears in \mbox{\Cref{as:soundness}}.
\else Due to space limitations, the definitions of the lazy semantics only appears in the extended version of this paper~\cite{report/vdHeuvelP22}.
\fi

\begin{restatable}[Soundness] 
{theorem}{thmTransSndConf}
    \label{t:soundness}
    Given $\type{\Gamma} \vdashC{\phi} \term{C}: \type{T}$, if $\encc{z}{C} \reddL^\ast Q$, then  $\term{C} \reddC^\ast \term{D}$ and $Q \reddL^\ast \encc{z}{D}$ for some~$\term{D}$.
\end{restatable}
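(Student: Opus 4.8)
The plan is to reduce Soundness to a \emph{single-step reflection lemma} for the lazy semantics $\reddL$ and then lift it to reduction sequences. Working with $\reddL$ rather than $\redd$ is essential: as noted in \Cref{ss:ourGVintoAPCP}, \ourGV variables translate to forwarders, and under the eager semantics a forwarder fires as soon as it is under a restriction, producing APCP states from which no translation of a \ourGV configuration is reachable; the forwarder-enabled restrictions $\nuf{xy}$ together with $\reddL$ are designed precisely to postpone these firings until they match a genuine \ourGV step.

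First I would prove: if $\type{\Gamma} \vdashC{\phi} \term{C}: \type{T}$ and $\encc{z}{C} \reddL Q$, then there is $\term{D}$ with $\term{C} \reddC^\ast \term{D}$ --- in fact $\term{C} = \term{D}$ or $\term{C} \reddC \term{D}$ --- and $Q \reddL^\ast \encc{z}{D}$. The proof is by case analysis on the APCP reduction that fires ($\rred{\scc{Id}}$, $\rred{\tensor \parr}$, $\rred{\oplus \&}$, closed under $\equiv$, restriction, and parallel composition) and on its position inside $\encc{z}{C}$. Since the shape of $\encc{z}{C}$ is entirely dictated by the typing derivation of $\term{C}$ (\Cref{f:transTermShort}, \Cref{f:transConfBufShort}), there are finitely many positions to inspect. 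Each firing redex is either (i) one of the \emph{administrative} synchronizations introduced by the translation --- the extra inputs guarding subterms, a $\nuf{xy}$-enabled forwarder, or an output/input pair sequencing the encoding of a buffer --- in which case $Q$ does not yet correspond to a \ourGV reduct but can be driven on by further $\reddL$-steps back to a translation $\encc{z}{C}$ (so $\term{D} = \term{C}$); or (ii) a synchronization that commits to a \ourGV redex of $\term{C}$ (\scc{E-New}, \scc{E-Spawn}, \scc{E-Recv}, \scc{E-Case}, \scc{E-Lam}, \scc{E-Pair}, \scc{E-Send}, \scc{E-SubstName}, \scc{E-NameSubst}, or a structural step), in which case $\term{D}$ is the corresponding $\reddC$-reduct and the bookkeeping already developed for Completeness (\Cref{t:completeness}) shows $Q \reddL^\ast \encc{z}{D}$.

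The second step lifts this to $\encc{z}{C} \reddL^\ast Q$. I would use that $\reddL$ on well-typed APCP processes is essentially deterministic up to scheduling --- linearity forces each channel to carry exactly one communication, selections carry a fixed label, and the lazy side-conditions make forwarder reductions deterministic --- so $\reddL$ satisfies a diamond/strip property on processes reachable from a translation; combined with strong normalization of the purely administrative reductions (each strictly decreases a measure counting the ``in-flight'' communications the translation introduces), this lets me reorder a reduction sequence so that committing steps occur in a canonical order, peel them off one by one with the single-step lemma, and conclude by induction on the number of committing steps that $Q \reddL^\ast \encc{z}{D}$ with $\term{C} \reddC^\ast \term{D}$.

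The main obstacle is the \emph{half-simulated-redex} problem: a single \ourGV reduction is simulated by several APCP steps, so a generic reachable $Q$ is \emph{not} the translation of any configuration, and the whole argument hinges on showing that every partially-fired simulation in $Q$ can always be \emph{completed} by further $\reddL$-steps --- i.e.\ that no reachable $Q$ is ``stuck halfway''. Pinning this down requires a normal-form characterization of $\{\,Q : \encc{z}{C} \reddL^\ast Q\,\}$ and a case-by-case check, against the translation clauses, that each intermediate shape admits the completing (lazy-permitted) steps back to a translation, working up to structural congruence throughout. Once this characterization is in place, the determinacy and strong-normalization ingredients above make the inductive lift routine.
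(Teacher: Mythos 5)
Your proposal is correct and follows essentially the same route as the paper's proof: both determine the possible lazy redexes in $\encc{z}{C}$ from the typing/structure of $\term{C}$, complete each partially-fired simulation back to a translation image $\encc{z}{D'}$ with $\term{C} \reddC \term{D'}$, and exploit the independence of reductions in well-typed APCP processes to postpone/reorder the remaining steps before concluding by induction. The only difference is packaging---the paper runs an induction on the structure of $\term{C}$ with an inner induction on the number of lazy steps, isolating blocks of $k'$ steps, rather than factoring out a single-step reflection lemma plus a confluence/strong-normalization lifting---but the mathematical content coincides.
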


\begin{proof}[Proof (Sketch)]
    By induction on the structure of $\term{C}$.
    In each case, we additionally apply induction on the number $k$ of steps $\encc{z}{C} \reddL^k Q$.
    We then consider which reductions might occur from $\encc{z}{C}$ to $Q$.
    Considering the structure of $\term{C}$, we then isolate a sequence of $k'$ possible steps, such that $\encc{z}{C} \reddL^{k'} \encc{z}{D'}$ for some $\term{D'}$ where $\term{C} \reddC \term{D'}$.
    Since $\encc{z}{D'} \reddL^{k-k'} Q$, it then follows from the induction hypothesis that there exists $\term{D}$ such that $\term{D'} \reddC^\ast \term{D}$ and $\encc{z}{D'} \reddL^\ast \encc{z}{D}$.

    Key here is the independence of reductions in APCP:
    if two or more reductions are enabled from a (well-typed) process, they must originate from independent parts of the process, and so they do not interfere with each other.
    This essentially means that the order in which independent reductions occur does not affect the resulting process.
    Hence, we can pick ``desirable'' sequences of reductions, postponing other possible reductions.
    \ifappendix See \mbox{\Cref{as:soundness}} for a full proof of soundness.
    \else See the extended version of this paper~\cite{report/vdHeuvelP22} for a full proof of soundness.
    \fi
\end{proof}

\noindent
From the proof above we can deduce that if the translation takes at least one step, then so does the source:

\begin{corollary}
    \label{cor:soundnessPlus}
    Given $\type{\Gamma} \vdashC{\phi} \term{C}: \type{T}$, if $\encc{z}{C} \reddL^+ Q$, then $\term{C} \reddC^+ \term{D}$ and $Q \reddL^\ast \encc{z}{D}$ for some~$\term{D}$.
\end{corollary}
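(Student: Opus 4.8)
The plan is to obtain the statement directly from (the proof of) \Cref{t:soundness}, strengthening only the ``zero-or-more'' to ``one-or-more'' on the source side. Recall the structure of that proof: given $\encc{z}{C} \reddL^k Q$, it inducts on the structure of $\term{C}$ and on $k$, and in the case $k \ge 1$ it singles out a nonempty initial block of lazy reductions $\encc{z}{C} \reddL^{k'} \encc{z}{D'}$ with $k' \ge 1$ that simulates exactly one source reduction $\term{C} \reddC \term{D'}$; the leftover reductions are then dispatched by the induction hypothesis, yielding $\term{D}$ with $\term{D'} \reddC^\ast \term{D}$ and $Q \reddL^\ast \encc{z}{D}$. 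Composing $\term{C} \reddC \term{D'} \reddC^\ast \term{D}$ gives $\term{C} \reddC^+ \term{D}$, which together with $Q \reddL^\ast \encc{z}{D}$ is precisely the claim. So the proof of the corollary is essentially a one-line appeal: run the soundness argument on a run of length $k \ge 1$ and read off that the source configuration produced is reached in at least one step.

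The only point that needs care is justifying that this initial block of lazy reductions is genuinely nonempty whenever $k \ge 1$, i.e.\ that a strictly positive run $\encc{z}{C} \reddL^+ Q$ cannot be ``absorbed'' back into $\encc{z}{C}$ with no source step having occurred. I would extract this from the case analysis in the soundness proof itself: each case there that can fire a lazy reduction does so as (part of) the simulation of some concrete source redex of $\term{C}$, and the lazy semantics $\reddL$ together with the forwarder-enabled restrictions $\nuf{xy}$ are designed precisely so that such administrative blocks are finite and always complete into a full $\reddC$-step. Hence, once any lazy reduction has occurred, a source reduction is committed, and the $\term{D'}$ above satisfies $\term{C} \reddC \term{D'}$ with a genuine step.

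An alternative, more external route --- observing that if the claim failed we would have $\encc{z}{C} \reddL^+ Q \reddL^\ast \encc{z}{C}$, hence a lazy-reduction cycle through $\encc{z}{C}$, and then appealing to strong normalisation of the translation's image under $\reddL$ --- is also available, but it imports machinery we do not otherwise need; I prefer the first route, which merely reuses the bookkeeping already carried out for \Cref{t:soundness}. I expect the main obstacle to be purely presentational: stating cleanly, without re-inlining the entire soundness proof, the invariant ``any available $\reddL$-step from a translated configuration begins the simulation of a source $\reddC$-step,'' since this is exactly the fact the soundness proof establishes case by case.
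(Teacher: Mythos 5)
Your proposal is correct and matches the paper's own treatment: the paper states \Cref{cor:soundnessPlus} as an immediate consequence of the proof of \Cref{t:soundness}, precisely because that proof, in every case where $\encc{z}{C}$ can take a lazy step, isolates an initial nonempty block $\encc{z}{C} \reddL^{k'} \encc{z}{D'}$ that simulates a genuine source step $\term{C} \reddC \term{D'}$, after which the induction hypothesis yields $\term{D'} \reddC^\ast \term{D}$ and $Q \reddL^\ast \encc{z}{D}$, so composing gives $\term{C} \reddC^+ \term{D}$. Your identification of the key invariant (any available $\reddL$-step from a translated configuration begins the simulation of a $\reddC$-step, guaranteed by the lazy semantics and forwarder-enabled restrictions) is exactly the point the paper's case analysis establishes, and the alternative cycle/normalisation route is rightly unnecessary.
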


\subsection{Transferring Deadlock-freedom from APCP to \ourGV}
\label{ss:ourGVdf}

In APCP, well-typed processes typable under empty contexts ($P \vdash \emptyset$) are deadlock-free.
By appealing to the operational correctness of our translation, we transfer this result to \ourGV configurations.
Each   deadlock-free configuration in \ourGV obtained via transference satisfies two requirements:
\begin{itemize}
    \item The configuration is typable $\type{\emptyset} \vdashC{\main} \term{C} : \type{\1}$: it needs no external resources and has no external behavior.
    \item The typed translation of the configuration satisfies APCP's priority requirements: it is well-typed under `$\vdash$', not only under `$\vdashAst$' (cf.\ Def.\ \labelcref{d:vdashAst}).
\end{itemize}

We rely on soundness (\Cref{t:soundness}) to transfer deadlock-freedom to configurations.
However, APCP's deadlock-freedom (\Cref{t:APCPdlFree}) considers standard semantics ($\redd$), whereas soundness considers the lazy semantics ($\reddL$).
Therefore, we first must show that if the translation of a configuration satisfying the requirements above reduces under $\redd$, it also reduces under $\reddL$; this is \Cref{t:confTransReddL} below.
The deadlock-freedom of these configurations (\Cref{t:ourGVdfFree}) then follows from \Cref{t:APCPdlFree,t:confTransReddL}.
\ifappendix See \mbox{\Cref{a:ourGVdlFree}} for detailed proofs of these results.
\else See the extended version of this paper~\cite{report/vdHeuvelP22} for detailed proofs of these results.
\fi

\begin{restatable}{theorem}{thmConfTransReddL}
\label{t:confTransReddL}
    Given $\type{\emptyset} \vdashC{\main} \term{C}: \type{\1}$, if $\encc{z}{C} \vdash \Gamma$ for some $\Gamma$ and $\encc{z}{C} \redd Q$, then  $\encc{z}{C} \reddL Q'$, for some $Q'$.
\end{restatable}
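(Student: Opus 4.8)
The plan is to exploit the fact that $\reddL$ and $\redd$ agree on \emph{every} reduction rule except the forwarder rule $\rred{\scc{Id}}$: each axiom $\rred{\tensor\parr}$ and $\rred{\oplus\&}$, closed under $\rred{\equiv}$, $\rred{\onu}$ and $\rred{\|}$, is at once a $\redd$-step and a $\reddL$-step, while $\reddL$ admits $\rred{\scc{Id}}$ only for forwarders released by a forwarder-enabled restriction $\nuf{xy}$ (the translation being arranged so that a forwarder arising from a \ourGV variable is released lazily exactly when the explicit substitution it implements would fire). So the first step is to unfold the hypothesis $\encc{z}{C}\redd Q$ through the closure rules down to the axiom it applies, and to branch on whether that axiom is $\rred{\scc{Id}}$. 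If it is not, the very same derivation is a $\reddL$-derivation and we take $Q'=Q$. Hence we may assume that \emph{every} reduction enabled from $\encc{z}{C}$ is an $\rred{\scc{Id}}$-step (otherwise we are back in the previous case), and there is at least one such step; it remains to exhibit one that is forwarder-enabled.

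For this I would argue by induction on the structure of $\term{C}$ --- threads $\term{\phi\,\mbb{M}}$, parallel composition, buffered restrictions $\term{\nu{x\bfr{\vec{m}}y}C}$, and configuration-level explicit substitutions --- with, inside a thread, a secondary induction on the runtime term, reading off the translation clauses of \Cref{f:transTermShort,f:transConfBufShort}. The key observation is that forwarders are produced only in a small, controlled set of positions: as the image $\encc{z}{x}=x\fwd z$ of a variable, as the continuation forwarders in the clauses for the session primitives $\term{\sff{send}}$, $\term{\sff{send}'}$, $\term{\sff{recv}}$, $\term{\sff{select}}$, and as the forwarded outputs $\fwded{x}[\tilde{a}]$ in buffer translations. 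A variable-forwarder is always introduced directly beneath the forwarder-enabled restriction contributed by its nearest enclosing explicit substitution or binder (the $\nuf{}$'s in the clauses for \scc{T-Sub}, \scc{T-ConfSub}, \scc{T-Abs}, \scc{T-Split}), and structural congruence can bring it down to that restriction; so as soon as it is unguarded it is a $\reddL$-redex, and when it is still guarded the guard is an input/branching prefix whose dual is available in the complete process $\encc{z}{C}$, yielding a $\rred{\tensor\parr}$/$\rred{\oplus\&}$ step --- contradicting the assumption that only $\rred{\scc{Id}}$-steps are enabled. The forwarded outputs at the head of a buffer are handled separately: such an output becomes active only when the receiver for that session inside $\encc{z}{C}$ is poised to consume it, and one checks that the attendant forwarder is then released under the lazy semantics. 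Since $\type{\emptyset}\vdashC{\main}\term{C}:\type{\1}$ forces $\encc{z}{C}$ to be complete (it is typable under $z:\bullet$, and by assumption even under `$\vdash$', so no endpoint dangles), the case analysis is exhaustive and produces a forwarder-enabled redex, hence $\encc{z}{C}\reddL Q'$.

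The main obstacle is precisely this last inductive inspection: it is bookkeeping-heavy, requiring for each occurrence of a forwarder that one identify the restriction binding it and verify that this restriction is, or can be rewritten with $\equiv$ to be, forwarder-enabled; the genuinely delicate sub-case is the forwarded outputs $\fwded{x}[\tilde{a}]$ generated by \scc{T-BufSend} and \scc{T-BufSelect}, which use plain restrictions and whose release must be shown to coincide with the buffer delivering its head message to a receiver in $\encc{z}{C}$. Throughout, Type Preservation for APCP (\Cref{t:APCPsubjRed}) together with the priority hypothesis $\encc{z}{C}\vdash\Gamma$ keeps the argument within well-typed, priority-consistent processes, which excludes the degenerate shapes in which an eager forwarder reduction could be the sole escape from an otherwise stuck process; the statement is then exactly the bridge needed to combine APCP's deadlock-freedom (\Cref{t:APCPdlFree}, stated for $\redd$) with soundness (\Cref{t:soundness}, stated for $\reddL$) in the transference result.
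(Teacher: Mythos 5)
Your first step (non-$\rred{\scc{Id}}$ steps of $\redd$ are literally $\reddL$ steps, so take $Q'=Q$) matches the paper, and your classification of where forwarders come from is on the right track. But the crucial case is handled by assertion rather than argument, and the strategy you announce for it --- ``exhibit an $\rred{\scc{Id}}$-step that is forwarder-enabled'' --- does not work there. When the only live redexes are forwarders translating the \emph{channel variable} of a communication primitive ($\term{\sff{send}'}$, $\term{\sff{select}}$, $\term{\sff{recv}}$, $\term{\sff{case}}$, or a buffered message), those forwarders sit under plain restrictions (the $\nu{ef}$, $\nu{gh}$ of the \scc{T-Send'}/\scc{T-Recv} clauses and the buffer clauses, and the buffer restriction $\nu{xy}$ itself), so \emph{no} forwarder-enabled $\rred{\scc{Id}}$-redex need exist at all. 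The paper's escape in this case is not a forwarder step but a \emph{short-circuit} communication through the forwarders, via Rules~$\rred{\overset{\leftrightarrow}{\tensor\parr}}$ and $\rred{\overset{\leftrightarrow}{\oplus\&}}$ of the lazy semantics --- rules your proposal never invokes. Moreover, for that rule to fire one must show that some matching forwarded-output/forwarded-input pair is \emph{simultaneously} unguarded; your phrases ``whose dual is available in the complete process'' and ``the receiver \dots is poised to consume it'' assert exactly this, but it is false in general for configurations typed only under $\vdashAst$ (cyclic, deadlocked shapes). The paper proves it by the one genuinely priority-dependent argument in the whole development: pick the communication primitive whose translated type has the \emph{least} priority, follow the chain of blocking (forwarded) inputs backwards using the bound $\encc{z}{C}\vdash\Gamma$, and derive a contradiction with minimality if the dual receiver were blocked. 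Saying that the priority hypothesis ``excludes the degenerate shapes'' names the conclusion, not a proof; without this minimal-priority chain argument the case is open and the theorem is not established.

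A secondary, smaller gap: in the two easy forwarder cases (variable under a reduction context subject to an explicit substitution, variable as the replacement term) you claim the forwarder is a $\reddL$-redex ``as soon as it is unguarded''. Rule~$(x,y)$ of the lazy semantics additionally requires the side condition $\bcont{x,y}(\encc{z}{C})$, i.e.\ that the forwarded endpoint is not the continuation endpoint of an output/selection forwarded on a \emph{free} name; the paper discharges this by inspecting how reduction contexts and binders translate. This check is routine but must appear, since the whole point of $\reddL$ is that unguarded, $\nuf{}$-bound forwarders are still not always allowed to fire.
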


\begin{proof}[Proof (Sketch)]
    By inspecting the derivation of $\encc{z}{C} \redd Q$.
    If the reduction is not derived from $\rred{\scc{Id}}$, it can be directly replicated under $\reddL$.
    Otherwise, we analyze the possible shapes of $\term{C}$ and show that a different reduction under $\reddL$ is possible.
\end{proof}

\begin{restatable}[Deadlock-freedom for \ourGV]
{theorem}{thmOurGVdfFree}
\label{t:ourGVdfFree}
    \!Given $\type{\emptyset} \vdashC{\main} \term{C}: \type{\1}$, if $\encc{z}{C} \vdash \Gamma$ for some $\Gamma$, then $\term{C} \equiv \term{\main\,()}$ or $\term{C} \reddC \term{D}$ for some~$\term{D}$.
\end{restatable}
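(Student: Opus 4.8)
The plan is to obtain this from APCP's deadlock-freedom (\Cref{t:APCPdlFree}), applied to a closed-off version of the translation of $\term{C}$, bridging the two semantics with \Cref{t:confTransReddL} and pulling APCP reductions back to \ourGV with \Cref{cor:soundnessPlus}. As a first step, from $\type{\emptyset} \vdashC{\main} \term{C}: \type{\1}$ and \Cref{t:transTypePres} I get $\encc{z}{C} \vdashAst z: \enct{\1} = z:\bullet$; since the APCP typing rules are syntax-directed up to vacuous applications of Rule~$\bullet$, the hypothesis $\encc{z}{C} \vdash \Gamma$ forces $\Gamma$ to consist entirely of $\bullet$-assignments. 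Composing $\encc{z}{C}$ with a $\0$ for each endpoint of $\Gamma$ (typed by \scc{Empty} and Rule~$\bullet$) and closing each pair with \scc{Mix} and \scc{Cycle} (using $\ol{\bullet}=\bullet$) yields a process $P_0 \vdash \emptyset$.

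I would then argue by contraposition: assume $\term{C} \not\equivC \term{\main\,()}$ and $\term{C}$ is $\reddC$-irreducible, aiming for a contradiction. First, $\encc{z}{C}$ is $\redd$-irreducible: if $\encc{z}{C} \redd Q$, then \Cref{t:confTransReddL} (whose hypotheses are exactly those in force) gives $\encc{z}{C} \reddL Q'$ for some $Q'$, hence $\encc{z}{C} \reddL^+ Q'$, and then \Cref{cor:soundnessPlus} gives $\term{C} \reddC^+ \term{D}$, contradicting irreducibility of $\term{C}$. Because the added $\0$'s are irreducible and the names freshly bound in $P_0$ do not occur in $\encc{z}{C}$, this also makes $P_0$ irreducible, so by \Cref{t:APCPdlFree} (contrapositive) $P_0 \equiv \0$; using $P \| \0 \equiv P$ together with the freshness of those bound names, an inspection of the structural-congruence axioms then yields $\encc{z}{C} \equiv \0$. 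It remains to derive a contradiction from the combination: $\term{C}$ closed, well-typed of type $\type{\1}$, $\reddC$-irreducible, $\term{C}\not\equivC\term{\main\,()}$, and $\encc{z}{C}\equiv\0$. I would do this by a structural induction on the typing derivation of $\term{C}$ showing that an irreducible such configuration with trivial translation must be $\equivC\term{\main\,()}$: an irreducible thread carrying anything but $\term{()}$, a non-empty buffer, a pending $\term{\sff{recv}}$ or $\term{\sff{case}}$, or an unresolved $\term{\sff{send}'}$, abstraction, application or pair would leave a prefix, an unguarded output/selection, or a free forwarder in $\encc{z}{C}$ that no structural congruence can erase --- contradicting $\encc{z}{C}\equiv\0$ --- whereas empty buffers, completed child threads, and already-performed substitutions are exactly what the translation collapses to $\0$ and what \scc{SC-ResNil}, \scc{SC-ParNil} and their siblings remove.

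The main obstacle is this last ``inversion'' step: it is the only place where one must reason back from a trivial translation to a trivial configuration, and getting it right is precisely what exhibits that the priority hypothesis ($\encc{z}{C}\vdash\Gamma$, not merely $\vdashAst$) rules out deadlocked cyclic configurations, whose translations do carry a genuinely stuck prefix. The remaining ingredients are routine: checking that the close-off with fresh $\0$'s introduces no new reductions (a short case analysis on APCP's reduction rules, using that the fresh names do not occur in $\encc{z}{C}$, so in particular no $\rred{\scc{Id}}$ step across the new restriction is possible), and noting that the side conditions of \Cref{t:confTransReddL} and \Cref{cor:soundnessPlus} hold verbatim. A minor technical point to settle at the outset is the identification of $\Gamma$ as a context of $\bullet$'s, which follows from \Cref{t:transTypePres} and the near-syntax-directedness of the APCP type system.
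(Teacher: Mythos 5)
Your proposal follows essentially the same route as the paper's proof: translate, use \Cref{t:transTypePres} (resp.\ the typability of the translation) to close the translation off into a process typable under the empty context, invoke \Cref{t:APCPdlFree}, and bridge back to \ourGV via \Cref{t:confTransReddL} and \Cref{cor:soundnessPlus}; you merely run the argument contrapositively, and the ``inversion'' step you single out ($\encc{z}{C}\equiv\0$ implies $\term{C}\equivC\term{\main\,()}$) is exactly the step the paper's case~(i) also dispatches as a direct consequence of well-typedness and the translation. One caveat: your justification that closing off creates no new reductions is too quick for the forwarder case. Freshness of the newly bound partner endpoint does not by itself exclude an $\rred{\scc{Id}}$ step at the new restriction, since that restriction also binds $z$, and a top-level forwarder $x \fwd z$ would yield such a redex (with a vacuous substitution of the fresh name); the correct repair is the paper's observation, mirrored in your setting: since $\encc{z}{C} \vdash \Gamma$ with $z$ the only relevant free endpoint, the other endpoint $x$ of any such forwarder is bound inside $\encc{z}{C}$, so the same forwarder already reduces with that inner restriction, contradicting your assumption that $\encc{z}{C}$ is $\redd$-irreducible. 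With that one-line fix (and your acknowledged case analysis for the inversion step), the argument goes through as in the paper.
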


\begin{proof}[Proof (Sketch)]
    By assumption and \Cref{t:transTypePres}, $\encc{z}{C} \vdash z:\bullet$.
    Then $\nu{z\_}\encc{z}{C} \vdash \emptyset$.
    By \Cref{t:APCPdlFree}, (i)~$\nu{z\_}\encc{z}{C} \equiv \0$ or (ii)~$\nu{z\_}\encc{z}{C} \redd Q$ for some $Q$.
    In case~(i) it follows from the well-typedness and translation of $\term{C}$ that $\term{C} \equivC \term{\main\, ()}$.
    In case~(ii) we deduce that the reduction of $\nu{z\_}\encc{z}{C}$ cannot involve the endpoint $z$.
    Hence, $\encc{z}{C} \redd Q_0$ for some $Q_0$.
    By \Cref{t:confTransReddL}, then $\encc{z}{C} \reddL Q'$ for some $Q'$.
    Then, by \mbox{\Cref{cor:soundnessPlus}}, there exists $\term{D'}$ such that $\term{C} \reddC^+ \term{D'}$.
    Hence, $\term{C} \reddC \term{D}$ for some $\term{D}$, proving the thesis.
\end{proof}

\smallskip \noindent
As an example, using \Cref{t:ourGVdfFree} we can show  that $\term{C_1}$ from \Cref{x:gvRed} is deadlock-free;
\ifappendix see \mbox{App.~\labelcref{a:example}}.
\else see~\cite{report/vdHeuvelP22}.
\fi

\section{Conclusion}
\label{s:conclusion}

We have presented \ourGV, a new functional language with asynchronous session-typed communication.
As illustrated in \Cref{s:intro}, \ourGV is strictly more expressive than its predecessors, thanks to a highly asynchronous semantics (compared to GV and PGV), its support for cyclic thread configurations (compared to EGV), and the ability to send whole terms and not just values (compared to all the mentioned calculi).
\Cref{tbl:comparison} summarizes the features of \ourGV compared to its predecessors.

An operationally correct translation into APCP solidifies the design of \ourGV, and enables identifying a class of deadlock-free \ourGV programs.
Interestingly, the asynchronous semantics of \ourGV is reminiscent of \emph{future}/\emph{promise} programming paradigms (see, e.g.,~\cite{journal/pls/Halstead85,report/OstheimerD93,journal/cl/TremblayM00}), which have been little studied in the context of session-typed communication.

The alternative to establishing deadlock-freedom in \ourGV via translation into APCP would be to enhance \ourGV's type system with priorities (in the spirit of, e.g., work by Padovani and Novara~\cite{conf/forte/PadovaniN15}).
Another useful addition concerns recursion / recursive types.
We leave these extensions to future work.

\begin{table}[!t]
    \begin{center}
        \begin{tabular}{|c|c|c|c|c|c|}
            \hline
            & $\lGV$~\cite{journal/jfp/GayV10}
            & GV~\cite{conf/icfp/Wadler12}
            & EGV~\cite{conf/popl/FowlerLMD19}
            & PGV~\cite{conf/forte/KokkeD21,report/KokkeD21}
            & \textbf{\ourGV (this paper)}
            \\ \hline
            Communication
            & Asynch.
            & Synch.
            & Asynch.
            & Synch.
            & Asynch.
            \\ \hline
            Cyclic Topologies
            & Yes
            & No
            & No
            & Yes
            & Yes
            \\ \hline
            Deadlock-Freedom
            & No
            & Yes (typing)
            & Yes (typing)
            & Yes (typing)
            & Yes (via APCP)
            \\ \hline
        \end{tabular}
    \end{center}
    \caption{The features of \ourGV compared to its predecessors.}
    \label{tbl:comparison}
\end{table}

\paragraph{Acknowledgments}
Thanks to Simon Fowler and the anonymous reviewers for their helpful feedback.
We gratefully acknowledge the support of the Dutch Research Council (NWO) under project No.\,016.Vidi.189.046 (Unifying Correctness for Communicating Software).

\bibliographystyle{eptcs}
\bibliography{refs}

\providecommand{\noopsort}[1]{}
\begin{thebibliography}{10}
\providecommand{\bibitemdeclare}[2]{}
\providecommand{\surnamestart}{}
\providecommand{\surnameend}{}
\providecommand{\urlprefix}{Available at }
\providecommand{\url}[1]{\texttt{#1}}
\providecommand{\href}[2]{\texttt{#2}}
\providecommand{\urlalt}[2]{\href{#1}{#2}}
\providecommand{\doi}[1]{doi:\urlalt{https://doi.org/#1}{#1}}
\providecommand{\eprint}[1]{arXiv:\urlalt{https://arxiv.org/abs/#1}{#1}}
\providecommand{\bibinfo}[2]{#2}

\bibitemdeclare{inproceedings}{conf/concur/CairesP10}
\bibitem{conf/concur/CairesP10}
\bibinfo{author}{Lu{\'i}s \surnamestart Caires\surnameend} \&
  \bibinfo{author}{Frank \surnamestart Pfenning\surnameend}
  (\bibinfo{year}{2010}): \emph{\bibinfo{title}{Session {{Types}} as
  {{Intuitionistic Linear Propositions}}}}.
\newblock In \bibinfo{editor}{Paul \surnamestart Gastin\surnameend} \&
  \bibinfo{editor}{Fran{\c c}ois \surnamestart Laroussinie\surnameend},
  editors: {\slshape \bibinfo{booktitle}{{{CONCUR}} 2010 - {{Concurrency
  Theory}}}}, \bibinfo{series}{Lecture {{Notes}} in {{Computer Science}}},
  \bibinfo{publisher}{{Springer}}, \bibinfo{address}{{Berlin, Heidelberg}}, pp.
  \bibinfo{pages}{222--236}, \doi{10.1007/978-3-642-15375-4_16}.

\bibitemdeclare{article}{journal/mscs/CairesPT16}
\bibitem{journal/mscs/CairesPT16}
\bibinfo{author}{Lu{\'i}s \surnamestart Caires\surnameend},
  \bibinfo{author}{Frank \surnamestart Pfenning\surnameend} \&
  \bibinfo{author}{Bernardo \surnamestart Toninho\surnameend}
  (\bibinfo{year}{2016}): \emph{\bibinfo{title}{Linear Logic Propositions as
  Session Types}}.
\newblock {\slshape \bibinfo{journal}{Mathematical Structures in Computer
  Science}} \bibinfo{volume}{26}(\bibinfo{number}{3}), pp.
  \bibinfo{pages}{367--423}, \doi{10.1017/S0960129514000218}.

\bibitemdeclare{inproceedings}{conf/fossacs/DardhaG18}
\bibitem{conf/fossacs/DardhaG18}
\bibinfo{author}{Ornela \surnamestart Dardha\surnameend} \&
  \bibinfo{author}{Simon~J. \surnamestart Gay\surnameend}
  (\bibinfo{year}{2018}): \emph{\bibinfo{title}{A {{New Linear Logic}} for
  {{Deadlock-Free Session-Typed Processes}}}}.
\newblock In \bibinfo{editor}{Christel \surnamestart Baier\surnameend} \&
  \bibinfo{editor}{Ugo \surnamestart Dal~Lago\surnameend}, editors: {\slshape
  \bibinfo{booktitle}{Foundations of {{Software Science}} and {{Computation
  Structures}}}}, \bibinfo{series}{Lecture {{Notes}} in {{Computer Science}}},
  \bibinfo{publisher}{{Springer International Publishing}}, pp.
  \bibinfo{pages}{91--109}, \doi{10.1007/978-3-319-89366-2_5}.

\bibitemdeclare{article}{conf/express/DardhaP15}
\bibitem{conf/express/DardhaP15}
\bibinfo{author}{Ornela \surnamestart Dardha\surnameend} \&
  \bibinfo{author}{Jorge~A. \surnamestart P{\'e}rez\surnameend}
  (\bibinfo{year}{2015}): \emph{\bibinfo{title}{Comparing {{Deadlock-Free
  Session Typed Processes}}}}.
\newblock {\slshape \bibinfo{journal}{Electronic Proceedings in Theoretical
  Computer Science}} \bibinfo{volume}{190}, pp. \bibinfo{pages}{1--15},
  \doi{10.4204/EPTCS.190.1}.
\newblock \eprint{1508.06707}.

\bibitemdeclare{article}{journal/jlamp/DardhaP22}
\bibitem{journal/jlamp/DardhaP22}
\bibinfo{author}{Ornela \surnamestart Dardha\surnameend} \&
  \bibinfo{author}{Jorge~A. \surnamestart P{\'e}rez\surnameend}
  (\bibinfo{year}{2022}): \emph{\bibinfo{title}{Comparing Type Systems for
  Deadlock Freedom}}.
\newblock {\slshape \bibinfo{journal}{Journal of Logical and Algebraic Methods
  in Programming}} \bibinfo{volume}{124}, p. \bibinfo{pages}{100717},
  \doi{10.1016/j.jlamp.2021.100717}.

\bibitemdeclare{phdthesis}{thesis/Fowler19}
\bibitem{thesis/Fowler19}
\bibinfo{author}{Simon \surnamestart Fowler\surnameend} (\bibinfo{year}{2019}):
  \emph{\bibinfo{title}{Typed {{Concurrent Functional Programming}} with
  {{Channels}}, {{Actors}}, and {{Sessions}}}}.
\newblock Ph.D. thesis, \bibinfo{school}{University of Edinburgh}.

\bibitemdeclare{article}{conf/popl/FowlerLMD19}
\bibitem{conf/popl/FowlerLMD19}
\bibinfo{author}{Simon \surnamestart Fowler\surnameend}, \bibinfo{author}{Sam
  \surnamestart Lindley\surnameend}, \bibinfo{author}{J.~Garrett \surnamestart
  Morris\surnameend} \& \bibinfo{author}{S{\'a}ra \surnamestart
  Decova\surnameend} (\bibinfo{year}{2019}): \emph{\bibinfo{title}{Exceptional
  Asynchronous Session Types: Session Types without Tiers}}.
\newblock {\slshape \bibinfo{journal}{Proceedings of the ACM on Programming
  Languages}}, \doi{10.1145/3290341}.

\bibitemdeclare{article}{journal/jfp/GayV10}
\bibitem{journal/jfp/GayV10}
\bibinfo{author}{Simon~J. \surnamestart Gay\surnameend} \&
  \bibinfo{author}{Vasco~T. \surnamestart Vasconcelos\surnameend}
  (\bibinfo{year}{2010}): \emph{\bibinfo{title}{Linear Type Theory for
  Asynchronous Session Types}}.
\newblock {\slshape \bibinfo{journal}{Journal of Functional Programming}}
  \bibinfo{volume}{20}(\bibinfo{number}{1}), pp. \bibinfo{pages}{19--50},
  \doi{10.1017/S0956796809990268}.

\bibitemdeclare{article}{journal/apal/Girard93}
\bibitem{journal/apal/Girard93}
\bibinfo{author}{Jean-Yves \surnamestart Girard\surnameend}
  (\bibinfo{year}{1993}): \emph{\bibinfo{title}{On the Unity of Logic}}.
\newblock {\slshape \bibinfo{journal}{Annals of Pure and Applied Logic}}
  \bibinfo{volume}{59}(\bibinfo{number}{3}), pp. \bibinfo{pages}{201--217},
  \doi{10.1016/0168-0072(93)90093-S}.

\bibitemdeclare{article}{journal/ic/Gorla10}
\bibitem{journal/ic/Gorla10}
\bibinfo{author}{Daniele \surnamestart Gorla\surnameend}
  (\bibinfo{year}{2010}): \emph{\bibinfo{title}{Towards a Unified Approach to
  Encodability and Separation Results for Process Calculi}}.
\newblock {\slshape \bibinfo{journal}{Information and Computation}}
  \bibinfo{volume}{208}(\bibinfo{number}{9}), pp. \bibinfo{pages}{1031--1053},
  \doi{10.1016/j.ic.2010.05.002}.

\bibitemdeclare{article}{journal/pls/Halstead85}
\bibitem{journal/pls/Halstead85}
\bibinfo{author}{Robert~H. \surnamestart Halstead\surnameend}
  (\bibinfo{year}{1985}): \emph{\bibinfo{title}{{{MULTILISP}}: A Language for
  Concurrent Symbolic Computation}}.
\newblock {\slshape \bibinfo{journal}{ACM Transactions on Programming Languages
  and Systems}} \bibinfo{volume}{7}(\bibinfo{number}{4}), pp.
  \bibinfo{pages}{501--538}, \doi{10.1145/4472.4478}.

\bibitemdeclare{article}{conf/places/vdHeuvelP20}
\bibitem{conf/places/vdHeuvelP20}
\bibinfo{author}{Bas \surnamestart {\noopsort{heuvel}}{van den
  Heuvel}\surnameend} \& \bibinfo{author}{Jorge~A. \surnamestart
  P{\'e}rez\surnameend} (\bibinfo{year}{2020}): \emph{\bibinfo{title}{Session
  {{Type Systems}} Based on {{Linear Logic}}: {{Classical}} versus
  {{Intuitionistic}}}}.
\newblock {\slshape \bibinfo{journal}{Electronic Proceedings in Theoretical
  Computer Science}} \bibinfo{volume}{314}, pp. \bibinfo{pages}{1--11},
  \doi{10.4204/EPTCS.314.1}.
\newblock \eprint{2004.01320}.

\bibitemdeclare{article}{report/vdHeuvelP21B}
\bibitem{report/vdHeuvelP21B}
\bibinfo{author}{Bas \surnamestart {\noopsort{heuvel}}{van den
  Heuvel}\surnameend} \& \bibinfo{author}{Jorge~A. \surnamestart
  P{\'e}rez\surnameend} (\bibinfo{year}{2021}): \emph{\bibinfo{title}{Deadlock
  {{Freedom}} for {{Asynchronous}} and {{Cyclic Process Networks}} ({{Extended
  Version}})}}.
\newblock {\slshape \bibinfo{journal}{arXiv:2111.13091 [cs]}}.
\newblock \eprint{2111.13091}.
\newblock \bibinfo{note}{A short version appears in the {{Proceedings}} of
  {{ICE'21}}:
  \texttt{\href{https://arxiv.org/abs/2110.00146}{arXiv:2110.00146}}}.

\bibitemdeclare{inproceedings}{conf/express/vdHeuvelP22}
\bibitem{conf/express/vdHeuvelP22}
\bibinfo{author}{Bas \surnamestart {\noopsort{heuvel}}{van den
  Heuvel}\surnameend} \& \bibinfo{author}{Jorge~A. \surnamestart
  P{\'e}rez\surnameend} (\bibinfo{year}{2022}):
  \emph{\bibinfo{title}{Asynchronous Functional Sessions: {{Cyclic}} and
  Concurrent}}.
\newblock In \bibinfo{editor}{Valentina \surnamestart Castiglioni\surnameend}
  \& \bibinfo{editor}{Claudio~A. \surnamestart Mezzina\surnameend}, editors:
  {\slshape \bibinfo{booktitle}{Proceedings Combined 29th International
  Workshop on Expressiveness in Concurrency and 19th Workshop on Structural
  Operational Semantics , Warsaw, Poland, 12th September 2022}}, {\slshape
  \bibinfo{series}{Electronic Proceedings in Theoretical Computer Science}}
  \bibinfo{volume}{368}, \bibinfo{publisher}{{Open Publishing Association}},
  pp. \bibinfo{pages}{75--94}, \doi{10.4204/EPTCS.368.5}.

\bibitemdeclare{inproceedings}{conf/concur/Honda93}
\bibitem{conf/concur/Honda93}
\bibinfo{author}{Kohei \surnamestart Honda\surnameend} (\bibinfo{year}{1993}):
  \emph{\bibinfo{title}{Types for Dyadic Interaction}}.
\newblock In \bibinfo{editor}{Eike \surnamestart Best\surnameend}, editor:
  {\slshape \bibinfo{booktitle}{{{CONCUR}}'93}}, \bibinfo{series}{Lecture
  {{Notes}} in {{Computer Science}}}, \bibinfo{publisher}{{Springer}},
  \bibinfo{address}{{Berlin, Heidelberg}}, pp. \bibinfo{pages}{509--523},
  \doi{10.1007/3-540-57208-2_35}.

\bibitemdeclare{inproceedings}{conf/concur/Kobayashi06}
\bibitem{conf/concur/Kobayashi06}
\bibinfo{author}{Naoki \surnamestart Kobayashi\surnameend}
  (\bibinfo{year}{2006}): \emph{\bibinfo{title}{A {{New Type System}} for
  {{Deadlock-Free Processes}}}}.
\newblock In \bibinfo{editor}{Christel \surnamestart Baier\surnameend} \&
  \bibinfo{editor}{Holger \surnamestart Hermanns\surnameend}, editors:
  {\slshape \bibinfo{booktitle}{{{CONCUR}} 2006 \textendash{} {{Concurrency
  Theory}}}}, \bibinfo{series}{Lecture {{Notes}} in {{Computer Science}}},
  \bibinfo{publisher}{{Springer Berlin Heidelberg}}, pp.
  \bibinfo{pages}{233--247}, \doi{10.1007/11817949_16}.

\bibitemdeclare{inproceedings}{conf/forte/KokkeD21}
\bibitem{conf/forte/KokkeD21}
\bibinfo{author}{Wen \surnamestart Kokke\surnameend} \& \bibinfo{author}{Ornela
  \surnamestart Dardha\surnameend} (\bibinfo{year}{2021}):
  \emph{\bibinfo{title}{Prioritise the {{Best Variation}}}}.
\newblock In \bibinfo{editor}{Kirstin \surnamestart Peters\surnameend} \&
  \bibinfo{editor}{Tim A.~C. \surnamestart Willemse\surnameend}, editors:
  {\slshape \bibinfo{booktitle}{Formal {{Techniques}} for {{Distributed
  Objects}}, {{Components}}, and {{Systems}}}}, \bibinfo{series}{Lecture
  {{Notes}} in {{Computer Science}}}, \bibinfo{publisher}{{Springer
  International Publishing}}, \bibinfo{address}{{Cham}}, pp.
  \bibinfo{pages}{100--119}, \doi{10.1007/978-3-030-78089-0_6}.

\bibitemdeclare{misc}{report/KokkeD21}
\bibitem{report/KokkeD21}
\bibinfo{author}{Wen \surnamestart Kokke\surnameend} \& \bibinfo{author}{Ornela
  \surnamestart Dardha\surnameend} (\bibinfo{year}{2021}):
  \emph{\bibinfo{title}{Prioritise the {{Best Variation}}}},
  \doi{10.48550/arXiv.2103.14466}.
\newblock \eprint{2103.14466}.

\bibitemdeclare{inproceedings}{conf/esop/LindleyM15}
\bibitem{conf/esop/LindleyM15}
\bibinfo{author}{Sam \surnamestart Lindley\surnameend} \&
  \bibinfo{author}{J.~Garrett \surnamestart Morris\surnameend}
  (\bibinfo{year}{2015}): \emph{\bibinfo{title}{A {{Semantics}} for
  {{Propositions}} as {{Sessions}}}}.
\newblock In \bibinfo{editor}{Jan \surnamestart Vitek\surnameend}, editor:
  {\slshape \bibinfo{booktitle}{Programming {{Languages}} and {{Systems}}}},
  \bibinfo{series}{Lecture {{Notes}} in {{Computer Science}}},
  \bibinfo{publisher}{{Springer}}, \bibinfo{address}{{Berlin, Heidelberg}}, pp.
  \bibinfo{pages}{560--584}, \doi{10.1007/978-3-662-46669-8_23}.

\bibitemdeclare{book}{book/Milner89}
\bibitem{book/Milner89}
\bibinfo{author}{Robin \surnamestart Milner\surnameend} (\bibinfo{year}{1989}):
  \emph{\bibinfo{title}{Communication and Concurrency}}.
\newblock \bibinfo{series}{Prentice {{Hall International Series}} in {{Computer
  Science}}}, \bibinfo{publisher}{{Prentice Hall}}, \bibinfo{address}{{New
  York, USA}}.

\bibitemdeclare{techreport}{report/OstheimerD93}
\bibitem{report/OstheimerD93}
\bibinfo{author}{Gerald~K. \surnamestart Ostheimer\surnameend} \&
  \bibinfo{author}{Antony J.~T. \surnamestart Davie\surnameend}
  (\bibinfo{year}{1993}): \emph{\bibinfo{title}{Pi-{{Calculus
  Characterizations}} of Some {{Practical}} Lambda-{{Calculus Reduction
  Strategies}}}}.
\newblock \bibinfo{type}{Technical {{Report}}} \bibinfo{number}{CS/93/14},
  \bibinfo{institution}{{Department of Computing Sciences, University of St
  Andrews}}.

\bibitemdeclare{inproceedings}{conf/csl/Padovani14}
\bibitem{conf/csl/Padovani14}
\bibinfo{author}{Luca \surnamestart Padovani\surnameend}
  (\bibinfo{year}{2014}): \emph{\bibinfo{title}{Deadlock and {{Lock Freedom}}
  in the {{Linear}} {$\pi$}-Calculus}}.
\newblock In: {\slshape \bibinfo{booktitle}{Proceedings of the {{Joint
  Meeting}} of the {{Twenty-Third EACSL Annual Conference}} on {{Computer
  Science Logic}} ({{CSL}}) and the {{Twenty-Ninth Annual ACM}}/{{IEEE
  Symposium}} on {{Logic}} in {{Computer Science}} ({{LICS}})}},
  \bibinfo{series}{{{CSL-LICS}} '14}, \bibinfo{publisher}{{ACM}},
  \bibinfo{address}{{New York, NY, USA}}, pp. \bibinfo{pages}{72:1--72:10},
  \doi{10.1145/2603088.2603116}.

\bibitemdeclare{inproceedings}{conf/forte/PadovaniN15}
\bibitem{conf/forte/PadovaniN15}
\bibinfo{author}{Luca \surnamestart Padovani\surnameend} \&
  \bibinfo{author}{Luca \surnamestart Novara\surnameend}
  (\bibinfo{year}{2015}): \emph{\bibinfo{title}{Types for {{Deadlock-Free
  Higher-Order Programs}}}}.
\newblock In \bibinfo{editor}{Susanne \surnamestart Graf\surnameend} \&
  \bibinfo{editor}{Mahesh \surnamestart Viswanathan\surnameend}, editors:
  {\slshape \bibinfo{booktitle}{Formal {{Techniques}} for {{Distributed
  Objects}}, {{Components}}, and {{Systems}}}}, \bibinfo{series}{Lecture
  {{Notes}} in {{Computer Science}}}, \bibinfo{publisher}{{Springer
  International Publishing}}, \bibinfo{address}{{Cham}}, pp.
  \bibinfo{pages}{3--18}, \doi{10.1007/978-3-319-19195-9_1}.

\bibitemdeclare{book}{book/SangiorgiW03}
\bibitem{book/SangiorgiW03}
\bibinfo{author}{Davide \surnamestart Sangiorgi\surnameend} \&
  \bibinfo{author}{David \surnamestart Walker\surnameend}
  (\bibinfo{year}{2003}): \emph{\bibinfo{title}{The {{Pi-Calculus}}: {{A
  Theory}} of {{Mobile Processes}}}}.
\newblock \bibinfo{publisher}{{Cambridge University Press}}.

\bibitemdeclare{article}{journal/cl/TremblayM00}
\bibitem{journal/cl/TremblayM00}
\bibinfo{author}{G.~\surnamestart Tremblay\surnameend} \&
  \bibinfo{author}{B.~\surnamestart Malenfant\surnameend}
  (\bibinfo{year}{2000}): \emph{\bibinfo{title}{Lenient Evaluation and
  Parallelism}}.
\newblock {\slshape \bibinfo{journal}{Computer Languages}}
  \bibinfo{volume}{26}(\bibinfo{number}{1}), pp. \bibinfo{pages}{27--41},
  \doi{10.1016/S0096-0551(01)00007-8}.

\bibitemdeclare{inproceedings}{conf/icfp/Wadler12}
\bibitem{conf/icfp/Wadler12}
\bibinfo{author}{Philip \surnamestart Wadler\surnameend}
  (\bibinfo{year}{2012}): \emph{\bibinfo{title}{Propositions {{As Sessions}}}}.
\newblock In: {\slshape \bibinfo{booktitle}{Proceedings of the 17th {{ACM
  SIGPLAN International Conference}} on {{Functional Programming}}}},
  \bibinfo{series}{{{ICFP}} '12}, \bibinfo{publisher}{{ACM}},
  \bibinfo{address}{{New York, NY, USA}}, pp. \bibinfo{pages}{273--286},
  \doi{10.1145/2364527.2364568}.

\end{thebibliography}

\ifappendix

\appendix
\newpage

\tableofcontents

\newpage

\section{Type Preservation for \ourGV: Full Proofs}
\label{a:ourGVtypePresProofs}

\begin{lemma}\label{l:termWeaken}
    If $\type{\Gamma}, \term{x}: \type{U} \vdashM \term{\mbb{M}}: \type{T}$ and $\term{x} \notin \fn(\term{\mbb{M}})$, then $\type{U} = \type{\sff{end}}$ and $\type{\Gamma} \vdashM \term{\mbb{M}}: \type{T}$.
\end{lemma}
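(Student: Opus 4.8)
The plan is to prove the statement by induction on the derivation of $\type{\Gamma}, \term{x}: \type{U} \vdashM \term{\mbb{M}}: \type{T}$. The guiding observation is that, among the typing rules for (runtime) terms in \Cref{f:gvType}, the only one that introduces a variable into the conclusion's environment without that variable occurring free in the subject is Rule~\scc{T-EndL}, and that rule forces the introduced variable's type to be $\type{\sff{end}}$. Hence, tracing the assignment $\term{x}: \type{U}$ upwards through the derivation, it must eventually be discharged by an instance of \scc{T-EndL}; this both justifies $\type{U} = \type{\sff{end}}$ and lets us rebuild the derivation with $\term{x}$ removed throughout.

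First I would dispatch the base cases \scc{T-Var}, \scc{T-Unit}, \scc{T-New}, and \scc{T-EndR}: here the conclusion's environment is either empty or a singleton $\term{y}: \type{T}$ where $\term{y} = \term{\mbb{M}}$, so $\term{y} \in \fn(\term{\mbb{M}})$; since $\term{x} \notin \fn(\term{\mbb{M}})$, that environment cannot contain $\term{x}: \type{U}$, and the case is vacuous. Next, for \scc{T-EndL} the conclusion has the form $\type{\Gamma'}, \term{y}: \type{\sff{end}} \vdashM \term{\mbb{M}}: \type{T}$ with premise $\type{\Gamma'} \vdashM \term{\mbb{M}}: \type{T}$: if $\term{x} = \term{y}$ then $\type{U} = \type{\sff{end}}$ and $\type{\Gamma} = \type{\Gamma'}$, so we are done immediately; otherwise $\term{x}: \type{U}$ lies in $\type{\Gamma'}$, the induction hypothesis applied to the premise yields $\type{U} = \type{\sff{end}}$ together with a derivation of $\term{\mbb{M}}$ without $\term{x}: \type{U}$, and re-applying \scc{T-EndL} restores $\term{y}: \type{\sff{end}}$.

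All remaining rules are inductive and handled uniformly. For each, $\fn(\term{\mbb{M}})$ is the union of the free variables of the immediate subterms, minus any variables bound by the rule (\scc{T-Abs}, \scc{T-Split}, \scc{T-Sub}); after $\alpha$-renaming the bound variables away from $\term{x}$, we have $\term{x} \notin \fn$ of every relevant subterm. Since the environments of the premises partition $\type{\Gamma}, \term{x}: \type{U}$ (variables in a comma-separated context being pairwise distinct), the assignment $\term{x}: \type{U}$ occurs in exactly one premise's environment; applying the induction hypothesis to that premise gives $\type{U} = \type{\sff{end}}$ and a derivation of it without $\term{x}: \type{U}$, and re-applying the rule to the (unchanged) other premises derives $\type{\Gamma} \vdashM \term{\mbb{M}}: \type{T}$. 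For a rule with an environment shared across several premises (the branch premises of \scc{T-Case}), the induction hypothesis is applied to each premise whose environment contains $\term{x}: \type{U}$, all consistently returning $\type{U} = \type{\sff{end}}$.

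The argument is essentially bookkeeping, so I do not expect a genuine obstacle; the only points requiring care are the $\alpha$-renaming of bound variables in \scc{T-Abs}, \scc{T-Split}, and \scc{T-Sub} so that the induction hypothesis applies, and correctly locating in which premise of a context-splitting rule the assignment $\term{x}: \type{U}$ resides.
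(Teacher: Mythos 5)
Your proof is correct and follows essentially the same idea as the paper's: the paper simply observes that the only way to place $\term{x}$ in the context without it being free in $\term{\mbb{M}}$ is Rule~\scc{T-EndL}, forcing $\type{U} = \type{\sff{end}}$, and that this rule application can be removed from the derivation; your induction on the typing derivation is just the explicit formalization of that removal argument.
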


\begin{proof}
    The only possibility of having a name $\term{x}$ in the typing context that is not free in $\term{\mbb{M}}$ is by application of Rule~\scc{T-EndL}.
    Hence, $\type{U} = \type{\sff{end}}$.
    Moreover, since the rule does not modify terms, we can simply remove the application of the Rule~\scc{T-EndL} from the typing derivation of $\term{\mbb{M}}$, such that $\type{\Gamma} \vdashM \term{\mbb{M}}: \type{T}$.
\end{proof}

\begin{restatable}[Subject Congruence for Terms]{theorem}{thmTermSubjCong}
\label{t:termSubjCong}
    If $\type{\Gamma} \vdashM \term{\mbb{M}}: \type{T}$ and $\term{\mbb{M}} \equivM \term{\mbb{N}}$, then $\type{\Gamma} \vdashM \term{\mbb{N}}: \type{T}$.
\end{restatable}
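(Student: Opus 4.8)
The plan is to proceed by induction on the derivation of $\term{\mbb{M}} \equivM \term{\mbb{N}}$, viewing $\equivM$ as the least congruence containing $\alpha$-renaming and the axiom \scc{SC-SubExt}. Since the axiom is an equation and $\equivM$ is symmetric, I will actually prove a two-directional ``iff'' version of each case, which makes the symmetry, reflexivity, and transitivity cases immediate (the latter two trivially, transitivity by composing the two induction hypotheses). $\alpha$-renaming clearly preserves typing. Each congruence case---closing $\equivM$ under a term constructor, e.g.\ $\term{\mbb{M}} \equivM \term{\mbb{N}} \implies \term{\mbb{M}~\mbb{L}} \equivM \term{\mbb{N}~\mbb{L}}$, or $\term{\mbb{M}\xsub{\mbb{L}/x}} \equivM \term{\mbb{N}\xsub{\mbb{L}/x}}$, etc.---follows by inverting the typing rule for the head constructor to expose the sub-derivation typing the congruent subterm, applying the induction hypothesis to that sub-derivation, and reassembling. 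So the only genuinely interesting case is the axiom \scc{SC-SubExt} itself.

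For \scc{SC-SubExt}, I would isolate the following lemma and prove it by induction on the structure of $\term{\mcl{R}}$: if $\term{x} \notin \fn(\term{\mcl{R}})$, then $\type{\Gamma} \vdashM \term{(\mcl{R}[\mbb{M}])\xsub{\mbb{N}/x}}: \type{T}$ iff $\type{\Gamma} \vdashM \term{\mcl{R}[\mbb{M}\xsub{\mbb{N}/x}]}: \type{T}$. In the base case $\term{\mcl{R}} = \term{[]}$ the two terms coincide. For a compound context such as $\term{\mcl{R}} = \term{\mcl{R}'~\mbb{L}}$ (and analogously for each clause of $\term{\mcl{R}}$, including the explicit-substitution contexts $\term{\mcl{R}'\xsub{\mbb{L}/x}}$, $\term{\mbb{L}\xsub{\mcl{R}'/x}}$ and $\term{\sff{send}'(\mbb{L},\mcl{R}')}$), I would invert \scc{T-Sub} on the left-hand typing to get $\type{\Gamma} = \type{\Gamma_0},\type{\Delta}$ with $\type{\Gamma_0}, \term{x}:\type{V} \vdashM \term{\mcl{R}[\mbb{M}]}: \type{T}$ and $\type{\Delta} \vdashM \term{\mbb{N}}: \type{V}$; then invert the typing of $\term{\mcl{R}[\mbb{M}]}$ by its head constructor, splitting the context between the sub-context $\term{\mcl{R}'[\mbb{M}]}$ and the side term $\term{\mbb{L}}$. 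Since $\term{x} \notin \fn(\term{\mbb{L}})$, \Cref{l:termWeaken} ensures $\term{x}:\type{V}$ may be taken on the side typing $\term{\mcl{R}'[\mbb{M}]}$ (if the split placed it on the $\term{\mbb{L}}$ side, then $\type{V} = \type{\sff{end}}$, so it can be dropped there by \Cref{l:termWeaken} and reintroduced on the other side via \scc{T-EndL}). Re-applying \scc{T-Sub} gives a typing of $\term{(\mcl{R}'[\mbb{M}])\xsub{\mbb{N}/x}}$, the induction hypothesis rewrites this to $\term{\mcl{R}'[\mbb{M}\xsub{\mbb{N}/x}]}$, and reapplying the head constructor rule yields the right-hand typing; the converse direction is symmetric.

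The main obstacle I anticipate is bureaucratic rather than conceptual: Rule \scc{T-EndL} may be applied anywhere in a derivation, so ``inverting a typing rule'' really means first peeling off a (possibly empty) block of \scc{T-EndL} applications. I would handle this with a preliminary normalization lemma---\scc{T-EndL} applications can be permuted up to the leaves (or down just below the conclusion)---or, equivalently, by phrasing each inversion lemma ``modulo \scc{T-EndL}'', allowing extra $\term{\sff{end}}$-typed assignments to be distributed freely across the premises. Together with \Cref{l:termWeaken}, this is exactly what makes the placement of $\term{x}:\type{V}$ in the \scc{SC-SubExt} case go through. Everything else is routine inversion and reassembly of typing derivations.
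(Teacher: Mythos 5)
Your proposal matches the paper's proof: an outer induction on the derivation of $\equivM$ where the congruence closure cases are routine, and the axiom \scc{SC-SubExt} is handled by an inner induction on the structure of $\term{\mcl{R}}$ using inversion of typing, \Cref{l:termWeaken} to manage the placement of the $\type{\sff{end}}$-typed variable, and reassembly of the derivation. Your extra care about proving both directions (to cover symmetry) and about permuting \scc{T-EndL} applications only makes explicit bookkeeping that the paper leaves implicit; the substance is the same.
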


\begin{proof}
    \sloppy
    By induction on the derivation of $\term{\mbb{M}} \equivM \term{\mbb{N}}$.
    The inductive cases follow from the IH directly.
    We consider the only rule \scc{SC-SubExt}: $\term{x} \notin \fn(\term{\mcl{R}}) \implies \term{(\mcl{R}[\mbb{M}])\xsub{ \mbb{N}/x }} \equivM \term{\mcl{R}[\mbb{M}\xsub{ \mbb{N}/x }]}$.

    We apply induction on the structure of the reduction context $\term{\mcl{R}}$.
    As an interesting, representative case, consider $\term{\mcl{R}} = \term{\mbb{L}\xsub{ \mcl{R}'/y }}$.
    Assuming $\term{x} \notin \fn(\term{\mcl{R}})$, we have $\term{x} \notin \fn(\term{\mbb{L}}) \cup \fn(\term{\mcl{R}'})$.
    We apply inversion of typing:
    \begin{align*}
        \inferrule*{
            \inferrule*{
                \type{\Gamma}, \term{y}: \type{U} \vdashM \term{\mbb{L}}: \type{T}
                \\
                \type{\Delta}, \term{x}: \type{U'} \vdashM \term{\mcl{R}'[\mbb{M}]}: \type{U}
            }{
                \type{\Gamma}, \type{\Delta}, \term{x}: \type{U'} \vdashM \term{\mbb{L}\xsub{ (\mcl{R}'[\mbb{M}])/y }}: \type{T}
            }
            \\
            \type{\Delta'} \vdashM \term{\mbb{N}}: \type{U'}
        }{
            \type{\Gamma}, \type{\Delta}, \type{\Delta'} \vdashM \term{(\mbb{L}\xsub{ (\mcl{R}'[\mbb{M}])/y })\xsub{ \mbb{N}/x }}: \type{T}
        }
    \end{align*}
    We can derive $\inferrule{
        \type{\Delta}, \term{x}: \type{U'} \vdashM \term{\mcl{R}'[\mbb{M}]}: \type{U}
        \\
        \type{\Delta'} \vdashM \term{\mbb{N}}: \type{U'}
    }{
        \type{\Delta}, \type{\Delta'} \vdashM \term{(\mcl{R}'[\mbb{M}])\xsub{ \mbb{N}/x }}: \type{U}
    }$.
    Since $\term{x} \notin \fn(\term{\mcl{R}'})$, by Rule~\scc{SC-SubExt}, $\term{(\mcl{R}'[\mbb{M}])\xsub{ \mbb{N}/x }} \equivM \term{\mcl{R}'[\mbb{M}\xsub{ \mbb{N}/x }]}$.
    Then, by the IH, $\type{\Delta}, \type{\Delta'} \vdashM \term{\mcl{R}'[\mbb{M}\xsub{ \mbb{N}/x }]}: \type{U}$.
    Hence, we can conclude the following:
    \begin{align*}
        \inferrule*{
            \type{\Gamma}, \term{y}: \type{U} \vdashM \term{\mbb{L}}: \type{T}
            \\
            \type{\Delta}, \type{\Delta'} \vdashM \term{\mcl{R}'[\mbb{M}\xsub{ \mbb{N}/x }]}: \type{U}
        }{
            \type{\Gamma}, \type{\Delta}, \type{\Delta'} \vdashM \term{\mbb{L}\xsub{ (\mcl{R}'[\mbb{M}\xsub{ \mbb{N}/x }])/y }}: \type{T}
        }
        \tag*{\qedhere}
    \end{align*}
\end{proof}

\begin{restatable}[Subject Reduction for Terms]{theorem}{thmTermSubjRed}
\label{t:termSubjRed}
    If $\type{\Gamma} \vdashM \term{\mbb{M}}: \type{T}$ and $\term{\mbb{M}} \reddM \term{\mbb{N}}$, then $\type{\Gamma} \vdashM \term{\mbb{N}}: \type{T}$.
\end{restatable}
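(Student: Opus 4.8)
The plan is to prove the statement by induction on the derivation of $\term{\mbb{M}} \reddM \term{\mbb{N}}$, following the standard recipe for subject reduction. Before starting I would set up two auxiliary lemmas. The first is a \emph{renaming lemma}: if $\type{\Gamma}, \term{x}: \type{T} \vdashM \term{\mbb{M}}: \type{U}$ and $\term{y} \notin \dom(\type{\Gamma})$, then $\type{\Gamma}, \term{y}: \type{T} \vdashM \term{\mbb{M}\{y/x\}}: \type{U}$; this is a routine induction on the typing derivation, with \scc{T-EndL} handled via \Cref{l:termWeaken}. The second is a \emph{context decomposition lemma} for reduction contexts: if $\type{\Gamma} \vdashM \term{\mcl{R}[\mbb{M}]}: \type{T}$, then (up to applications of \scc{T-EndL}) $\type{\Gamma}$ splits as $\type{\Delta}, \type{\Lambda}$ with $\type{\Delta} \vdashM \term{\mbb{M}}: \type{U}$ for some $\type{U}$, and, conversely, for every $\type{\Delta'} \vdashM \term{\mbb{N}}: \type{U}$ one has $\type{\Delta'}, \type{\Lambda} \vdashM \term{\mcl{R}[\mbb{N}]}: \type{T}$; this follows by induction on the structure of $\term{\mcl{R}}$.

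With these lemmas available, I would go through the reduction rules. For \scc{E-Lift}, the decomposition lemma extracts a typing of the redex, the induction hypothesis retypes the reduct at the same type, and the lemma plugs it back in. For \scc{E-LiftSC}, the result is immediate from Subject Congruence for Terms (\Cref{t:termSubjCong}) applied on both sides of the core reduction, together with the induction hypothesis. For \scc{E-Lam} and \scc{E-Pair}, inversion of typing (\scc{T-App}/\scc{T-Abs}, resp.\ \scc{T-Split}/\scc{T-Pair}) yields exactly the premises needed to reassemble the reduct using \scc{T-Sub} (applied once, resp.\ twice); note that no substitution lemma is needed here, since the reducts are \emph{explicit} substitutions. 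For \scc{E-SubstName}, the explicit substitution $\term{\mbb{M}\xsub{y/x}}$ is typed via \scc{T-Sub} and \scc{T-Var}, and the renaming lemma delivers the typing of $\term{\mbb{M}\{y/x\}}$. For \scc{E-NameSubst}, inversion of \scc{T-Sub} and \scc{T-Var} shows $\term{x\xsub{\mbb{M}/x}}$ is typed from $\type{\Delta} \vdashM \term{\mbb{M}}: \type{T}$ together with a context of $\type{\sff{end}}$-typed variables, which are re-added via \scc{T-EndL}. For \scc{E-Send}, inversion of \scc{T-Send} and then \scc{T-Pair} splits the pair $\term{(\mbb{M},\mbb{N})}: \type{T \times {!}T \sdot S}$ into $\term{\mbb{M}}: \type{T}$ and $\term{\mbb{N}}: \type{{!}T \sdot S}$, which is precisely the shape required by \scc{T-Send'} to type $\term{\sff{send}'(\mbb{M},\mbb{N})}: \type{S}$.

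The main obstacle I anticipate is the bookkeeping around Rule~\scc{T-EndL}, which may appear at arbitrary positions in a derivation and so clutters every inversion: a naive inversion statement fails because the context can carry extra $\type{\sff{end}}$-typed variables. The clean remedy is to phrase all inversion and decomposition statements ``up to applications of \scc{T-EndL}'' (which is essentially the content of \Cref{l:termWeaken}) and to keep that convention uniform throughout. Once this is fixed, every case is mechanical; the only steps requiring genuine auxiliary work are \scc{E-Lift} (needing the context decomposition lemma) and \scc{E-SubstName} (needing the renaming lemma).
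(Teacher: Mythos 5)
Your proposal is correct and follows essentially the same route as the paper: induction on the derivation of $\term{\mbb{M}} \reddM \term{\mbb{N}}$, with \scc{E-Lift} handled by an inner induction on $\term{\mcl{R}}$ (which you package as a decomposition lemma), \scc{E-LiftSC} discharged by Subject Congruence, and the remaining axioms by inversion of typing and reassembly via \scc{T-Sub}/\scc{T-Send'}. The only difference is presentational: the paper performs the context induction and the renaming step for \scc{E-SubstName} inline rather than as stated auxiliary lemmas, and glosses over the \scc{T-EndL} bookkeeping that you explicitly flag.
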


\begin{proof}
    By induction on the derivation of $\term{\mbb{M}} \reddM \term{\mbb{N}}$ (\ih{1}).
    The case of Rule~\scc{E-Lift} follows by induction on the structure of the reduction context $\term{\mcl{R}}$, where the base case ($\term{\mcl{R}} = \term{[]}$) follows from \ih{1}.
    The case of Rule~\scc{E-LiftSC} follows from \ih{1} and \Cref{t:termSubjCong} (Subject Congruence for Terms).
    We consider the other cases, applying inversion of typing and deriving the typing of the term after reduction:
    \begin{itemize}
        \item
            Rule~\scc{E-Lam}: $\term{(\lambda x \sdot M)\, \mbb{N}} \reddM \term{M\xsub{ \mbb{N}/x }}$.
            \begin{align*}
                \inferrule*{
                    \inferrule*{
                        \type{\Gamma}, \term{x}: \type{T} \vdashM \term{M}: \type{U}
                    }{
                        \type{\Gamma} \vdashM \term{\lambda x \sdot M}: \type{T \lolli U}
                    }
                    \\
                    \type{\Delta} \vdashM \term{\mbb{N}}: \type{T}
                }{
                    \type{\Gamma}, \type{\Delta} \vdashM \term{(\lambda x \sdot M)~\mbb{N}}: \type{U}
                }
                \reddM
                \inferrule*{
                    \type{\Gamma}, \term{x}: \type{T} \vdashM \term{M}: \type{U}
                    \\
                    \type{\Delta} \vdashM \term{\mbb{N}}: \type{T}
                }{
                    \type{\Gamma}, \type{\Delta} \vdashM \term{M\xsub{ \mbb{N}/x }}: \type{U}
                }
            \end{align*}

        \item
            Rule~\scc{E-Pair}: $\term{\sff{let}\, (x,y) = (\mbb{M}_1,\mbb{M}_2)\, \sff{in}\, N} \reddM \term{N\xsub{ \mbb{M}_1/x,\mbb{M}_2/y }}$.
            \begin{mathpar}
                \inferrule*{
                    \inferrule*{
                        \type{\Gamma} \vdashM \term{\mbb{M}_1}: \type{T}
                        \\
                        \type{\Gamma'} \vdashM \term{\mbb{M}_2}: \type{T'}
                    }{
                        \type{\Gamma}, \type{\Gamma'} \vdashM \term{(\mbb{M}_1,\mbb{M}_2)}: \type{T \times T'}
                    }
                    \\
                    \type{\Delta}, \term{x}: \type{T}, \term{y}: \type{T'} \vdashM \term{N}: \type{U}
                }{
                    \type{\Gamma}, \type{\Gamma'}, \type{\Delta} \vdashM \term{\sff{let}\, (x,y) = (\mbb{M}_1,\mbb{M}_2)\, \sff{in}\, N}: \type{U}
                }
                \reddM
                \inferrule*{
                    \inferrule*{
                        \type{\Delta}, \term{x}: \type{T}, \term{y}: \type{T'} \vdashM \term{N}: \type{U}
                        \\
                        \type{\Gamma} \vdashM \term{\mbb{M}_1}: \type{T}
                    }{
                        \type{\Gamma}, \type{\Delta}, \term{y}: \type{T'} \vdashM \term{N\xsub{ \mbb{M}_1/x }}: \type{U}
                    }
                    \\
                    \type{\Gamma'} \vdashM \term{\mbb{M}_2}: \type{T'}
                }{
                    \type{\Gamma}, \type{\Gamma'}, \type{\Delta} \vdashM \term{N\xsub{ \mbb{M}_1/x,\mbb{M}_2/y }}: \type{U}
                }
            \end{mathpar}

        \item
            Rule~\scc{E-SubstName}: $\term{\mbb{M}\xsub{ y/x }} \reddM \term{\mbb{M}\{y/x\}}$.
            \begin{mathpar}
                \inferrule*{
                    \type{\Gamma}, \term{x}: \type{T} \vdashM \term{\mbb{M}}: \type{U}
                    \\
                    \term{y}: \type{T} \vdashM \term{y}: \type{T}
                }{
                    \type{\Gamma}, \term{y}: \type{T} \vdashM \term{\mbb{M}\xsub{ y/x }}: \type{U}
                }
                \reddM
                \type{\Gamma}, \term{y}: \type{T} \vdashM \term{\mbb{M}\{y/x\}}: \type{U}
            \end{mathpar}

        \item
            Rule~\scc{E-NameSubst}: $\term{x\xsub{ \mbb{M}/x }} \reddM \term{\mbb{M}}$.
            \begin{mathpar}
                \inferrule*{
                    \term{x}: \type{U} \vdashM \term{x}: \type{U}
                    \\
                    \type{\Gamma} \vdashM \term{\mbb{M}}: \type{U}
                }{
                    \type{\Gamma} \vdashM \term{x\xsub{ \mbb{M}/x }}: \type{U}
                }
                \reddM
                \type{\Gamma} \vdashM \term{\mbb{M}}: \type{U}
            \end{mathpar}

        \item
            Rule~\scc{E-Send}: $\term{\sff{send}~(\mbb{M},\mbb{N})} \reddM \term{\sff{send}'(\mbb{M},\mbb{N})}$.
            \begin{align*}
                \inferrule*{
                    \inferrule*{
                        \type{\Gamma} \vdashM \term{\mbb{M}}: \type{T}
                        \\
                        \type{\Delta} \vdashM \term{\mbb{N}}: \type{{!}T \sdot S}
                    }{
                        \type{\Gamma}, \type{\Delta} \vdashM \term{(\mbb{M},\mbb{N})}: \type{T \times {!}T \sdot S}
                    }
                }{
                    \type{\Gamma}, \type{\Delta} \vdashM \term{\sff{send}~(\mbb{M},\mbb{N})}: \type{S}
                }
                \reddM
                \inferrule*{
                    \type{\Gamma} \vdashM \term{\mbb{M}}: \type{T}
                    \\
                    \type{\Delta} \vdashM \term{\mbb{N}}: \type{{!}T \sdot S}
                }{
                    \type{\Gamma}, \type{\Delta} \vdashM \term{\sff{send}'(\mbb{M},\mbb{N})}: \type{S}
                }
                \tag*{\qedhere}
            \end{align*}
    \end{itemize}
\end{proof}

\begin{theorem}
    \label{t:confSubjCong}
    If $\type{\Gamma} \vdashC{\phi} \term{C}:\type{T}$ and $\term{C} \equivC \term{D}$, then $\type{\Gamma} \vdashC{\phi} \term{D} : \type{T}$.
\end{theorem}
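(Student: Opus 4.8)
The plan is to prove the statement by induction on the derivation of $\term{C} \equivC \term{D}$. Since $\equivC$ is symmetric, I would strengthen the claim to the biconditional ``$\type{\Gamma} \vdashC{\phi} \term{C}: \type{T}$ iff $\type{\Gamma} \vdashC{\phi} \term{D}: \type{T}$'' and prove \emph{that} by induction; then the reflexivity, symmetry, and transitivity cases are immediate, and the congruence-closure cases (closure under parallel composition, under buffered restriction, and under configuration-level explicit substitution) follow from the induction hypothesis by re-applying the corresponding typing rule \scc{T-ParL}/\scc{T-ParR}, \scc{T-Res}/\scc{T-ResBuf}, or \scc{T-ConfSub}. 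This leaves one base case per axiom of \figref{f:gvConfs}, each to be checked in both directions.

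In every base case the recipe is the same: invert the typing derivation of one side down to the subderivations that the axiom touches, perform a local rearrangement, and reassemble a derivation of the other side. Most cases are pure bookkeeping: \scc{SC-TermSC} is Subject Congruence for Terms (\Cref{t:termSubjCong}) applied under \scc{T-Main}/\scc{T-Child}; \scc{SC-ConfSubst} is the immediate match between \scc{T-Sub} (used under \scc{T-Main}/\scc{T-Child}) and \scc{T-ConfSub}; \scc{SC-ParComm} and \scc{SC-ParAssoc} follow by a small case analysis on which component (at most one) carries the $\term{\main}$ marker, using commutativity and associativity of $\term{\phi + \phi'}$ on its defined entries and swapping \scc{T-ParL} with \scc{T-ParR}; \scc{SC-ParNil} uses that $\term{\child\,()}$ is typed $\type{\emptyset} \vdashC{\child} \term{\child\,()}: \type{\1}$ by \scc{T-Child}, so composing with it via \scc{T-ParL} affects neither context nor type; \scc{SC-ResSwap} and \scc{SC-ResComm} use involutivity of duality together with the facts that an empty buffer forces $\type{S'} = \type{S}$ (via \scc{T-Buf}) and carries an empty context, and that the context splits of two adjacent buffered restrictions are independent; and \scc{SC-ResExt}, \scc{SC-ResNil}, \scc{SC-ConfSubstExt} use their freshness side conditions to conclude that the extruded endpoints (resp.\ variable) do not occur in the relevant component's typing context, so the restriction (resp.\ explicit-substitution) node may be pushed in or pulled out unchanged.

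The main obstacle is \scc{SC-Send'}, and its twin \scc{SC-Select}, which migrate an output (resp.\ selection) prefix on an endpoint $\term{x}$ out of a thread and into the buffer of the restriction binding $\term{x}$. Here inversion must pass through \scc{T-Res} (or \scc{T-ResBuf}), then through the buffer-typing derivation, and then into the restricted thread context $\term{\hat{\mcl{F}}}$ to the use of $\term{\sff{send}'(\mbb{M},x)}$ typed by \scc{T-Send'}. The crucial accounting is that, on the left, the buffer has type $\type{S'} > \type{S}$ and \scc{T-Send'} forces $\term{x}$ to be used in the thread at type $\type{S'} = \type{{!}T \sdot S''}$ with $\term{\mbb{M}}: \type{T}$ and $\term{\sff{send}'(\mbb{M},x)}$ of type $\type{S''}$; on the right, $\term{x}$ is instead used in the thread as $\term{\hat{\mcl{F}}[x]}$ at type $\type{S''}$ (via \scc{T-Var}), while the buffer derivation acquires one extra \scc{T-BufSend} node spliced in just below the \scc{T-Buf} axiom, which changes the buffer's left index from $\type{{!}T \sdot S''}$ to $\type{S''}$ and leaves its right index $\type{S}$ unchanged. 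Rebuilding the derivation thus amounts to: replacing the \scc{T-Send'} subderivation by a \scc{T-Var} one, relocating the subderivation of $\term{\mbb{M}}: \type{T}$ together with its free-variable context from the thread side to the buffer side of the (reassociated) context concatenation, and --- when $\term{\mbb{M}}$ mentions the dual endpoint $\term{y}$ --- switching from \scc{T-Res} to \scc{T-ResBuf}. The restricted-thread-context proviso on $\term{\hat{\mcl{F}}}$ is exactly what makes this surgery legitimate: it prevents $\term{\sff{send}'(\mbb{M},x)}$ from sitting under an explicit substitution whose bound variable occurs in $\term{\mbb{M}}$, which would otherwise leave the relocated context for $\term{\mbb{M}}$ ill-formed (cf.\ \Cref{x:restrictedThread}). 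The \scc{SC-Select} case is identical in shape, with \scc{T-Select} in place of \scc{T-Send'}, \scc{T-BufSelect} in place of \scc{T-BufSend}, and no message term to move.
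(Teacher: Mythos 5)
Your proposal is correct and follows essentially the same route as the paper: induction on the derivation of $\equivC$, with the only substantive case being \scc{SC-Send'}/\scc{SC-Select}, where your ``splice an extra \scc{T-BufSend} (resp.\ \scc{T-BufSelect}) node just below the \scc{T-Buf} axiom'' is exactly the paper's inner induction on the length of $\term{\vec{m}}$, and your use of the restricted context $\term{\hat{\mcl{F}}}$ matches the paper's treatment. The one soft spot is \scc{SC-ResNil}: after inverting \scc{T-Res} the endpoints \emph{do} occur in the typing context of $\term{C}$, so rather than the blanket claim that freshness keeps them out of the context you need the small strengthening fact (the paper's \Cref{l:termWeaken}, lifted to configurations by structural induction) that a variable not free in the term can only have been introduced by \scc{T-EndL} at type $\type{\sff{end}}$ and can be dropped.
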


\begin{proof}
    By induction on the derivation of $\term{C} \equivC \term{D}$.
    The inductive cases follow from the IH directly.
    The case for Rule~\scc{SC-TermSC} follows from \Cref{t:termSubjCong} (Subject Congruence for Terms).
    The cases for Rules~\scc{SC-ResSwap}, \scc{SC-ResComm}, \scc{SC-ParNil}, \scc{SC-ParComm}, and \scc{SC-ParAssoc} are straightforward.
    We consider the other cases:
    \begin{itemize}
        \item
            Rule~\scc{SC-ResExt} $\term{x},\term{y} \notin \fn(\term{C}) \implies \term{\nu{x\bfr{\vec{m}}y}(C \prl D)} \equivC \term{C \prl \nu{x\bfr{\vec{m}}y}D}$.

            The analysis depends on whether $\term{C}$ or $\term{D}$ are child threads.
            W.l.o.g., we assume $\term{C}$ is a child thread.
            Assuming $\term{x},\term{y} \notin \fn(\term{C})$, we apply inversion of typing:
            \begin{mathpar}
                \inferrule*{
                    \type{\Gamma} \vdashB \term{\bfr{\vec{m}}}: \type{S'} > \type{S}
                    \\
                    \inferrule*{
                        \type{\Delta} \vdashC{\circ} \term{C}: \type{\1}
                        \\
                        \type{\Lambda}, \term{x}: \type{S'}, \term{y}: \type{\ol{S}} \vdashC{\phi} \term{D}: \type{T}
                    }{
                        \type{\Delta}, \type{\Lambda}, \term{x}: \type{S'}, \term{y}: \type{\ol{S}} \vdashC{\circ+\phi} \term{C \prl D}: \type{T}
                    }
                }{
                    \type{\Gamma}, \type{\Delta}, \type{\Lambda} \vdashC{\circ+\phi} \term{\nu{x\bfr{\vec{m}}y}(C \prl D)}: \type{T}
                }
            \end{mathpar}
            Then, we derive the typing of the structurally congruent configuration:
            \begin{mathpar}
                \inferrule*{
                    \type{\Delta} \vdashC{\circ} \term{C}: \type{\1}
                    \\
                    \inferrule*{
                        \type{\Gamma} \vdashB \term{\bfr{\vec{m}}}: \type{S'} > \type{S}
                        \\
                        \type{\Lambda}, \term{x}: \type{S'}, \term{y}: \type{\ol{S}} \vdashC{\phi} \term{D}: \type{T}
                    }{
                        \type{\Gamma}, \type{\Lambda} \vdashC{\phi} \term{\nu{x\bfr{\vec{m}}y}D}: \type{T}
                    }
                }{
                    \type{\Gamma}, \type{\Delta}, \type{\Lambda} \vdashC{\circ+\phi} \term{C \prl \nu{x\bfr{\vec{m}}y}D}: \type{T}
                }
            \end{mathpar}

        \item
            Rule~\scc{SC-ResNil}: $\term{x},\term{y} \notin \fn(\term{C}) \implies \term{\nu{x\bfr{\epsilon}y}C} \equivC \term{C}$.

            Assuming $\term{x},\term{y} \notin \fn(\term{C})$, we apply inversion of typing:
            \begin{mathpar}
                \inferrule*{
                    \inferrule*{ }{
                        \type{\emptyset} \vdashB \term{\bfr{\epsilon}}: \type{S} > \type{S}
                    }
                    \\
                    \type{\Gamma}, \term{x}: \type{S}, \term{y}: \type{\ol{S}} \vdashC{\phi} \term{C}: \type{T}
                }{
                    \type{\Gamma} \vdashC{\phi} \term{\nu{x\bfr{\epsilon}y}C}: \type{T}
                }
            \end{mathpar}
            By induction on the structure of $\term{C}$, we show that $\type{\Gamma} \vdashC{\phi} \term{C}: \type{T}$, proving the thesis.
            The inductive cases follow from the IH directly.
            The base cases (Rules~\scc{T-Main} and \scc{T-Child}) follow from \Cref{l:termWeaken}.

        \item
            Rule~\scc{SC-Send'}: $\term{\nu{x\bfr{\vec{m}}y}(\hat{\mcl{F}}[\sff{send}'(M,x)] \prl C)} \equivC \term{\nu{x\bfr{M,\vec{m}}y}(\hat{\mcl{F}}[x] \prl C)}$.

            This case follows by induction on the structure of $\term{\hat{\mcl{F}}}$.
            The inductive cases follow from the IH straightforwardly.
            The fact that the hole in $\term{\hat{\mcl{F}}}$ does not occur under an explicit substitution, guarantees that we can move $\term{M}$ out of the context of $\term{\hat{\mcl{F}}}$ and into the buffer.
            We consider the base case ($\term{\hat{\mcl{F}}} = \term{\phi\,[]}$).
            We apply inversion of typing, w.l.o.g.\ assuming that $\term{\phi} = \term{\bullet}$ and $\term{y} \in \fn(\term{C})$:
            \begin{mathpar}
                \inferrule*{
                    \type{\Gamma} \vdashB \term{\bfr{\vec{m}}}: \type{{!}T \sdot S'} > \type{S}
                    \\
                    \inferrule*{
                        \inferrule*{
                            \inferrule*{
                                \type{\Delta} \vdashM \term{M}: \type{T}
                                \\
                                \inferrule*{ }{
                                    \term{x}: \type{{!}T \sdot S'} \vdashM \term{x}: \type{{!}T \sdot S'}
                                }
                            }{
                                \type{\Delta}, \term{x}: \type{{!}T \sdot S'} \vdashM \term{\sff{send}'(M,x)}: \type{S'}
                            }
                        }{
                            \type{\Delta}, \term{x}: \type{{!}T \sdot S'} \vdashC{\bullet} \term{\bullet\,\sff{send}'(M,x)}: \type{S'}
                        }
                        \\
                        \type{\Lambda}, \term{y}: \type{\ol{S}} \vdashC{\circ} \term{C}: \type{\1}
                    }{
                        \type{\Delta}, \type{\Lambda}, \term{x}: \type{{!}T \sdot S'}, \term{y}: \type{\ol{S}} \vdashC{\bullet} \term{\bullet\,\sff{send}'(M,x) \prl C}: \type{S'}
                    }
                }{
                    \type{\Gamma}, \type{\Delta}, \type{\Lambda} \vdashC{\bullet} \term{\nu{x\bfr{\vec{m}}y}(\bullet\,\sff{send}'(M,x) \prl C)}: \type{S'}
                }
            \end{mathpar}

            Note that the derivation of $\type{\Gamma} \vdashB \term{\bfr{\vec{m}}}: \type{{!}T \sdot S'} > \type{S}$ depends on the size of $\term{\vec{m}}$.
            By induction on the size of $\term{\vec{m}}$ (\ih{2}), we derive $\type{\Gamma}, \type{\Delta} \vdashB \term{\bfr{M,\vec{m}}}: \type{S'} > \type{S}$:
            \begin{itemize}
                \item
                    If $\term{\vec{m}}$ is empty, it follows by inversion of typing that $\type{\Gamma} = \type{\emptyset}$ and $\type{S} = \type{{!}T \sdot S'}$:
                    \begin{mathpar}
                        \inferrule*{ }{
                            \type{\emptyset} \vdashB \term{\bfr{\epsilon}}: \type{{!}T \sdot S'} > \type{{!}T \sdot S'}
                        }
                    \end{mathpar}
                    Then, we derive the following:
                    \begin{mathpar}
                        \inferrule*{
                            \type{\Delta} \vdashM \term{M}: \type{T}
                            \\
                            \inferrule*{ }{
                                \type{\emptyset} \vdashB \term{\bfr{\epsilon}}: \type{S'} > \type{S'}
                            }
                        }{
                            \type{\Delta} \vdashB \term{\bfr{M}}: \type{S'} > \type{{!}T \sdot S'}
                        }
                    \end{mathpar}

                \item
                    If $\term{\vec{m}} = \term{\vec{m}',N}$, it follows by inversion of typing that $\type{\Gamma} = \type{\Gamma'},\type{\Gamma''}$ and $\type{S} = \type{{!}T' \sdot S''}$:
                    \begin{mathpar}
                        \inferrule*{
                            \type{\Gamma'} \vdashM \term{N}: \type{T'}
                            \\
                            \type{\Gamma''} \vdashB \term{\bfr{\vec{m}'}}: \type{{!}T \sdot S'} > \type{S''}
                        }{
                            \type{\Gamma'}, \type{\Gamma''} \vdashB \term{\bfr{\vec{m}',N}}: \type{{!}T \sdot S'} > \type{{!}T' \sdot S''}
                        }
                    \end{mathpar}
                    By \ih{2}, $\type{\Gamma''}, \type{\Delta} \vdashB \term{\bfr{M,\vec{m}'}}: \type{S'} > \type{S''}$, allowing us to derive the following:
                    \begin{mathpar}
                        \inferrule*{
                            \type{\Gamma'} \vdashM \term{N}: \type{T'}
                            \\
                            \type{\Gamma''}, \type{\Delta} \vdashB \term{\bfr{M,\vec{m}'}}: \type{S'} > \type{S''}
                        }{
                            \type{\Gamma'}, \type{\Gamma''}, \type{\Delta} \vdashB \term{\bfr{M,\vec{m}',N}}: \type{S'} > \type{{!}T' \sdot S''}
                        }
                    \end{mathpar}

                \item
                    If $\term{\vec{m}} = \term{\vec{m}', j}$, it follows by inversion of typing that there exist types $\type{S_i}$ for each $i$ in a set of labels $I$, where $j \in I$, such that $\type{S} = \type{\oplus\{i:S_i\}_{i \in I}}$:
                    \begin{mathpar}
                        \inferrule*{
                            \type{\Gamma} \vdashB \term{\bfr{\vec{m}'}}: \type{{!}T \sdot S'} > \type{S_j}
                        }{
                            \type{\Gamma} \vdashB \term{\bfr{\vec{m},j}}: \type{{!}T \sdot S'} > \type{\oplus\{i:S_i\}_{i \in I}}
                        }
                    \end{mathpar}
                    By \ih{2}, $\type{\Gamma}, \type{\Delta} \vdashB \term{\bfr{M,\vec{m}'}}: \type{S'} > \type{S_j}$, so we derive the following:
                    \begin{mathpar}
                        \inferrule*{
                            \type{\Gamma}, \type{\Delta} \vdashB \term{\bfr{M,\vec{m}'}}: \type{S'} > \type{S_j}
                        }{
                            \type{\Gamma}, \type{\Delta} \vdashB \term{\bfr{M,\vec{m}',j}}: \type{S'} > \type{\oplus\{i:S_i\}_{i \in I}}
                        }
                    \end{mathpar}
            \end{itemize}

            Now, we can derive the typing of the structurally congruent configuration:
            \begin{mathpar}
                \inferrule*{
                    \type{\Gamma}, \type{\Delta} \vdashB \term{\bfr{M,\vec{m}}}: \type{S'} > \type{S}
                    \\
                    \inferrule*{
                        \inferrule*{
                            \inferrule*{ }{
                                \term{x}: \type{S'} \vdashM \term{x}: \type{S'}
                            }
                        }{
                            \term{x}: \type{S'} \vdashC{\bullet} \term{\bullet\,x}: \type{S'}
                        }
                        \\
                        \type{\Lambda}, \term{y}: \type{\ol{S}} \vdashC{\circ} \term{C}: \type{\1}
                    }{
                        \type{\Lambda}, \term{x}: \type{S'}, \term{y}: \type{\ol{S}} \vdashC{\bullet} \term{\bullet\,x \prl C}: \type{S'}
                    }
                }{
                    \type{\Gamma}, \type{\Delta}, \type{\Lambda} \vdashC{\bullet} \term{\nu{x\bfr{M,\vec{m}}y}(\bullet\,x \prl C)}: \type{S'}
                }
            \end{mathpar}

        \item
            The case for Rule~\scc{SC-Select} ($\term{\nu{x\bfr{\vec{m}}y}(\mcl{F}[\sff{select}\, \ell\, x] \prl C)} \equivC \term{\nu{x\bfr{\ell,\vec{m}}y}(\mcl{F}[x] \prl C)}$) is similar to the case above.
            In this case, there is no restriction on the context $\term{\mcl{F}}$: it is fine for the selection to occur under an explicit substitution, because the rule is not moving anything out of the scope of the context.

        \item
            Rule~\scc{SC-ConfSubst}: $\term{\phi\,(\mbb{M}\xsub{ \mbb{N}/x })} \equivC \term{(\phi\,\mbb{M})\xsub{ \mbb{N}/x }}$.

            This case follows by a straightforward inversion of typing on both terms:
            \begin{mathpar}
                \inferrule*{
                    \inferrule*{
                        \type{\Gamma}, \term{x}: \type{T} \vdashM \term{\mbb{M}}: \type{U}
                        \\
                        \type{\Delta} \vdashM \term{N}: \type{T}
                    }{
                        \type{\Gamma}, \type{\Delta} \vdashM \term{\mbb{M}\xsub{ \mbb{N}/x }}: \type{U}
                    }
                }{
                    \type{\Gamma}, \type{\Delta} \vdashC{\phi} \term{\phi\,(\mbb{M}\xsub{ \mbb{N}/x })}: \type{U}
                }
                \equiv
                \inferrule*{
                    \inferrule*{
                        \type{\Gamma}, \term{x}: \type{T} \vdashM \term{\mbb{M}}: \type{U}
                    }{
                        \type{\Gamma}, \term{x}: \type{T} \vdashC{\phi} \term{\phi\,\mbb{M}}: \type{U}
                    }
                    \\
                    \type{\Delta} \vdashM \term{N}: \type{T}
                }{
                    \type{\Gamma}, \type{\Delta} \vdashC{\phi} \term{(\phi\,\mbb{M})\xsub{ \mbb{N}/x }}: \type{U}
                }
            \end{mathpar}

        \item
            Rule~\scc{SC-ConfSubstExt}: $\term{x} \notin \fn(\term{\mcl{G}}) \implies \term{(\mcl{G}[C])\xsub{ \mbb{M}/x }} \equivC \term{\mcl{G}[C\xsub{ \mbb{M}/x }]}$.

            This case follows by induction on the structure of $\term{\mcl{G}}$.
            The inductive cases follow from the IH straightforwardly.
            For the base case ($\term{\mcl{G}} = \term{[]}$), the structural congruence is simply an equality.
            \qedhere
    \end{itemize}
\end{proof}

\begin{restatable}
{theorem}{thmConfSubjRed}
\label{t:confSubjRed}
    If $\type{\Gamma} \vdashC{\phi} \term{C}: \type{T}$ and $\term{C} \reddC \term{D}$, then $\type{\Gamma} \vdashC{\phi} \term{D}: \type{T}$.
\end{restatable}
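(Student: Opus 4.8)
The plan is to proceed by induction on the derivation of $\term{C} \reddC \term{D}$, mirroring the structure of the Subject Congruence proof (\Cref{t:confSubjCong}). The three ``lifting'' rules are dispatched first. Rule~\scc{E-ConfLiftSC} follows immediately from the induction hypothesis together with \Cref{t:confSubjCong}. Rule~\scc{E-LiftC} follows by an inner induction on the structure of the configuration context $\term{\mcl{G}}$: the base case $\term{\mcl{G}} = \term{[]}$ is the induction hypothesis, and each remaining clause ($\term{\mcl{G} \prl C}$, $\term{\nu{x\bfr{\vec{m}}y}\mcl{G}}$, $\term{\mcl{G}\xsub{ \mbb{M}/x }}$) is handled by inverting the corresponding typing rule (\scc{T-ParL}/\scc{T-ParR}, \scc{T-Res}/\scc{T-ResBuf}, \scc{T-ConfSub}) and re-applying it. Rule~\scc{E-LiftM} follows from \Cref{t:termSubjRed} (Subject Reduction for Terms): after an inner induction on the shape of the thread context $\term{\mcl{F}}$ (either $\term{\phi\,\mcl{R}}$ or $\term{C\xsub{ \mcl{R}/x }}$) we are left with the case of a reduction context $\term{\mcl{R}}$, which we dispatch with a decomposition-and-replacement lemma for reduction contexts: if $\type{\Lambda} \vdashM \term{\mcl{R}[\mbb{M}]}: \type{U}$ then $\type{\Lambda} = \type{\Lambda_1}, \type{\Lambda_2}$ with $\type{\Lambda_1} \vdashM \term{\mbb{M}}: \type{V}$ and, for every $\type{\Lambda_1'} \vdashM \term{\mbb{N}}: \type{V}$, also $\type{\Lambda_1'}, \type{\Lambda_2} \vdashM \term{\mcl{R}[\mbb{N}]}: \type{U}$; this lemma is proved by induction on $\term{\mcl{R}}$, analogously to the context analysis inside \Cref{t:confSubjCong}.

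For the four genuine reduction rules, the recipe is uniform: invert the typing derivation of $\term{C}$ down to the typing of the redex inside the relevant (thread or buffer) context, derive the typing of the reduct at the same type, and re-assemble using the decomposition-and-replacement lemma. For Rule~\scc{E-New}, the hole of $\term{\mcl{F}}$ is typed $\type{\emptyset} \vdashM \term{\sff{new}}: \type{S \times \ol{S}}$ (the only way to type $\term{\sff{new}}$, possibly under instances of \scc{T-EndL}); we type $\term{(x,y)}$ as $\term{x}:\type{S}, \term{y}:\type{\ol{S}} \vdashM \term{(x,y)}: \type{S \times \ol{S}}$, plug it into $\term{\mcl{F}}$, and close off with an empty buffer via \scc{T-Buf} and \scc{T-Res}. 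For Rule~\scc{E-Spawn}, inversion through \scc{T-Spawn} and \scc{T-Pair} splits the hole's environment as $\type{\Delta_1}, \type{\Delta_2}$ with $\type{\Delta_1} \vdashM \term{\mbb{M}}: \type{\1}$ and $\type{\Delta_2} \vdashM \term{\mbb{N}}: \type{T'}$; we build $\term{\child\,\mbb{M}}$ with \scc{T-Child}, plug $\term{\mbb{N}}$ back into $\term{\hat{\mcl{F}}}$, and compose with \scc{T-ParL}/\scc{T-ParR}. For Rule~\scc{E-Recv}, inverting the buffered restriction exposes the innermost \scc{T-BufSend} yielding $\type{\Gamma_M} \vdashM \term{\mbb{M}}: \type{T'}$ with $\term{y}$ dual-typed $\type{{?}T' \sdot S}$ and the hole typed $\term{y}:\type{{?}T' \sdot S} \vdashM \term{\sff{recv}~y}: \type{T' \times S}$; we re-derive $\type{\Gamma_M}, \term{y}:\type{S} \vdashM \term{(\mbb{M},y)}: \type{T' \times S}$, reinsert it, and rebuild the restriction with the shortened buffer $\term{\bfr{\vec{m}}}$. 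Rule~\scc{E-Case} is analogous, using \scc{T-BufSelect}, \scc{T-Case}, and \scc{T-App}.

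The main obstacle is, as in \Cref{t:confSubjCong}, the bookkeeping around contexts rather than any deep typing insight. Two points need care. First, Rule~\scc{T-EndL} may scatter $\type{\sff{end}}$-typed assignments anywhere in a derivation, so the decomposition-and-replacement lemma must be phrased so that such assignments may be split off into either part of the environment and so that the reduct may be typed under a different (but compatible) environment than the redex. Second, Rules~\scc{E-Spawn} and \scc{E-Recv} (and, at the level of congruence, \scc{SC-Send'}) move a term out of a thread context; this is type-sound only because these rules use the restricted context $\term{\hat{\mcl{F}}}$, whose hole does not occur under an explicit substitution---otherwise the free variables of the moved term could be captured by the substitution, exactly the inconsistency illustrated in \Cref{x:restrictedThread}. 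Carrying this side condition through the inner context inductions is the fiddly part of the argument.
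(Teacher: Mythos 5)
Your proposal is correct and takes essentially the same approach as the paper: induction on the derivation of the reduction, with \scc{E-LiftC}, \scc{E-LiftM}, and \scc{E-ConfLiftSC} dispatched via the induction hypothesis, Subject Reduction for Terms, and Subject Congruence, and \scc{E-New}, \scc{E-Spawn}, \scc{E-Recv}, \scc{E-Case} handled by inverting the typing at the redex inside its thread/buffer context and re-deriving the typing of the reduct at the same type. The only differences are cosmetic: the paper carries out inline structural inductions on the contexts where you factor the bookkeeping into a decomposition-and-replacement lemma, and the \scc{T-BufSend} instance exposed when inverting the buffer typing for \scc{E-Recv} is the bottom-most (last-applied) rule of the buffer derivation rather than the ``innermost'' one---a harmless terminological slip.
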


\begin{proof}
    By induction on the derivation of $\term{C} \reddC \term{D}$ (\ih{1}).
    The case of Rule~\scc{E-LiftC} ($\term{C} \reddC \term{C'} \implies \term{\mcl{G}[C]} \reddC \term{\mcl{G}[C']}$) follows by induction on the structure of $\term{\mcl{G}}$, directly from \ih{1}.
    The case of Rule~\scc{E-LiftM} ($\term{\mbb{M}} \reddM \term{\mbb{M}'} \implies \term{\mcl{F}[\mbb{M}]} \reddC \term{\mcl{F}[\mbb{M'}]}$) follows by induction on the structure of $\term{\mcl{F}}$, where the base case ($\term{\mcl{F}} = \term{\phi\,[]}$) follows from \Cref{t:termSubjRed} (Subject Reduction for Terms).
    The case for Rule~\mbox{\scc{E-ConfLiftSC}} ($\term{C} \equivC \term{C'} \wedge \term{C'} \reddC \term{D'} \wedge \term{D'} \equivC \term{D} \implies \term{C} \reddC \term{D}$) follows from \ih{1} and \Cref{t:confSubjCong} (Subject Congruence for Configurations).
    We consider the other cases:
    \begin{itemize}
        \item
            Rule~\scc{E-New}: $\term{\mcl{F}[\sff{new}]} \reddC \term{\nu{x\bfr{\epsilon}y}(\mcl{F}[(x,y)])}$.

            This case follows by induction on the structure of $\term{\mcl{F}}$.
            The inductive cases follow from the IH directly.
            For the base case ($\term{\mcl{F}} = \term{\phi\,[]}$), we apply inversion of typing and derive the typing of the reduced configuration:
            \begin{mathpar}
                \inferrule*{
                    \inferrule*{
                    }{
                        \type{\emptyset} \vdashM \term{\sff{new}}: \type{S \times \ol{S}}
                    }
                }{
                    \type{\emptyset} \vdashC{\phi} \term{\phi\,\sff{new}}: \type{S \times \ol{S}}
                }
                \reddC
                \inferrule*{
                    \inferrule*{ }{
                        \type{\emptyset} \vdashB \term{\bfr{\epsilon}}: \type{S} > \type{S}
                    }
                    \\
                    \inferrule*{
                        \inferrule*{
                            \inferrule*{ }{
                                \term{x}: \type{S} \vdashM \term{x}: \type{S}
                            }
                            \\
                            \inferrule*{ }{
                                \term{y}: \type{\ol{S}} \vdashM \term{y}: \type{\ol{S}}
                            }
                        }{
                            \term{x}: \type{S}, \term{y}: \type{\ol{S}} \vdashM \term{(x,y)}: \type{S \times \ol{S}}
                        }
                    }{
                        \term{x}: \type{S}, \term{y}: \type{\ol{S}} \vdashC{\phi} \term{\phi\,(x,y)}: \type{S \times \ol{S}}
                    }
                }{
                    \type{\emptyset} \vdashC{\phi} \term{\nu{x\bfr{\epsilon}y}(\phi\,(x,y))}: \type{S \times \ol{S}}
                }
            \end{mathpar}

        \item
            Rule~\scc{E-Spawn}: $\term{\hat{\mcl{F}}[\sff{spawn}~(M,N)]} \reddC \term{\hat{\mcl{F}}[N] \prl \circ\, M}$.

            Similar to the case above, this case follows by induction on the structure of $\term{\hat{\mcl{F}}}$, which excludes holes under explicit substitution.
            For the base case ($\term{\hat{\mcl{F}}} = \term{\phi\,[]}$), we apply inversion of typing and derive the typing of the reduced configuration:
            \begin{mathpar}
                \inferrule*{
                    \inferrule*{
                        \inferrule*{
                            \type{\Gamma} \vdashM \term{M}: \type{\1}
                            \\
                            \type{\Delta} \vdashM \term{N}: \type{T}
                        }{
                            \type{\Gamma}, \type{\Delta} \vdashM \term{(M,N)}: \type{\1 \times T}
                        }
                    }{
                        \type{\Gamma}, \type{\Delta} \vdashM \term{\sff{spawn}~(M,N)}: \type{T}
                    }
                }{
                    \type{\Gamma}, \type{\Delta} \vdashC{\phi} \term{\phi\,(\sff{spawn}~(M,N))}: \type{T}
                }
                \reddC
                \inferrule*{
                    \inferrule*{
                        \type{\Delta} \vdashM \term{N}: \type{T}
                    }{
                        \type{\Delta} \vdashC{\phi} \term{\phi\,N}: \type{T}
                    }
                    \\
                    \inferrule*{
                        \type{\Gamma} \vdashM \term{M}: \type{\1}
                    }{
                        \type{\Gamma} \vdashC{\circ} \term{\circ\,M}: \type{\1}
                    }
                }{
                    \type{\Gamma}, \type{\Delta} \vdashC{\phi} \term{\phi\,N \prl \circ\,M}: \type{T}
                }
            \end{mathpar}

        \item
            Rule~\scc{E-Recv}: $\term{\nu{x\bfr{\vec{m},M}y}(\mcl{F}[\sff{recv}~y] \prl C)} \reddC \term{\nu{x\bfr{\vec{m}}y}(\mcl{F}[(M,y)] \prl C)}$.

            For this case, we apply induction on the structure of $\term{\mcl{F}}$.
            The inductive cases follow from the IH directly.
            We consider the base case ($\term{\mcl{F}} = \term{\phi\,[]}$).
            We apply inversion of typing, w.l.o.g.\ assuming that $\term{\phi} = \term{\bullet}$, and then derive the typing of the reduced configuration:
            \begin{mathpar}
                \inferrule*{
                    \inferrule*{
                        \type{\Gamma} \vdashM \term{M}: \type{T}
                        \\
                        \type{\Delta} \vdashB \term{\bfr{\vec{m}}}: \type{S'} > \type{S}
                    }{
                        \type{\Gamma}, \type{\Delta} \vdashB \term{\bfr{\vec{m},M}}: \type{S'} > \type{{!}T \sdot S}
                    }
                    \\
                    \inferrule*{
                        \inferrule*{
                            \inferrule*{
                                \inferrule*{ }{
                                    \term{y}: \type{{?}T \sdot \ol{S}} \vdashM \term{y}: \type{{?}T \sdot \ol{S}}
                                }
                            }{
                                \term{y}: \type{{?}T \sdot \ol{S}} \vdashM \term{\sff{recv}~y}: \type{T \times \ol{S}}
                            }
                        }{
                            \term{y}: \type{{?}T \sdot \ol{S}} \vdashC{\bullet} \term{\bullet\,(\sff{recv}~y)}: \type{T \times \ol{S}}
                        }
                        \\
                        \type{\Lambda}, \term{x}: \type{S'} \vdashC{\circ} \term{C}: \type{\1}
                    }{
                        \type{\Lambda}, \term{x}: \type{S'}, \term{y}: \type{{?}T \sdot \ol{S}} \vdashC{\bullet} \term{\bullet\,(\sff{recv}~y) \prl C}: \type{T \times \ol{S}}
                    }
                }{
                    \type{\Gamma}, \type{\Delta}, \type{\Lambda} \vdashC{\bullet} \term{\nu{x\bfr{\vec{m},M}y}(\bullet\,(\sff{recv}~y) \prl C)}: \type{T \times \ol{S}}
                }
                \reddC
                \inferrule*{
                    \type{\Delta} \vdashB \term{\bfr{\vec{m}}}: \type{S'} > \type{S}
                    \\
                    \inferrule*{
                        \inferrule*{
                            \inferrule*{
                                \type{\Gamma} \vdashM \term{M}: \type{T}
                                \\
                                \inferrule*{ }{
                                    \term{y}: \type{\ol{S}} \vdashM \term{y}: \type{\ol{S}}
                                }
                            }{
                                \type{\Gamma}, \term{y}: \type{\ol{S}} \vdashM \term{(M,y)}: \type{T \times \ol{S}}
                            }
                        }{
                            \type{\Gamma}, \term{y}: \type{\ol{S}} \vdashC{\bullet} \term{\bullet\,(M,y)}: \type{T \times \ol{S}}
                        }
                        \\
                        \type{\Lambda}, \term{x}: \type{S'} \vdashC{\circ} \term{C}: \type{\1}
                    }{
                        \type{\Gamma}, \type{\Lambda}, \term{x}: \type{S'}, \term{y}: \type{\ol{S}} \vdashC{\bullet} \term{\bullet\,(M,y) \prl C}: \type{T \times \ol{S}}
                    }
                }{
                    \type{\Gamma}, \type{\Delta}, \type{\Lambda} \vdashC{\bullet} \term{\nu{x\bfr{\vec{m}}y}(\bullet\,(M,y) \prl C)}: \type{T \times \ol{S}}
                }
            \end{mathpar}

        \item
            The case of Rule~\scc{E-Case}
            \[
                (j \in I \implies \term{\nu{x\bfr{\vec{m},j}y}(\mcl{F}[\sff{case}\, y\, \sff{of}\, \{i:M_i\}_{i \in I}] \prl C)} \reddC \term{\nu{x\bfr{\vec{m}}y}(\mcl{F}[M_j~y] \prl C)})
            \]
            is similar to the above case.
            \qedhere
    \end{itemize}
\end{proof}

\section{Translations of \ourGV Typing Rules with Full Derivations}\label{a:transFull}

Here we give the translation of \ourGV into APCP, presented in \Cref{ss:ourGVintoAPCP}, with full derivation trees.
\begin{itemize}
    \item
        \refwpage{f:transTerm1} gives the translations of the Rules~\scc{T-Var}, \scc{T-Unit}, \scc{T-Abs}, \scc{T-App}, and \scc{T-New}.

    \item
        \refwpage{f:transTerm2} gives the translations of the Rules~\scc{T-Pair}, \scc{T-Spawn}, and \scc{T-Sub}.

    \item
        \refwpage{f:transTerm3} gives the translations of the Rules~\scc{T-Split}, \scc{T-EndL}, and \scc{T-EndR}.

    \item
        \refwpage{f:transTerm4} gives the translation of the Rules~\scc{T-Send}, and \scc{T-Recv}.

    \item
        \refwpage{f:transTerm5} gives the translations of the Rules~\scc{T-Select}, \scc{T-Case}, and \scc{T-Send'}.

    \item
        \refwpage{f:transConf} gives the translations of the typing rules for configurations.

    \item
        \refwpage{f:transBuf} gives the translations of the typing rules for buffers.
\end{itemize}

\begin{figure}[t!]
    \begin{mdframed}\small
        \begin{mathpar}
            \enc{z}{
                \inferrule[T-Var]{ }{
                    \term{x}: \type{T} \vdashM \term{x}: \type{T}
                }
            }
            = \inferrule{ }{
                x \fwd z \vdashAst x: \ol{\enct{\type{T}}}, z: \enct{\type{T}}
            }
            \and
            \enc{z}{
                \inferrule[T-Unit]{ }{
                    \type{\emptyset} \vdashM \term{()}: \type{\1}
                }
            }
            = \inferrule{
                \inferrule{ }{
                    \0 \vdashAst \emptyset
                }
            }{
                \0 \vdashAst z: \bullet
            }
        \end{mathpar}

        \begin{mathpar}
            \enc{z}{
                \inferrule[T-Abs]{
                    \type{\Gamma}, \term{x}: \type{T} \vdashM \term{M}: \type{U}
                }{
                    \type{\Gamma} \vdashM \term{\lambda x \sdot M}: \type{T \lolli U}
                }
            }
            = \inferrule{
                \inferrule{
                    \inferrule{
                        \inferrule*{
                            \inferrule{
                                \inferrule{ }{
                                    a[c,e] \vdashAst a: \ol{\enct{\type{T}}} \tensor \bullet, c: \enct{\type{T}}, e: \bullet
                                }
                            }{
                                a[c,e] \vdashAst a: \ol{\enct{\type{T}}} \tensor \bullet, c: \enct{\type{T}}, e: \bullet, f: \bullet
                            }
                        }{
                            \nu{ef}a[c,e] \vdashAst a: \ol{\enct{\type{T}}} \tensor \bullet, c: \enct{\type{T}}
                        }
                        \\
                        \encc{b}{M} \vdashAst \ol{\enct{\type{\Gamma}}}, x: \ol{\enct{\type{T}}}, b: \enct{\type{U}}
                    }{
                        \nu{ef}a[c,e] \| \encc{b}{M} \vdashAst \ol{\enct{\type{\Gamma}}}, a: \ol{\enct{\type{T}}} \tensor \bullet, b: \enct{\type{U}}, c: \enct{\type{T}}, x: \ol{\enct{\type{T}}}
                    }
                }{
                    \nuf{cx}(\nu{ef}a[c,e] \| \encc{b}{M}) \vdashAst \ol{\enct{\type{\Gamma}}}, a: \ol{\enct{\type{T}}} \tensor \bullet, b: \enct{\type{U}}
                }
            }{
                z(a,b) \sdot \nuf{cx}(\nu{ef}a[c,e] \| \encc{b}{M}) \vdashAst \ol{\enct{\type{\Gamma}}}, z: (\ol{\enct{\type{T}}} \tensor \bullet) \parr \enct{\type{U}}
            }
        \end{mathpar}

        \begin{mathpar}
            \enc{z}{
                \inferrule[T-App]{
                    \type{\Gamma} \vdashM \term{M}: \type{T \lolli U}
                    \\
                    \type{\Delta} \vdashM \term{N}: \type{T}
                }{
                    \type{\Gamma}, \type{\Delta} \vdashM \term{M~N}: \type{U}
                }
            }
            = \inferrule{
                \inferrule*{
                    \inferrule*{}{
                        {\begin{array}[t]{@{}l@{}}
                                \encc{a}{M} \vdashAst \ol{\enct{\type{\Gamma}}},
                                \\
                                a: {\begin{array}[t]{@{}l@{}}
                                        (\ol{\enct{\type{T}}} \tensor \bullet)
                                        \\
                                        {} \parr \enct{\type{U}}
                                \end{array}}
                        \end{array}}
                    }
                    \\
                    \inferrule*{
                        \inferrule*{
                            \inferrule*{ }{
                                b[c,z] \vdashAst {\begin{array}[t]{@{}l@{}}
                                        b: (\enct{\type{T}} \parr \bullet) \tensor \ol{\enct{\type{U}}},
                                        \\
                                        c: \ol{\enct{\type{T}}} \tensor \bullet, z: \enct{\type{U}}
                                \end{array}}
                            }
                            \\
                            \inferrule*{
                                \inferrule*{
                                    \encc{e}{N} \vdashAst \ol{\enct{\type{\Delta}}}, e: \enct{\type{T}}
                                }{
                                    \encc{e}{N} \vdashAst \ol{\enct{\type{\Delta}}}, e: \enct{\type{T}}, f: \bullet
                                }
                            }{
                                d(e,f) \sdot \encc{e}{N} \vdashAst \ol{\enct{\type{\Delta}}}, d: \enct{\type{T}} \parr \bullet
                            }
                        }{
                            b[c,z] \| d(e,f) \sdot \encc{e}{N} \vdashAst \ol{\enct{\type{\Delta}}}, z: \enct{\type{U}}, {\begin{array}[t]{@{}l@{}}
                                    b: (\enct{\type{T}} \parr \bullet) \tensor \ol{\enct{\type{U}}},
                                    \\
                                    c: \ol{\enct{\type{T}}} \tensor \bullet, d: \enct{\type{T}} \parr \bullet
                            \end{array}}
                        }
                    }{
                        \nu{cd}(b[c,z] \| d(e,f) \sdot \encc{e}{N}) \vdashAst \ol{\enct{\type{\Delta}}}, z: \enct{\type{U}}, b: (\enct{\type{T}} \parr \bullet) \tensor \ol{\enct{\type{U}}}
                    }
                }{
                    \encc{a}{M} \| \nu{cd}(b[c,z] \| d(e,f) \sdot \encc{e}{N}) \vdashAst \ol{\enct{\type{\Gamma}}}, \ol{\enct{\type{\Delta}}}, z: \enct{\type{U}}, a: (\ol{\enct{\type{T}}} \tensor \bullet) \parr \enct{\type{U}}, b: (\enct{\type{T}} \parr \bullet) \tensor \ol{\enct{\type{U}}}
                }
            }{
                \nu{ab}(\encc{a}{M} \| \nu{cd}(b[c,z] \| d(e,f) \sdot \encc{e}{N})) \vdashAst \ol{\enct{\type{\Gamma}}}, \ol{\enct{\type{\Delta}}}, z: \enct{\type{U}}
            }
        \end{mathpar}

        \begin{mathpar}
            \enc{z}{
                \inferrule[T-New]{
                }{
                    \type{\emptyset} \vdashM \term{\sff{new}}: \type{S \times \ol{S}}
                }
            }
            = \inferrule{
                \inferrule{
                    \inferrule*{
                        \inferrule{ }{
                            a[c,d] \vdashAst {\begin{array}[t]{@{}l@{}}
                                    a: \bullet \tensor \bullet,
                                    \\
                                    c: \bullet, d: \bullet
                            \end{array}}
                        }
                    }{
                        \nu{cd}a[c,d] \vdashAst a: \bullet \tensor \bullet
                    }
                    \\
                    \inferrule*{
                        \inferrule{
                            \inferrule{
                                \inferrule*{
                                    \inferrule{ }{
                                        \encc{z}{(x,y)} \vdashAst z: (\enct{\type{S}} \parr \bullet) \tensor (\enct{\type{\ol{S}}} \parr \bullet), x: \ol{\enct{\type{S}}}, y: \enct{\type{S}}
                                    }
                                }{
                                    \nu{xy}\encc{z}{(x,y)} \vdashAst z: (\enct{\type{S}} \parr \bullet) \tensor (\enct{\type{\ol{S}}} \parr \bullet)
                                }
                            }{
                                \nu{xy}\encc{z}{(x,y)} \vdashAst z: (\enct{\type{S}} \parr \bullet) \tensor (\enct{\type{\ol{S}}} \parr \bullet), e:\bullet
                            }
                        }{
                            \nu{xy}\encc{z}{(x,y)} \vdashAst z: (\enct{\type{S}} \parr \bullet) \tensor (\enct{\type{\ol{S}}} \parr \bullet), e: \bullet, f: \bullet
                        }
                    }{
                        b(e,f) \sdot \nu{xy}\encc{z}{(x,y)} \vdashAst z: (\enct{\type{S}} \parr \bullet) \tensor (\enct{\type{\ol{S}}} \parr \bullet), b: \bullet \parr \bullet
                    }
                }{
                    \nu{cd}a[c,d] \| b(e,f) \sdot \nu{xy}\encc{z}{(x,y)} \vdashAst z: (\enct{\type{S}} \parr \bullet) \tensor (\enct{\type{\ol{S}}} \parr \bullet), a: \bullet \tensor \bullet, b: \bullet \parr \bullet
                }
            }{
                \nu{ab}(\nu{cd}a[c,d] \| b(e,f) \sdot \nu{xy}\encc{z}{(x,y)}) \vdashAst z: (\enct{\type{S}} \parr \bullet) \tensor (\enct{\type{\ol{S}}} \parr \bullet)
            }
        \end{mathpar}
    \end{mdframed}
    \caption{Translation of (runtime) term typing rules with full derivations (part 1 of 5).}\label{f:transTerm1}
\end{figure}

\begin{figure}[t!]
    \begin{mdframed}\small
        \begin{mathpar}
            \enc{z}{
                \inferrule[T-Pair]{
                    \type{\Gamma} \vdashM \term{M}: \type{T}
                    \\
                    \type{\Delta} \vdashM \term{N}: \type{U}
                }{
                    \type{\Gamma}, \type{\Delta} \vdashM \term{(M,N)}: \type{T \times U}
                }
            }
            = \inferrule{
                \inferrule{
                    \inferrule*{
                        \inferrule*{ }{
                            z[a,c] \vdashAst {\begin{array}[t]{@{}l@{}}
                                    z: {\begin{array}[t]{@{}l@{}}
                                            (\enct{\type{T}} \parr \bullet)
                                            \\
                                            {} \tensor (\enct{\type{U}} \parr \bullet),
                                    \end{array}}
                                    \\
                                    a: \ol{\enct{\type{T}}} \tensor \bullet,
                                    \\
                                    c: \ol{\enct{\type{U}}} \tensor \bullet
                            \end{array}}
                        }
                        \\
                        \inferrule*{
                            \inferrule*{
                                \encc{e}{M} \vdashAst \ol{\enct{\type{\Gamma}}}, e: \enct{\type{T}}
                            }{
                                \encc{e}{M} \vdashAst \ol{\enct{\type{\Gamma}}}, e: \enct{\type{T}}, f: \bullet
                            }
                        }{
                            b(e,f) \sdot \encc{e}{M} \vdashAst \ol{\enct{\type{\Gamma}}}, b: \enct{\type{T}} \parr \bullet
                        }
                        \\
                        \inferrule*{
                            \inferrule*{
                                \encc{g}{N} \vdashAst \ol{\enct{\type{\Delta}}}, g: \enct{\type{U}}
                            }{
                                \encc{g}{N} \vdashAst \ol{\enct{\type{\Delta}}}, g: \enct{\type{U}}, h: \bullet
                            }
                        }{
                            d(g,h) \sdot \encc{g}{N} \vdashAst \ol{\enct{\type{\Delta}}}, d: \enct{\type{U}} \parr \bullet
                        }
                    }{
                        z[a,c] \| b(e,f) \sdot \encc{e}{M} \| d(g,h) \sdot \encc{g}{N} \vdashAst {\begin{array}[t]{@{}l@{}}
                                \ol{\enct{\type{\Gamma}}}, \ol{\enct{\type{\Delta}}}, z: (\enct{\type{T}} \parr \bullet) \tensor (\enct{\type{U}} \parr \bullet),
                                \\
                                a: \ol{\enct{\type{T}}} \tensor \bullet, b: \enct{\type{T}} \parr \bullet, c: \ol{\enct{\type{U}}} \tensor \bullet, d: \enct{\type{U}} \parr \bullet
                        \end{array}}
                    }
                }{
                    \nu{cd}(z[a,c] \| b(e,f) \sdot \encc{e}{M} \| d(g,h) \sdot \encc{g}{N}) \vdashAst \ol{\enct{\type{\Gamma}}}, \ol{\enct{\type{\Delta}}}, z: (\enct{\type{T}} \parr \bullet) \tensor (\enct{\type{U}} \parr \bullet), a: \ol{\enct{\type{T}}} \tensor \bullet, b: \enct{\type{T}} \parr \bullet
                }
            }{
                \nu{ab}\nu{cd}(z[a,c] \| b(e,f) \sdot \encc{e}{M} \| d(g,h) \sdot \encc{g}{N}) \vdashAst \ol{\enct{\type{\Gamma}}}, \ol{\enct{\type{\Delta}}}, z: (\enct{\type{T}} \parr \bullet) \tensor (\enct{\type{U}} \parr \bullet)
            }
        \end{mathpar}

        \begin{mathpar}
            \enc{z}{
                \inferrule[T-Spawn]{
                    \type{\Gamma} \vdashM \term{M}: \type{\1 \times T}
                }{
                    \type{\Gamma} \vdashM \term{\sff{spawn}~M}: \type{T}
                }
            }
            \mprset{sep=1.6em}
            = \inferrule{
                \inferrule*{
                    \inferrule*{}{
                        \encc{a}{M} \vdashAst {\begin{array}[t]{@{}l@{}}
                                \ol{\enct{\type{\Gamma}}},
                                \\
                                a: {\begin{array}[t]{@{}l@{}}
                                        (\bullet \parr \bullet)
                                        \\
                                        {} \tensor (\enct{\type{T}} \parr \bullet)
                                \end{array}}
                        \end{array}}
                    }
                    \\
                    \inferrule*{
                        \inferrule*{
                            \inferrule*{
                                \inferrule*{ }{
                                    c[e,f] \vdashAst c: \bullet \tensor \bullet, e: \bullet, f: \bullet
                                }
                            }{
                                \nu{ef}c[e,f] \vdashAst c: \bullet \tensor \bullet
                            }
                            \\
                            \inferrule*{
                                \inferrule*{
                                    \inferrule*{ }{
                                        d[z,g] \vdashAst z: \enct{\type{T}}, d: \ol{\enct{\type{T}}} \tensor \bullet, g: \bullet
                                    }
                                }{
                                    d[z,g] \vdashAst z: \enct{\type{T}}, d: \ol{\enct{\type{T}}} \tensor \bullet, g: \bullet, h: \bullet
                                }
                            }{
                                \nu{gh}d[z,g] \vdashAst z: \enct{\type{T}}, d: \ol{\enct{\type{T}}} \tensor \bullet
                            }
                        }{
                            \nu{ef}c[e,f] \| \nu{gh}d[z,g] \vdashAst z: \enct{\type{T}}, c: \bullet \tensor \bullet, d: \ol{\enct{\type{T}}} \tensor \bullet
                        }
                    }{
                        b(c,d) \sdot (\nu{ef}c[e,f] \| \nu{gh}d[z,g]) \vdashAst z: \enct{\type{T}}, b: (\bullet \tensor \bullet) \parr (\ol{\enct{\type{T}}} \tensor \bullet)
                    }
                }{
                    \encc{a}{M} \| b(c,d) \sdot (\nu{ef}c[e,f] \| \nu{gh}d[z,g]) \vdashAst \ol{\enct{\type{\Gamma}}}, z: \enct{\type{T}}, a: (\bullet \parr \bullet) \tensor (\enct{\type{T}} \parr \bullet), b: (\bullet \tensor \bullet) \parr (\ol{\enct{\type{T}}} \tensor \bullet)
                }
            }{
                \nu{ab}(\encc{a}{M} \| b(c,d) \sdot (\nu{ef}c[e,f] \| \nu{gh}d[z,g])) \vdashAst \ol{\enct{\type{\Gamma}}}, z: \enct{\type{T}}
            }
        \end{mathpar}

        \begin{mathpar}
            \enc{z}{
                \inferrule[T-Sub]{
                    \type{\Gamma}, \term{x}: \type{T} \vdashM \term{\mbb{M}}: \type{U}
                    \\
                    \type{\Delta} \vdashM \term{\mbb{N}}: \type{T}
                }{
                    \type{\Gamma}, \type{\Delta} \vdashM \term{\mbb{M}\xsub{ \mbb{N}/x }}: \type{U}
                }
            }
            = \inferrule{
                \inferrule*{
                    \encc{z}{\mbb{M}} \vdashAst \ol{\enct{\type{\Gamma}}}, z: \enct{\type{U}}, x: \ol{\enct{\type{T}}}
                    \\
                    \encc{a}{\mbb{N}} \vdashAst \ol{\enct{\type{\Delta}}}, a: \enct{\type{T}}
                }{
                    \encc{z}{\mbb{M}} \| \encc{a}{\mbb{N}} \vdashAst \ol{\enct{\type{\Gamma}}}, \ol{\enct{\type{\Delta}}}, z: \enct{\type{U}}, x: \ol{\enct{\type{T}}}, a: \enct{\type{T}}
                }
            }{
                \nuf{xa}(\encc{z}{\mbb{M}} \| \encc{a}{\mbb{N}}) \vdashAst \ol{\enct{\type{\Gamma}}}, \ol{\enct{\type{\Delta}}}, z: \enct{\type{U}}
            }
        \end{mathpar}
    \end{mdframed}
    \caption{Translation of (runtime) term typing rules with full derivations (part 2 of 5).}\label{f:transTerm2}
\end{figure}

\begin{figure}[t!]
    \begin{mdframed}\small
        \begin{mathpar}
            \enc{z}{
                \inferrule[T-Split]{
                    \type{\Gamma} \vdashM \term{M}: \type{T \times T'}
                    \\
                    \type{\Delta}, \term{x}: \type{T}, \term{y}: \type{T'} \vdashM \term{N}: \type{U}
                }{
                    \type{\Gamma}, \type{\Delta} \vdashM \term{\sff{let}\, (x,y) = M\, \sff{in}\, N}: \type{U}
                }
            }
            \mprset{sep=0.4em}
            = \inferrule{
                \inferrule*{
                    \inferrule*{}{
                        {\begin{array}[t]{@{}l@{}}
                                \encc{a}{M} \vdashAst {}
                                \\
                                \ol{\enct{\type{\Gamma}}},
                                \\
                                a: {\begin{array}[t]{@{}l@{}}
                                        (\enct{\type{T}} \parr \bullet)
                                        \\
                                        {} \tensor (\enct{\type{T'}} \parr \bullet)
                                \end{array}}
                        \end{array}}
                    }
                    \\
                    \inferrule*{
                        \inferrule*{
                            \inferrule*{
                                \inferrule*{
                                    \inferrule*{
                                        \inferrule*{
                                            \inferrule*{ }{
                                                c[e,g] \vdashAst {\begin{array}[t]{@{}l@{}}
                                                        c: (\ol{\enct{\type{T}}} \tensor \bullet),
                                                        \\
                                                        e: \enct{\type{T}}, g: \bullet
                                                \end{array}}
                                            }
                                        }{
                                            c[e,g] \vdashAst {\begin{array}[t]{@{}l@{}}
                                                    c: (\ol{\enct{\type{T}}} \tensor \bullet),
                                                    \\
                                                    e: \enct{\type{T}},
                                                    \\
                                                    g: \bullet, h: \bullet
                                            \end{array}}
                                        }
                                    }{
                                        \nu{gh}c[e,g] \vdashAst {\begin{array}[t]{@{}l@{}}
                                                c: (\ol{\enct{\type{T}}} \tensor \bullet),
                                                \\
                                                e: \enct{\type{T}}
                                        \end{array}}
                                    }
                                    \\
                                    \inferrule*{
                                        \inferrule*{
                                            \inferrule*{ }{
                                                d[f,k] \vdashAst {\begin{array}[t]{@{}l@{}}
                                                        d: (\ol{\enct{\type{T'}}} \tensor \bullet),
                                                        \\
                                                        f: \enct{\type{T'}}, k: \bullet
                                                \end{array}}
                                            }
                                        }{
                                            d[f,k] \vdashAst {\begin{array}[t]{@{}l@{}}
                                                    d: (\ol{\enct{\type{T'}}} \tensor \bullet),
                                                    \\
                                                    f: \enct{\type{T'}},
                                                    \\
                                                    k: \bullet, l: \bullet
                                            \end{array}}
                                        }
                                    }{
                                        \nu{kl}d[f,k] \vdashAst {\begin{array}[t]{@{}l@{}}
                                                d: (\ol{\enct{\type{T'}}} \tensor \bullet),
                                                \\
                                                f: \enct{\type{T'}}
                                        \end{array}}
                                    }
                                    \\
                                    \inferrule*{}{
                                        \encc{z}{N} \vdashAst {\begin{array}[t]{@{}l@{}}
                                                \ol{\enct{\type{\Delta}}},
                                                \\
                                                x: \ol{\enct{\type{T}}},
                                                \\
                                                y: \ol{\enct{\type{T'}}},
                                                \\
                                                z: \enct{\type{U}}
                                        \end{array}}
                                    }
                                }{
                                    \nu{gh}c[e,g] \| \nu{kl}d[f,k] \| \encc{z}{N} \vdashAst {\begin{array}[t]{@{}l@{}}
                                            \ol{\enct{\type{\Delta}}}, z: \enct{\type{U}},
                                            \\
                                            c: (\ol{\enct{\type{T}}} \tensor \bullet), d: (\ol{\enct{\type{T'}}} \tensor \bullet)
                                            \\
                                            e: \enct{\type{T}}, x: \ol{\enct{\type{T}}}, f: \enct{\type{T'}}, y: \ol{\enct{\type{T'}}}
                                    \end{array}}
                                }
                            }{
                                \nuf{fy}\left({\begin{array}{@{}l@{}}
                                            \nu{gh}c[e,g]
                                            \\
                                            {} \| \nu{kl}d[f,k]
                                            \\
                                            {} \| \encc{z}{N}
                                \end{array}}\right) \vdashAst {\begin{array}[t]{@{}l@{}}
                                        \ol{\enct{\type{\Delta}}}, z: \enct{\type{U}},
                                        \\
                                        c: (\ol{\enct{\type{T}}} \tensor \bullet), d: (\ol{\enct{\type{T'}}} \tensor \bullet)
                                        \\
                                        e: \enct{\type{T}}, x: \ol{\enct{\type{T}}}
                                \end{array}}
                            }
                        }{
                            \nuf{ex}\nuf{fy}\left({\begin{array}{@{}l@{}}
                                        \nu{gh}c[e,g]
                                        \\
                                        {} \| \nu{kl}d[f,k]
                                        \\
                                        {} \| \encc{z}{N}
                            \end{array}}\right) \vdashAst {\begin{array}[t]{@{}l@{}}
                                    \ol{\enct{\type{\Delta}}}, z: \enct{\type{U}},
                                    \\
                                    c: (\ol{\enct{\type{T}}} \tensor \bullet), d: (\ol{\enct{\type{T'}}} \tensor \bullet)
                            \end{array}}
                        }
                    }{
                        b(c,d) \sdot \nuf{ex}\nuf{fy}\left({\begin{array}{@{}l@{}}
                                    \nu{gh}c[e,g]
                                    \\
                                    {} \| \nu{kl}d[f,k]
                                    \\
                                    {} \| \encc{z}{N}
                        \end{array}}\right) \vdashAst {\begin{array}[t]{@{}l@{}}
                                \ol{\enct{\type{\Delta}}}, z: \enct{\type{U}},
                                \\
                                b: (\ol{\enct{\type{T}}} \tensor \bullet) \parr (\ol{\enct{\type{T'}}} \tensor \bullet)
                        \end{array}}
                    }
                }{
                    \encc{a}{M} \| b(c,d) \sdot \nuf{ex}\nuf{fy}(\nu{gh}c[e,g] \| \nu{kl}d[f,k] \| \encc{z}{N}) \vdashAst \ol{\enct{\type{\Gamma}}}, \ol{\enct{\type{\Delta}}}, z: \enct{\type{U}}, {\begin{array}[t]{@{}l@{}}
                            a: (\enct{\type{T}} \parr \bullet) \tensor (\enct{\type{T'}} \parr \bullet),
                            \\
                            b: (\ol{\enct{\type{T}}} \tensor \bullet) \parr (\ol{\enct{\type{T'}}} \tensor \bullet)
                    \end{array}}
                }
            }{
                \nu{ab}(\encc{a}{M} \| b(c,d) \sdot \nuf{ex}\nuf{fy}(\nu{gh}c[e,g] \| \nu{kl}d[f,k] \| \encc{z}{N})) \vdashAst \ol{\enct{\type{\Gamma}}}, \ol{\enct{\type{\Delta}}}, z: \enct{\type{U}}
            }
        \end{mathpar}

        \begin{mathpar}
            \enc{z}{
                \inferrule[T-EndL]{
                    \type{\Gamma} \vdashM \term{M}: \type{T}
                }{
                    \type{\Gamma}, \term{x}: \type{\sff{end}} \vdashM \term{M}: \type{T}
                }
            }
            = \inferrule{
                \encc{z}{M} \vdashAst \ol{\enct{\type{\Gamma}}}, z: \enct{\type{T}}
            }{
                \encc{z}{M} \vdashAst \ol{\enct{\type{\Gamma}}}, x: \bullet, z: \enct{\type{T}}
            }
            \and
            \enc{z}{
                \inferrule[T-EndR]{ }{
                    \type{\emptyset} \vdashM \term{x}: \type{\sff{end}}
                }
            }
            = \inferrule{
                \inferrule*{ }{
                    \0 \vdashAst \emptyset
                }
            }{
                \0 \vdashAst z: \bullet
            }
        \end{mathpar}
    \end{mdframed}
    \caption{Translation of (runtime) term typing rules with full derivations (part 3 of 5).}\label{f:transTerm3}
\end{figure}

\begin{figure}[t!]
    \begin{mdframed}\small
        \begin{mathpar}
            \enc{z}{
                \inferrule[T-Send]{
                    \type{\Gamma} \vdashM \term{M}: \type{T \times {!}T \sdot S}
                }{
                    \type{\Gamma} \vdashM \term{\sff{send}~M}: \type{S}
                }
            }
            \mprset{sep=0.7em}
            = \inferrule{
                \inferrule*{
                    \inferrule*{}{
                        {\begin{array}[t]{@{}l@{}}
                                \encc{a}{M} \vdashAst \ol{\enct{\type{\Gamma}}},
                                \\
                                a: (\enct{\type{T}} \parr \bullet)
                                \\
                                {} \tensor (((\ol{\enct{\type{T}}} \tensor \bullet)
                                \\
                                {} \parr \enct{\type{S}}) \parr \bullet)
                        \end{array}}
                    }
                    \\
                    \inferrule*{
                        \inferrule*{
                            \inferrule*{
                                \inferrule*{
                                    \inferrule*{
                                        \inferrule*{ }{
                                            d[e,g] \vdashAst {\begin{array}[t]{@{}l@{}}
                                                    d: ((\enct{\type{T}} \parr \bullet) \tensor \ol{\enct{\type{S}}}) \tensor \bullet,
                                                    \\
                                                    e: (\ol{\enct{\type{T}}} \tensor \bullet) \parr \enct{\type{S}}, g: \bullet
                                            \end{array}}
                                        }
                                    }{
                                        d[e,g] \vdashAst {\begin{array}[t]{@{}l@{}}
                                                d: ((\enct{\type{T}} \parr \bullet) \tensor \ol{\enct{\type{S}}}) \tensor \bullet,
                                                \\
                                                e: (\ol{\enct{\type{T}}} \tensor \bullet) \parr \enct{\type{S}}, g: \bullet, h: \bullet
                                        \end{array}}
                                    }
                                }{
                                    \nu{gh}d[e,g] \vdashAst {\begin{array}[t]{@{}l@{}}
                                            d: ({\begin{array}[t]{@{}l@{}}
                                                    (\enct{\type{T}} \parr \bullet)
                                                    \\
                                                    {} \tensor \ol{\enct{\type{S}}}) \tensor \bullet,
                                            \end{array}}
                                            \\
                                            e: (\ol{\enct{\type{T}}} \tensor \bullet) \parr \enct{\type{S}}
                                    \end{array}}
                                }
                                \\
                                \inferrule*{
                                    \inferrule*{
                                        \inferrule*{ }{
                                            {\begin{array}[t]{@{}l@{}}
                                                    f[c,k] \vdash
                                                    \\
                                                    f: (\enct{\type{T}} \parr \bullet) \tensor \ol{\enct{\type{S}}},
                                                    \\
                                                    c: \ol{\enct{\type{T}}} \tensor \bullet,
                                                    k: \enct{\type{S}}
                                            \end{array}}
                                        }
                                        \\
                                        \inferrule*{ }{
                                            l \fwd z \vdashAst {\begin{array}[t]{@{}l@{}}
                                                    z: \enct{\type{S}},
                                                    \\
                                                    l: \ol{\enct{\type{S}}}
                                            \end{array}}
                                        }
                                    }{
                                        {\begin{array}[t]{@{}c@{}}
                                                f[c,k] \| l \fwd z \vdash
                                                z: \enct{\type{S}}, c: \ol{\enct{\type{T}}} \tensor \bullet,
                                                \\
                                                f: (\enct{\type{T}} \parr \bullet) \tensor \ol{\enct{\type{S}}},
                                                k: \enct{\type{S}}, l: \ol{\enct{\type{S}}}
                                        \end{array}}
                                    }
                                }{
                                    {\begin{array}[t]{@{}c@{}}
                                            \nu{kl}(f[c,k] \| l \fwd z) \vdashAst {}
                                            \\
                                            z: \enct{\type{S}}, c: \ol{\enct{\type{T}}} \tensor \bullet, f: (\enct{\type{T}} \parr \bullet) \tensor \ol{\enct{\type{S}}}
                                    \end{array}}
                                }
                            }{
                                \nu{gh}d[e,g] \| \nu{kl}(f[c,k] \| l \fwd z) \vdashAst z: \enct{\type{S}}, {\begin{array}[t]{@{}l@{}}
                                        c: \ol{\enct{\type{T}}} \tensor \bullet, d: ((\enct{\type{T}} \parr \bullet) \tensor \ol{\enct{\type{S}}}) \tensor \bullet,
                                        \\
                                        e: (\ol{\enct{\type{T}}} \tensor \bullet) \parr \enct{\type{S}}, f: (\enct{\type{T}} \parr \bullet) \tensor \ol{\enct{\type{S}}}
                                \end{array}}
                            }
                        }{
                            \nu{ef}(\nu{gh}d[e,g] \| \nu{kl}(f[c,k] \| l \fwd z)) \vdashAst z: \enct{\type{S}}, c: \ol{\enct{\type{T}}} \tensor \bullet, d: ({\begin{array}[t]{@{}l@{}}
                                    (\enct{\type{T}} \parr \bullet)
                                    \\
                                    {} \tensor \ol{\enct{\type{S}}}) \tensor \bullet
                            \end{array}}
                        }
                    }{
                        {\begin{array}[t]{@{}c@{}}
                                b(c,d) \sdot \nu{ef}(\nu{gh}d[e,g] \| \nu{kl}(f[c,k] \| l \fwd z)) \vdashAst {}
                                \\
                                z: \enct{\type{S}}, b: (\ol{\enct{\type{T}}} \tensor \bullet) \parr (((\enct{\type{T}} \parr \bullet) \tensor \ol{\enct{\type{S}}}) \tensor \bullet)
                        \end{array}}
                    }
                }{
                    {\begin{array}[t]{@{}c@{}}
                            \encc{a}{M} \| b(c,d) \sdot \nu{ef}(\nu{gh}d[e,g] \| \nu{kl}(f[c,k] \| l \fwd z)) \vdashAst {}
                            \\
                            \ol{\enct{\type{\Gamma}}}, z: \enct{\type{S}}, a: (\enct{\type{T}} \parr \bullet) \tensor (((\ol{\enct{\type{T}}} \tensor \bullet) \parr \enct{\type{S}}) \parr \bullet),
                            b: (\ol{\enct{\type{T}}} \tensor \bullet) \parr (((\enct{\type{T}} \parr \bullet) \tensor \ol{\enct{\type{S}}}) \tensor \bullet)
                    \end{array}}
                }
            }{
                \nu{ab}(\encc{a}{M} \| b(c,d) \sdot \nu{ef}(\nu{gh}d[e,g] \| \nu{kl}(f[c,k] \| l \fwd z))) \vdashAst \ol{\enct{\type{\Gamma}}}, z: \enct{\type{S}}
            }
        \end{mathpar}

        \begin{mathpar}
            \enc{z}{
                \inferrule[T-Recv]{
                    \type{\Gamma} \vdashM \term{M}: \type{{?}T \sdot S}
                }{
                    \type{\Gamma} \vdashM \term{\sff{recv}~M}: \type{T \times S}
                }
            }
            \mprset{sep=1.4em}
            = \inferrule{
                \inferrule*{
                    \inferrule*{}{
                        {\begin{array}[t]{@{}l@{}}
                                \encc{a}{M} \vdashAst {}
                                \\
                                \ol{\enct{\type{\Gamma}}},
                                \\
                                a: (\enct{\type{T}} \parr \bullet) \tensor \enct{\type{S}}
                        \end{array}}
                    }
                    \\
                    \inferrule*{
                        \inferrule*{
                            \inferrule*{
                                \inferrule*{ }{
                                    z[c,e] \vdashAst {\begin{array}[t]{@{}l@{}}
                                            z: (\enct{\type{T}} \parr \bullet) \tensor (\enct{\type{S}} \parr \bullet),
                                            \\
                                            c: \ol{\enct{\type{T}}} \tensor \bullet, e: \ol{\enct{\type{S}}} \tensor \bullet
                                    \end{array}}
                                }
                                \\
                                \inferrule*{
                                    \inferrule*{
                                        \inferrule*{ }{
                                            d \fwd g \vdashAst d: \ol{\enct{\type{S}}}, g: \enct{\type{S}}
                                        }
                                    }{
                                        d \fwd g \vdashAst d: \ol{\enct{\type{S}}}, g: \enct{\type{S}}, h: \bullet
                                    }
                                }{
                                    f(g,h) \sdot d \fwd g \vdashAst {\begin{array}[t]{@{}l@{}}
                                            d: \ol{\enct{\type{S}}},
                                            \\
                                            f: \enct{\type{S}} \parr \bullet
                                    \end{array}}
                                }
                            }{
                                z[c,e] \| f(g,h) \sdot d \fwd g \vdashAst {\begin{array}[t]{@{}l@{}}
                                        z: (\enct{\type{T}} \parr \bullet) \tensor (\enct{\type{S}} \parr \bullet),
                                        \\
                                        c: \ol{\enct{\type{T}}} \tensor \bullet, d: \ol{\enct{\type{S}}}, e: \ol{\enct{\type{S}}} \tensor \bullet, f: \enct{\type{S}} \parr \bullet
                                \end{array}}
                            }
                        }{
                            \nu{ef}(z[c,e] \| f(g,h) \sdot d \fwd g) \vdashAst {\begin{array}[t]{@{}l@{}}
                                    z: (\enct{\type{T}} \parr \bullet) \tensor (\enct{\type{S}} \parr \bullet),
                                    \\
                                    c: \ol{\enct{\type{T}}} \tensor \bullet, d: \ol{\enct{\type{S}}}
                            \end{array}}
                        }
                    }{
                        b(c,d) \sdot \nu{ef}(z[c,e] \| f(g,h) \sdot d \fwd g) \vdashAst {\begin{array}[t]{@{}l@{}}
                                z: (\enct{\type{T}} \parr \bullet) \tensor (\enct{\type{S}} \parr \bullet),
                                \\
                                b: (\ol{\enct{\type{T}}} \tensor \bullet) \parr \ol{\enct{\type{S}}}
                        \end{array}}
                    }
                }{
                    \encc{a}{M} \| b(c,d) \sdot \nu{ef}(z[c,e] \| f(g,h) \sdot d \fwd g) \vdashAst \ol{\enct{\type{\Gamma}}}, z: (\enct{\type{T}} \parr \bullet) \tensor (\enct{\type{S}} \parr \bullet), {\begin{array}[t]{@{}l@{}}
                            a: (\enct{\type{T}} \parr \bullet) \tensor \enct{\type{S}},
                            \\
                            b: (\ol{\enct{\type{T}}} \tensor \bullet) \parr \ol{\enct{\type{S}}}
                    \end{array}}
                }
            }{
                \nu{ab}(\encc{a}{M} \| b(c,d) \sdot \nu{ef}(z[c,e] \| f(g,h) \sdot d \fwd g)) \vdashAst \ol{\enct{\type{\Gamma}}}, z: (\enct{\type{T}} \parr \bullet) \tensor (\enct{\type{S}} \parr \bullet)
            }
        \end{mathpar}
    \end{mdframed}
    \caption{Translation of (runtime) term typing rules with full derivations (part 4 of 5).}\label{f:transTerm4}
\end{figure}

\begin{figure}[t!]
    \begin{mdframed}\small
        \begin{mathpar}
            \enc{z}{
                \inferrule[T-Select]{
                    \type{\Gamma} \vdashM \term{M}: \type{\oplus\{i:T_i\}_{i \in I}}
                    \\
                    j \in I
                }{
                    \type{\Gamma} \vdashM \term{\sff{select}\, j\, M}: \type{T_j}
                }
            }
            = \inferrule{
                \inferrule*{
                    \encc{a}{M} \vdashAst \ol{\enct{\type{\Gamma}}}, a: {\&}\{i:\enct{\type{T_i}}\}_{i \in I}
                    \\
                    \inferrule*{
                        \inferrule*{
                            \inferrule*{ }{
                                b[c] \puts j \vdashAst b: \oplus\{i:\ol{\enct{\type{T_i}}}\}_{i \in I}, c: \enct{\type{T_j}}
                            }
                            \\
                            \inferrule*{ }{
                                d \fwd z \vdashAst z: \enct{\type{T_j}}, d: \ol{\enct{\type{T_j}}}
                            }
                        }{
                            b[c] \puts j \| d \fwd z \vdashAst z: \enct{\type{T_j}}, b: \oplus\{i:\ol{\enct{\type{T_i}}}\}_{i \in I}, c: \enct{\type{T_j}}, d: \ol{\enct{\type{T_j}}}
                        }
                    }{
                        \nu{cd}(b[c] \puts j \| d \fwd z) \vdashAst z: \enct{\type{T_j}}, b: \oplus\{i:\ol{\enct{\type{T_i}}}\}_{i \in I}
                    }
                }{
                    \encc{a}{M} \| \nu{cd}(b[c] \puts j \| d \fwd z) \vdashAst \ol{\enct{\type{\Gamma}}}, z: \enct{\type{T_j}}, a: {\&}\{i:\enct{\type{T_i}}\}_{i \in I}, b: \oplus\{i:\ol{\enct{\type{T_i}}}\}_{i \in I}
                }
            }{
                \nu{ab}(\encc{a}{M} \| \nu{cd}(b[c] \puts j \| d \fwd z)) \vdashAst \ol{\enct{\type{\Gamma}}}, z: \enct{\type{T_j}}
            }
        \end{mathpar}

        \begin{mathpar}
            \enc{z}{
                \inferrule[T-Case]{
                    \type{\Gamma} \vdashM \term{M}: \type{{\&}\{i:T_i\}_{i \in I}}
                    \\
                    \forall i \in I.~ \type{\Delta} \vdashM \term{N_i}: \type{T_i \lolli U}
                }{
                    \type{\Gamma}, \type{\Delta} \vdashM \term{\sff{case}\, M\, \sff{of}\, \{i:N_i\}_{i \in I}}: \type{U}
                }
            }
            = \inferrule{
                \inferrule*{
                    \encc{a}{M} \vdashAst \ol{\enct{\type{\Gamma}}}, a: \oplus\{i:\enct{\type{T_i}}\}_{i \in I}
                    \\
                    \inferrule*{
                        \forall i \in I.~ \encc{z}{N_i~c} \vdashAst \ol{\enct{\type{\Delta}}}, z:\enct{\type{U}}, c: \ol{\enct{\type{T_i}}}
                    }{
                        b(c) \gets \{i:\encc{z}{N_i~c}\}_{i \in I} \vdashAst \ol{\enct{\type{\Delta}}}, z:\enct{\type{U}}, b: {\&}\{i:\ol{\enct{\type{T_i}}}\}_{i \in I}
                    }
                }{
                    \encc{a}{M} \| b(c) \gets \{i:\encc{z}{N_i~c}\}_{i \in I} \vdashAst \ol{\enct{\type{\Gamma}}}, \ol{\enct{\type{\Delta}}}, z:\enct{\type{U}}, a: \oplus\{i:\enct{\type{T_i}}\}_{i \in I}, b: {\&}\{i:\ol{\enct{\type{T_i}}}\}_{i \in I}
                }
            }{
                \nu{ab}(\encc{a}{M} \| b(c) \gets \{i:\encc{z}{N_i~c}\}_{i \in I}) \vdashAst \ol{\enct{\type{\Gamma}}}, \ol{\enct{\type{\Delta}}}, z:\enct{\type{U}}
            }
        \end{mathpar}

        \begin{mathpar}
            \enc{z}{
                \inferrule[T-Send']{
                    \type{\Gamma} \vdashM \term{M}: \type{T}
                    \\
                    \type{\Delta} \vdashM \term{\mbb{N}}: \type{{!}T \sdot S}
                }{
                    \type{\Gamma}, \type{\Delta} \vdashM \term{\sff{send}'(M,\mbb{N})}: \type{S}
                }
            }
            \mprset{sep=1.4em}
            = \inferrule{
                \inferrule*{
                    \inferrule*{
                        \inferrule*{
                            \encc{c}{M} \vdashAst \ol{\enct{\type{\Gamma}}}, c: \enct{\type{T}}
                        }{
                            \encc{c}{M} \vdashAst \ol{\enct{\type{\Gamma}}}, c: \enct{\type{T}}, d: \bullet
                        }
                    }{
                        a(c,d) \sdot \encc{c}{M} \vdashAst \ol{\enct{\type{\Gamma}}}, a: \enct{\type{T}} \parr \bullet
                    }
                    \\
                    \inferrule*{
                        \inferrule*{
                            \inferrule*{}{
                                {\begin{array}[t]{@{}l@{}}
                                        \encc{e}{\mbb{N}} \vdashAst \ol{\enct{\type{\Delta}}},
                                        \\
                                        e: (\ol{\enct{\type{T}}} \tensor \bullet) \parr \enct{\type{S}}
                                \end{array}}
                            }
                            \\
                            \inferrule*{
                                \inferrule*{
                                    \inferrule*{ }{
                                        {\begin{array}[t]{@{}l@{}}
                                                f[b,g] \vdash
                                                \\
                                                f: (\enct{\type{T}} \parr \bullet) \tensor \ol{\enct{\type{S}}},
                                                \\
                                                b: \ol{\enct{\type{T}}} \tensor \bullet, g: \enct{\type{S}}
                                        \end{array}}
                                    }
                                    \\
                                    \inferrule*{ }{
                                        {\begin{array}[t]{@{}l@{}}
                                                h \fwd z \vdash
                                                \\
                                                z: \enct{\type{S}},
                                                \\
                                                h: \ol{\enct{\type{S}}}
                                        \end{array}}
                                    }
                                }{
                                    {\begin{array}[t]{@{}l@{}}
                                            f[b,g] \| h \fwd z \vdash
                                            \\
                                            z: \enct{\type{S}}, b: \ol{\enct{\type{T}}} \tensor \bullet,
                                            \\
                                            f: (\enct{\type{T}} \parr \bullet) \tensor \ol{\enct{\type{S}}},
                                            \\
                                            g: \enct{\type{S}}, h: \ol{\enct{\type{S}}}
                                    \end{array}}
                                }
                            }{
                                {\begin{array}[t]{@{}l@{}}
                                        \nu{gh}(f[b,g] \| h \fwd z) \vdash
                                        \\
                                        z: \enct{\type{S}}, b: \ol{\enct{\type{T}}} \tensor \bullet,
                                        \\
                                        f: (\enct{\type{T}} \parr \bullet) \tensor \ol{\enct{\type{S}}}
                                \end{array}}
                            }
                        }{
                            {\begin{array}[t]{@{}c@{}}
                                    \encc{e}{\mbb{N}} \| \nu{gh}(f[b,g] \| h \fwd z) \vdashAst \ol{\enct{\type{\Delta}}}, z: \enct{\type{S}}, b: \ol{\enct{\type{T}}} \tensor \bullet,
                                    \\
                                    e: (\ol{\enct{\type{T}}} \tensor \bullet) \parr \enct{\type{S}}, f: (\enct{\type{T}} \parr \bullet) \tensor \ol{\enct{\type{S}}}
                            \end{array}}
                        }
                    }{
                        \nu{ef}(\encc{e}{\mbb{N}} \| f[b,z]) \vdashAst \ol{\enct{\type{\Delta}}}, z: \enct{\type{S}}, b: \ol{\enct{\type{T}}} \tensor \bullet
                    }
                }{
                    a(c,d) \sdot \encc{c}{M} \| \nu{ef}(\encc{e}{\mbb{N}} \| \nu{gh}(f[b,g] \| h \fwd z)) \vdashAst \ol{\enct{\type{\Gamma}}}, \ol{\enct{\type{\Delta}}}, z: \enct{\type{S}}, a: \enct{\type{T}} \parr \bullet, b: \ol{\enct{\type{T}}} \tensor \bullet
                }
            }{
                \nu{ab}(a(c,d) \sdot \encc{c}{M} \| \nu{ef}(\encc{e}{\mbb{N}} \| \nu{gh}(f[b,g] \| h \fwd z))) \vdashAst \ol{\enct{\type{\Gamma}}}, \ol{\enct{\type{\Delta}}}, z: \enct{\type{S}}
            }
        \end{mathpar}
    \end{mdframed}
    \caption{Translation of (runtime) term typing rules with full derivations (part 5 of 5).}\label{f:transTerm5}
\end{figure}

\begin{figure}[t!]
    \begin{mdframed}\small
        \begin{mathpar}
            \enc{z}{
                \inferrule[T-Main]{
                    \type{\Gamma} \vdashM \term{\mbb{M}}: \type{T}
                }{
                    \type{\Gamma} \vdashC{\main} \term{\main\, \mbb{M}}: \type{T}
                }
            }
            = \encc{z}{\mbb{M}} \vdashAst \ol{\enct{\type{\Gamma}}}, z: \enct{\type{T}}
            \and
            \enc{z}{
                \inferrule[T-Child]{
                    \type{\Gamma} \vdashM \term{\mbb{M}}: \type{\1}
                }{
                    \type{\Gamma} \vdashC{\child} \term{\child\, \mbb{M}}: \type{\1}
                }
            }
            = \encc{z}{\mbb{M}} \vdashAst \ol{\enct{\type{\Gamma}}}, z: \enct{\type{T}}
        \end{mathpar}

        \begin{mathpar}
            \enc{z}{
                \inferrule[T-ParL]{
                    \type{\Gamma} \vdashC{\child} \term{C}: \type{\1}
                    \\
                    \type{\Delta} \vdashC{\phi} \term{D}: \type{T}
                }{
                    \type{\Gamma}, \type{\Delta} \vdashC{\child + \phi} \term{C \prl D}: \type{T}
                }
            }
            = \inferrule{
                \inferrule*{
                    \inferrule*{
                        \encc{a}{C} \vdashAst \ol{\enct{\type{\Gamma}}}, a: \bullet
                    }{
                        \encc{a}{C} \vdashAst \ol{\enct{\type{\Gamma}}}, a: \bullet, b: \bullet
                    }
                }{
                    \nu{ab}\encc{a}{C} \vdashAst \ol{\enct{\type{\Gamma}}}
                }
                \\
                \encc{z}{D} \vdashAst \ol{\enct{\type{\Delta}}}, z: \enct{\type{T}}
            }{
                \nu{ab}\encc{a}{C} \| \encc{z}{D} \vdashAst \ol{\enct{\type{\Gamma}}}, \ol{\enct{\type{\Delta}}}, z: \enct{\type{T}}
            }
        \end{mathpar}

        \begin{mathpar}
            \enc{z}{
                \inferrule[T-ParR]{
                    \type{\Gamma} \vdashC{\phi} \term{C}: \type{T}
                    \\
                    \type{\Delta} \vdashC{\child} \term{D}: \type{\1}
                }{
                    \type{\Gamma}, \type{\Delta} \vdashC{\phi + \child} \term{C \prl D}: \type{T}
                }
            }
            = \inferrule{
                \encc{z}{C} \vdashAst \ol{\enct{\type{\Gamma}}}, z: \enct{\type{T}}
                \\
                \inferrule*{
                    \inferrule*{
                        \encc{a}{D} \vdashAst \ol{\enct{\type{\Delta}}}, a: \bullet
                    }{
                        \encc{a}{D} \vdashAst \ol{\enct{\type{\Delta}}}, a: \bullet, b: \bullet
                    }
                }{
                    \nu{ab}\encc{a}{D} \vdashAst \ol{\enct{\type{\Delta}}}
                }
            }{
                \encc{z}{C} \| \nu{ab}\encc{a}{D} \vdashAst \ol{\enct{\type{\Gamma}}}, \ol{\enct{\type{\Delta}}}, z: \enct{\type{T}}
            }
        \end{mathpar}

        \begin{mathpar}
            \enc{z}{
                \inferrule[T-ResBuf]{
                    \type{\Gamma}, \term{y}: \type{\ol{S}} \vdashB \term{\bfr{\vec{m}}}: \type{S'} > \type{S}
                    \\
                    \type{\Delta}, \term{x}: \type{S'} \vdashC{\phi} \term{C}: \type{T}
                }{
                    \type{\Gamma}, \type{\Delta} \vdashC{\phi} \term{\nu{x\bfr{\vec{m}}y}C}: \type{T}
                }
            }
            = \inferrule{
                \bencb{x}{\encc{z}{C}}{\bfr{\vec{m}}} \vdashAst \ol{\enct{\type{\Gamma}}}, \ol{\enct{\type{\Delta}}}, x: \ol{\enct{\type{S}}}, y: \ol{\enct{\type{\ol{S}}}}, z: \enct{T}
            }{
                \nu{xy}\bencb{x}{\encc{z}{C}}{\bfr{\vec{m}}} \vdashAst \ol{\enct{\type{\Gamma}}}, \ol{\enct{\type{\Delta}}}, z: \enct{T}
            }
        \end{mathpar}

        \begin{mathpar}
            \enc{z}{
                \inferrule[T-Res]{
                    \type{\Gamma} \vdashB \term{\bfr{\vec{m}}}: \type{S'} > \type{S}
                    \\
                    \type{\Delta}, \term{x}: \type{S'}, \term{y}: \type{\ol{S}} \vdashC{\phi} \term{C}: \type{T}
                }{
                    \type{\Gamma}, \type{\Delta} \vdashC{\phi} \term{\nu{x\bfr{\vec{m}}y}C}: \type{T}
                }
            }
            = \inferrule{
                \bencb{x}{\encc{z}{C}}{\bfr{\vec{m}}} \vdashAst \ol{\enct{\type{\Gamma}}}, \ol{\enct{\type{\Delta}}}, x: \ol{\enct{\type{S}}}, y: \ol{\enct{\type{\ol{S}}}}, z: \enct{T}
            }{
                \nu{xy}\bencb{x}{\encc{z}{C}}{\bfr{\vec{m}}} \vdashAst \ol{\enct{\type{\Gamma}}}, \ol{\enct{\type{\Delta}}}, z: \enct{T}
            }
        \end{mathpar}

        \begin{mathpar}
            \enc{z}{
                \inferrule[T-ConfSub]{
                    \type{\Gamma}, \term{x}: \type{T} \vdashC{\phi} \term{C}: \type{U}
                    \\
                    \type{\Delta} \vdashM \term{\mbb{M}}: \type{T}
                }{
                    \type{\Gamma}, \type{\Delta} \vdashC{\phi} \term{C\xsub{ \mbb{M}/x }}: \type{U}
                }
            }
            = \inferrule{
                \inferrule*{
                    \encc{z}{C} \vdashAst \ol{\enct{\type{\Gamma}}}, x: \ol{\enct{\type{T}}}, z: \enct{\type{U}}
                    \\
                    \encc{z}{\mbb{M}} \vdashAst \ol{\enct{\type{\Delta}}}, a: \enct{\type{T}}
                }{
                    \encc{z}{C} \| \encc{z}{\mbb{M}} \vdashAst \ol{\enct{\type{\Gamma}}}, \ol{\enct{\type{\Delta}}}, z: \enct{\type{U}}, x: \ol{\enct{\type{T}}}, a: \enct{\type{T}}
                }
            }{
                \nuf{xa}(\encc{z}{C} \| \encc{z}{\mbb{M}}) \vdashAst \ol{\enct{\type{\Gamma}}}, \ol{\enct{\type{\Delta}}}, z: \enct{\type{U}}
            }
        \end{mathpar}
    \end{mdframed}
    \caption{Translation of configuration typing rules with full derivations.}\label{f:transConf}
\end{figure}

\begin{figure}[t!]
    \begin{mdframed}\small
        \begin{mathpar}
            \benc{x}{P \vdashAst \Lambda, x: \ol{\enct{\type{S'}}}}{
                \inferrule[T-Buf]{ }{
                    \type{\emptyset} \vdashB \term{\bfr{\epsilon}}: \type{S'} > \type{S'}
                }
            }
            = P \vdashAst \Lambda, x: \ol{\enct{\type{S'}}}
        \end{mathpar}

        \begin{mathpar}
            \benc{x}{P \vdashAst \Lambda, x: \ol{\enct{\type{S'}}}}{
                \inferrule[T-BufSend]{
                    \type{\Gamma} \vdashM \term{M}: \type{T}
                    \\
                    \type{\Delta} \vdashB \term{\bfr{\vec{m}}}: \type{S'} > \type{S}
                }{
                    \type{\Gamma}, \type{\Delta} \vdashB \term{\bfr{\vec{m},M}}: \type{S'} > \type{{!}T \sdot S}
                }
            }
            \mprset{sep=1.4em}
            = \inferrule{
                \inferrule*{
                    \inferrule*{
                        \inferrule*{
                            \inferrule*{
                                \inferrule*{ }{
                                    {\begin{array}[t]{@{}l@{}}
                                            x \fwd g \vdashAst {}
                                            \\
                                            x: (\enct{\type{T}} \parr \bullet) \tensor \ol{\enct{\type{S}}},
                                            \\
                                            g: (\ol{\enct{\type{T}}} \tensor \bullet) \parr \enct{\type{S}},
                                    \end{array}}
                                }
                                \\
                                \inferrule*{ }{
                                    {\begin{array}[t]{@{}l@{}}
                                            h[a,c] \vdashAst {}
                                            \\
                                            h: (\enct{\type{T}} \parr \bullet) \tensor \ol{\enct{\type{S}}},
                                            \\
                                            a: \ol{\enct{\type{T}}} \tensor \bullet, c: \enct{\type{S}}
                                    \end{array}}
                                }
                            }{
                                x \fwd g \| h[a,c] \vdashAst {\begin{array}[t]{@{}l@{}}
                                        x: (\enct{\type{T}} \parr \bullet) \tensor \ol{\enct{\type{S}}},
                                        \\
                                        g: (\ol{\enct{\type{T}}} \tensor \bullet) \parr \enct{\type{S}},
                                        \\
                                        h: (\enct{\type{T}} \parr \bullet) \tensor \ol{\enct{\type{S}}},
                                        \\
                                        a: \ol{\enct{\type{T}}} \tensor \bullet, c: \enct{\type{S}}
                                \end{array}}
                            }
                        }{
                            \nu{gh}(x \fwd g \| h[a,c]) \vdashAst {\begin{array}[t]{@{}l@{}}
                                    x: (\enct{\type{T}} \parr \bullet) \tensor \ol{\enct{\type{S}}},
                                    \\
                                    a: \ol{\enct{\type{T}}} \tensor \bullet, c: \enct{\type{S}}
                            \end{array}}
                        }
                        \\
                        \inferrule*{
                            \inferrule*{
                                \encc{e}{M} \vdashAst \ol{\enct{\type{\Gamma}}}, e: \enct{\type{T}}
                            }{
                                \encc{e}{M} \vdashAst \ol{\enct{\type{\Gamma}}}, e: \enct{\type{T}}, f: \bullet
                            }
                        }{
                            b(e,f) \sdot \encc{e}{M} \vdashAst \ol{\enct{\type{\Gamma}}}, b: \enct{\type{T}} \parr \bullet
                        }
                        \\
                        \inferrule*{}{
                            {\begin{array}[t]{@{}l@{}}
                                    \bencb{d}{P\{d/x\}}{\bfr{\vec{m}}} \vdashAst {}
                                    \\
                                    \ol{\enct{\type{\Delta}}}, \Lambda, d: \ol{\enct{\type{S}}}
                            \end{array}}
                        }
                    }{
                        \nu{gh}(x \fwd g \| h[a,c]) \| b(e,f) \sdot \encc{e}{M} \| \bencb{d}{P\{d/x\}}{\bfr{\vec{m}}} \vdashAst {\begin{array}[t]{@{}l@{}}
                                \ol{\enct{\type{\Gamma}}}, \ol{\enct{\type{\Delta}}}, \Lambda, x: (\enct{\type{T}} \parr \bullet) \tensor \ol{\enct{\type{S}}}, a: \ol{\enct{\type{T}}} \tensor \bullet,
                                \\
                                b: \enct{\type{T}} \parr \bullet, c: \enct{\type{S}}, d: \ol{\enct{\type{S}}}
                        \end{array}}
                    }
                }{
                    \nu{cd}(\nu{gh}(x \fwd g \| h[a,c]) \| b(e,f) \sdot \encc{e}{M} \| \bencb{d}{P\{d/x\}}{\bfr{\vec{m}}}) \vdashAst {\begin{array}[t]{@{}l@{}}
                            \ol{\enct{\type{\Gamma}}}, \ol{\enct{\type{\Delta}}}, \Lambda, x: (\enct{\type{T}} \parr \bullet) \tensor \ol{\enct{\type{S}}},
                            \\
                            a: \ol{\enct{\type{T}}} \tensor \bullet, b: \enct{\type{T}} \parr \bullet
                    \end{array}}
                }
            }{
                \nu{ab}\nu{cd}(\nu{gh}(x \fwd g \| h[a,c]) \| b(e,f) \sdot \encc{e}{M} \| \bencb{d}{P\{d/x\}}{\bfr{\vec{m}}}) \vdashAst {\begin{array}[t]{@{}l@{}}
                        \ol{\enct{\type{\Gamma}}}, \ol{\enct{\type{\Delta}}}, \Lambda,
                        \\
                        x: (\enct{\type{T}} \parr \bullet) \tensor \ol{\enct{\type{S}}}
                \end{array}}
            }
        \end{mathpar}

        \begin{mathpar}
            \benc{x}{P \vdashAst \Lambda, x: \ol{\enct{\type{S'}}}}{
                \inferrule[T-BufSelect]{
                    \type{\Gamma} \vdashB \term{\bfr{\vec{m}}}: \type{S'} > \type{S_j}
                    \\
                    j \in I
                }{
                    \type{\Gamma} \vdashB \term{\bfr{\vec{m},j}}: \type{S'} > \type{\oplus\{i:S_i\}_{i \in I}}
                }
            }
            = \inferrule{
                \inferrule*{
                    \inferrule*{
                        \inferrule*{
                            \inferrule*{ }{
                                x \fwd c \vdashAst {\begin{array}[t]{@{}l@{}}
                                        x: \oplus\{i:\ol{\enct{\type{S_i}}}\}_{i \in I},
                                        \\
                                        c: {\&}\{i:\enct{\type{S_i}}\}_{i \in I},
                                \end{array}}
                            }
                            \\
                            \inferrule*{ }{
                                d[a] \puts j \vdashAst d: \oplus\{i:\ol{\enct{\type{S_i}}}\}_{i \in I}, a: \enct{\type{S_j}}
                            }
                        }{
                            x \fwd c \| d[a] \puts j \vdashAst {\begin{array}[t]{@{}l@{}}
                                    x: \oplus\{i:\ol{\enct{\type{S_i}}}\}_{i \in I}, c: {\&}\{i:\enct{\type{S_i}}\}_{i \in I},
                                    \\
                                    d: \oplus\{i:\ol{\enct{\type{S_i}}}\}_{i \in I}, a: \enct{\type{S_j}}
                            \end{array}}
                        }
                    }{
                        \nu{cd}(x \fwd c \| d[a] \puts j) \vdashAst x: \oplus\{i:\ol{\enct{\type{S_i}}}\}_{i \in I}, a: \enct{\type{S_j}}
                    }
                    \\
                    \bencb{b}{P\{b/x\}}{\bfr{\vec{m}}} \vdashAst \ol{\enct{\type{\Gamma}}}, \Lambda, b: \ol{\enct{\type{S_j}}}
                }{
                    \nu{cd}(x \fwd c \| d[a] \puts j) \| \bencb{b}{P\{b/x\}}{\bfr{\vec{m}}} \vdashAst \ol{\enct{\type{\Gamma}}}, \Lambda, x: \oplus\{i:\ol{\enct{\type{S_i}}}\}_{i \in I}, a: \enct{\type{S_j}}, b: \ol{\enct{\type{S_j}}}
                }
            }{
                \nu{ab}(\nu{cd}(x \fwd c \| d[a] \puts j) \| \bencb{b}{P\{b/x\}}{\bfr{\vec{m}}}) \vdashAst \ol{\enct{\type{\Gamma}}}, \Lambda, x: \oplus\{i:\ol{\enct{\type{S_i}}}\}_{i \in I}
            }
        \end{mathpar}
    \end{mdframed}
    \caption{Translation of buffer typing rules with full derivations.}\label{f:transBuf}
\end{figure}

\section{Correctness of the Translation of \ourGV into APCP: Full Proofs}
\label{a:transFullProofs}

\subsection{Completeness}
\label{as:completeness}

Our proofs rely on the following auxiliary properties of the translation.

\begin{lemma}\label{l:transProps}\leavevmode
    \begin{enumerate}
        \item\label{i:transFn}
            Given $\type{\Gamma} \vdashM \term{\mbb{M}}: \type{T}$, if $\term{x} \notin \fn(\term{\mbb{M}})$, then $x \notin \fn(\encc{z}{\mbb{M}})$.

        \item\label{i:transConfFn}
            Given $\type{\Gamma} \vdashC{\phi} \term{C}: \type{T}$, if $\term{x} \notin \fn(\term{C})$, then $x \notin \fn(\encc{z}{C})$.

        \item\label{i:transSubst}
            Given $\type{\Gamma}, \term{x}:\type{U} \vdashM \term{\mbb{M}}: \type{T}$, then $\encc{z}{\mbb{M}\{y/x\}} = \encc{z}{\mbb{M}}\{y/x\}$.

        \item\label{i:transBufCtx}
            Given $\type{\Gamma} \vdashB \term{\bfr{\vec{m}}}: \type{S'} > \type{S}$, if $x \notin \fn(\mcl{R})$, then $\bencb{x}{\mcl{R}[P]}{\bfr{\vec{m}}} \equiv \mcl{R}\big[\bencb{x}{P}{\bfr{\vec{m}}}\big]$.

        \item\label{i:transBufSwap}
            Given $\type{\Gamma_1} \vdashB \term{\bfr{\vec{m}}}: \type{S'_1} > \type{S_1}$ and $\type{\Gamma_2} \vdashB \term{\bfr{\vec{n}}}: \type{S'_2} > \type{S_2}$, then $\bencb{x}{\benc{v}{P}{\bfr{\vec{n}}}}{\term{\bfr{\vec{m}}}} \equiv \bencb{v}{\benc{x}{P}{\bfr{\vec{m}}}}{\term{\bfr{\vec{n}}}}$.

        \item\label{i:transBufSC}
            Given $\type{\Gamma} \vdashB \term{\bfr{\vec{m}}}: \type{S'} > \type{S}$, if $P \equiv Q$, then $\bencb{x}{P}{\bfr{\vec{m}}} \equiv \bencb{x}{Q}{\bfr{\vec{m}}}$.

        \item\label{i:transBufRed}
            Given $\type{\Gamma} \vdashB \term{\bfr{\vec{m}}}: \type{S'} > \type{S}$, if $P \redd Q$, then $\bencb{x}{P}{\bfr{\vec{m}}} \redd \bencb{x}{Q}{\bfr{\vec{m}}}$.

        \item\label{i:transBufCombine}
            Given $\type{\Gamma_1} \vdashB \term{\bfr{\vec{m}_1}}: \type{S'_1} > \type{S_1}$ and $\type{\Gamma_2} \vdashB \term{\bfr{\vec{m}_2}}: \type{S'_2} > \type{S_2}$, then $\bencb{x}{\bencb{x}{P}{\bfr{\vec{m}_2}}}{\bfr{\vec{m}_1}} = \bencb{x}{P}{\bfr{\vec{m}_2,\vec{m}_1}}$.
    \end{enumerate}
\end{lemma}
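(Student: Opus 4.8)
The plan is to prove each of the eight items of \Cref{l:transProps} in turn, most of them by structural induction following the shape of the translation in \Cref{f:transTermShort,f:transConfBufShort}. Items \labelcref{i:transFn} and \labelcref{i:transConfFn} are mutually-dependent statements about free names being preserved (or rather, not created) by the translation; I would prove them by simultaneous induction on the typing derivations of $\term{\mbb{M}}$ and $\term{C}$, inspecting each translation clause and checking that every APCP name occurring free in the translated process is either the result parameter $z$, a free variable inherited from a translated subterm (to which the IH applies), or a freshly bound name from a restriction introduced by the clause. The only mildly delicate points are the clauses involving forwarder-enabled restrictions $\nuf{\cdot\cdot}$ and the clause for \scc{T-EndL}, where a name of type $\type{\sff{end}}$ is in the \ourGV context but its translation contributes $x:\bullet$ with $\0$ in the process — consistent, since then $\term{x}\notin\fn(\term{\mbb{M}})$ need not hold in that case, so there is nothing to prove there.

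For item \labelcref{i:transSubst}, I would again induct on the typing derivation of $\term{\mbb{M}}$, using the fact that the translation is defined compositionally and that renaming commutes with all the process constructors (output, input, restriction, parallel, forwarder, branching, selection); the base cases are \scc{T-Var} (where $\encc{z}{x\{y/x\}}=\encc{z}{y}=y\fwd z=(x\fwd z)\{y/x\}$) and \scc{T-Unit}/\scc{T-EndR} (trivially $\0$), and the inductive cases are routine provided one is careful with $\alpha$-renaming so that the substituted name does not clash with names bound inside the translation — which is guaranteed since those bound names are chosen fresh. Items \labelcref{i:transBufCtx}, \labelcref{i:transBufSwap}, \labelcref{i:transBufSC}, \labelcref{i:transBufRed}, and \labelcref{i:transBufCombine} are all statements about the buffer translation $\bencb{x}{\cdot}{\bfr{\vec{m}}}$, which by the clauses for \scc{T-Buf}, \scc{T-BufSend}, \scc{T-BufSelect} is an iterated construction: each message in $\term{\vec{m}}$ adds one layer of a restriction-with-forwarded-output (and, for sends, a parallel guarded input carrying $\encc{e}{M}$), threading the continuation along a chain of fresh endpoints down to the innermost continuation process $P$. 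Each of these five items then follows by induction on the length of $\term{\vec{m}}$: the base case $\term{\vec{m}}=\term{\epsilon}$ has $\bencb{x}{P}{\bfr{\epsilon}}=P\{\text{or with }x\text{ renamed}\}$ so the claim reduces to a statement about $P$ itself (trivial, or an application of the corresponding APCP-level fact — structural-congruence closure for \labelcref{i:transBufSC}, reduction closure under restriction/parallel Rules~$\rred{\onu},\rred{\|}$ for \labelcref{i:transBufRed}, and the obvious equality for \labelcref{i:transBufCtx},\labelcref{i:transBufCombine}), and the inductive step peels off the outermost layer, observes that the added restriction and parallel composition commute with whatever operation is in play (pushing a context past it using $\equiv$ for \labelcref{i:transBufCtx}, interchanging two independent buffer layers via the structural-congruence axioms $\nu{xy}\nu{zw}P\equiv\nu{zw}\nu{xy}P$ and scope extrusion for \labelcref{i:transBufSwap}, and congruence of $\equiv$/$\redd$ under $\nu$ and $\|$ for \labelcref{i:transBufSC},\labelcref{i:transBufRed}), and invokes the IH on the shorter buffer $\term{\vec{m}}$ used with continuation $P\{d/x\}$. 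For \labelcref{i:transBufCombine} the equation is literally associativity of buffer concatenation matched against the nesting of translations, so it is a plain equality by induction with no use of $\equiv$.

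The main obstacle I expect is bookkeeping rather than conceptual: in items \labelcref{i:transBufCtx} and \labelcref{i:transBufSwap} one has to track exactly which fresh continuation endpoints are generated at each layer and verify that the side conditions of the structural-congruence axioms (notably $x,y\notin\fn(\cdot)$ for scope extrusion) are met — this is where the freshness conventions and the hypothesis $x\notin\fn(\mcl{R})$ (resp.\ the disjointness of the two buffers' names, which follows from well-typedness and the linearity of APCP contexts) get used, and it is easy to get the renamings slightly wrong. A secondary subtlety is that \labelcref{i:transBufSend}'s layer contains a guarded copy of $\encc{e}{M}$ for the message $M$; for \labelcref{i:transBufSwap} one must confirm that this guarded subprocess belongs to the ``$\vec{m}$ part'' and does not mention any name of the ``$\vec{n}$ part'', again guaranteed by the linearity of the typing contexts $\type{\Gamma_1},\type{\Gamma_2}$. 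Once the freshness discipline is pinned down, each item is a short induction; I would present items \labelcref{i:transFn}--\labelcref{i:transSubst} first (they are used implicitly in the buffer arguments, e.g.\ to justify that renaming the continuation endpoint in $P\{d/x\}$ behaves well), then the five buffer items.
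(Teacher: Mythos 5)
Your proposal is correct and matches the paper's own argument, which simply states that all eight properties follow by induction on typing; your case-by-case plan (simultaneous induction for the free-name items, induction on the typing derivation for substitution, and induction on the buffer's length—equivalently its typing derivation—for the five buffer items, using closure of $\equiv$ and $\redd$ under restriction and parallel composition) is exactly the intended elaboration of that one-line proof. The minor imprecision in your description of the \scc{T-EndL} clause does not affect the argument.
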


\begin{proof}
    All properties follow by induction on typing.
\end{proof}

\begin{restatable}[Preservation of Structural Congruence for Terms]
{theorem}{thmTransTermSC}
\label{t:transTermSC}
    Given $\type{\Gamma} \vdashM \term{\mbb{M}}: \type{T}$, if $\term{\mbb{M}} \equivM \term{\mbb{N}}$, then $\encc{z}{\mbb{M}} \equiv \encc{z}{\mbb{N}}$.
\end{restatable}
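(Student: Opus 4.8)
The plan is to proceed by induction on the derivation of $\term{\mbb{M}} \equivM \term{\mbb{N}}$, mirroring the structure of the proof of \Cref{t:termSubjCong}. The cases for reflexivity, symmetry, and transitivity are immediate, since APCP's $\equiv$ is itself an equivalence relation. The congruence cases — closure of $\equivM$ under each term-forming operation — follow from the induction hypothesis together with two observations: the translation in \Cref{f:transTermShort} is \emph{compositional} (each clause combines the translations of the immediate subterms, treated as opaque processes, using only restriction, parallel composition, prefixes, and forwarders), and $\equiv$ is a congruence on APCP processes. By \Cref{t:termSubjCong} the two sides of any congruence step have the same type under the same environment, so the surrounding translated context is literally the same in both cases, and plugging $\encc{w}{\mbb{M}_0} \equiv \encc{w}{\mbb{N}_0}$ into it yields $\equiv$ of the whole.

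The only real work is the base case, the axiom \scc{SC-SubExt}: assuming $\term{x} \notin \fn(\term{\mcl{R}})$, show $\encc{z}{(\mcl{R}[\mbb{M}])\xsub{ \mbb{N}/x }} \equiv \encc{z}{\mcl{R}[\mbb{M}\xsub{ \mbb{N}/x }]}$. Here I would run a secondary induction on the structure of the reduction context $\term{\mcl{R}}$. For $\term{\mcl{R}} = \term{[]}$ the two terms coincide syntactically, so there is nothing to prove. For each inductive form — $\term{\mcl{R}'~\mbb{L}}$, $\term{\sff{spawn}~\mcl{R}'}$, $\term{\sff{send}~\mcl{R}'}$, $\term{\sff{recv}~\mcl{R}'}$, $\term{\sff{let}\,(u,v)=\mcl{R}'\,\sff{in}\,\mbb{L}}$, $\term{\sff{select}\,\ell\,\mcl{R}'}$, $\term{\sff{case}\,\mcl{R}'\,\sff{of}\,\{i:\mbb{L}_i\}_{i \in I}}$, $\term{\mcl{R}'\xsub{ \mbb{L}/y }}$, $\term{\mbb{L}\xsub{ \mcl{R}'/y }}$, and $\term{\sff{send}'(\mbb{L},\mcl{R}')}$ — inspection of \Cref{f:transTermShort} shows that the translation places $\encc{w}{\mcl{R}'[\mbb{M}]}$ (on the appropriate active endpoint $w$) \emph{unguarded}, i.e.\ not underneath an input prefix, in parallel with the remaining translated material inside one or two fresh restrictions; this is precisely why no reduction context's hole ever occurs under a prefix (consistent with, e.g., $\term{(M,N)}$ not being a reduction context, and with the first argument of $\term{\sff{send}'}$ not being one). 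Recalling that $\encc{z}{\mbb{L}'\xsub{ \mbb{N}/x }} = \nuf{xa}(\encc{z}{\mbb{L}'} \| \encc{a}{\mbb{N}})$, I would then push the restriction $\nuf{xa}(\cdot \| \encc{a}{\mbb{N}})$ inward through the surrounding process context using scope extrusion of $\equiv$ — the annotated restrictions $\nuf{\cdot}$ obey exactly the structural laws of $\nu$, so their annotation plays no role here. Each such step is licensed because $x \notin \fn(\encc{w}{\cdot})$ for every other translated subterm, by \encpropref{i:transFn} applied with $\term{x} \notin \fn(\term{\mcl{R}})$, while $a$ is fresh and the remaining side-conditions are supplied by the bound-variable conventions on terms. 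This leaves the translation of $\term{\mcl{R}'}$ with the explicit substitution pushed one level in; invoking the secondary induction hypothesis on the strictly smaller context $\term{\mcl{R}'}$ (which still satisfies $\term{x} \notin \fn(\term{\mcl{R}'})$) and closing under $\equiv$ finishes the case. The two cases $\term{\mbb{L}\xsub{ \mcl{R}'/y }}$ and $\term{\mcl{R}'\xsub{ \mbb{L}/y }}$ additionally require commuting the restrictions $\nuf{xa}$ and $\nuf{yb}$ past each other before the inner substitution is in position, but are otherwise identical.

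I expect the main obstacle to be pure bookkeeping: threading the freshness conditions correctly through the nested (forwarder-enabled) restrictions so that every use of scope extrusion is justified, and verifying in each of the roughly eleven context forms that the occurrence of $\encc{w}{\mcl{R}'[\mbb{M}]}$ really is unguarded. No genuinely new idea beyond what already appears in the proof of \Cref{t:termSubjCong} is needed.
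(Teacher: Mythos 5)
Your proposal matches the paper's proof: an outer induction on the derivation of $\equivM$ whose congruence cases are immediate from the IH and compositionality, with the real work in the axiom \scc{SC-SubExt}, handled by an inner induction on $\term{\mcl{R}}$ that pushes the restriction $\nuf{xa}(\cdot \| \encc{a}{\mbb{N}})$ inward by scope extrusion, justified via \encpropref{i:transFn} and the fact that the hole's translation sits as an unguarded parallel component. This is essentially the paper's argument, including the same key lemma and the same treatment of the forwarder-enabled restrictions as ordinary restrictions for $\equiv$.
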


\begin{proof}
    By induction on the derivation of the structural congruence (\ih{1}).
    The inductive cases follow from \ih{1} straightforwardly.
    The only base case rule is \scc{SC-SubExt}: $\term{x} \notin \fn(\term{\mcl{R}}) \implies \term{(\mcl{R}[\mbb{M}])\xsub{ \mbb{N}/x }} \equivM \term{\mcl{R}[\mbb{M}\xsub{ \mbb{N}/x }]}$.
    We apply induction on the structure of $\term{\mcl{R}}$ (\ih{2}), assuming $\term{x} \notin \fn(\term{\mcl{R}})$.
    We only show the base case and two representative inductive cases:
    \begin{itemize}
        \item
            Case $\term{\mcl{R}} = \term{[]}$: $\term{\mbb{M}\xsub{ \mbb{N}/x }} \equivM \term{\mbb{M}\xsub{ \mbb{N}/x }}$.
            The thesis follows immediately, since the terms are equal.

        \item
            Case $\term{\mcl{R}} = \term{\mcl{R}'~\mbb{M}'}$: $\term{(\mcl{R}'[\mbb{M}]~\mbb{M}')\xsub{ \mbb{N}/x }} \equivM \term{(\mcl{R}'[\mbb{M}\xsub{ \mbb{N}/x }])~\mbb{M}'}$.
            \begin{align*}
                & \encc{z}{(\mcl{R}'[\mbb{M}]~\mbb{M}')\xsub{ \mbb{N}/x }}
                \\
                {}={} & \nuf{xa}(\nu{bc}(\encc{b}{\mcl{R}'[\mbb{M}]} \| \nu{de}(c[d,z] \| e(f,\_) \sdot \encc{f}{\mbb{M}'})) \| \encc{a}{\mbb{N}})
                \\
                {}\equiv{} & \nu{bc}(\nuf{xa}(\encc{b}{\mcl{R}'[\mbb{M}]} \| \encc{a}{\mbb{N}}) \| \nu{de}(c[d,z] \| e(f,\_) \sdot \encc{f}{\mbb{M}'}))
                &&\text{(\encpropref{i:transFn})}
                \\
                {}={} & \encc{z}{((\mcl{R}'[\mbb{M}])\xsub{ \mbb{N}/x })~\mbb{M}'} \equiv \encc{z}{(\mcl{R}'[\mbb{M}\xsub{ \mbb{N}/x }])~\mbb{M}'}
                &&\text{(\ih{2})}
            \end{align*}

        \item
            Case $\term{\mcl{R}} = \term{V~\mcl{R}'}$: $\term{(V~(\mcl{R}'[\mbb{M}]))\xsub{ \mbb{N}/x }} \equivM \term{V~(\mcl{R}'[\mbb{M}\xsub{ \mbb{N}/x }])}$.
            By cases on $\term{V}$.
            We only show the representative case where $\term{V} = \term{\sff{spawn}}$:
            \begin{align*}
                & \encc{z}{(\sff{spawn}~(\mcl{R}'[\mbb{M}]))\xsub{ \mbb{N}/x }}
                \\
                {}={} & \nuf{xa}(\nu{bc}(\encc{b}{\mcl{R}'[\mbb{M}]} \| c(d,e) \sdot (\nunil d[\_,\_] \| \nunil e[z,\_])) \| \encc{a}{\mbb{N}})
                \\
                {}\equiv{} & \nu{bc}(\nuf{xa}(\encc{b}{\mcl{R}'[\mbb{M}]} \| \encc{a}{\mbb{N}}) \| c(d,e) \sdot (\nunil d[\_,\_] \| \nunil e[z,\_]))
                &&\text{(\encpropref{i:transFn})}
                \\
                {}={} & \encc{z}{\sff{spawn}~((\mcl{R}'[\mbb{M}])\xsub{ \mbb{N}/x })} \equiv \encc{z}{\sff{spawn}~(\mcl{R}'[\mbb{M}\xsub{ \mbb{N}/x }])}
                &&\text{(\ih{2})}
                \tag*{\qedhere}
            \end{align*}
    \end{itemize}
\end{proof}

\begin{restatable}[Preservation of Structural Congruence for Configurations]
{theorem}{thmTransConfSC}\label{t:transConfSC}
    Given $\type{\Gamma} \vdashC{\phi} \term{C}: \type{T}$, if $\term{C} \equivC \term{D}$, then $\encc{z}{C} \equiv \encc{z}{D}$.
\end{restatable}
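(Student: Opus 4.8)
The plan is to argue by induction on the derivation of $\term{C} \equivC \term{D}$, in exactly the same shape as the proof of \Cref{t:confSubjCong} (subject congruence for configurations), but now tracking processes rather than typings. The rules that close $\equivC$ under the configuration contexts $\term{\mcl{G}}$ are handled directly by the induction hypothesis, since the translation is compositional; so the real content is in the base rules of \Cref{f:gvConfs}. Rule~\scc{SC-TermSC} is immediate from \Cref{t:transTermSC} (preservation of structural congruence for terms), using the \scc{T-Main}/\scc{T-Child} clauses. Rule~\scc{SC-ConfSubst} is a plain equality of translated processes (the \scc{T-ConfSub} and \scc{T-Sub} clauses produce the same $\nuf{xa}(\cdots\|\cdots)$), and \scc{SC-ConfSubstExt} for nonempty $\term{\mcl{G}}$ follows by a secondary induction on $\term{\mcl{G}}$ using \encpropref{i:transConfFn} and APCP scope extrusion, paralleling the term-level treatment of \scc{SC-SubExt} inside the proof of \Cref{t:transTermSC}.

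Next I would treat the rules on buffered restrictions using the buffer-translation lemmas of \Cref{l:transProps}. For \scc{SC-ResSwap} one observes $\bencb{x}{\encc{z}{C}}{\bfr{\epsilon}} = \encc{z}{C}$, so both sides translate to $\nu{xy}\encc{z}{C}$ resp.\ $\nu{xy}\encc{z}{C}$ up to the APCP axiom $\nu{xy}P \equiv \nu{yx}P$; \scc{SC-ResComm} combines \encpropref{i:transBufSwap} with $\nu{xy}\nu{zw}P\equiv\nu{zw}\nu{xy}P$; and \scc{SC-ResExt} follows from \encpropref{i:transBufCtx} — pulling $\bencb{x}{-}{\bfr{\vec{m}}}$ out past the translation of the component not using $x$ (which holds by \encpropref{i:transConfFn}) — followed by APCP scope extrusion of $\nu{xy}$. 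For \scc{SC-ResNil}, where $\term{x},\term{y}\notin\fn(\term{C})$, the left-hand side translates to $\nu{xy}\encc{z}{C}$ with $x,y\notin\fn(\encc{z}{C})$, and then $\nu{xy}\encc{z}{C} \equiv \nu{xy}(\encc{z}{C}\|\0) \equiv \encc{z}{C}\|\nu{xy}\0 \equiv \encc{z}{C}\|\0 \equiv \encc{z}{C}$, using $\nu{xy}\0\equiv\0$. The rules \scc{SC-ParNil}, \scc{SC-ParComm}, \scc{SC-ParAssoc} reduce to $P\|\0\equiv P$ (plus $\nu{ab}\0\equiv\0$ for the vacuous restriction that the child-thread translation inserts), to commutativity, and to associativity of $\|$ in APCP; the only care needed is the bookkeeping of the thread markers $\term{\phi}$, since \scc{T-ParL} and \scc{T-ParR} translate differently, so each case splits on which component carries $\term{\child}$.

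I expect the main obstacle to be Rules~\scc{SC-Send'} and \scc{SC-Select}, which equate a pending asynchronous output/selection on an endpoint $\term{x}$ inside a restricted thread context $\term{\hat{\mcl{F}}}$ with the same message at the front of the buffer for $\term{x}$. For these I would induct on the structure of $\term{\hat{\mcl{F}}}$: the restriction that the hole not occur under an explicit substitution is precisely what lets $\bencb{x}{-}{\bfr{-}}$ commute with the context through \encpropref{i:transBufCtx}, and in the base case $\term{\hat{\mcl{F}}}=\term{\phi\,[]}$ one inspects both translations, checking that the forwarded head output (resp.\ selection) produced by the \scc{T-Send'} (resp.\ \scc{T-Select}) clause is structurally congruent to the head output (resp.\ selection) produced by the \scc{T-BufSend} (resp.\ \scc{T-BufSelect}) clause, after which the continuations coincide (recombining the buffer translation via \encpropref{i:transBufCombine}). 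A recurring side point, needed in several cases above, is that a configuration judgment may admit more than one typing derivation (e.g.\ when both \scc{T-ParL}/\scc{T-ParR} or both \scc{T-Res}/\scc{T-ResBuf} apply); one therefore needs that the translation is independent of the chosen derivation up to $\equiv$, which again follows from these same lemmas together with the APCP parallel/restriction congruences, and can be invoked wherever such ambiguity arises.
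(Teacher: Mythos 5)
Your proposal matches the paper's proof essentially step for step: induction on the derivation of $\equivC$, with \scc{SC-TermSC} discharged by \Cref{t:transTermSC}, the buffer cases handled via the auxiliary properties in \Cref{l:transProps} (in particular \labelcref{i:transBufCtx,i:transBufSwap,i:transBufCombine,i:transConfFn}), inner inductions on $\term{\hat{\mcl{F}}}$, $\term{\mcl{F}}$, and $\term{\mcl{G}}$ for \scc{SC-Send'}, \scc{SC-Select}, and \scc{SC-ConfSubstExt}, and the same congruence manipulations for the remaining rules. Your closing remark about coherence of the translation across alternative typing derivations is a reasonable extra precaution that the paper leaves implicit, but it does not change the argument.
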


\begin{proof}
    By induction on the derivation of the structural congruence (\ih{1}).
    The inductive cases follow from \ih{1} straightforwardly.
    We consider each base case rule:
    \begin{itemize}
        \item
            Rule~\scc{SC-TermSC}: $\term{\mbb{M}} \equivM \term{\mbb{M}'} \implies \term{\phi\, \mbb{M}} \equivC \term{\phi\, \mbb{M}'}$.
            We assume $\term{\mbb{M}} \equivM \term{\mbb{M}'}$.
            \begin{align*}
                & \encc{z}{\phi\,\mbb{M}}
                \\
                {}={} & \encc{z}{\mbb{M}}
                \\
                {}\equiv{} & \encc{z}{\mbb{N}}
                &&\text{(\Cref{t:transTermSC})}
                \\
                {}={} & \encc{z}{\phi\,\mbb{N}}
            \end{align*}

        \item
            Rule~\scc{SC-ResSwap}: $\term{\nu{x\bfr{\epsilon}y}C} \equivC \term{\nu{y\bfr{\epsilon}x}C}$.
            \begin{align*}
                & \encc{z}{\nu{x\bfr{\epsilon}y}C}
                \\
                {}={} & \nu{xy}\encc{z}{C}
                \\
                {}\equiv{} & \nu{yx}\encc{z}{C}
                \\
                {}={} & \encc{z}{\nu{y\bfr{\epsilon}x}C}
            \end{align*}

        \item
            Rule~\scc{SC-ResComm}: $\term{\nu{x\bfr{\vec{m}}y}\nu{v\bfr{\vec{n}}w}C} \equivC \term{\nu{v\bfr{\vec{n}}w}\nu{x\bfr{\vec{m}}y}C}$.
            \begin{align*}
                & \encc{z}{\nu{x\bfr{\vec{m}}y}\nu{v\bfr{\vec{n}}w}C}
                \\
                {}={} & \nu{xy}\bencb{x}{\nu{vw}\bencb{v}{\encc{z}{C}}{\bfr{\vec{n}}}}{\term{\bfr{\vec{m}}}}
                \\
                {}\equiv{} & \nu{xy}\nu{vw}\bencb{x}{\bencb{v}{\encc{z}{C}}{\bfr{\vec{n}}}}{\term{\bfr{\vec{m}}}}
                &&\text{(\encpropref{i:transBufCtx})}
                \\
                {}\equiv{} & \nu{vw}\nu{xy}\bencb{x}{\bencb{v}{\encc{z}{C}}{\bfr{\vec{n}}}}{\term{\bfr{\vec{m}}}}
                \\
                {}\equiv{} & \nu{vw}\nu{xy}\bencb{v}{\bencb{x}{\encc{z}{C}}{\bfr{\vec{m}}}}{\term{\bfr{\vec{n}}}}
                &&\text{(\encpropref{i:transBufSwap})}
                \\
                {}\equiv{} & \nu{vw}\bencb{v}{\nu{xy}\bencb{x}{\encc{z}{C}}{\bfr{\vec{m}}}}{\term{\bfr{\vec{n}}}}
                &&\text{(\encpropref{i:transBufCtx})}
                \\
                {}={} & \encc{z}{\nu{v\bfr{\vec{n}}w}\nu{x\bfr{\vec{m}}y}C}
            \end{align*}

        \item
            Rule~\scc{SC-ResExt}: $\term{x},\term{y} \notin \fn(\term{C}) \implies \term{\nu{x\bfr{\vec{m}}y}(C \prl D)} \equivC \term{C \prl \nu{x\bfr{\vec{m}}y}D}$.
            W.l.o.g.\ assume $\term{C}$ is the main thread.
            We assume $\term{x},\term{y} \notin \fn(\term{C})$.
            Then, by \encpropref{i:transConfFn}, $x,y \notin \fn(\encc{z}{C})$.
            \begin{align*}
                & \encc{z}{\nu{x\bfr{\vec{m}}y}(C \prl D)}
                \\
                {}={} & \nu{xy}\bencb{x}{\encc{z}{C} \| \nunil \encc{\_}{D}}{\bfr{\vec{m}}}
                \\
                {}\equiv{} & \nu{xy}(\encc{z}{C} \| \nunil \bencb{x}{\encc{\_}{D}}{\bfr{\vec{m}}})
                &&\text{(\encpropref{i:transBufCtx})}
                \\
                {}\equiv{} & \encc{z}{C} \| \nunil \nu{xy}\bencb{x}{\encc{\_}{D}}{\bfr{\vec{m}}}
                \\
                {}={} & \encc{z}{C \prl \nu{x\bfr{\vec{m}}y}D}
            \end{align*}

        \item
            Rule~\scc{SC-ResNil}: $\term{x},\term{y} \notin \fn(\term{C}) \implies \term{\nu{x\bfr{\epsilon}y}C} \equivC \term{C}$.
            We assume $\term{x},\term{y} \notin \fn(C)$.
            \begin{align*}
                & \encc{z}{\nu{x\bfr{\epsilon}y}C}
                \\
                {}={} & \nu{xy}\encc{z}{C}
                \\
                {}\equiv{} & \nu{xy}(\encc{z}{C} \| \0)
                \\
                {}\equiv{} & \encc{z}{C} \| \nu{xy}\0
                &&\text{(\encpropref{i:transConfFn})}
                \\
                {}\equiv{} & \encc{z}{C}
            \end{align*}

        \item
            Rule~\scc{SC-Send'}: $\term{\nu{x\bfr{\vec{m}}y}(\hat{\mcl{F}}[\sff{send}'(M,x)] \prl C)} \equivC \term{\nu{x\bfr{M,\vec{m}}y}(\hat{\mcl{F}}[x] \prl C)}$.
            By induction on the structure of $\term{\hat{\mcl{F}}}$ (\ih{2}).
            Since the hole in $\term{\hat{\mcl{F}}}$ does not occur under an explicit substitution, $\term{\hat{\mcl{F}}}$ does not capture any free names of $\term{\sff{send}'(M,x)}$.
            Therefore, by \encpropref{i:transConfFn}, the encoding of $\term{\hat{\mcl{F}}}$ does not capture any free names of the encoding of $\term{\sff{send}'(M,x)}$.
            The inductive cases follow from \ih{2} straightforwardly, since the encoding of any case for $\term{\hat{\mcl{F}}}$ has the encoding of $\term{\sff{send}'(M,x)}$ only under restriction and parallel composition.
            We detail the base case ($\term{\hat{\mcl{F}}} = \term{\phi\,[]}$), w.l.o.g.\ assuming $\term{\phi} = \term{\main}$:
            \begin{align*}
                & \encc{z}{\nu{x\bfr{\vec{m}}y}(\main\,(\sff{send}'(M,x)) \prl C)}
                \\
                {}={} & \nu{xy}\bencb{x}{\encc{z}{\main\,(\sff{send}'(M,x))} \| \nunil \encc{\_}{C}}{\bfr{\vec{m}}}
                \\
                {}\equiv{} & \nu{xy}(\bencb{x}{\encc{z}{\main\,(\sff{send}'(M,x))}}{\bfr{\vec{m}}} \| \nunil \encc{\_}{C})
                &&\text{(\encpropref{i:transBufCtx})}
                \\
                {}={} & \nu{xy}(\bencb{x}{\nu{ab}(a(c,\_) \sdot \encc{c}{M} \| \nu{de}(x \fwd d \| \nu{fg}(e[b,f] \| g \fwd z)))}{\bfr{\vec{m}}} \| \nunil \encc{\_}{C})
                \\
                {}\equiv{} & \nu{xy}(\bencb{x}{\nu{ab}\nu{fg}(\nu{de}(x \fwd d \| e[b,f]) \| a(c,\_) \sdot \encc{c}{M} \| g \fwd z)}{\bfr{\vec{m}}} \| \nunil \encc{\_}{C})
                &&\text{(\encpropref{i:transBufSC})}
                \\
                {}={} & \nu{xy}(\bencb{x}{\nu{ab}\nu{fg}(\nu{de}(x \fwd d \| e[b,f]) \| a(c,\_) \sdot \encc{c}{M} \| (x \fwd z)\{g/x\})}{\bfr{\vec{m}}} \| \nunil \encc{\_}{C})
                \\
                {}={} & \nu{xy}(\bencb{x}{\nu{ab}\nu{fg}(\nu{de}(x \fwd d \| e[b,f]) \| a(c,\_) \sdot \encc{c}{M} \| \encc{z}{x}\{g/x\})}{\bfr{\vec{m}}} \| \nunil \encc{\_}{C})
                \\
                {}={} & \nu{xy}(\bencb{x}{\bencb{x}{\encc{z}{x}}{\bfr{M}}}{\term{\bfr{\vec{m}}}} \| \nunil \encc{\_}{C})
                \\
                {}={} & \nu{xy}(\bencb{x}{\encc{z}{x}}{\bfr{M,\vec{m}}} \| \nunil \encc{\_}{C})
                &&\text{(\encpropref{i:transBufCombine})}
                \\
                {}\equiv{} & \nu{xy}\bencb{x}{\encc{z}{x} \| \nunil \encc{\_}{C}}{\bfr{M,\vec{m}}}
                &&\text{(\encpropref{i:transBufCtx})}
                \\
                {}={} & \nu{xy}\bencb{x}{\encc{z}{\main\,x} \| \nunil \encc{\_}{C}}{\bfr{M,\vec{m}}}
                \\
                {}={} & \nu{xy}\bencb{x}{\encc{z}{\main\,x \prl C}}{\bfr{M,\vec{m}}}
                \\
                {}={} & \encc{z}{\nu{x\bfr{M,\vec{m}}y}(\main\,x \prl C)}
            \end{align*}

        \item
            Rule~\scc{SC-Select}: $\term{\nu{x\bfr{\vec{m}}y}(\mcl{F}[\sff{select}\, \ell\, x] \prl C)} \equivC \term{\nu{x\bfr{\ell,\vec{m}}y}(\mcl{F}[x] \prl C)}$.
            By induction on the structure of $\term{\mcl{F}}$.
            Similar to the case above, we only detail the base case ($\term{\mcl{F}} = \term{\phi\,[]}$), w.l.o.g.\ assuming $\term{\phi} = \term{\main}$:
            \begin{align*}
                & \encc{z}{\nu{x\bfr{\vec{m}}y}(\main\,(\sff{select}\,\ell\,x) \prl C)}
                \\
                {}={} & \nu{xy}\bencb{x}{\encc{z}{\main\,(\sff{select}\,\ell\,x)} \| \nunil \encc{\_}{C}}{\bfr{\vec{m}}}
                \\
                {}\equiv{} & \nu{xy}(\bencb{x}{\encc{z}{\main\,(\sff{select}\,\ell\,x)}}{\bfr{\vec{m}}} \| \nunil \encc{\_}{C})
                &&\text{(\encpropref{i:transBufCtx})}
                \\
                {}={} & \nu{xy}(\bencb{x}{\nu{ab}(x \fwd a \| \nu{cd}(b[c] \puts \ell \| d \fwd z))}{\bfr{\vec{m}}} \| \nunil \encc{\_}{C})
                \\
                {}\equiv{} & \nu{xy}(\bencb{x}{\nu{cd}(\nu{ab}(x \fwd a \| b[c] \puts \ell) \| d \fwd z)}{\bfr{\vec{m}}} \| \nunil \encc{\_}{C})
                &&\text{(\encpropref{i:transBufSC})}
                \\
                {}={} & \nu{xy}(\bencb{x}{\nu{cd}(\nu{ab}(x \fwd a \| b[c] \puts \ell) \| (x \fwd z)\{d/x\})}{\bfr{\vec{m}}} \| \nunil \encc{\_}{C})
                \\
                {}={} & \nu{xy}(\bencb{x}{\nu{cd}(\nu{ab}(x \fwd a \| b[c] \puts \ell) \| \encc{z}{x}\{d/x\})}{\bfr{\vec{m}}} \| \nunil \encc{\_}{C})
                \\
                {}={} & \nu{xy}(\bencb{x}{\bencb{x}{\encc{z}{x}}{\bfr{\ell}}}{\term{\bfr{\vec{m}}}} \| \nunil \encc{\_}{C})
                \\
                {}={} & \nu{xy}(\bencb{x}{\encc{z}{x}}{\bfr{\ell,\vec{m}}} \| \nunil \encc{\_}{C})
                &&\text{(\encpropref{i:transBufCombine})}
                \\
                {}\equiv{} & \nu{xy}\bencb{x}{\encc{z}{x} \| \nunil \encc{\_}{C}}{\bfr{\ell,\vec{m}}}
                &&\text{(\encpropref{i:transBufCtx})}
                \\
                {}={} & \nu{xy}\bencb{x}{\encc{z}{\main\,x} \| \nunil \encc{\_}{C}}{\bfr{\ell,\vec{m}}}
                \\
                {}={} & \nu{xy}\bencb{x}{\encc{z}{\main\,x \prl C}}{\bfr{\ell,\vec{m}}}
                \\
                {}={} & \encc{z}{\nu{x\bfr{\ell,\vec{m}}y}(\main\,x \prl C)}
            \end{align*}

        \item
            Rule~\scc{SC-ParNil} ($\term{C \prl \child\, ()} \equivC \term{C}$) is straightforward.

        \item
            Rule~\scc{SC-ParComm} ($\term{C \prl D} \equivC \term{D \prl C}$) is straightforward.

        \item
            Rule~\scc{SC-ParAssoc} ($\term{C \prl (D \prl E)} \equivC \term{(C \prl D) \prl E}$) is straightforward.

        \item
            Rule~\scc{SC-ConfSubst} ($\term{\phi\,(\mbb{M}\xsub{ \mbb{N}/x })} \equivC \term{(\phi\,\mbb{M})\xsub{ \mbb{N}/x }}$) is straightforward.

        \item
            Rule~\scc{SC-ConfSubstExt}: $\term{x} \notin \fn(\term{\mcl{G}}) \implies \term{(\mcl{G}[C])\xsub{ \mbb{M}/x }} \equivC \term{\mcl{G}[C\xsub{ \mbb{M}/x }]}$.
            We assume $\term{x} \notin \fn(\term{\mcl{G}})$ and apply induction on the structure of $\term{\mcl{G}}$ (\ih{2}).
            \begin{itemize}
                \item
                    Case $\term{\mcl{G}} = \term{[]}$: $\term{C\xsub{ \mbb{M}/x }} \equivC \term{C\xsub{ \mbb{M}/x }}$.
                    The thesis follows immediately, since the terms are equal.

                \item
                    Case $\term{\mcl{G}} = \term{\mcl{G}' \prl C'}$: $\term{(\mcl{G}'[C] \prl C')\xsub{ \mbb{M}/x }} \equivC \term{\mcl{G}'\big[C\xsub{ \mbb{M}/x }\big] \prl C'}$.
                    W.l.o.g.\ we assume $\term{\mcl{G}'[C]}$ is the main thread.
                    \begin{align*}
                        & \encc{z}{(\mcl{G}'[C] \prl C')\xsub{ \mbb{M}/x }}
                        \\
                        {}={} & \nuf{xa}(\encc{z}{\mcl{G}'[C]} \| \nunil \encc{\_}{C'} \| \encc{a}{\mbb{M}})
                        \\
                        {}\equiv{} & \nuf{xa}(\encc{z}{\mcl{G}'[C]} \| \encc{a}{\mbb{M}}) \| \nunil \encc{\_}{C'}
                        &&\text{(\encpropref{i:transConfFn})}
                        \\
                        {}={} & \encc{z}{(\mcl{G}'[C])\xsub{ \mbb{M}/x }} \| \nunil \encc{\_}{C'}
                        \\
                        {}\equiv{} & \encc{z}{\mcl{G}'\big[C\xsub{ \mbb{M}/x }\big]} \| \nunil \encc{\_}{C'}
                        &&\text{(\ih{2})}
                        \\
                        {}\equiv{} & \encc{z}{\mcl{G}'\big[C\xsub{ \mbb{M}/x }\big] \prl C'}
                    \end{align*}

                \item
                    Case $\term{\mcl{G}} = \term{\nu{v\bfr{\vec{m}}w}\mcl{G}'}$: $\term{(\nu{v\bfr{\vec{m}}w}(\mcl{G}'[C]))\xsub{ \mbb{M}/x }} \equivC \term{\nu{v\bfr{\vec{m}}w}(\mcl{G}'\big[C\xsub{ \mbb{M}/x }\big])}$.
                    \begin{align*}
                        & \encc{z}{(\nu{v\bfr{\vec{m}}w}(\mcl{G}'[C]))\xsub{ \mbb{M}/x }}
                        \\
                        {}={} & \nuf{xa}(\nu{vw}\bencb{z}{\encc{z}{\mcl{G'}[C]}}{\bfr{\vec{m}}} \| \encc{a}{\mbb{M}})
                        \\
                        {}\equiv{} & \nu{vw}\nuf{xa}(\bencb{z}{\encc{z}{\mcl{G'}[C]}}{\bfr{\vec{m}}} \| \encc{a}{\mbb{M}})
                        &&\text{(\encpropref{i:transConfFn})}
                        \\
                        {}\equiv{} & \nu{vw}\bencb{z}{\nuf{xa}(\encc{z}{\mcl{G'}[C]} \| \encc{a}{\mbb{M}})}{\bfr{\vec{m}}}
                        &&\text{(\encpropref{i:transBufCtx})}
                        \\
                        {}={} & \nu{vw}\bencb{z}{\encc{z}{(\mcl{G'}[C])\xsub{ \mbb{M}/x }}}{\bfr{\vec{m}}}
                        \\
                        {}\equiv{} & \nu{vw}\bencb{z}{\encc{z}{\mcl{G'}\big[C\xsub{ \mbb{M}/x }\big]}}{\bfr{\vec{m}}}
                        &&\text{(\ih{2})}
                        \\
                        {}={} & \encc{z}{\nu{v\bfr{\vec{m}}w}(\mcl{G}'\big[C\xsub{ \mbb{M}/x }\big])}
                    \end{align*}

                \item
                    Case $\term{\mcl{G}} = \term{\mcl{G}'\xsub{ \mbb{M}'/y }}$: $\term{((\mcl{G}'[C])\xsub{ \mbb{M}'/y })\xsub{ \mbb{M}/x }} \equivC \term{(\mcl{G}'\big[C\xsub{ \mbb{M}/x }\big])\xsub{ \mbb{M}'/y }}$.
                    \begin{align*}
                        & \encc{z}{((\mcl{G}'[C])\xsub{ \mbb{M}'/y })\xsub{ \mbb{M}/x }}
                        \\
                        {}={} & \nuf{xa}(\nuf{yb}(\encc{z}{\mcl{G}'[C]} \| \encc{b}{\mbb{M}'}) \| \encc{a}{\mbb{M}})
                        \\
                        {}\equiv{} & \nuf{yb}(\nuf{xa}(\encc{z}{\mcl{G}'[C]} \| \encc{a}{\mbb{M}}) \| \encc{b}{\mbb{M}'})
                        &&\text{(\encpropref{i:transConfFn})}
                        \\
                        {}={} & \nuf{yb}(\encc{z}{(\mcl{G}'[C])\xsub{ \mbb{M}/x }} \| \encc{b}{\mbb{M}'})
                        \\
                        {}\equiv{} & \nuf{yb}(\encc{z}{\mcl{G}'\big[C\xsub{ \mbb{M}/x }\big]} \| \encc{b}{\mbb{M}'})
                        &&\text{(\ih{2})}
                        \\
                        {}={} & \encc{z}{(\mcl{G}'\big[C\xsub{ \mbb{M}/x }\big])\xsub{ \mbb{M}'/y }}
                        \tag*{\qedhere}
                    \end{align*}
            \end{itemize}
    \end{itemize}
\end{proof}

\begin{restatable}[Completeness of Reduction for Terms]
{theorem}{thmTransTermRed}\label{t:transTermRed}
    Given $\type{\Gamma} \vdashM \term{\mbb{M}}: \type{T}$, if $\term{\mbb{M}} \reddM \term{\mbb{N}}$, then $\encc{z}{\mbb{M}} \redd^\ast \encc{z}{\mbb{N}}$.
\end{restatable}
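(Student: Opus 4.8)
The plan is to prove the statement by induction on the derivation of $\term{\mbb{M}} \reddM \term{\mbb{N}}$ (call this \ih{1}), with a nested induction on the structure of reduction contexts for the lifting rule. For Rule~\scc{E-Lift}, i.e.\ $\term{\mbb{M}} \reddM \term{\mbb{N}} \implies \term{\mcl{R}[\mbb{M}]} \reddM \term{\mcl{R}[\mbb{N}]}$, I would induct on the shape of $\term{\mcl{R}}$: the base case $\term{\mcl{R}} = \term{[]}$ is exactly \ih{1}, and in every inductive case the clauses of \Cref{f:transTermShort} place $\encc{a}{\mcl{R}'[\cdot]}$ (for the appropriate fresh name $a$) only underneath restrictions and parallel compositions and \emph{never} underneath an input prefix --- this is precisely because reduction contexts capture the non-blocking positions, which the translation leaves ``active'' rather than guarding with an extra input. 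Hence the reductions obtained from the nested hypothesis propagate through Rules~$\rred{\onu}$ and $\rred{\|}$; for instance $\encc{z}{\mcl{R}'[\mbb{M}]~\mbb{M}'} = \nu{ab}(\encc{a}{\mcl{R}'[\mbb{M}]} \| Q)$ reduces, via the nested hypothesis on $\term{\mcl{R}'}$, to $\nu{ab}(\encc{a}{\mcl{R}'[\mbb{N}]} \| Q) = \encc{z}{\mcl{R}'[\mbb{N}]~\mbb{M}'}$. The case of Rule~\scc{E-LiftSC} follows from \ih{1} together with \Cref{t:transTermSC} and closure under Rule~$\rred{\equiv}$.

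For the remaining (base) cases I would, for each rule, unfold the translation of the left-hand side using \Cref{f:transTermShort} and exhibit a fixed-length APCP reduction sequence to the translation of the reduct. Rule~\scc{E-Lam}: in $\encc{z}{(\lambda x \sdot \mbb{M})~\mbb{N}}$ the output on the function channel synchronizes (Rule~$\rred{\tensor\parr}$) with the input guarding $\encc{\cdot}{\lambda x \sdot \mbb{M}}$, and a couple of further synchronizations rearrange the process into $\nuf{xa}(\encc{z}{\mbb{M}} \| \encc{a}{\mbb{N}})$, which is $\encc{z}{\mbb{M}\xsub{\mbb{N}/x}}$ via the \scc{T-Sub} clause. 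Rule~\scc{E-Pair} and Rule~\scc{E-Send} are similar, the latter being the most delicate since it must line up the pair translation fed to $\encc{\cdot}{\sff{send}~\cdot}$ with the \scc{T-Send'} clause after a short burst of synchronizations. Rule~\scc{E-SubstName}: $\encc{z}{\mbb{M}\xsub{y/x}} = \nuf{xa}(\encc{z}{\mbb{M}} \| y \fwd a)$ reduces in one step by Rule~$\rred{\scc{Id}}$ to $\encc{z}{\mbb{M}}\{y/x\}$, which equals $\encc{z}{\mbb{M}\{y/x\}}$ by \encpropref{i:transSubst}. Rule~\scc{E-NameSubst}: $\encc{z}{x\xsub{\mbb{M}/x}} = \nuf{xa}(x \fwd z \| \encc{a}{\mbb{M}})$ reduces by Rule~$\rred{\scc{Id}}$ to $\encc{a}{\mbb{M}}\{z/a\} = \encc{z}{\mbb{M}}$ up to $\alpha$-equivalence of the translation parameter.

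I expect the main obstacle to be the \scc{E-Send} base case, together with discharging the ``no input prefix over the hole'' invariant used throughout the \scc{E-Lift} step: both amount to carefully tracking the many fresh continuation endpoints the translation introduces and checking that the process reached is the translation of the reduct --- absorbing any residual syntactic mismatch through Rule~$\rred{\equiv}$ and the structural-congruence facts collected in \Cref{l:transProps}. Conceptually, however, completeness is the straightforward direction: each source reduction maps to a bounded (possibly empty) block of target reductions, and nothing forces us to avoid the eager forwarder reductions of $\redd$; the subtler lazy semantics is needed only for the converse, soundness.
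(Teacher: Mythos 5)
Your proposal matches the paper's own proof in essentially every respect: induction on the reduction derivation, a nested induction on the reduction context for Rule~\scc{E-Lift} justified by the observation that the translation places the hole's encoding only under restriction and parallel composition, the use of the structural-congruence preservation result (\Cref{t:transTermSC}) for Rule~\scc{E-LiftSC}, and direct, bounded reduction sequences (with \Cref{l:transProps} absorbing residual mismatches) for the axiom cases. No substantive differences to report.
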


\begin{proof}
    By induction on the derivation of the reduction (\ih{1}).
    We consider each rule:
    \begin{itemize}
        \item
            Rule~\scc{E-Lam}: $\term{(\lambda x \sdot M)\, \mbb{N}} \reddM \term{M\xsub{ \mbb{N}/x }}$.
            \begin{align*}
                & \encc{z}{(\lambda x \sdot M)\, \mbb{N}}
                \\
                {}={} & \nu{ab}(a(f,g) \sdot \nuf{hx}(\nunil f[h,\_] \| \encc{g}{M}) \| \nu{cd}(b[c,z] \| d(e,\_) \sdot \encc{e}{\mbb{N}}))
                \\
                {}\redd{} & \nu{cd}(\nuf{hx}(\nunil c[h,\_] \| \encc{z}{M}) \| d(e,\_) \sdot \encc{e}{\mbb{N}})
                \\
                {}\redd{} & \nuf{xh}(\encc{z}{M} \| \encc{h}{\mbb{N}})
                \\
                {}={} & \encc{z}{M\xsub{ \mbb{N}/x }}
            \end{align*}

        \item
            Rule~\scc{E-Pair}: $\term{\sff{let}\, (x,y) = (M_1,M_2)\, \sff{in}\, N} \reddM \term{N\xsub{ M_1/x,M_2/y }}$.
            \begin{align*}
                & \encc{z}{\sff{let}\, (x,y) = (M_1,M_2)\, \sff{in}\, N}
                \\
                &= \nu{ab}\begin{array}[t]{@{}l@{}}
                    (\nu{gh}\nu{kl}(a[g,k] \| h(m,\_) \sdot \encc{m}{M_1} \| l(n,\_) \sdot \encc{n}{M_2})
                    \\
                    {}\| b(c,d) \sdot \nuf{ex}\nuf{fy}(\nunil c[e,\_] \| \nunil d[f,\_] \| \encc{z}{N}))
                \end{array}
                \\
                &\redd \nu{hg}\nu{lk}(h(m,\_) \sdot \encc{m}{M_1} \| l(n,\_) \sdot \encc{n}{M_2} \| \nuf{ex}\nuf{fy}(\nunil g[e,\_] \| \nunil k[f,\_] \| \encc{z}{N}))
                \\
                &\redd \nuf{ex}\nu{lk}(\encc{e}{M_1} \| l(n,\_) \sdot \encc{n}{M_2} \| \nuf{fy}(\nunil k[f,\_] \| \encc{z}{N}))
                \\
                &\redd \nuf{ex}\nuf{fy}(\encc{e}{M_1} \| \encc{f}{M_2} \| \encc{z}{N})
                \\
                &= \encc{z}{N\xsub{ M_1/x,M_2/y }}
            \end{align*}

        \item
            Rule~\scc{E-SubstName}: $\term{\mbb{M}\xsub{ y/x }} \reddM \term{\mbb{M}\{y/x\}}$.
            \begin{align*}
                & \encc{z}{\mbb{M}\xsub{ y/x }}
                \\
                {}={} & \nuf{xa}(\encc{z}{\mbb{M}} \| y \fwd a)
                \\
                {}\redd{} & \encc{z}{\mbb{M}}\{y/x\}
                \\
                {}={} & \encc{z}{\mbb{M}\{y/x\}}
                &&\text{(\encpropref{i:transSubst})}
            \end{align*}

        \item
            Rule~\scc{E-NameSubst}: $\term{x\xsub{ \mbb{M}/x }} \reddM \term{\mbb{M}}$.
            \begin{align*}
                & \encc{z}{x\xsub{ \mbb{M}/x }}
                \\
                {}={} & \nuf{xa}(x \fwd z \| \encc{a}{\mbb{M}})
                \\
                {}\redd{} & \encc{z}{\mbb{M}}
            \end{align*}

        \item
            Rule~\scc{E-Send}: $\term{\sff{send}~(M,N)} \reddM \term{\sff{send}'(M,N)}$.
            \begin{align*}
                & \encc{z}{\sff{send}~(M,N)}
                \\
                {}={} & \nu{ab}\begin{array}[t]{@{}l@{}}
                    (\nu{kl}\nu{mn}(a[k,m] \| l(o,\_) \sdot \encc{o}{M} \| n(p,\_) \sdot \encc{p}{N})
                    \\
                    {}\| b(c,d) \sdot \nu{ef}(\nunil d[e,\_] \| \nu{gh}(f[c,g] \| h \fwd z)))
                \end{array}
                \\
                {}\redd{} & \nu{lk}\nu{nm}(l(o,\_) \sdot \encc{o}{M} \| n(p,\_) \sdot \encc{p}{N} \| \nu{ef}(\nunil m[e,\_] \| \nu{gh}(f[k,g] \| h \fwd z)))
                \\
                {}\redd{} & \nu{lk}\nu{ef}(l(o,\_) \sdot \encc{o}{M} \| \encc{e}{N} \| \nu{gh}(f[k,g] \| h \fwd z))
                \\
                {}\equiv{} & \nu{lk}(l(o,\_) \sdot \encc{o}{M} \| \nu{ef}(\encc{e}{N} \| \nu{gh}(f[k,g] \| h \fwd z)))
                \\
                {}={} & \encc{z}{\sff{send}'(M,N)}
            \end{align*}

        \item
            Rule~\scc{E-Lift}: $\term{\mbb{M}} \reddM \term{\mbb{N}} \implies \term{\mcl{R}[\mbb{M}]} \reddM \term{\mcl{R}[\mbb{N}]}$.
            By induction on the structure of $\term{\mcl{R}}$ (\ih{2}), assuming $\term{\mbb{M}} \reddM \term{\mbb{N}}$.
            The inductive cases follow from \ih{2} straightforwardly, since the encoding of any case for $\term{\mcl{R}}$ has the encoding of $\term{\mbb{M}}$ only under restriction and parallel composition.
            The base case ($\term{\mcl{R}} = []$) follows from \ih{1}: $\encc{z}{\mbb{M}} \redd^\ast \encc{z}{\mbb{N}}$.

        \item
            Rule~\scc{E-LiftSC}: $\term{\mbb{M}} \equivM \term{\mbb{M}'} \wedge \term{\mbb{M}'} \reddM \term{\mbb{N}'} \wedge \term{\mbb{N}'} \equivM \term{\mbb{N}} \implies \term{\mbb{M}} \reddM \term{\mbb{N}}$.
            We assume $\term{\mbb{M}} \equivM \term{\mbb{M}'}$, $\term{\mbb{M}'} \reddM \term{\mbb{N}'}$, and $\term{\mbb{N}'} \equivM \term{\mbb{N}}$.
            By \ih{1}, $\encc{z}{\mbb{M}'} \redd^\ast \encc{z}{\mbb{N}'}$.
            By \Cref{t:transTermSC}, $\encc{z}{\mbb{M}} \equiv \encc{z}{\mbb{M}'}$ and $\encc{z}{\mbb{N}} \equiv \encc{z}{\mbb{N}'}$.
            Hence, $\encc{z}{\mbb{M}} \redd^\ast \encc{z}{\mbb{N}}$.
            \qedhere
    \end{itemize}
\end{proof}

\thmTransConfRed*

\begin{proof}
    By induction on the derivation of the reduction (\ih{1}):
    \begin{itemize}
        \item
            Rule~\scc{E-New}: $\term{\mcl{F}[\sff{new}]} \reddC \term{\nu{x\bfr{\epsilon}y}(\mcl{F}[(x,y)])}$.
            By induction on the structure of $\term{\mcl{F}}$ (\ih{2}).
            The inductive cases follow from \ih{2} straightforwardly, since the encoding of any case for $\term{\mcl{F}}$ has the encoding of $\term{\sff{new}}$ only under restriction and parallel composition.
            We detail the base case ($\term{\mcl{F}} = \term{\phi\,[]}$), w.l.o.g.\ assuming $\term{\phi} = \term{\main}$:
            \begin{align*}
                & \encc{z}{\main\,\sff{new}}
                \\
                {}={} & \nu{ab}(\nunil a[\_,\_] \| b(\_,\_) \sdot \nu{xy}\encc{z}{(x,y)})
                \\
                {}\redd{} & \nu{xy}\encc{z}{(x,y)}
                \\
                {}={} & \nu{xy}\encc{z}{\main\,(x,y)}
                \\
                {}={} & \nu{xy}\bencb{x}{\encc{z}{\main\,(x,y)}}{\bfr{\epsilon}}
                \\
                {}={} & \encc{z}{\nu{x\bfr{\epsilon}y}(\main\,(x,y))}
            \end{align*}

        \item
            Rule~\scc{E-Spawn}: $\term{\hat{\mcl{F}}[\sff{spawn}~(M,N)]} \reddC \term{\hat{\mcl{F}}[N] \prl \child\,M}$.
            By induction on the structure of $\term{\hat{\mcl{F}}}$.
            Similar to the previous case, we only detail the base case ($\term{\hat{\mcl{F}}} = \term{\phi\,[]}$), w.l.o.g.\ assuming $\term{\phi} = \term{\main}$:
            \begin{align*}
                & \encc{z}{\main\,(\sff{spawn}~(M,N))}
                \\
                {}={} & \nu{ab}(\nu{ef}\nu{gh}(a[e,g] \| f(\_,\_) \sdot \encc{\_}{M} \| h(k,\_) \sdot \encc{k}{N}) \\
                & \qquad {} \| b(c,d) \sdot (\nunil c[\_,\_] \| \nunil d[z,\_]))
                \\
                {}\redd{} & \nu{hg}(h(k,\_) \sdot \encc{k}{N} \| \nunil g[z,\_]) \| \nu{fe}(f(\_,\_) \sdot \encc{\_}{M} \| \nunil e[\_,\_])
                \\
                {}\redd{} & \encc{z}{N} \| \nu{fe}(f(\_,\_) \sdot \encc{\_}{M} \| \nunil e[\_,\_])
                \\
                {}\redd{} & \encc{z}{N} \| \nunil \encc{\_}{M}
                \\
                {}={} & \encc{z}{\main\,N} \| \nunil \encc{\_}{\child\,M}
                \\
                {}={} & \encc{z}{\main\,N \prl \child\,M}
            \end{align*}

        \item
            Rule~\scc{E-Recv}: $\term{\nu{x\bfr{\vec{m},M}y}(\hat{\mcl{F}}[\sff{recv}~y] \prl C)} \reddC \term{\nu{x\bfr{\vec{m}}y}(\hat{\mcl{F}}[(M,y)] \prl C)}$.
            By induction on the structure of $\term{\hat{\mcl{F}}}$.
            Similar to the previous cases, we only detail the base case ($\term{\hat{\mcl{F}}} = \term{\phi\,[]}$), w.l.o.g.\ assuming $\term{\phi} = \term{\main}$:
            \begin{align*}
                & \encc{z}{\nu{x\bfr{\vec{m},M}y}(\main\,(\sff{recv}~y) \prl C)}
                \\
                {}={} & \nu{xy}\bencb{x}{\encc{z}{\main\,(\sff{recv}~y)} \| \nunil \encc{\_}{C}}{\bfr{\vec{m},M}}
                \\
                {}\equiv{} & \nu{xy}(\bencb{x}{\nunil \encc{\_}{C}}{\bfr{\vec{m},M}} \| \encc{z}{\main\,(\sff{recv}~y)})
                &&\text{(\encpropref{i:transBufCtx})}
                \\
                {}\equiv{} & \nu{xy}(\bencb{x}{\nunil \encc{\_}{C}}{\bfr{\vec{m},M}} \| \nu{ab}(y \fwd a \\
                & \qquad {} \| b(c,d) \sdot \nu{ef}(z[c,e] \| f(g,\_) \sdot d \fwd g)))
                \\
                {}={} & \nu{xy}(\nu{hk}\nu{lm}(\nu{op}(x \fwd o \| p[h,l]) \\
                & \qquad {} \| k(n,\_) \sdot \encc{n}{M} \| \bencb{m}{\nunil \encc{\_}{C}\{m/x\}}{\bfr{\vec{m}}}) \\
                &\qquad {} \| \nu{ab}(y \fwd a \| b(c,d) \sdot \nu{ef}(z[c,e] \| f(g,\_) \sdot d \fwd g)))
                \\
                {}\redd^2{} & \nu{xy}(\nu{hk}\nu{lm}(x[h,l] \| k(n,\_) \sdot \encc{n}{M} \| \bencb{m}{\nunil \encc{\_}{C}\{m/x\}}{\bfr{\vec{m}}})\\
                &{}\| y(c,d) \sdot \nu{ef}(z[c,e] \| f(g,\_) \sdot d \fwd g))
                \\
                {}\redd{} & \nu{ml}(\bencb{m}{\nunil \encc{\_}{C}\{m/x\}}{\bfr{\vec{m}}} \\
                & \qquad {} \| \nu{hk}\nu{ef}(z[h,e] \| k(n,\_) \sdot \encc{n}{M} \| f(g,\_) \sdot l \fwd g))
                \\
                {}\equiv{} & \nu{xy}(\bencb{x}{\nunil \encc{\_}{C}}{\bfr{\vec{m}}} \\
                & \qquad {} \| \nu{hk}\nu{ef}(z[h,e] \| k(n,\_) \sdot \encc{n}{M} \| f(g,\_) \sdot y \fwd g))
                \\
                {}={} & \nu{xy}(\bencb{x}{\nunil \encc{\_}{C}}{\bfr{\vec{m}}} \| \encc{z}{(M,y)})
                \\
                {}\equiv{} & \nu{xy}\bencb{x}{\encc{z}{(M,y)} \| \nunil \encc{\_}{C}}{\bfr{\vec{m}}}
                &&\text{(\encpropref{i:transBufCtx})}
                \\
                {}={} & \nu{xy}\bencb{x}{\encc{z}{\main\,(M,y)} \| \nunil \encc{\_}{C}}{\bfr{\vec{m}}}
                \\
                {}={} & \nu{xy}\bencb{x}{\encc{z}{\main\,(M,y) \prl C}}{\bfr{\vec{m}}}
                \\
                {}={} & \encc{z}{\nu{x\bfr{\vec{m}}y}(\main\,(M,y) \prl C)}
            \end{align*}

        \item
            Rule~\scc{E-Case}: $j \in I \implies \term{\nu{x\bfr{\vec{m},j}y}(\mcl{F}[\sff{case}\, y\, \sff{of}\, \{i:M_i\}_{i \in I}] \prl C)} \reddC \term{\nu{x\bfr{\vec{m}}y}(\mcl{F}[M_j~y] \prl C)}$.
            Assuming $j \in I$, we apply induction on the structure of $\term{\mcl{F}}$.
            Similar to the previous cases, we only detail the base case ($\term{\mcl{F}} = \term{\phi\,[]}$), w.l.o.g.\ assuming $\term{\phi} = \term{\main}$:
            \begin{align*}
                & \encc{z}{\nu{x\bfr{\vec{m},j}y}(\main\,(\sff{case}\,y\,\sff{of}\,\{i:M_i\}_{i \in I} \prl C)}
                \\
                {}={} & \nu{xy}\bencb{x}{\encc{z}{\main\,(\sff{case}\,y\,\sff{of}\,\{i:M_i\}_{i \in I}} \| \nunil \encc{\_}{C}}{\bfr{\vec{m},j}}
                \\
                {}\equiv{} & \nu{xy}(\bencb{x}{\nunil \encc{\_}{C}}{\bfr{\vec{m},j}} \| \encc{z}{\main\,(\sff{case}\,y\,\sff{of}\,\{i:M_i\}_{i \in I}})
                &&\text{(\encpropref{i:transBufCtx})}
                \\
                {}={} & \nu{xy}(\bencb{x}{\nunil \encc{\_}{C}}{\bfr{\vec{m},j}} \| \nu{ab}(y \fwd a \| b(c) \gets \{i: \encc{z}{M_i~c}\}_{i \in I}))
                \\
                {}={} & \nu{xy}(\nu{de}(\nu{fg}(x \fwd f \| g[d] \puts j) \| \bencb{e}{\nunil \encc{\_}{C}\{e/x\}}{\bfr{\vec{m}}})\\
                &{}\| \nu{ab}(y \fwd a \| b(c) \gets \{i: \encc{z}{M_i~c}\}_{i \in I}))
                \\
                {}\redd^2{} & \nu{xy}(\nu{de}(x[d] \puts j \| \bencb{e}{\nunil \encc{\_}{C}\{e/x\}}{\bfr{\vec{m}}})\\
                &{}\| y(c) \gets \{i: \encc{z}{M_i~c}\}_{i \in I})
                \\
                {}\redd{} & \nu{ed}(\bencb{e}{\nunil \encc{\_}{C}\{e/x\}}{\bfr{\vec{m}}} \| \encc{z}{M_j~c}\{d/c\})
                \\
                {}\equiv{} & \nu{xy}(\bencb{x}{\nunil \encc{\_}{C}}{\bfr{\vec{m}}} \| \encc{z}{M_j~c}\{y/c\})
                \\
                {}={} & \nu{xy}(\bencb{x}{\nunil \encc{\_}{C}}{\bfr{\vec{m}}} \| \encc{z}{M_j~y})
                &&\text{(\encpropref{i:transSubst})}
                \\
                {}\equiv{} & \nu{xy}\bencb{x}{\encc{z}{M_j~y} \| \nunil \encc{\_}{C}}{\bfr{\vec{m}}}
                &&\text{(\encpropref{i:transBufCtx})}
                \\
                {}={} & \nu{xy}\bencb{x}{\encc{z}{\main\,(M_j~y)} \| \nunil \encc{\_}{C}}{\bfr{\vec{m}}}
                \\
                {}={} & \nu{xy}\bencb{x}{\encc{z}{\main\,(M_j~y) \prl C}}{\bfr{\vec{m}}}
                \\
                {}={} & \encc{z}{\nu{x\bfr{\vec{m}}y}(\main\,(M_j~y) \prl C)}
            \end{align*}

        \item
            Rule~\scc{E-LiftC}: $\term{C} \reddC \term{C'} \implies \term{\mcl{G}[C]} \reddC \term{\mcl{G}[C']}$.
            By induction on the structure of $\term{\mcl{G}}$ (\ih{2}), assuming $\term{C} \reddC \term{C'}$.
            The base case ($\term{\mcl{G}} = \term{[]}$) follows from \ih{1}: $\encc{z}{C} \redd^\ast \encc{z}{C'}$.
            The case of buffered restriction ($\term{\mcl{G}} = \term{\nu{x\bfr{\vec{m}}y}\mcl{G}'}$) follows from \encpropref{i:transBufRed} and \ih{2}.
            The rest of the inductive cases follow from \ih{2} straightforwardly, since their encoding only places the encoding of $\term{C}$ under restriction and parallel composition.

        \item
            Rule~\scc{E-LiftM}: $\term{\mcl{M}} \reddM \term{\mcl{M}'} \implies \term{\mcl{F}[\mcl{M}]} \reddC \term{\mcl{F}[\mcl{M'}]}$.
            By induction on the structure of $\term{\mcl{F}}$ (\ih{2}), assuming $\term{\mcl{M}} \reddM \term{\mcl{M}'}$.
            The base case ($\term{\mcl{F}} = \term{\phi\,[]}$) follows from \Cref{t:transTermRed}: $\encc{z}{\phi\,\mcl{M}} = \encc{z}{\mcl{M}} \redd^\ast \encc{z}{\mcl{M}'} = \encc{z}{\phi\,\mcl{M}'}$.
            The inductive case ($\term{\mcl{F}} = \term{C\xsub{ (\mcl{F}'[\mcl{M}])/x }}$) follows from \ih{2} straightforwardly, since the encoding of $\term{\mcl{F}}$ places the encoding of $\term{\mcl{F}'[\mcl{M}]}$ only under restriction and parallel composition.

        \item
            Rule~\mbox{\scc{E-ConfLiftSC}}: $\term{C} \equivC \term{C'} \wedge \term{C'} \reddC \term{D'} \wedge \term{D'} \equivC \term{D} \implies \term{C} \reddC \term{D}$.
            We assume $\term{C} \equivC \term{C'}$, $\term{C'} \reddC \term{D'}$, and $\term{D'} \equivC \term{D}$.
            By \ih{1}, $\encc{z}{C'} \redd^\ast \encc{z}{D'}$.
            By \Cref{t:transConfSC}, $\encc{z}{C} \equiv \encc{z}{C'}$ and $\encc{z}{D} \equiv \encc{z}{D'}$.
            Hence, $\encc{z}{C} \redd^\ast \encc{z}{D}$.
            \qedhere
    \end{itemize}
\end{proof}

\subsection{Soundness}
\label{as:soundness}

\begin{definition}[Evaluation Context]
\label{d:redCtx}
    Evaluation contexts ($\mcl{E}$) are defined by the following grammar:
    \begin{align*}
        \mcl{E} &::= [] \sepr \mcl{E} \| P \sepr \nu{xy}\mcl{E}
    \end{align*}
    We write $\mcl{E}[P]$ to denote the process obtained by replacing the hole $[]$ in $\mcl{E}$ by $P$.
\end{definition}

\begin{definition}[Bound Forwarded Continuation]
\label{d:bcont}
    \sloppy
    The predicate $\bcont{x,y}(P)$ holds if and only if $P \equiv {\mcl{E}[\nu{xa}(x \fwd y \| \nu{cd}(c \fwd e \| \alpha))]}$ for some evaluation context $\mcl{E}$ where $\alpha \in \{d[f,a],d[a] \puts \ell\}$ implies $P \equiv \nu{eg}Q$ for some $Q$.
\end{definition}

\begin{definition}[Lazy Forwarded Semantics for APCP]
\label{d:apcpLazy}
    The \emph{lazy forwarder semantics} for APCP ($\reddL$) is defined by the rules in \Cref{f:apcpLazy}.

    \begin{figure}[t]
        \begin{mdframed}
            \begin{align*}
                & \rred{\overset{\leftrightarrow}{\scc{Id}}}
                &
                & &
                \nuf{yz}(x \fwd y \| P)
                &\reddL^{(x,y)} P \{x/z\}
                \\
                & \rred{\tensor \parr}
                &
                & &
                \nu{xy}(x[a,b] \| y(c,d) \sdot P)
                &\reddL^\cdot P\subst{a/c,b/d}
                \\
                & \rred{\overset{\leftrightarrow}{\tensor \parr}}
                &
                & &
                \nu{xy}(\nu{uv}(x \fwd u \| v[a,b]) \| \nu{wz}(y \fwd w \| z(c,d) \sdot P))
                &\reddL^\cdot P\subst{a/c,b/d}
                \\
                & \rred{\oplus \&}
                & j \in I \implies &
                &
                \nu{xy}(x[b] \puts j \| y(d) \gets \{i:P_i\}_{i \in I}
                &\reddL^\cdot P_j\subst{b/d}
                \\
                & \rred{\overset{\leftrightarrow}{\oplus \&}}
                &
                j \in I \implies
                & &
                \nu{xy}(\nu{uv}(x \fwd u \| v[b] \puts j) \| \nu{wz}(y \fwd w \| z(d) \gets \{i:P_i\}_{i \in I}))
                &\reddL^\cdot P_j\subst{b/d}
            \end{align*}
            For $S = \cdot$ or $S = (x,y)$ for some $x,y$:
            \begin{mathpar}
                \inferrule*[right=$\equiv$]
                { P \equiv P' \\ P' \reddL^S Q' \\ Q' \equiv Q }
                { P \reddL^S Q }
                \and
                \inferrule*[right=$\onu$]
                { P \reddL^S Q }
                { \nu{xy}P \reddL^S \nu{xy}Q }
                \and
                \inferrule*[right=$\|$]
                { P \reddL^S Q }
                { P \| R \reddL^S Q \| R }
                \and
                \inferrule*[right=$\cdot$]
                { P \reddL^\cdot Q }
                { P \reddL Q }
                \and
                \inferrule*[right={$(x,y)$}]
                { P \reddL^{(x,y)} Q \\ \bcont{x,y}(P) }
                { P \reddL Q }
            \end{mathpar}
        \end{mdframed}
        \caption{Lazy forwarder semantics for APCP (cf.\ \Cref{d:apcpLazy})}
        \label{f:apcpLazy}
    \end{figure}
\end{definition}

\noindent
This semantics is designed to have a very controlled reduction of forwarders.
Rule~$\rred{\overset{\leftrightarrow}{\scc{Id}}}$ implements the forwarder as a substitution, as usual.
However, it only does so if it is bound by a forwarder-enabled restriction.
Moreover, it annotates the reduction's arrow with the forwarded endpoints.
For example, $\nuf{yz}(x \fwd y \| P) \reddL^{(x,y)} P \subst{x/z}$.
The Closure rules~$\equiv$, $\onu$, and $\|$ preserve the arrow's annotation.
To derive the final unannotated reduction, Rule~$(x,y)$ can be applied.
This rule checks that, when one of the endpoints $x$ or $y$ is bound to the continuation endpoint of a forwarded output or selection, that output or selection is forwarded on a bound endpoint (cf.\ Def.\ \labelcref{d:bcont}).

Rules~$\rred{\tensor \parr}$ and $\rred{\oplus \&}$ are standard, and annotate the reduction's arrow with `$\cdot$'.
The closure rules preserve this annotation as well.
The final unannotated reduction can be derived using Rule~$\cdot$, which has no additional conditions.
Finally, Rules~$\rred{\overset{\leftrightarrow}{\tensor \parr}}$ and $\rred{\overset{\leftrightarrow}{\oplus \&}}$ define a short-circuit semantics for forwarded output (resp.\ selection) connected to forwarded input (resp.\ branching): instead of first reducing the two involved forwarders (which may not be possible under $\reddL$), communication occurs \emph{through} the forwarders whereafter the forwarders will have been consumed.
These rules also annotate the arrow with~`$\cdot$', such that there are no additional conditions when deriving the final unannotated reduction.

Before we prove soundness, let us compare APCP's standard semantics and the newly introduced lazy forwarder semantics, and their impact on soundness:

\begin{example}
    Consider the following configuration:
    \begin{align*}
        \term{\main\,((\lambda u \sdot \lambda w \sdot u)~(\sff{send}~(M,x)))}
        \reddC \term{\main\,((\lambda w \sdot u)\xsub{ \sff{send}~(M,x)/u })}
        \reddC \term{\main\,((\lambda w \sdot u)\xsub{ \sff{send}'(M,x)/u })}
    \end{align*}
    Now consider the encoding of the reduced configuration:
    \begin{align*}
        & \encc{z}{\main\,((\lambda w \sdot u)\xsub{ \sff{send}'(M,x)/u })}
        \\
        {}={} & \nuf{ua}\begin{array}[t]{@{}l@{}}
            (z(b,c) \sdot \nuf{dw}(\nunil b[d,\_] \| u \fwd c)
            \\
            {} \| \nu{ef}\begin{array}[t]{@{}l@{}}
                (e(g,\_) \sdot \encc{g}{M}
                \\
                {} \| \nu{hk}\begin{array}[t]{@{}l@{}}
                    (x \fwd h
                    \\
                    {} \| \nu{lm}(k[f,l] \| m \fwd a))))
                \end{array}
            \end{array}
        \end{array}
    \end{align*}
    Due to the forwarders, this process can reduce in two ways:
    \begin{align*}
        &&
        & \encc{z}{\main\,((\lambda w \sdot u)\xsub{ \sff{send}'(M,x)/u })}
        \\[10pt]
        & \text{(i)}
        &
        {}\redd{} & \nuf{ua}\begin{array}[t]{@{}l@{}}
            (z(b,c) \sdot \nuf{dw}(\nunil b[d,\_] \| u \fwd c)
            \\
            {} \| \nu{ef}\begin{array}[t]{@{}l@{}}
                (e(g,\_) \sdot \encc{g}{M}
                \\
                {} \| \nu{lm}(x[f,l] \| m \fwd a))))
            \end{array}
        \end{array}
        \\[10pt]
        &\text{(ii)}
        &
        {}\redd{} & \nu{ml}\begin{array}[t]{@{}l@{}}
            (z(b,c) \sdot \nuf{dw}(\nunil b[d,\_] \| m \fwd c)
            \\
            {} \| \nu{ef}\begin{array}[t]{@{}l@{}}
                (e(g,\_) \sdot \encc{g}{M}
                \\
                {} \| \nu{hk}(x \fwd h \| k[f,l])))
            \end{array}
        \end{array}
    \end{align*}
    These processes are not encodings of any configuration, thus disproving the encoding's soundness under APCP's standard semantics.

    On the other hand, the encoding of the reduced configuration does not reduce under the lazy forwarder semantics: the forwarder $x \fwd h$ is not bound by forwarder-enabled restrictions, and the forwarder $m \fwd a$ forwards the continuation endpoint of an output on an endpoint that is forwarded to a free endpoint.
    \begin{align*}
        \encc{z}{\main\,((\lambda w \sdot u)\xsub{ \sff{send}'(M,x)/u })}~ \nreddL
    \end{align*}
    Now consider the same configuration, but within a buffered restriction:
    \begin{align*}
        & \encc{z}{\nu{x\bfr{\epsilon}y}(\main\,((\lambda w \sdot u)\xsub{ \sff{send}'(M,x)/u }) \prl C)}
        \\
        {}={} & \nu{xy}\begin{array}[t]{@{}l@{}}
            (\nuf{ua}\begin{array}[t]{@{}l@{}}
                (z(b,c) \sdot \nuf{dw}(\nunil b[d,\_] \| u \fwd c)
                \\
                {} \| \nu{ef}\begin{array}[t]{@{}l@{}}
                    (e(g,\_) \sdot \encc{g}{M}
                    \\
                    {} \| \nu{hk}\begin{array}[t]{@{}l@{}}
                        (x \fwd h
                        \\
                        {} \| \nu{lm}(k[f,l] \| m \fwd a))))
                    \end{array}
                \end{array}
            \end{array}
            \\
            {} \| \nunil \encc{\_}{C})
        \end{array}
    \end{align*}
    Now the forwarder $m \fwd a$ may reduce, because $x$ is bound:
    \begin{align*}
        & \encc{z}{\nu{x\bfr{\epsilon}y}(\main\,((\lambda w \sdot u)\xsub{ \sff{send}'(M,x)/u }) \prl C)}
        \\
        {}\reddL{} & \nu{xy}\begin{array}[t]{@{}l@{}}
            (\nu{ef}\nu{lm}\begin{array}[t]{@{}l@{}}
                (\nu{hk}(x \fwd h \| k[f,l])
                \\
                {} \| e(g,\_) \sdot \encc{g}{M}
                \\
                {} \|
                z(b,c) \sdot \nuf{dw}(\nunil b[d,\_] \| m \fwd c))
            \end{array}
            \\
            {} \| \nunil \encc{\_}{C})
        \end{array}
        \\
        {}={}
        & \encc{z}{\nu{x\bfr{M}y}(\main\,(\lambda w \sdot x) \prl C)}
    \end{align*}
    In this case, soundness is not disproven, because
    \begin{align*}
        \term{\nu{x\bfr{\epsilon}y}(\main\,((\lambda w \sdot u)\xsub{ \sff{send}'(M,x)/u }) \prl C)} \reddC \term{\nu{x\bfr{M}y}(\main\,(\lambda w \sdot x) \prl C)}.
    \end{align*}
\end{example}

\thmTransSndConf*

\begin{proof}
    By induction on the structure of $\term{C}$ (\ih{1}).
    In each case for $\term{C}$, we use induction on the number $k$ of steps $\encc{z}{C} \reddL^k Q$ (\ih{2}).
    We observe the initial reduction $\encc{z}{C} \reddL Q_0$ and discuss all possible following reductions.
    Then, we isolate $k'$ steps  such that $\encc{z}{C} \reddL^{k'} \encc{z}{D'}$ for some $\term{D'}$ such that $\term{C} \reddC \term{D'}$ ($k'$ may be different in each case).
    Since then $\encc{z}{D'} \reddL^{k-k'} Q$, it follows from \ih{2} that there exists $\term{D}$ such that $\term{D'} \reddC^+ \term{D}$ and $\encc{z}{D'} \reddL^\ast \encc{z}{D}$.
    \begin{itemize}
        \item
            Case $\term{C} = \term{\phi\,\mbb{M}}$.
            By construction, we can identify a maximal context $\term{\mcl{R}}$ and a term $\term{\mbb{M}'}$ such that the observed reduction $\encc{z}{\phi\,(\mcl{R}[\mbb{M}'])} \reddL Q_0$ originates from the encoding of $\term{\mbb{M}'}$ directly, i.e.\ not solely from the encoding of a subterm of $\term{\mbb{M}'}$.
            We discuss all syntactical cases for $\term{\mbb{M}'}$, though not all cases may show a reduction.
            \begin{itemize}
                \item
                    Case $\term{\mbb{M}'} = \term{x}$.
                    By induction on the structure of $\term{\mcl{R}}$.
                    In the base case ($\term{\mcl{R}} = \term{[]}$), there are two encodings of this term, depending on its typing: $\encc{z}{x} = x \fwd z$ and $\encc{z}{x} = \0$.
                    Either way, the encoding does not reduce, so this case does not cause the reduction $\encc{z}{\phi\,(\mcl{R}[x])}$.
                    The inductive cases are straightforward.

                \item
                    Case $\term{\mbb{M}'} = \term{\sff{new}}$.
                    \begin{align*}
                        & \encc{z}{\sff{new}}
                        \\
                        {}={} & \nu{ab}(\nunil a[\_,\_] \| b(\_,\_) \sdot \nu{xy}\encc{z}{(x,y)})
                    \end{align*}
                    The reduction $\encc{z}{\phi\,(\mcl{R}[\sff{new}])} \reddL Q_0$ is due to a synchronization between the output on $a$ and the input on $b$.
                    Let $\term{D'} := \term{\nu{x\bfr{\epsilon}y}(\phi\,(\mcl{R}[(x,y)]))}$.
                    By induction on the structure of $\term{\mcl{R}}$ (\ih{3}), we show that $\encc{z}{C} \reddL \encc{z}{D'}$.
                    We discuss the base case ($\term{\mcl{R}} = \term{[]}$) and one representative, inductive case ($\term{\mcl{R}} = \term{\mcl{R}'~N}$).
                    \begin{itemize}
                        \item
                            Base case: $\term{\mcl{R}} = \term{[]}$.
                            There is only one reduction possible:
                            \begin{align*}
                                & \encc{z}{\phi\,\sff{new}}
                                \\
                                {}\reddL{} & \nu{xy}\encc{z}{(x,y)}
                                \\
                                {}={} & \encc{z}{\nu{x\bfr{\epsilon}y}(\phi\,(x,y))}
                                \\
                                {}={} & \encc{z}{D'}
                            \end{align*}

                        \item
                            Representative, inductive case: $\term{\mcl{R}} = \term{\mcl{R}'~N}$.
                            \begin{align*}
                                & \encc{z}{\phi\,(\mcl{R}'[\sff{new}]~N)}
                                \\
                                {}={} & \nu{ab}(\encc{a}{\mcl{R}'[\sff{new}]} \| \nu{cd}(b[c,z] \| d(e,\_) \sdot \encc{e}{N}))
                                \\
                                {}={} & \nu{ab}(\encc{a}{\phi\,(\mcl{R}'[\sff{new}])} \| \nu{cd}(b[c,z] \| d(e,\_) \sdot \encc{e}{N}))
                                \\
                                {}\reddL{} & \nu{ab}(\encc{a}{\nu{x\bfr{\epsilon}y}(\phi\,(\mcl{R'}[(x,y)]))} \| \nu{cd}(b[c,z] \| d(e,\_) \sdot \encc{e}{N}))
                                & \text{(\ih{3})}
                                \\
                                {}={} & \nu{ab}(\nu{xy}\encc{a}{\mcl{R'}[(x,y)]} \| \nu{cd}(b[c,z] \| d(e,\_) \sdot \encc{e}{N}))
                                \\
                                {}\equiv{} & \nu{xy}\nu{ab}(\encc{a}{\mcl{R'}[(x,y)]} \| \nu{cd}(b[c,z] \| d(e,\_) \sdot \encc{e}{N}))
                                \\
                                {}={} & \encc{z}{\nu{x\bfr{\epsilon}y}(\phi\,(\mcl{R'}[(x,y)]~N))}
                                \\
                                {}={} & \encc{z}{D'}
                            \end{align*}
                    \end{itemize}
                    Indeed, $\term{C} \reddC \term{D'}$.
                    Since $\encc{z}{D'} \reddL^{k-1} Q$, the thesis follows by \ih{2}.

                \item
                    Case $\term{\mbb{M}'} = \term{\sff{spawn}~\mbb{M}''}$.
                    \begin{align*}
                        & \encc{z}{\sff{spawn}~\mbb{M}''}
                        \\
                        {}={} & \nu{ab}(\encc{a}{\mbb{M}''} \| b(c,d) \sdot (\nunil c[\_,\_] \| \nunil d[z,\_]))
                    \end{align*}
                    \sloppy
                    The reduction $\encc{z}{\phi\,(\mcl{R}[\sff{spawn}~\mbb{M}''])} \reddL Q_0$ is due to a synchronization between the input on $b$ and an output on $a$ in $\encc{a}{\mbb{M}''}$.
                    By typability, this only occurs if $\term{\mbb{M}''} = \term{(M_1,M_2)\xsub{ \mbb{L}_1/x_1,\ldots,\mbb{L}_n/x_n }}$.
                    Since we can always lift explicit substitutions all the way to the top of a configuration, we omit them here.
                    Let $\term{D'} := \term{\phi\,(\mcl{R}[M_2]) \prl \child\,M_1}$.
                    By induction on the structure of $\term{\mcl{R}}$ (\ih{3}), we show that $\encc{z}{C} \reddL^3 \encc{z}{D'}$.
                    We discuss the base case ($\term{\mcl{R}} = \term{[]}$) and one representative, inductive case ($\term{\mcl{R}} = \term{\mcl{R}'~N}$).
                    \begin{itemize}
                        \item
                            Base case: $\term{\mcl{R}} = \term{[]}$.
                            \begin{align*}
                                & \encc{z}{\phi\,(\sff{spawn}~(M_1,M_2))}
                                \\
                                {}={} & \nu{ab}(\nu{fg}\nu{hl}(a[f,h] \| g(m,\_) \sdot \encc{m}{M_1} \| l(o,\_) \sdot \encc{o}{M_2})\\
                                & {} \| b(c,d) \sdot (\nunil c[\_,\_] \| \nunil d[z,\_]))
                                \\
                                {}\reddL{} & \nu{fg}\nu{hl}(g(m,\_) \sdot \encc{m}{M_1} \| l(o,\_) \sdot \encc{o}{M_2} \| \nunil f[\_,\_] \| \nunil h[z,\_])
                                \\
                                {}\reddL{} & \nu{hl}(\nunil \encc{\_}{M_1} \| l(o,\_) \sdot \encc{o}{M_2} \| \nunil h[z,\_])
                                \\
                                {}\reddL{} & \nunil \encc{\_}{M_1} \| \encc{z}{M_2}
                                \\
                                {}\equiv{} & \encc{z}{M_2} \| \nunil \encc{\_}{M_1}
                                \\
                                {}={} & \encc{z}{\phi\,M_2 \prl \child\,M_1}
                                \\
                                {}={} & \encc{z}{D'}
                            \end{align*}

                        \item
                            Representative inductive case: $\term{\mcl{R}} = \term{\mcl{E'}~N}$.
                            \begin{align*}
                                & \encc{z}{\phi\,(\mcl{R}'[\sff{spawn}~(M_1,M_2)]~N)}
                                \\
                                {}={} & \nu{ab}(\encc{a}{\mcl{R}'[\sff{spawn}~(M_1,M_2)]} \| \nu{cd}(b[c,z] \| d(e,\_) \sdot \encc{e}{N}))
                                \\
                                {}={} & \nu{ab}(\encc{a}{\phi\,(\mcl{R}'[\sff{spawn}~(M_1,M_2)])} \| \nu{cd}(b[c,z] \| d(e,\_) \sdot \encc{e}{N}))
                                \\
                                {}\reddL^3{} & \nu{ab}(\encc{a}{\phi\,(\mcl{R}'[M_2]) \prl \child\,M_1} \| \nu{cd}(b[c,z] \| d(e,\_) \sdot \encc{e}{N}))
                                & \text{(\ih{3})}
                                \\
                                {}={} & \nu{ab}(\encc{a}{\mcl{R}'[M_2]} \| \nunil \encc{\_}{\child\,M_1} \| \nu{cd}(b[c,z] \| d(e,\_) \sdot \encc{e}{N}))
                                \\
                                {}\equiv{} & \nu{ab}(\encc{a}{\mcl{R}'[M_2]} \| \nu{cd}(b[c,z] \| d(e,\_) \sdot \encc{e}{N})) \| \nunil \encc{\_}{\child\,M_1}
                                \\
                                {}={} & \encc{z}{\phi\,(\mcl{R}'[M_2]~N) \prl \child\,M_1}
                                \\
                                {}={} & \encc{z}{D'}
                            \end{align*}
                    \end{itemize}
                    Indeed, $\term{C} \reddC \term{D'}$.
                    Since $\encc{z}{D'} \reddL^{k-3} Q$, the thesis follows by \ih{2}.

                \item
                    Case $\term{\mbb{M}'} = \term{\sff{send}~\mbb{M}''}$.
                    \begin{align*}
                        & \encc{z}{\sff{send}~\mbb{M}''}
                        \\
                        {}={} & \nu{ab}(\encc{a}{\mbb{M}''} \| b(c,d) \sdot \nu{ef}(\nunil d[e,\_] \| \nu{gh}(f[c,g] \| h \fwd z)))
                    \end{align*}
                    The reduction $\encc{z}{\phi\,(\mcl{R}[\sff{send}~\mbb{M}''])} \reddL Q_0$ is due to a synchronization between the input on $b$ and an output on $a$ in $\encc{a}{\mbb{M}''}$.
                    By typability, this only occurs if $\term{\mbb{M}''} = \term{(M_1,M_2)\xsub{ \mbb{L}_1/x_1,\ldots,\mbb{L}_n/x_n }}$.
                    As in the case above, we omit the explicit substitutions.
                    Let $\term{D'} = \term{\phi\,(\mcl{R}[\sff{send'}(M_1,M_2)])}$.
                    By induction on the structure of $\term{\mcl{R}}$, we show that $\encc{z}{C} \reddL^2 \encc{z}{D'}$.
                    In the base case ($\term{\mcl{R}} = \term{[]}$), we have the following:
                    \begin{align*}
                        & \encc{z}{\phi\,(\sff{send}~(M_1,M_2))}
                        \\
                        {}={} & \nu{ab}(\nu{mo}\nu{pq}(a[m,p] \| o(r,\_) \sdot \encc{r}{M_1} \| q(s,\_) \sdot \encc{s}{M_2})\\
                        & {} \| b(c,d) \sdot \nu{ef}(\nunil d[e,\_] \| \nu{gh}(f[c,g] \| h \fwd z)))
                        \\
                        {}\reddL{} & \nu{mo}\nu{pq}(o(r,\_) \sdot \encc{r}{M_1} \| q(s,\_) \sdot \encc{s}{M_2})\\
                        & {} \| \nu{ef}(\nunil p[e,\_] \| \nu{gh}(f[m,g] \| h \fwd z)))
                        \\
                        {}\reddL{} & \nu{ef}\nu{mo}(o(r,\_) \sdot \encc{r}{M_1} \| \encc{e}{M_2}) \| \nu{gh}(f[m,g] \| h \fwd z)))
                        \\
                        {}\equiv{} & \nu{om}(o(r,\_) \sdot \encc{r}{M_1} \| \nu{ef}(\encc{e}{M_2}) \| \nu{gh}(f[m,g] \| h \fwd z))
                        \\
                        {}={} & \encc{z}{\phi\,(\sff{send}'(M_1,M_2))}
                        \\
                        {}={} & \encc{z}{D'}
                    \end{align*}
                    The inductive cases are straightforward.
                    Indeed, $\term{C} \reddC \term{D'}$.
                    Since $\encc{z}{D'} \reddL^{k-2} Q$, the thesis follows by \ih{2}.

                \item
                    Case $\term{\mbb{M}'} = \term{\sff{recv}~\mbb{M}''}$.
                    \begin{align*}
                        & \encc{z}{\sff{recv}~\mbb{M}''}
                        \\
                        {}={} & \nu{ab}(\encc{a}{\mbb{M}''} \| b(c,d) \sdot \nu{ef}(z[c,e] \| f(g,\_) \sdot d \fwd g))
                    \end{align*}
                    The reduction $\encc{z}{\phi\,(\mcl{R}[\sff{recv}~\mbb{M}''])} \reddL Q_0$ would be due to a synchronization between the input on $b$ and an output on $a$ in $\encc{a}{\mbb{M}''}$.
                    However, by typability, there is no $\term{\mbb{M}''}$ for which the encoding has an unguarded output on $a$, so this case does not cause the reduction.

                \item
                    Case $\term{\mbb{M}'} = \term{\lambda x \sdot M''}$.
                    The encoding $\encc{z}{\lambda x \sdot M''} = z(a,b) \sdot \nuf{cx}(\nunil a[c,\_] \| \encc{b}{M''})$ does not reduce, so this case does not cause the reduction $\encc{z}{\phi\,(\mcl{R}[\lambda x \sdot M''])} \reddL Q_0$.

                \item
                    Case $\term{\mbb{M}'} = \term{\mbb{M}_1~M_2}$.
                    \begin{align*}
                        & \encc{z}{\mbb{M}_1~M_2}
                        \\
                        {}={} & \nu{ab}(\encc{a}{\mbb{M}_1} \| \nu{cd}(b[c,z] \| d(e,\_) \sdot \encc{e}{M_2}))
                    \end{align*}
                    The reduction $\encc{z}{\phi\,(\mcl{R}[\mbb{M}_1~M_2])} \reddL Q_0$ is due to a synchronization between the input on $b$ and an output on $a$ in $\encc{a}{\mbb{M}_1}$.
                    By typability, this only occurs if $\term{\mbb{M}_1} = \term{(\lambda x \sdot M'_1)\xsub{ \mbb{L}_1/x_1,\ldots,\mbb{L}_n/x_n }}$.
                    Again, we omit the explicit substitutions.
                    Let $\term{D'} = \term{M'_1\xsub{ \mbb{M}_2/x }}$.
                    By induction on the structure of $\term{\mcl{R}}$, we show that $\encc{z}{C} \reddL^2 \encc{z}{D'}$.
                    In the base case ($\term{\mcl{R}} = \term{[]}$), we have the following:
                    \begin{align*}
                        & \encc{z}{\phi\,((\lambda x \sdot M'_1)~M_2)}
                        \\
                        {}={} & \nu{ab}(a(g,h) \sdot \nuf{lx}(\nunil g[l,\_] \| \encc{h}{M'_1}) \| \nu{cd}(b[c,z] \| d(e,\_) \sdot \encc{e}{M_2}))
                        \\
                        {}\reddL{} & \nu{cd}(\nuf{lx}(\nunil c[l,\_] \| \encc{z}{M'_1}) \| d(e,\_) \sdot \encc{e}{M_2})
                    \end{align*}
                    At this point, there are three ways for this process to reduce: $\encc{z}{M'_1}$ can reduce internally, a forwarder on $x$ in $\encc{z}{M'_1}$ can reduce, or there is a synchronization between the output on $c$ and the input on $d$.
                    These reductions are independent from each other, we so we postpone the former two cases and focus on the latter.
                    \begin{align*}
                        {}\reddL{} & \nuf{lx}(\encc{z}{M'_1} \| \encc{l}{M_2})
                        \\
                        {}\equiv{} & \nuf{lx}(\encc{l}{M_2} \| \encc{z}{M'_1})
                        \\
                        {}={} & \encc{z}{\phi\,(M'_1\xsub{ M_2/x })}
                        \\
                        {}={} & \encc{z}{D'}
                    \end{align*}
                    The inductive cases are straightforward.
                    Indeed, $\term{C} \reddC \term{D'}$.
                    Since $\encc{z}{D'} \reddL^{k-2} Q$, the thesis follows by \ih{2}.

                \item
                    Case $\term{\mbb{M}'} = \term{()}$.
                    The encoding $\encc{z}{()} = \0$ does not reduce, so this case does not cause the reduction $\encc{z}{\phi\,(\mcl{R}[()])} \reddL Q_0$.

                \item
                    Case $\term{\mbb{M}'} = \term{(M_1,M_2)}$.
                    The encoding $\encc{z}{(M_1,M_2)} = \nu{ab}\nu{cd}(z[a,c] \| b(e,\_) \sdot \encc{e}{M_1} \| d(f,\_) \sdot \encc{f}{M_2})$ does not reduce, so this case does not cause the reduction $\encc{z}{\phi\,(\mcl{R}[(M_1,M_2)])} \reddL Q_0$.

                \item
                    Case $\term{\mbb{M}'} = \term{\sff{let}\,(x,y)=\mbb{M}_1\,\sff{in}\,M_2}$.
                    \begin{align*}
                        & \encc{z}{\sff{let}\,(x,y)=\mbb{M}_1\,\sff{in}\,M_2}
                        \\
                        {}={} & \nu{ab}(\encc{a}{\mbb{M}_1} \| b(c,d) \sdot \nuf{ex}\nuf{fy}(\nunil c[e,\_] \| \nunil d[f,\_] \| \encc{z}{M_2}))
                    \end{align*}
                    The reduction $\encc{z}{\phi\,(\mcl{R}[\sff{let}\,(x,y)=\mbb{M}_1\,\sff{in}\,M_2])} \reddL Q_0$ is due to a synchronization between the input on $b$ and an output on $a$ in $\encc{a}{\mbb{M}_1}$.
                    By typability, this only occurs if $\term{\mbb{M}_1} = \term{(M'_1,M'_2)\xsub{ \mbb{L}_1/x_1,\ldots,\mbb{L}_n/x_n }}$.
                    Again, we omit the explicit substitutions.
                    Let $\term{D'} = \term{M_2\xsub{ M'_1/x,M'_2/y }}$.
                    By induction on the structure of $\term{\mcl{R}}$, we show that $\encc{z}{C} \reddL^3 \encc{z}{D'}$.
                    In the base case ($\term{\mcl{R}} = \term{[]}$), we have the following:
                    \begin{align*}
                        & \encc{z}{\phi\,(\sff{let}\,(x,y)=(M'_1,M'_2)\,\sff{in}\,M_2)}
                        \\
                        {}={} & \nu{ab}(\nu{hl}\nu{mo}(a[h,m] \| l(p,\_) \sdot \encc{p}{M'_1} \| o(r,\_) \sdot \encc{r}{M'_2})\\
                        &{}\| b(c,d) \sdot \nuf{ex}\nuf{fy}(\nunil c[e,\_] \| \nunil d[f,\_] \| \encc{z}{M_2}))
                        \\
                        {}\reddL{} & \nu{hl}\nu{mo}(l(p,\_) \sdot \encc{p}{M'_1} \| o(r,\_) \sdot \encc{r}{M'_2})\\
                        &{}\| \nuf{ex}\nuf{fy}(\nunil h[e,\_] \| \nunil m[f,\_] \| \encc{z}{M_2}))
                    \end{align*}
                    At this point, there are three ways for this process to reduce: $\encc{z}{M_2}$ can reduce internally, or there can be a synchronization between the output on $h$ and the input on $l$ or between the output on $m$ and the input on $o$.
                    These reductions are independent from each other, se we postpone the former case and focus on the latter two, which can occur in any order; w.l.o.g.\ we assume the synchronization between $h$ and $l$ occurs first.
                    \begin{align*}
                        {}\reddL{} & \nuf{ex}\nu{mo}(\encc{e}{M'_1} \| o(r,\_) \sdot \encc{r}{M'_2} \| \nuf{fy}(\nunil m[f,\_] \| \encc{z}{M_2}))
                        \\
                        {}\reddL{} & \nuf{ex}\nuf{fy}(\encc{e}{M'_1} \| \encc{f}{M'_2} \| \encc{z}{M_2})
                        \\
                        {}\equiv{} & \nuf{yf}(\nuf{xe}(\encc{z}{M_2} \| \encc{e}{M'_1}) \| \encc{f}{M'_2})
                        \\
                        {}={} & \encc{z}{\phi\,(M_2\xsub{ M'_1/x,M'_2/y })}
                        \\
                        {}={} & \encc{z}{D'}
                    \end{align*}
                    The inductive cases are straightforward.
                    Indeed, $\term{C} \reddC \term{D'}$.
                    Since $Q' \reddL^{k-3} Q$, the thesis follows by \ih{2}.

                \item
                    Case $\term{\mbb{M}'} = \term{\sff{select}\,\ell\,\mbb{M}''}$.
                    \begin{align*}
                        & \encc{z}{\sff{select}\,\ell\,\mbb{M}''}
                        \\
                        {}={} & \nu{ab}(\encc{a}{\mbb{M}''} \| \nu{cd}(b[c] \puts \ell \| d \fwd z))
                    \end{align*}
                    The reduction $\encc{z}{\phi\,(\mcl{R}[\sff{select}\,\ell\,\mbb{M}''])} \reddL Q_0$ would be due to a synchronization between the selection on $b$ and a branch on $a$ in $\encc{a}{\mbb{M}''}$.
                    However, by typability, there is no $\term{\mbb{M}''}$ for which the encoding has an unguarded case on $a$, so this case does not cause the reduction.

                \item
                    Case $\term{\mbb{M}'} = \term{\sff{case}\,\mbb{M}''\,\sff{of}\,\{i:M_i\}_{i \in I}}$.
                    \begin{align*}
                        & \encc{z}{\sff{case}\,\mbb{M}''\,\sff{of}\,\{i:M_i\}_{i \in I}}
                        \\
                        {}={} & \nu{ab}(\encc{a}{\mbb{M}''} \| b(c) \gets \{i:\encc{z}{M_i~c}\}_{i \in I})
                    \end{align*}
                    The reduction $\encc{z}{\phi\,(\mcl{R}[\sff{case}\,\mbb{M}''\,\sff{of}\,\{i:M_i\}_{i \in I}])} \reddL Q_0$ would be due to a synchronization between the branch on $b$ and a selection on $a$ in $\encc{a}{\mbb{M}''}$.
                    However, by typability, there is no $\term{\mbb{M}''}$ for which the encoding has an unguarded selection on $a$, so this case does not cause the reduction.

                \item
                    Case $\term{\mbb{M}'} = \term{\mbb{M}_1\xsub{ \mbb{M}_2/x }}$.
                    \begin{align*}
                        & \encc{z}{\mbb{M}_1\xsub{ \mbb{M}_2/x }}
                        \\
                        {}={} & \nuf{xa}(\encc{z}{\mbb{M}_1} \| \encc{a}{\mbb{M}_2})
                    \end{align*}
                    The reduction $\encc{z}{\phi\,(\mcl{R}\big[\mbb{M}_1\xsub{ \mbb{M}_2/x }\big])} \reddL Q_0$ can be due to two cases: a forwarder on $x$ in $\encc{z}{\mbb{M}_1}$, or a forwarder on $a$ in $\encc{a}{\mbb{M}_2}$.
                    \begin{itemize}
                        \item
                            A forwarder on $x$ in $\encc{z}{\mbb{M}_1}$ reduces.
                            By typability, this only occurs if $\term{\mbb{M}_1} = \term{\mcl{R}'[x]}$.
                            Let $\term{D'} = \term{\phi\,(\mcl{R}\big[\mcl{R}'[\mbb{M}_2]\big])}$.
                            By induction on the structures of $\term{\mcl{R}}$ (\ih{3}) and $\term{\mcl{R}'}$ (\ih{4}), we show that $\encc{z}{C} \reddL \encc{z}{D'}$.
                            We consider the base case ($\term{\mcl{R}} = \term{\mcl{R}'} = \term{[]}$) and a representative, inductive case ($\term{\mcl{R}} = \term{[]}$ and $\term{\mcl{R}'} = \term{\mcl{R}''~N}$):
                            \begin{itemize}
                                \item
                                    Case $\term{\mcl{R}} = \term{\mcl{R}'} = \term{[]}$.
                                    \begin{align*}
                                        & \encc{z}{\phi\,(x\xsub{ \mbb{M}_2/x })}
                                        \\
                                        {}={} & \nuf{xa}(x \fwd z \| \encc{a}{\mbb{M}_2})
                                        \\
                                        {}\reddL{} & \encc{z}{\mbb{M}_2}
                                        \\
                                        {}={} & \encc{z}{\phi\,\mbb{M}_2}
                                        \\
                                        {}={} & \encc{z}{D'}
                                    \end{align*}

                                \item
                                    Case $\term{\mcl{R}} = \term{[]}$ and $\term{\mcl{R}'} = \term{\mcl{R}''~N}$.
                                    \begin{align*}
                                        & \encc{z}{\phi\,((\mcl{R}''[x]~N)\xsub{ \mbb{M}_2/x })}
                                        \\
                                        {}\equiv{} & \encc{z}{\phi\,((\mcl{R}''[x])\xsub{ \mbb{M}_2/x }~N)}
                                        & \text{(\Cref{t:transConfSC})}
                                        \\
                                        {}={} & \nu{ab}(\encc{a}{(\mcl{R}''[x])\xsub{ \mbb{M}_2/x }} \| \nu{cd}(b[c,z] \| d(e,\_) \sdot \encc{e}{N}))
                                        \\
                                        {}={} & \nu{ab}(\encc{a}{\phi\,((\mcl{R}''[x])\xsub{ \mbb{M}_2/x })} \\
                                        & \qquad {} \| \nu{cd}(b[c,z] \| d(e,\_) \sdot \encc{e}{N}))
                                        \\
                                        {}\reddL{} & \nu{ab}(\encc{a}{\phi\,(\mcl{R}''[\mbb{M}_2])} \| \nu{cd}(b[c,z] \| d(e,\_) \sdot \encc{e}{N}))
                                        & \text{(\ih{4})}
                                        \\
                                        {}={} & \nu{ab}(\encc{a}{\mcl{R}''[\mbb{M}_2]} \| \nu{cd}(b[c,z] \| d(e,\_) \sdot \encc{e}{N}))
                                        \\
                                        {}={} & \encc{z}{\phi\,((\mcl{R}''[\mbb{M}_2])~N)}
                                        \\
                                        {}={} & \encc{z}{D'}
                                    \end{align*}
                            \end{itemize}
                            For $\term{\mcl{R}}$, the inductive cases are straightforward.
                            Indeed, $\term{C} \reddC \term{D'}$.
                            Since $\encc{z}{D'} \reddL^{k-1} Q$, the thesis follows by \ih{2}.

                        \item
                            A forwarder on $a$ in $\encc{a}{\mbb{M}_2}$ reduces.
                            There are three cases in which a forwarder on $a$ occurs unguarded in $\encc{a}{\mbb{M}_2}$ (eliding possible further explicit substitutions): $\term{\mbb{M}_2}$ is a name, or $\term{\mbb{M}_2}$ contains a $\term{\sff{send}'}$ or a $\term{\sff{select}}$ on a name inside a reduction context.
                            \begin{itemize}
                                \item
                                    $\term{\mbb{M}_2}$ is a name, i.e.\ $\term{\mbb{M}_2} = w$.
                                    Let $\term{D'} = \term{\phi\,(\mbb{M}_1\{w/x\})}$.
                                    \begin{align*}
                                        & \encc{z}{\phi\,(\mbb{M}_1\xsub{ w/x })}
                                        \\
                                        {}={} & \nuf{xa}(\encc{z}{\mbb{M}_1} \| w \fwd a)
                                        \\
                                        {}\reddL{} & \encc{z}{\mbb{M}_1}\{w/x\}
                                        \\
                                        {}\equiv{} & \encc{z}{\mbb{M}_1\{w/x\}}
                                        &&
                                        & \text{(\encpropref{i:transSubst})}
                                        \\
                                        {}={} & \encc{z}{\phi\,(\mbb{M}_1\{w/x\})}
                                        \\
                                        {}={} & \encc{z}{D'}
                                    \end{align*}
                                    Indeed, $\term{C} \reddC \term{D'}$.
                                    Since $\encc{z}{D'} \reddL^{k-1} Q$, the thesis follows by \ih{2}.

                                \item
                                    $\term{\mbb{M}_2}$ contains a $\term{\sff{send}'}$ on a name inside a reduction context, i.e.\ $\term{\mbb{M}_2} = \term{\mcl{R}[\sff{send}'(M,w)]}$.
                                    We apply induction on the structure of $\term{\mcl{R}}$.
                                    In the base case ($\term{\mcl{R}} = \term{[]}$), we have the following:
                                    \begin{align*}
                                        & \encc{z}{\phi\,(\mbb{M}_1\xsub{ \sff{send}'(M,w)/x })}
                                        \\
                                        {}={} & \nuf{xa}(\encc{z}{\mbb{M}_1} \| \nu{bc}(b(d,\_) \sdot \encc{d}{M} \| \nu{ef}(w \fwd e \| \nu{gh}(f[c,g] \| h \fwd a))))
                                    \end{align*}
                                    Notice that the forwarder $h \fwd a$ forwards the continuation endpoint of an output forwarded on $w$.
                                    However, $w$ is not bound, so this forwarder reduction is not enabled: this case does not cause the reduction.
                                    The inductive cases for $\term{\mcl{R}}$ are straightforward.

                                \item
                                    $\term{\mbb{M}_2}$ contains a $\term{\sff{select}}$ on a name inside a reduction context, i.e.\ $\term{\mbb{M}_2} = \term{\mcl{R}[\sff{select}\,\ell\,w)]}$.
                                    This case is largely analogous to the case above.
                            \end{itemize}
                    \end{itemize}

                \item
                    Case $\term{\mbb{M}'} = \term{\sff{send}'(M_1,\mbb{M}_2)}$.
                    \begin{align*}
                        & \encc{z}{\sff{send}'(M_1,\mbb{M}_2)}
                        \\
                        {}={} & \nu{ab}(a(c,\_) \sdot \encc{c}{M_1} \| \nu{de}(\encc{d}{\mbb{M}_2} \| \nu{fg}(e[b,f] \| g \fwd z)))
                    \end{align*}
                    The reduction $\encc{z}{\phi\,(\mcl{R}[\sff{send"}(M_1,\mbb{M}_2)])} \reddL Q_0$ would be due to a synchronization between the output on $e$ and an input on $d$ in $\encc{d}{\mbb{M}_2}$.
                    However, by typability, there is no $\term{\mbb{M}_2}$ for which the encoding has an unguarded input on $d$, so this case does not cause the reduction.
            \end{itemize}

        \item
            Case $\term{C} = \term{C_1 \prl C_2}$.
            The encoding of this configuration depends on which of $\term{C_1}$ and $\term{C_2}$ is a child thread; w.l.o.g., we assume $\term{C_2}$ is a child thread.
            \begin{align*}
                & \encc{z}{C_1 \prl C_2}
                \\
                {}={} & \encc{z}{C_1} \| \nunil \encc{\_}{C_2}
            \end{align*}
            There are two possible causes for the reduction $\encc{z}{C_1 \prl C_2} \reddL Q_0$: $\encc{z}{C_1}$ or $\encc{\_}{C_2}$ reduces internally.
            Either reduction remains available when the other reduction takes place, so we can postpone one reduction and focus on the other.
            As a representative case, we focus on the reduction of $\encc{z}{C_1}$.
            If $\encc{z}{C_1} \reddL Q'_1$, by \ih{1}, there exists $\term{D_1}$ such that $\term{C_1} \reddC^+ \term{D_1}$ and $Q'_1 \reddL^\ast \encc{z}{D_1}$.
            Let $\term{D} = \term{\nu{x\bfr{\epsilon}y}D'}$.
            \begin{align*}
                & \encc{z}{C_1 \prl C_2}
                \\
                {}\reddL{} & Q'_1 \| \nunil \encc{\_}{C_2}
                \\
                {}\reddL^\ast{} & \encc{z}{D_1} \| \nunil \encc{\_}{C_2}
                \\
                {}={} & \encc{z}{D_1 \prl C_2}
                \\
                {}={} & \encc{z}{D}
            \end{align*}
            Indeed, $\term{C} \reddC^+ \term{D}$.

        \item
            Case $\term{C} = \term{\nu{x\bfr{\vec{m}}y}C'}$.
            This case depends on the contents of $\term{\vec{m}}$, so we apply induction on the size of $\term{\vec{m}}$ (\ih{3}).
            \begin{itemize}
                \item
                    Case $\term{\vec{m}} = \term{\epsilon}$.
                    \begin{align*}
                        & \encc{z}{\nu{x\bfr{\epsilon}y}C'}
                        \\
                        {}={} & \nu{xy}\encc{z}{C'}
                    \end{align*}
                    There are nine possible causes for the reduction $\encc{z}{\nu{x\bfr{\epsilon}y}C'} \reddL Q_0$: $\encc{z}{C'}$ reduces internally, an output forwarded on $x$ in $\encc{z}{C'}$ synchronizes with an input forwarded on $y$ in $\encc{z}{C'}$ or vice versa, a selection forwarded on $x$ in $\encc{z}{C'}$ synchronizes with a case forwarded on $y$ in $\encc{z}{C'}$ or vice versa, or a forwarder of the continuation endpoint of an output or selection forwarded on $x$ or $y$ in $\encc{z}{C'}$ reduces.
                    \begin{itemize}
                        \item
                            $\encc{z}{C'}$ reduces internally, i.e.\ $\encc{z}{C'} \reddL Q'$.
                            By \ih{1}, there exists $\term{D'}$ such that $\term{C'} \reddC^+ \term{D'}$ and $Q' \reddL^\ast \encc{z}{D'}$.
                            Let $\term{D} = \term{\nu{x\bfr{\epsilon}y}D'}$.
                            \begin{align*}
                                & \encc{z}{\nu{x\bfr{\epsilon}y}C'}
                                \\
                                {}\reddL{} & \nu{xy}Q'
                                \\
                                {}\reddL^\ast{} & \nu{xy}\encc{z}{D'}
                                \\
                                {}={} & \encc{z}{\nu{x\bfr{\epsilon}y}D'}
                                \\
                                {}={} & \encc{z}{D}
                            \end{align*}
                            Indeed, $\term{C} \reddC^+ \term{D}$.

                        \item
                            An output forwarded on $x$ in $\encc{z}{C'}$ synchronizes with an input forwarded on $y$ in $\encc{z}{C'}$.
                            By typability, the only possibility for this reduction to occur is if
                            \begin{align*}
                                \term{C'} = \term{\mcl{G}\Big[\mcl{G}_1\big[\mcl{F}_1[\sff{send}'(M,x)]\big] \prl \mcl{G}_2\big[\mcl{F}_2[\sff{recv}~y]\big]\Big]}.
                            \end{align*}
                            Let $\term{D'} = \term{\nu{x\bfr{\epsilon}y}\mcl{G}\Big[\mcl{G}_1\big[\mcl{F}_1[x]\big] \prl \mcl{G}_2\big[\mcl{F}_2[(M,y)]\big]\Big]}$.
                            By induction on the structures of $\term{\mcl{G}}$, $\term{\mcl{G}_1}$, $\term{\mcl{G}_2}$, $\term{\mcl{F}_1}$, and $\term{\mcl{F}_2}$, we show that $\encc{z}{C} \reddL \encc{z}{D'}$.
                            For the base case, w.l.o.g.\ we assume that the send is in the main thread and that the receive is in a child thread.
                            Note that, by typability, the base case $\term{\mcl{F}_2} = \term{\child\,[]}$ is invalid: instead, we use $\term{\mcl{F}_2} = \term{\child\,(\sff{let}\,(v,w) = []\,\sff{in\,}())}$.
                            For the rest of the contexts, the base case is $\term{\mcl{G}} = \term{\mcl{G}_1} = \term{\mcl{G}_2} = \term{[]}$, and $\term{\mcl{F}_1} = \term{\main\,[]}$.
                            We have the following:
                            \begin{align*}
                                & \encc{z}{\nu{x\bfr{\epsilon}y}(\main\,(\sff{send}'(M,x)) \prl \child\,(\sff{let}\,(v,w) = (\sff{recv}~y)\,\sff{in}\,()))}
                                \\
                                {}={} & \nu{xy}(\nu{ab}(a(c,\_) \sdot \encc{c}{M} \| \nu{de}(x \fwd d \| \nu{fg}(e[b,f] \| g \fwd z)))\\
                                &{}\| \nunil \nu{a'b'}(\nu{hl}(y \fwd h \| l(n,o) \sdot \nu{pq}(a'[n,p] \| q(r,\_) \sdot o \fwd r))\\
                                &{}\| b'(c',d') \sdot \nu{e'v}\nu{f'y}(\nunil c'[e',\_] \| \nunil d'[f',\_] \| \0)))
                                \\
                                {}\reddL{} & \nu{ab}\nu{fg}(a(c,\_) \sdot \encc{c}{M} \| g \fwd z \| \nunil \nu{a'b'}(\nu{pq}(a'[b,p] \| q(r,\_) \sdot f \fwd r)\\
                                &{}\| b'(c',d') \sdot \nu{e'v}\nu{f'y}(\nunil c'[e',\_] \| \nunil d'[f',\_] \| \0)))
                                \\
                                {}\equiv{} & \nu{gf}(g \fwd z \| \nunil \nu{a'b'}(\nu{ba}\nu{pq}(a'[b,p] \| a(c,\_) \sdot \encc{c}{M} \| q(r,\_) \sdot f \fwd r)\\
                                &{}\| b'(c',d') \sdot \nu{e'v}\nu{f'y}(\nunil c'[e',\_] \| \nunil d'[f',\_] \| \0)))
                                \\
                                {}\equiv{} & \nu{xy}(x \fwd z \| \nunil \nu{a'b'}(\nu{ba}\nu{pq}(a'[b,p] \| a(c,\_) \sdot \encc{c}{M} \| q(r,\_) \sdot y \fwd r)\\
                                &{}\| b'(c',d') \sdot \nu{e'v}\nu{f'y}(\nunil c'[e',\_] \| \nunil d'[f',\_] \| \0)))
                                \\
                                {}={} & \encc{z}{\nu{x\bfr{\epsilon}y}(\main\,x \prl \child\,(\sff{let}\,(v,w) = (M,y)\,\sff{in}\,()))}
                                \\
                                {}={} & \encc{z}{D'}
                            \end{align*}
                            Indeed,
                            \begin{align*}
                                & \term{\nu{x\bfr{\epsilon}y}(\main\,(\sff{send}'(M,x)) \prl \child\,(\sff{let}\,(v,w) = (\sff{recv}~y)\,\sff{in}\,()))}
                                \\
                                {}\equivC{} & \term{\nu{x\bfr{M}y}(\main\,x \prl \child\,(\sff{let}\,(v,w) = (\sff{recv}~y)\,\sff{in}\,()))}
                                \\
                                {}\reddC{} & \term{D'}.
                            \end{align*}
                            All inductive cases are trivial by the IH, except explicit substitution in $\term{\mcl{F}_1}$ and $\term{\mcl{F}_2}$: in this case, we can extrude the substitution's scope using structural congruence in both $\term{C'}$ and $\encc{z}{C'}$ (cf.\ \Cref{t:transTermSC,t:transConfSC}).
                            Since $\encc{z}{D'} \reddL^{k-1} Q$, the thesis follows by \ih{2}.

                        \item
                            An output forwarded on $y$ in $\encc{z}{C'}$ synchronizes with an input forwarded on $x$ in $\encc{z}{C'}$.
                            Since $\term{\nu{x\bfr{\epsilon}y}C'} \equivC \term{\nu{y\bfr{\epsilon}x}C'}$, this case is analogous to the case above.

                        \item
                            A selection forwarded on $x$ in $\encc{z}{C'}$ synchronizes with a case forwarded on $y$ in $\encc{z}{C'}$.
                            By typability, the only possibility for this reduction to occur is if
                            \begin{align*}
                                \term{C'} = \term{\mcl{G}\Big[\mcl{G}_1\big[\mcl{F}_1[\sff{select}\,j\,x]\big] \prl \mcl{G}_2\big[\mcl{F}_2[\sff{case}\,y\,\sff{of}\,\{i:M_i\}_{i \in I}]\big]\Big]}
                            \end{align*}
                            where $j \in I$.
                            Let $\term{D'} = \term{\nu{x\bfr{\epsilon}y}\mcl{G}\Big[\mcl{G}_1\big[\mcl{F}_1[x]\big] \prl \mcl{G}_2\big[\mcl{F}_2[M_j~y]\big]\Big]}$.
                            This case is very similar to the case above with forwarded output and input, so we show, by similar induction, that $\encc{z}{C} \reddL \encc{z}{D'}$.
                            In the base case, typability requires that $\term{\mcl{F}_2} = \term{\child\,[]}$ and for every $i \in I$, $\term{M_i} = \term{\lambda v \sdot ()}$.
                            We have the following:
                            \begin{align*}
                                & \encc{z}{\nu{x\bfr{\epsilon}y}(\main\,(\sff{select}\,j\,x) \prl \child\,(\sff{case}\,y\,\sff{of}\,\{i:\lambda v \sdot ()\}_{i \in I}))}
                                \\
                                {}={} & \nu{xy}(\nu{ab}(x \fwd a \| \nu{cd}(b[c] \puts j \| d \fwd z))\\
                                &{}\| \nunil \nu{a'b'}(y \fwd a' \| b'(c') \gets \{i:\encc{\_}{(\lambda v \sdot ())~c'}\}_{i \in I}))
                                \\
                                {}\reddL{} & \nu{cd}(d \fwd z \| \nunil \encc{\_}{(\lambda v \sdot ())~c})
                                \\
                                {}\equiv{} & \nu{xy}(x \fwd z \| \nunil \encc{\_}{(\lambda v \sdot ())~y})
                                \\
                                {}={} & \encc{z}{\nu{x\bfr{\epsilon}y}(\main\,x \prl \child\,((\lambda v \sdot ())~y))}
                                \\
                                {}={} & \encc{z}{D'}
                            \end{align*}
                            Indeed,
                            \begin{align*}
                                & \term{\nu{x\bfr{\epsilon}y}(\main\,(\sff{select}\,j\,x) \prl \child\,(\sff{case}\,y\,\sff{of}\,\{i:\lambda v \sdot ()\}_{i \in I}))}
                                \\
                                {}\equivC{} & \term{\nu{x\bfr{j}y}(\main\,x \prl \child\,(\sff{case}\,y\,\sff{of}\,\{i:\lambda v \sdot ()\}_{i \in I}))}
                                \\
                                {}\reddC{} & \term{D'}.
                            \end{align*}
                            As in the prior case, the inductive cases are straightforward.
                            Since $\encc{z}{D'} \reddL^{k-1} Q$, the thesis follows from \ih{2}.

                        \item
                            A selection forwarded on $y$ in $\encc{z}{C'}$ synchronizes with a case forwarded on $x$ in $\encc{z}{C'}$.
                            Since $\term{\nu{x\bfr{\epsilon}y}C'} \equivC \term{\nu{y\bfr{\epsilon}x}C'}$, this case is analogous to the case above.

                        \item
                            A forwarder of the continuation endpoint of an output forwarded on $x$ in $\encc{z}{C'}$ reduces.
                            This case occurs when there is a $\term{\sff{send}'}$ on $\term{x}$ as part of an explicit substitution (again, omitting explicit substitutions).
                            This explicit substitution may occur on the level of terms
                            \begin{align*}
                                \term{C'} = \term{\mcl{G}\Big[\mcl{F}\big[\mbb{M}\xsub{ \sff{send}'(N,x)/w }\big]\Big]},
                            \end{align*}
                            or on the level of configurations
                            \begin{align*}
                                \term{C'} = \term{\mcl{G}\big[C''\xsub{ \sff{send}'(N,x)/w }\big]}.
                            \end{align*}
                            Both cases are structurally congruent, and so are their encodings (cf.\ \Cref{t:transConfSC}), so we focus on the former.
                            Let $\term{D'} = \term{\nu{x\bfr{N}y}\mcl{G}\big[\mcl{F}[\mbb{M}\{x/w\}]\big]}$.
                            By induction on the structures of $\term{\mcl{G}}$, $\term{\mcl{F}}$, and $\term{\mcl{R}}$, we show that $\encc{z}{C} \reddL \encc{z}{D'}$.
                            In the base case ($\term{\mcl{G}} = \term{\mcl{R}} = \term{[]}$ and $\term{\mcl{F}} = \term{\phi\,[]}$), w.l.o.g.\ assuming that $\term{\phi} = \term{\main}$, we have the following:
                            \begin{align*}
                                & \encc{z}{\nu{x\bfr{\epsilon}y}(\main\,(\mbb{M}\xsub{ \sff{send}'(N,x)/w }))}
                                \\
                                {}={} & \nu{xy}\nuf{wa}(\encc{z}{\mbb{M}} \| \nu{bc}(b(d,\_) \sdot \encc{d}{N} \\
                                & \qquad {} \| \nu{ef}(x \fwd e \| \nu{gh}(f[c,g] \| h \fwd a))))
                                \\
                                {}\reddL{} & \nu{xy}\nu{hg}(\encc{z}{\mbb{M}}\{h/w\} \| \nu{bc}(b(d,\_) \sdot \encc{d}{N} \\
                                & \qquad {} \| \nu{ef}(x \fwd e \| f[c,g])))
                                \\
                                {}\equiv{} & \nu{xy}\nu{cb}\nu{gh}(\nu{ef}(x \fwd e \| f[c,g]) \\
                                & \qquad {} \| b(d,\_) \sdot \encc{d}{N} \| \encc{z}{\mbb{M}}\{h/w\})
                                \\
                                {}={} & \nu{xy}\nu{cb}\nu{gh}(\nu{ef}(x \fwd e \| f[c,g]) \\
                                & \qquad {} \| b(d,\_) \sdot \encc{d}{N} \| (\encc{z}{\mbb{M}}\{x/w\})\{h/x\})
                                \\
                                {}={} & \nu{xy}\nu{cb}\nu{gh}(\nu{ef}(x \fwd e \| f[c,g]) \\
                                & \qquad {} \| b(d,\_) \sdot \encc{d}{N} \| \encc{z}{\mbb{M}\{x/w\}}\{h/x\})
                                &\text{(\encpropref{i:transSubst})}
                                \\
                                {}={} & \encc{z}{\nu{x\bfr{N}y}(\main\,(\mbb{M}\{x/w\}))}
                                \\
                                {}={} & \encc{z}{D'}
                            \end{align*}
                            Indeed,
                            \begin{align*}
                                & \term{\nu{x\bfr{\epsilon}y}(\main\,(\mbb{M}\xsub{ \sff{send}'(N,x)/w }))}
                                \\
                                {}\equivC{} & \term{\nu{x\bfr{N}y}(\main\,(\mbb{M}\xsub{ x/w }))}
                                \\
                                {}\reddC{} & \term{D'}.
                            \end{align*}
                            The inductive cases are straightforward.
                            Since $\encc{z}{D'} \reddL^{k-1} Q$, the thesis follows by \ih{2}.

                        \item
                            A forwarder of the continuation endpoint of an output forwarded on $y$ in $\encc{z}{C'}$ reduces.
                            Since $\term{\nu{x\bfr{\epsilon}y}C'} \equivC \term{\nu{y\bfr{\epsilon}x}C'}$, this case is analogous to the case above.

                        \item
                            A forwarder of the continuation endpoint of a selection forwarded on $x$ or $y$ in $\encc{z}{C'}$ reduces.
                            These cases are largely analogous to the two cases above.
                    \end{itemize}

                \item
                    Case $\term{\vec{m}} = \term{\vec{m}',M}$.
                    \begin{align*}
                        & \encc{z}{\nu{x\bfr{\vec{m}',M}y}C'}
                        \\
                        {}={} & \nu{xy}\nu{ab}(a \fwd x \| \nu{cd}\nu{ef}(b[c,e] \| d(g,\_) \sdot \encc{g}{M} \| \bencb{f}{\encc{z}{C'}\{f/x\}}{\bfr{\vec{m}'}}))
                    \end{align*}
                    There are six possible causes for the reduction $\encc{z}{\nu{x\bfr{\vec{m}',M}y}C'} \reddL Q_0$: there is an internal reduction in $\encc{z}{C'}\{f/x\}$, there is a synchronization between the output forwarded on $x$ and an input forwarded on $y$ in $\encc{z}{C'}\{f/x\}$, or a forwarder of the continuation endpoint of an output or selection forwarded on $f$ or $y$ in $\encc{z}{C'}\{f/x\}$ reduces.
                    \begin{itemize}
                        \item
                            There is an internal reduction in $\encc{z}{C'}\{f/x\}$, i.e.\ $\encc{z}{C'}\{f/x\} \reddL Q'$.
                            By \ih{1}, there exists $\term{D'}$ such that $\term{C'} \reddC^+ \term{D'}$ and $Q' \reddL^\ast \encc{z}{D'}$.
                            Indeed, $\term{\nu{x\bfr{\vec{m}',M}y}C'} \reddC^+ \term{\nu{x\bfr{\vec{m}',M}y}D'}$.

                        \item
                            There is a synchronization between the output forwarded on $x$ and an input forwarded on $y$ in $\encc{z}{C'}\{f/x\}$.
                            By typability, this case only occur if $\term{C'} = \term{\mcl{G}\big[\mcl{F}[\sff{recv}~y]\big]}$.
                            Let $\term{D'} = \term{\nu{x\bfr{\vec{m}'}y}(\main\,(M,y))}$.
                            By induction on the structures of $\term{\mcl{G}}$ and $\term{\mcl{F}}$, we show that $\encc{z}{C} \reddL \encc{z}{D'}$.
                            We consider the base case: $\term{\mcl{G}} = \term{[]}$ and $\term{\mcl{F}} = \term{\phi\,[]}$, w.l.o.g.\ assuming that $\term{\phi} = \term{\main}$.
                            Note that we are simplifying the situation: the actual possible structure of $\term{\mcl{F}}$ depends on the contents of $\vec{m}$.
                            However, all possible reductions originating from the encoding of $\term{\mcl{F}}$ are internal to the encoding of $\term{C'}$, which we handle separately.
                            \begin{align*}
                                & \encc{z}{\nu{x\bfr{\vec{m}',M}y}(\main\,(\sff{recv}~y))'}
                                \\
                                {}={} & \nu{xy}\nu{ab}(a \fwd x \| \nu{cd}\nu{ef}(b[c,e] \| d(g,\_) \sdot \encc{g}{M}\\
                                & \qquad {}\| \bencb{f}{\nu{a'b'}(y \fwd a' \| b'(c',d') \sdot \nu{e'f'}(z[c',e'] \| f'(g',\_) \sdot d' \fwd g'))}{\bfr{\vec{m}'}}))
                                \\
                                {}\equiv{} & \nu{xy}(\nu{ab}(a \fwd x \\
                                & \qquad {} \| \nu{cd}\nu{ef}(b[c,e] \| d(g,\_) \sdot \encc{g}{M} \| \bencb{f}{\0}{\bfr{\vec{m}'}})) \\
                                & \qquad {} \| \nu{a'b'}(y \fwd a' \| b'(c',d') \sdot \nu{e'f'}(z[c',e'] \\
                                & \qquad {} \| f'(g',\_) \sdot d' \fwd g')))
                                &\text{(\encpropref{i:transBufCtx})}
                                \\
                                {}\reddL{} & \nu{cd}\nu{ef}(d(g,\_) \sdot \encc{g}{M} \| \bencb{f}{\0}{\bfr{\vec{m}'}} \\
                                & \qquad {} \| \nu{e'f'}(z[c,e'] \| f'(g',\_) \sdot e \fwd g'))
                                \\
                                {}\equiv{} & \nu{fe}(\bencb{f}{\0}{\bfr{\vec{m}'}} \| \nu{cd}\nu{e'f'}(z[c,e'] \\
                                & \qquad {} \| d(g,\_) \sdot \encc{g}{M} \| f'(g',\_) \sdot e \fwd g'))
                                \\
                                {}\equiv{} & \nu{xy}(\bencb{x}{\0}{\bfr{\vec{m}'}} \| \nu{cd}\nu{e'f'}(z[c,e'] \\
                                & \qquad {} \| d(g,\_) \sdot \encc{g}{M} \| f'(g',\_) \sdot y \fwd g'))
                                \\
                                {}\equiv{} & \nu{xy}\bencb{x}{\nu{cd}\nu{e'f'}(z[c,e'] \| d(g,\_) \sdot \encc{g}{M} \| f'(g',\_) \sdot y \fwd g')}{\bfr{\vec{m}'}}
                                &\text{(\encpropref{i:transBufCtx})}
                                \\
                                {}={} & \encc{z}{\nu{x\bfr{\vec{m}'}y}(\main\,(M,y))}
                                \\
                                {}={} & \encc{z}{D'}
                            \end{align*}
                            Indeed,
                            \begin{align*}
                                & \term{\nu{x\bfr{\vec{m}',M}y}(\main\,(\sff{recv}~y))}
                                \\
                                {}\equivC{} & \term{\nu{x\bfr{\vec{m}',M}y}(\main\,(\sff{recv}~y) \prl \child\,())}
                                \\
                                {}\reddC{} & \term{\nu{x\bfr{\vec{m}'}y}(\main\,(M,y) \prl \child\,())}
                                \\
                                {}\equivC{} & \term{D'}.
                            \end{align*}
                            The inductive cases are straightforward.
                            Since $\encc{z}{D'} \reddL^{k-1} Q$, the thesis follows by \ih{2}.

                        \item
                            A forwarder of the continuation endpoint of an output forwarded on $f$ in $\encc{z}{C'}\{f/x\}$ reduces.
                            This case occurs when there is a $\term{\sff{send}'}$ on $\term{f}$ as part of an explicit substitution (again, omitting explicit substitutions).
                            Note that, by \encpropref{i:transSubst}, $\encc{z}{C'}\{f/x\} = \encc{z}{C'\{f/x\}}$.
                            As in the similar case for when $\term{\vec{m}} = \term{\epsilon}$, we focus on the case where
                            \begin{align*}
                                \term{C'\{f/x\}} = \term{\mcl{G}\Big[\mcl{F}\big[\mbb{L}\xsub{ \sff{send}'(N,f)/w }\big]\Big]}.
                            \end{align*}
                            Let $\term{D'} = \term{\nu{x\bfr{N,\vec{m}',M}y}\mcl{G}\big[\mcl{F}[\mbb{L}\{x/w\}]\big]}$.
                            By induction on the structures of $\term{\mcl{G}}$, $\term{\mcl{F}}$, and $\term{\mcl{R}}$, we show that $\encc{z}{C} \reddL \encc{z}{D'}$.
                            We consider the base case ($\term{\mcl{G}} = \term{\mcl{R}} = \term{[]}$ and $\term{\mcl{F}} = \term{\phi\,[]}$), w.l.o.g.\ assuming that $\term{\phi} = \term{\main}$.
                            \begin{align*}
                                & \encc{z}{\nu{x\bfr{\vec{m}',M}y}(\main\,(\mbb{L}\xsub{ \sff{send}'(N,x)/w }))}
                                \\
                                {}={} & \nu{xy}\nu{ab}(a \fwd x \| \nu{cd}\nu{ef}(b[c,e] \| d(g,\_) \sdot \encc{g}{M}\\
                                &{}\| \bencb{f}{\nuf{wh}(\encc{z}{\mbb{L}} \| \nu{a'b'}(a'(c',\_) \sdot \encc{c'}{N} \| \nu{d'e'}(f \fwd d' \| \nu{f'g'}(e'[b',f'] \| g' \fwd h))))}{\bfr{\vec{m}'}}))
                                \span
                                \\
                                {}\reddL{} & \nu{xy}\nu{ab}(a \fwd x \| \nu{cd}\nu{ef}(b[c,e] \| d(g,\_) \sdot \encc{g}{M}\\
                                &{}\| \bencb{f}{\nu{g'f'}(\encc{z}{\mbb{L}}\{g'/w\} \| \nu{a'b'}(a'(c',\_) \sdot \encc{c'}{N} \| \nu{d'e'}(f \fwd d' \| e'[b',f'])))}{\bfr{\vec{m}'}}))
                                &\text{(\encpropref{i:transBufRed})}
                                \\
                                {}\equiv{} & \nu{xy}\nu{ab}(a \fwd x \| \nu{cd}\nu{ef}(b[c,e] \| d(g,\_) \sdot \encc{g}{M}\\
                                &{}\| \bencb{f}{\nu{a'b'}\nu{f'g'}(\nu{d'e'}(f \fwd d' \| e'[b',f']) \| a'(c',\_) \sdot \encc{c'}{N} \| \encc{z}{\mbb{L}}\{g'/w\})}{\bfr{\vec{m}'}}))
                                \\
                                {}={} & \nu{xy}\nu{ab}(a \fwd x \| \nu{cd}\nu{ef}(b[c,e] \| d(g,\_) \sdot \encc{g}{M}\\
                                &{}\| \bencb{f}{\nu{a'b'}\nu{f'g'}(\nu{d'e'}(f \fwd d' \| e'[b',f']) \| a'(c',\_) \sdot \encc{c'}{N} \| (\encc{z}{\mbb{L}}\{f/w\})\{g'/f\})}{\bfr{\vec{m}'}}))
                                \span
                                \\
                                {}={} & \nu{xy}\nu{ab}(a \fwd x \| \nu{cd}\nu{ef}(b[c,e] \| d(g,\_) \sdot \encc{g}{M}\\
                                &{}\| \bencb{f}{\bencb{f}{\encc{z}{\mbb{L}}\{f/w\}}{\bfr{N}}}{\bfr{\vec{m}'}}))
                                \\
                                {}={} & \nu{xy}\nu{ab}(a \fwd x \| \nu{cd}\nu{ef}(b[c,e] \| d(g,\_) \sdot \encc{g}{M}\\
                                &{}\| \bencb{f}{\encc{z}{\mbb{L}}\{f/w\}}{\bfr{N,\vec{m}'}}))
                                \\
                                {}={} & \nu{xy}\bencb{x}{\encc{z}{\mbb{L}}\{x/w\}}{\bfr{N,\vec{m}',M}}
                                \\
                                {}={} & \nu{xy}\bencb{x}{\encc{z}{\mbb{L}\{x/w\}}}{\bfr{N,\vec{m}',M}}
                                &\text{(\encpropref{i:transSubst})}
                                \\
                                {}={} & \nu{xy}\bencb{x}{\encc{z}{\main\,(\mbb{L}\{x/w\})}}{\bfr{N,\vec{m}',M}}
                                \\
                                {}={} & \encc{z}{\nu{x\bfr{N,\vec{m}',M}y}(\main\,(\mbb{L}\{x/w\}))}
                                \\
                                {}={} & \encc{z}{D'}
                            \end{align*}
                            Indeed,
                            \begin{align*}
                                & \term{\nu{x\bfr{\vec{m}',M}y}(\main\,(\mbb{L}\xsub{ \sff{send}'(N,x)/w }))}
                                \\
                                {}\equivC{} & \term{\nu{x\bfr{N,\vec{m}',M}y}(\main\,(\mbb{L}\xsub{ x/w }))}
                                \\
                                {}\reddC{} & \term{D'}
                            \end{align*}
                            The inductive cases are straightforward.
                            Since $\encc{z}{D'} \reddL^{k-1} Q$, the thesis follow by \ih{2}.

                        \item
                            A forwarder of the continuation endpoint of a selection forwarded on $f$ in $\encc{z}{C'}\{f/x\}$ reduces.
                            This case is largely analogous to the case above.

                        \item
                            A forwarder of the continuation endpoint of an output or selection forwarded on $y$ in $\encc{z}{C'}\{f/x\}$ reduces.
                            By typability, $\term{C'}$ can only specify an input on $y$, so these cases do not occur.
                    \end{itemize}

                \item
                    Case $\term{\vec{m}} = \term{\vec{m}',\ell}$.
                    This case is largely analogous to the case above.
            \end{itemize}

        \item
            Case $\term{C} = \term{C'\xsub{ \mbb{M}/x }}$.
            By typability, there are two cases for $\term{C'}$: $\term{C'} = \term{\mcl{G}[\phi\,\mbb{M}']}$ where $\term{x} \notin \fn(\term{\mcl{G}})$, or $\term{C'} = \term{\mcl{G}\big[C''\xsub{ \mbb{M}'/y }\big]}$ where $\term{x} \notin \fn(\term{\mcl{G}}) \cup \fn(\term{C''})$.
            In both cases, $\term{C} \equivC \term{\mcl{G}'\Big[\phi\,(\mcl{R}\big[\mbb{M}''\xsub{ \mbb{M}/x }\big])\Big]}$, and, by \Cref{t:transConfSC}, the encodings of these terms are structurally congruent and thus show the same reductions.
            \begin{align*}
                \encc{z}{C'\xsub{ \mbb{M}/x }} \equiv \encc{z}{\mcl{G}'\Big[\phi\,(\mcl{R}\big[\mbb{M}''\xsub{ \mbb{M}/x }\big])\Big]}
            \end{align*}
            We apply induction on the structures of $\term{\mcl{G}'}$ and $\term{\mcl{R}}$.
            We consider the base case ($\term{\mcl{G}'} = \term{[]}$ and $\term{\mcl{R}} = \term{[]}$): $\encc{z}{C} \equiv \encc{z}{\phi\,(\mbb{M}''\xsub{ \mbb{M}/x })}$.
            Clearly, this case is analogous to the corresponding case above for $\term{C} = \term{\phi\,\mbb{M}}$.
            Since the encodings of contexts only place their contents under restriction and parallel composition, the inductive cases follow from the IH straightforwardly.
    \end{itemize}
    This concludes the proof.
\end{proof}

\section{Deadlock-free \ourGV: Full Proofs}
\label{a:ourGVdlFree}

\paragraph*{Liveness}

\begin{definition}[Active Names]\label{d:an}
    The \emph{set of active names} of $P$, denoted $\an(P)$, contains the (free) names that are used for unguarded actions (output, input, selection, branching):
    \begin{align*}
        \an(x[y,z])
        & := \{x\}
        &
        \an(x(y,z) \sdot P)
        & := \{x\}
        &
        \an(\0)
        & := \emptyset
        \\
        \an(x[z] \triangleleft j)
        & := \{x\}
        &
        \an(x(z) \triangleright \{i: P_i\}_{i \in I})
        & := \{x\}
        &
        \an(x \fwd y)
        & := \emptyset
        \\
        \an(P \| Q)
        & := \an(P) \cup \an(Q)
        &
        \an(\mu X(\tilde{x}) \sdot P)
        & := \an(P)
        \\
        \an(\nu{x y}P)
        & := \an(P) \setminus \{x,y\}
        &
        \an(X\call{\tilde{x}})
        & := \emptyset
    \end{align*}
\end{definition}

\begin{definition}[Live Process]\label{d:live}
    A process $P$ is \emph{live}, denoted $\live(P)$, if
    \begin{enumerate}
        \item there are names $x,y$ and process $P'$ such that $P \equiv \nu{x y}P'$ with $x,y \in \an(P')$, or
        \item there are names $x,y,z$ and process $P'$ such that $P \equiv \mcl{E}[\nu{yz}(x \fwd y \| P')]$.
    \end{enumerate}
\end{definition}

\begin{lemma}
\label{l:reddSoLive}
    If $P \redd Q$, then $\live(P)$.
\end{lemma}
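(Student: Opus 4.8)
\textbf{Proof plan for \Cref{l:reddSoLive}.}
The statement asserts that if a process $P$ admits a reduction step $P \redd Q$, then $P$ is live in the sense of \Cref{d:live}. The plan is to proceed by induction on the derivation of $P \redd Q$, following the rules in \Cref{f:procdef} (bottom): the three axioms $\rred{\scc{Id}}$, $\rred{\tensor\parr}$, $\rred{\oplus\&}$, and the three closure rules $\rred{\equiv}$, $\rred{\onu}$, $\rred{\|}$.

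For the base cases, I would inspect each axiom directly. Rule $\rred{\scc{Id}}$ has the shape $\nu{yz}(x \fwd y \| P') \redd P'\subst{x/z}$; here $P$ is literally of the form $\mcl{E}[\nu{yz}(x \fwd y \| P')]$ with $\mcl{E} = []$, so clause~2 of \Cref{d:live} applies immediately. Rule $\rred{\tensor\parr}$ has the shape $\nu{xy}(x[a,b] \| y(v,z)\sdot P') \redd \cdots$; here $x \in \an(x[a,b])$ and $y \in \an(y(v,z)\sdot P')$, hence $x,y \in \an(x[a,b] \| y(v,z)\sdot P')$, so clause~1 applies with $P' := x[a,b] \| y(v,z)\sdot P'$. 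Rule $\rred{\oplus\&}$ is analogous, using that $x \in \an(x[b]\puts j)$ and $y \in \an(y(z)\gets\{i:P_i\}_{i\in I})$.

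For the inductive cases, the closure rules are handled using the induction hypothesis together with the fact that liveness is preserved by the relevant context operations. For $\rred{\equiv}$, where $P \equiv P'$, $P' \redd Q'$, $Q' \equiv Q$, the induction hypothesis gives $\live(P')$, and since $\live$ is defined up to structural congruence ($\equiv$ appears in both clauses of \Cref{d:live}) we get $\live(P)$ from $P \equiv P'$ by transitivity of $\equiv$. For $\rred{\onu}$, from $P \redd Q$ we obtain $\live(P)$ by the IH, and I would check that $\live(\nu{xy}P)$ follows: if $P \equiv \nu{vw}P'$ with $v,w \in \an(P')$, then up to $\alpha$-renaming $v,w \notin \{x,y\}$, so $\nu{xy}P \equiv \nu{vw}(\nu{xy}P')$ and $v,w \in \an(\nu{xy}P') = \an(P')\setminus\{x,y\}$; the forwarder clause is preserved since $\nu{xy}\mcl{E}[\cdots]$ is again of the form $\mcl{E}'[\cdots]$. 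For $\rred{\|}$, from $P \redd Q$ the IH gives $\live(P)$, and $\live(P \| R)$ follows similarly: in clause~1, $P \equiv \nu{vw}P'$ gives $P \| R \equiv \nu{vw}(P' \| R)$ (after $\alpha$-renaming so $v,w \notin \fn(R)$) with $v,w \in \an(P') \subseteq \an(P' \| R)$; in clause~2, the evaluation context absorbs the parallel component $R$.

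I do not expect a serious obstacle here: each step is a routine check once the right case split is made. The mild subtlety is the $\alpha$-renaming bookkeeping in the $\rred{\onu}$ and $\rred{\|}$ cases---ensuring the bound names of the reconstructed restriction avoid the newly introduced names---but this is standard and handled silently via the $\equiv_\alpha$ clause of structural congruence. The only point requiring a small argument is confirming that evaluation contexts (\Cref{d:redCtx}) compose appropriately so that clause~2 of liveness is stable under $\nu$ and $\|$, which is immediate from the grammar $\mcl{E} ::= [] \mid \mcl{E} \| P \mid \nu{xy}\mcl{E}$.
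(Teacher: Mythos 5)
Your proof is correct and matches the paper's intent: the paper dismisses this lemma with ``Immediate from \Cref{d:live}'', and your induction on the reduction derivation is precisely the routine unwinding of that immediacy (axioms directly exhibit a live redex, closure rules are absorbed by $\equiv$, the $\an$-clause, and the evaluation-context clause of liveness). No gap; your treatment of the $\alpha$-renaming and context-composition bookkeeping in the $\rred{\onu}$ and $\rred{\|}$ cases is the only content beyond what the paper leaves implicit.
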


\begin{proof}
Immediate from \Cref{d:live}.
\end{proof}

\begin{lemma}
\label{l:transWT}
    If $\type{\Gamma} \vdashC{\phi} \term{C}:\type{T}$ and $\encc{z}{C} \vdash \Gamma'$, then $\Gamma' = \ol{\enct{\Gamma}}, z:\enct{T}$.
\end{lemma}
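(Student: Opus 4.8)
\textbf{Proof plan for \Cref{l:transWT}.}
The plan is to prove this by induction on the derivation of $\type{\Gamma} \vdashC{\phi} \term{C}:\type{T}$, relying on the fact that APCP typing judgments are essentially syntax-directed, so that the typing context of $\encc{z}{C}$ is uniquely determined by the shape of $\encc{z}{C}$ (together with the choice of names involved). The key observation is that \Cref{t:transTypePres} (Type Preservation for the Translation) already tells us $\encc{z}{C} \vdashAst \ol{\enct{\Gamma}}, z:\enct{T}$; what remains is to argue that if \emph{any} derivation under `$\vdash$' exists, its context must coincide with this one. First I would record a general uniqueness lemma for APCP: if $P \vdash \Gamma_1$ and $P \vdash \Gamma_2$, then $\Gamma_1 = \Gamma_2$ (up to reordering). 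This follows by a straightforward induction on $P$, since each syntactic form of process is typed by exactly one rule (or, in the case of $\0$ and $\bullet$-removal, the extra $\bullet$ assignments are forced by the endpoints that genuinely occur free — and here one must be slightly careful, because Rule~$\bullet$ can silently add a closed endpoint; but since the statement already fixes the \emph{set} of free names through $\ol{\enct{\Gamma}}, z$, this ambiguity does not arise). Given this uniqueness lemma, the result is immediate: by \Cref{t:transTypePres} we have $\encc{z}{C} \vdashAst \ol{\enct{\Gamma}}, z:\enct{T}$, and since `$\vdash$' is a restriction of `$\vdashAst$' (every `$\vdash$'-derivation is a `$\vdashAst$'-derivation), the hypothesis $\encc{z}{C} \vdash \Gamma'$ gives $\encc{z}{C} \vdashAst \Gamma'$, whence $\Gamma' = \ol{\enct{\Gamma}}, z:\enct{T}$ by uniqueness.

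Alternatively — and perhaps more in the spirit of the surrounding development — I would prove the statement directly by induction on the typing derivation of $\term{C}$, inspecting each configuration typing rule (\scc{T-Main}, \scc{T-Child}, \scc{T-ParL}, \scc{T-ParR}, \scc{T-Res}, \scc{T-ResBuf}, \scc{T-ConfSub}) and its translation as given in \Cref{f:transConf,f:transBuf}. For each case, one reads off the shape of $\encc{z}{C}$, applies inversion on the APCP typing rules that must have been used to derive $\encc{z}{C} \vdash \Gamma'$, invokes the induction hypothesis on the translated sub-configurations (and, where buffers are involved, an analogous auxiliary statement for $\bencb{x}{P}{\bfr{\vec{m}}}$ proved by induction on the buffer typing derivation, matching the buffer-translation clauses in \Cref{f:transBuf}), and concludes that $\Gamma'$ must be exactly $\ol{\enct{\Gamma}}, z:\enct{T}$. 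The cases for $\term{\phi\,\mbb{M}}$ reduce to the corresponding statement for terms, which would again be proved by induction on the term typing derivation against the clauses of \Cref{f:transTerm1,f:transTerm2,f:transTerm3,f:transTerm4,f:transTerm5}.

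The main obstacle I anticipate is handling Rule~$\bullet$ in APCP: because it can add an assignment $x:\bullet$ to the context without any corresponding process structure, a naive uniqueness claim fails. The resolution is that \Cref{l:transWT} does not assert uniqueness of the derivation but only fixes the context to a specific value whose domain is already pinned down by $\ol{\enct{\Gamma}}$ and $z$; since \Cref{t:transTypePres} supplies a `$\vdashAst$'-derivation with precisely this context and the translated process's free names are exactly $\dom(\ol{\enct{\Gamma}}) \cup \{z\}$, no spurious $\bullet$-assignments can appear in $\Gamma'$ either. So the cleanest route is the first one above: invoke \Cref{t:transTypePres} plus a context-uniqueness lemma for `$\vdashAst$' restricted to processes whose free-name set is already known, which sidesteps having to reprove the entire translation case-by-case. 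I would therefore state and prove that auxiliary uniqueness lemma first, then derive \Cref{l:transWT} in two lines.
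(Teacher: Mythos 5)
Your preferred (first) route is essentially the paper's own proof: it invokes \Cref{t:transTypePres} to obtain $\encc{z}{C} \vdashAst \ol{\enct{\Gamma}}, z:\enct{T}$ and then argues that, because the rules of `$\vdash$' and `$\vdashAst$' differ only in the priority checks, derivations of the same process cannot assign different types to its endpoints, whence $\Gamma' = \ol{\enct{\Gamma}}, z:\enct{T}$. Your explicit context-determinacy lemma (with the Rule~$\bullet$ caveat handled via the already-fixed free-name set) is merely a slightly more formal packaging of the same argument, which the paper states informally.
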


\begin{proof}
    By \Cref{t:transTypePres} (type preservations for the encoding), $\encc{z}{C} \vdashAst \ol{\enct{\Gamma}}, z:\enct{T}$.
    Hence, the only free endpoints of $\encc{z}{C}$ are $z$ and the endpoints in $\ol{\enct{\Gamma}}$ (except endpoints of type $\bullet$).
    By \Cref{d:vdashAst}, the typing rules under `$\vdash$' and `$\vdashAst$' are nearly identical, so it is impossible to assign different types to endpoints in derivations of the same process under `$\vdash$' and `$\vdashAst$'.
    The same holds for recursion variables.
    Hence, it must be that $\Gamma' = \ol{\enct{\Gamma}}, z:\enct{T}$.
\end{proof}

The following result ensures that the translation of \ourGV terms and configurations to APCP processes is compositional with respect to reduction contexts.

\begin{lemma}
\label{l:transCtxts}
    Consider reduction contexts $\term{\mcl{R}},\term{\mcl{F}},\term{\mcl{G}}$ for \ourGV and $\mcl{E}$ for APCP, respectively (cf.\ \Cref{d:redCtx}). We have:
    \begin{itemize}
        \item If $\type{\Gamma} \vdashM \term{\mcl{R}[\mbb{M}]}: \type{T}$, then $\encc{z}{\mcl{R}[\mbb{M}]} = \mcl{E}[\encc{y}{\mbb{M}}]$, for some $\mcl{E}$ and $y$.

        \item If $\type{\Gamma} \vdashC{\phi} \term{\mcl{F}[\mbb{M}]}: \type{T}$, then $\encc{z}{\mcl{F}[\mbb{M}]} = \mcl{E}[\encc{y}{\mbb{M}}]$, for some $\mcl{E}$ and $y$.

        \item If $\type{\Gamma} \vdashC{\phi} \term{\mcl{G}[C]}: \type{T}$, then $\encc{z}{\mcl{G}[C]} = \mcl{E}[\encc{y}{C}]$, for some $\mcl{E}$ and $y$.
    \end{itemize}
\end{lemma}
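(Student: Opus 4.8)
The statement is really three parallel claims about compositionality of the translation with respect to contexts, and all three are proved by the same routine strategy: induction on the structure of the context. I would handle them in the order given (terms, then thread configurations, then configuration contexts), since the later cases build on the earlier ones. For each context clause, I would inspect the corresponding clause in the translation tables (\Cref{f:transTermShort} for terms, \Cref{f:transConfBufShort} for configurations and buffers), observe that the translation of, say, $\term{\mcl{R}~\mbb{M}'}$ applied to a term is built from the translation of the subterm occupying the hole by wrapping it in a fixed process skeleton (restrictions, parallel compositions, prefixes), and read off the APCP evaluation context $\mcl{E}$ from that skeleton.

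\textbf{Key steps.} First, for the term case: proceed by induction on $\term{\mcl{R}}$. In the base case $\term{\mcl{R}} = \term{[]}$, take $\mcl{E} = []$ and $y = z$. For each inductive clause I would unfold the relevant translation rule and identify the ambient APCP context. For instance, for $\term{\mcl{R}} = \term{\mcl{R}'~\mbb{M}'}$ the translation is $\nu{ab}(\encc{a}{\mcl{R}'[\mbb{M}]} \| \nu{cd}(b[c,z] \| d(e,f) \sdot \encc{e}{\mbb{M}'}))$ and by the IH $\encc{a}{\mcl{R}'[\mbb{M}]} = \mcl{E}'[\encc{y}{\mbb{M}}]$, so I set $\mcl{E} = \nu{ab}(\mcl{E}' \| \nu{cd}(b[c,z] \| d(e,f) \sdot \encc{e}{\mbb{M}'}))$, which is a valid evaluation context by \Cref{d:redCtx}. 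The clause $\term{\mcl{R}\xsub{ \mbb{M}'/x }}$ uses Rule~\scc{T-Sub}, giving $\nuf{xa}(\mcl{E}'[\encc{y}{\mbb{M}}] \| \encc{a}{\mbb{M}'})$; again the outer structure is a restriction and a parallel composition. The clause $\term{\mbb{M}'\xsub{ \mcl{R}/x }}$ is symmetric. All remaining clauses ($\term{\sff{spawn}~\mcl{R}}$, $\term{\sff{send}~\mcl{R}}$, $\term{\sff{recv}~\mcl{R}}$, $\term{\sff{let}\,(x,y)=\mcl{R}\,\sff{in}\,\mbb{M}}$, $\term{\sff{select}\,\ell\,\mcl{R}}$, $\term{\sff{case}\,\mcl{R}\,\sff{of}\,\{i:\mbb{M}\}}$, $\term{\sff{send}'(\mbb{M},\mcl{R})}$) are treated identically: in each the translation places the translation of the subterm in the hole only under restrictions and parallel compositions (the guarding prefixes are always on other endpoints), so an evaluation context is produced. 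For the thread-configuration case, the thread contexts $\term{\mcl{F}}$ are either $\term{\phi\,\mcl{R}}$ — which reduces to the term case via $\encc{z}{\phi\,\mcl{R}[\mbb{M}]} = \encc{z}{\mcl{R}[\mbb{M}]}$ and the already-established first item — or $\term{C\xsub{ \mcl{R}/x }}$, handled like the explicit-substitution clause above using Rule~\scc{T-ConfSub}. For the configuration-context case, induction on $\term{\mcl{G}}$: the base case is immediate; $\term{\mcl{G} \prl C}$ and its mirror use the translation of Rules~\scc{T-ParL}/\scc{T-ParR} (parallel composition, possibly with an extra $\nu$); $\term{\mcl{G}\xsub{ \mbb{M}/x }}$ uses Rule~\scc{T-ConfSub}; and the buffered-restriction clause $\term{\nu{x\bfr{\vec{m}}y}\mcl{G}}$ is where some care is needed — the translation is $\nu{xy}\bencb{x}{\encc{z}{\mcl{G}[C]}}{\bfr{\vec{m}}}$, and I would invoke \encpropref{i:transBufCtx}, which states that $\bencb{x}{\mcl{E}'[P]}{\bfr{\vec{m}}} \equiv \mcl{E}'\big[\bencb{x}{P}{\bfr{\vec{m}}}\big]$ when $x \notin \fn(\mcl{E}')$, to push the buffer translation past $\mcl{E}'$; since typability forces $x,y$ not to occur free in the context $\term{\mcl{G}}$, the side condition holds, and we recover the required form up to structural congruence. (If strict equality rather than congruence is wanted in the statement, I would note that the lemma as used downstream tolerates $\equiv$, or fold the $\equiv$ into the definition of $\mcl{E}$ being "the" context.)

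\textbf{Main obstacle.} None of this is deep; the only genuine subtlety is the buffered-restriction clause of $\term{\mcl{G}}$, where the buffer translation $\benc{\cdot}{\cdot}{\cdot}$ is itself a non-trivial context-former rather than a plain process skeleton, so "reading off $\mcl{E}$" is not immediate — that is exactly what \encpropref{i:transBufCtx} is for, and getting the side condition $x,y \notin \fn$ of the surrounding context from well-typedness (via \encpropref{i:transConfFn}) is the one place the hypothesis $\type{\Gamma} \vdashC{\phi} \term{\mcl{G}[C]}: \type{T}$ actually does work. Everything else is bookkeeping over the translation tables, and I would present it compactly by doing one representative inductive clause per item in full and asserting the rest are analogous.
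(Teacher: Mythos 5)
The paper states \Cref{l:transCtxts} without any proof (it is used only inside the proof of \Cref{t:confTransReddL}), so there is no official argument to compare against; your proof by structural induction on the three sorts of contexts, reading the APCP evaluation context off the corresponding clauses of \Cref{f:transTermShort,f:transConfBufShort}, is exactly the routine argument the authors evidently intend, and for most clauses it works as you describe.

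Two points do not hold up as stated, both in places you partly anticipated. First, in the buffered-restriction case your discharge of the side condition of \encpropref{i:transBufCtx} --- ``typability forces $x,y \notin \fn(\term{\mcl{G}})$'' --- is wrong: in $\term{\nu{x\bfr{\vec{m}}y}\mcl{G}'}$ the endpoints $\term{x},\term{y}$ are bound by that very restriction, and Rule~\scc{T-Res} only requires them to be used somewhere in $\term{\mcl{G}'[C]}$, so they may occur free in the context part $\term{\mcl{G}'}$ (hence in $\mcl{E}'$). What actually saves you is linearity of APCP: $x$ occurs free either in $\mcl{E}'$ or in $\encc{y}{C}$, not both; in the first case you cannot use \encpropref{i:transBufCtx}, but you do not need to, since $\bencb{x}{P}{\bfr{\vec{m}}}$ is itself built from restrictions and parallel compositions around its continuation; in the second case the side condition holds. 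Even then there is a wrinkle you do not mention: for a non-empty buffer, $\bencb{x}{P}{\bfr{\vec{m}}}$ applies the substitution $\{d/x\}$ to its continuation, so when $x$ occurs free in $\term{C}$ (which is typable) the process sitting in the hole is $\encc{y}{C}\{d/x\}$, i.e.\ the translation of $\term{C}$ with a free endpoint renamed, not literally $\encc{y}{C}$; so strict equality needs either this renaming built into the statement or the lemma read up to $\equiv$ and endpoint renaming --- which is all its sole use in \Cref{t:confTransReddL} requires, as the lemma is applied there under $\equiv$ anyway. Second, the same equality-versus-congruence slack is needed well before the buffer case: the clauses $\term{\mbb{M}\xsub{\mcl{R}/x}}$, $\term{\sff{send}'(\mbb{M},\mcl{R})}$, $\term{C\xsub{\mcl{R}/x}}$ and the main-thread side of Rule~\scc{T-ParL} place the translated hole to the \emph{right} of a parallel composition, which the grammar $\mcl{E} ::= [] \mid \mcl{E} \| P \mid \nu{xy}\mcl{E}$ of \Cref{d:redCtx} does not literally generate; you flag the $\equiv$-issue only for buffers, but either the grammar must be read as also containing $P \| \mcl{E}$ or the whole lemma must be stated up to $\equiv$. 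Both fixes are easy, but as written these two steps are gaps.
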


\thmConfTransReddL*

\begin{proof}
    By \Cref{l:transWT}, $\encc{z}{C} \vdash z:\bullet$.
    Moreover, by \Cref{l:reddSoLive}, the assumption $\encc{z}{C} \redd Q$ ensures that $\encc{z}{C}$ is live (\Cref{d:live}).
    Using liveness of $\encc{z}{C}$ we will find a $Q'$ such that $\encc{z}{C} \reddL Q'$. Note that it need not be the case that $Q' = Q$, i.e., we may find a different reduction than $\encc{z}{C} \redd Q$.

    By definition, liveness can occur in two forms, which may hold multiple times and simultaneously, namely: $\encc{z}{C} \equiv \nu{xy}P'$ with $x,y \in \an(P')$, or $\encc{z}{C} \equiv \mcl{E}[\nu{yz}(x \fwd y \| P')]$ for some $\mcl{E}$.
    We consider two cases depending on whether the former case holds or not.
    \begin{itemize}
        \item Suppose $\encc{z}{C} \equiv \nu{xy}P'$ with $x,y \in \an(P')$.
            Then $x$ and $y$ are the subjects of output / input / selection / branching in $P'$.
            By the well-typedness of $\encc{z}{C}$, the types of $x$ and $y$ in $P'$ are dual---they are the subjects of complementary actions in $P'$.
            As a representative example, assume $x$ and $y$ form an output / input pair.
            Since $x,y \in \an(P')$, the output on $x$ and the input on $y$ must occur in parallel in $P'$.
            Then $\encc{z}{C} \equiv \nu{xy}(\mcl{E}_x[x[a,b]] \| \mcl{E}_y[y(c,d) \sdot P'']) \equiv \mcl{E}'[\nu{xy}(x[a,b] \| y(c,d) \sdot P'')] \reddL \mcl{E}'[P''\subst{a/c,b/d}] = Q'$, proving the thesis.

        \item Suppose there are no $x',y'$ such that $\encc{z}{C} \equiv \nu{x'y'}P''$ with $x',y' \in \an(P'')$.
            Since $\encc{z}{C}$ is live, it must then be that $\encc{z}{C} \equiv \mcl{E}[\nu{yz}(x \fwd y \| P')]$, for some $\mcl{E}$. We distinguish two possibilities.

            \sloppy
            The first is when the restriction on $y,z$ is annotated, i.e., $\encc{z}{C} \equiv \mcl{E}[\nuf{yz}(x \fwd y \| P')]$, and $\bcont{x,y}(\encc{z}{C})$ (\Cref{d:bcont}) holds.
            In such a case we have  $\encc{z}{C} \reddL \mcl{E}[P'\subst{x/z}] = Q'$ directly, proving the thesis.

            Otherwise, we cannot directly mimick the reduction from $\encc{z}{C}$ under $\reddL$.
            The rest of the analysis depends on how the forwarder $x \fwd y$ can originate from the translation from $\term{C}$ into $\encc{z}{C}$.
            In each case, this involves the occurrence of a variable $\term{x}$ in $\term{C}$, for which the well-typedness of $\term{C}$ permits three cases:
            \begin{itemize}
            	\item $\term{x}$ occurs under a reduction context subject to an explicit substitution;
                \item $\term{x}$ occurs as the replacement term in an explicit substitution;
                \item $\term{x}$ occurs as the channel of a $\term{\sff{send'}}$ / $\term{\sff{select}}$ / $\term{\sff{recv}}$ / $\term{\sff{case}}$ / buffered message (further referred to as \emph{communication primitive}) under a thread context.
            \end{itemize}

            Note that there may be multiple live forwarders in $\encc{z}{C}$, each of which falls under one these three cases.
            Based on this observation, we consider three possibilities for determining the desired lazy forwarder reduction to $Q'$: the first case holds, the second case holds, or neither of the first two cases holds.

            \begin{itemize}
                \item The variable $\term{x}$ occurs in $\term{C}$ under a reduction context subject to an explicit substitution, i.e.,
                    \[ \term{C} \equivC \term{\mcl{G}\Big[\mcl{F}\big[(\mcl{R}[x])\xsub{ N / x }\big]\Big]}. \]
                    Given the translation of explicit substitutions in \Cref{f:transTermShort}, by \Cref{l:transCtxts}, we then have the following:
                    \[ \encc{z}{C} \equiv \mcl{E}'\big[\nuf{xb}(\mcl{E}''[x \fwd y] \| \encc{b}{N})\big] \equiv {\mcl{E}'''[\nuf{xb}(x \fwd y \| \encc{b}{N})]} \]

                    We argue that $\bcont{x,y}(\encc{z}{C})$ holds, enabling a forwarder reduction under $\reddL$.
                    Towards a contradiction, suppose it does not hold.
                    Then $y$ would be bound to the continuation endpoint of a forwarded output or selection.
                    However, the forwarder $x \fwd y$ is translated from the occurrence of the variable $\term{x}$ under a reduction context $\term{\mcl{R}}$, where the name $y$ is the observable endpoint of the translated term.
                    None of the constructors of reduction contexts translate in such a way that this would bind the subterm's observable endpoint to the continuation endpoint of a forwarded output or selection (cf.\ \Cref{f:transTermShort}).
                    This leads to a contradiction, and so $\bcont{x,y}(\encc{z}{C})$ must hold.

                    Let $Q' = \mcl{E}'''[\encc{y}{N}]$.
                    By Rule~$\rred{\overset{\leftrightarrow}{\scc{Id}}}$ in \Cref{f:apcpLazy}, $\nuf{xb}(x \fwd y \| \encc{b}{N}) \reddL^{(x,y)} \encc{y}{N}$.
                    Then, since $\encc{z}{C} \equiv \mcl{E}'''[\nuf{xb}(x \fwd y \| \encc{b}{N})]$, by Rules~$\rred{\equiv},\rred{\onu},\rred{\|}$, $\encc{z}{C} \reddL^{(x,y)} Q'$.
                    Since $\bcont{x,y}(\encc{z}{C})$ holds, by Rule~$\rred{(x,y)}$, $\encc{z}{C} \reddL Q'$, proving the thesis.

                \item The variable $\term{x}$ occurs in $\term{C}$ as the replacement term in an explicit substitution, i.e.,
                    \[ \term{C} \equivC \term{\mcl{G}\Big[\mcl{F}\big[\mbb{N}\xsub{ x/z }\big]\Big]}. \]
                    Given the translation of explicit substitutions in \Cref{f:transTermShort}, by \Cref{l:transCtxts}, we then have the following:
                    \[ \encc{z}{C} \equiv \mcl{E}'[\nuf{zy}(\encc{w}{\mbb{N}} \| x \fwd y)] \]

                    We argue that $\bcont{x,y}(\encc{z}{C})$ holds, enabling a lazy forwarder reduction.
                    Towards a contradiction, suppose it does not hold.
                    Then $x$ would be bound to the continuation endpoint of a forwarded output or selection.
                    The forwarder $x \fwd y$ is translated from the occurrence of the variable $\term{x}$ as the replacement term in an explicit substitution.
                    In this case, since $\type{\emptyset} \vdashC{\main} \term{C} : \type{\1}$, the variable $\term{x}$ is bound somehow in $\term{C}$.
                    However, none of the well-typed ways of binding $\term{x}$ translates in such a way that $x$ would be bound to the continuation endpoint of a forwarded output or selection.
                    Here again we reach a contradiction, and conclude that $\bcont{x,y}(\encc{z}{C})$ must hold.

                    Let $Q' = \mcl{E}'[\encc{w}{\mbb{N}}\subst{x/z}]$.
                    By Rule~$\rred{\overset{\leftrightarrow}{\scc{Id}}}$ in \Cref{f:apcpLazy}, $\nuf{zy}(\encc{w}{\mbb{N}} \| x \fwd y) \reddL^{(x,y)} \encc{w}{\mbb{N}}\subst{x/z}$.
                    Then, since $\encc{z}{C} \equiv \mcl{E}'[\nuf{zy}(\encc{w}{\mbb{N}} \| x \fwd y)]$, by Rules~$\rred{\equiv},\rred{\onu},\rred{\|}$, $\encc{z}{C} \reddL^{(x,y)} Q'$.
                    Since $\bcont{x,y}(\encc{z}{C})$ holds, by Rule~$\rred{(x,y)}$, $\encc{z}{C} \reddL Q'$, proving the thesis.

                \item No variable $\term{x'}$ occurs in $\term{C}$ under a reduction context subject to an explicit substitution, or as the replacement term in an explicit substitution.
                    Since there is a live forwarder in $\encc{z}{C}$, it must then be that the variable $\term{x}$ (which translates to the live forwarder) occurs as the channel of a communication primitive under a thread context.

                    Given at least one communication primitive with a variable as channel in a thread context, take the communication primitive whose translated type has the least priority.
                    As a representative example, we consider the case where the communication primitive is a $\term{\sff{send'}}$ on variable $\term{x}$; the analysis is analogous for the other communication primitives.
                    By typability, there is a corresponding $\term{\sff{recv}}$ on some variable $\term{y}$ connected by a buffer $\term{\nu{x\bfr{\epsilon}y}}$.
                    We argue that the translation of this $\term{\sff{recv}}$ on $\term{y}$ is not blocked, i.e.\ occurs under a thread context.
                    That is, we show the following:
                    \[ \term{C} \equivC \term{\mcl{G}\big[\nu{x\bfr{\epsilon}y}(\mcl{F}[\sff{send'}(M,x)] \prl \mcl{F'}[\sff{recv}~y])\big]} \]
                    In this step, we crucially rely on the assumption that $\encc{z}{C}$ is well-typed with satisfiable priority requirements.

                    Towards a contradiction, assume that the translation of the $\term{\sff{recv}}$ on $\term{y}$ is blocked by a (forwarded) input or branch.
                    The analysis is analogous for inputs and branches, so w.l.o.g.\ assume that we are only dealing with (forwarded) inputs.
                    Since $z$ is the only possible free endpoint of $\encc{z}{C}$ and has type $\bullet$, this blocking (forwarded) input must be connected to a corresponding (forwarded) output.
                    The (forwarded) output may in turn be blocked as well.
                    Following the priorities of the involved inputs and outputs, we follow a chain of such blocking (forwarded) inputs, until we eventually find a non-blocked pair of (forwarded) input and (forwarded) output.
                    Since we have assumed that there are is no pair of active names in $\encc{z}{C}$, the pair must consist of a forwarded input and a forwarded output.
                    Then, since we have established that any live forwarders can only be translations of communication primitives, it must be that the pair is the translation of a $\term{\sff{send'}}$ and $\term{\sff{recv}}$ pair.
                    The priority of the types associated to this pair must be lower than the priority associated to the translation of the $\term{\sff{recv}}$ on $\term{y}$, for otherwise it would not be possible for the just discoverd pair to block it.
                    However, we initially took the $\term{\sff{send'}}$ with the least priority, and by duality the $\term{\sff{recv}}$ on $\term{y}$ has the same priority.
                    This contradicts that the just discovered pair has lower priority.
                    Hence, the translation of the $\term{\sff{send'}}$ on $\term{x}$ and the $\term{\sff{recv}}$ on $\term{y}$ are not blocked.

                    Thus, we have established the following:
                    \[ \term{C} \equivC \term{\mcl{G}\big[\nu{x\bfr{\epsilon}y}(\mcl{F}[\sff{send'}(M,x)] \prl \mcl{F'}[\sff{recv}~y])\big]} \]
                    Given the translations in \Cref{f:transTermShort,f:transConfBufShort}, by \Cref{l:transCtxts}, we then have the following:
                    \begin{align*}
                        \encc{z}{C} &\equiv \mcl{E}\big[\nu{xy} \begin{array}[t]{@{}l@{}}
                            (\mcl{E'}[\nu{ab}(a(c,d) \sdot \encc{c}{M} \| \nu{ef}(x \fwd e \| \nu{gh}(f[b,g] \| h \fwd q)))]
                            \\
                            {} \| \mcl{E''}[\nu{a'b'}(y \fwd a' \| b'(c',d') \sdot \nu{e'f'}(q'[c',e'] \| f'(g',h') \sdot d' \fwd g'))]) \big]
                        \end{array}
                        \\
                        &\equiv \mcl{E'''}[\nu{xy}(\nu{ef}(x \fwd e \| f[b,g]) \| \nu{a'b'}(y \fwd a' \| b'(c',d') \sdot P))]
                    \end{align*}
                    Let $Q' = \mcl{E'''}[P\subst{b/c',g/d'}]$.
                    By Rule~$\rred{\overset{\leftrightarrow}{\tensor \parr}}$ in \Cref{f:apcpLazy}, we have the following:
                    \[ \nu{xy}(\nu{ef}(x \fwd e \| f[b,g]) \| \nu{a'b'}(y \fwd a' \| b'(c',d') \sdot P)) \reddL^\cdot P\subst{b/c',g/d'} \]
                    Then, by Rules~$\rred{\equiv},\rred{\onu},\rred{\|}$, $\encc{z}{C} \reddL^\cdot Q'$.
                    Finally, by Rule~$\rred{\cdot}$, $\encc{z}{C} \reddL Q'$, proving the thesis.
                    \qedhere
            \end{itemize}
    \end{itemize}
\end{proof}

\thmOurGVdfFree*

\begin{proof}
    By \Cref{l:transWT}, $\encc{z}{C} \vdash z:\bullet$.
    Then $\nu{z\_}\encc{z}{C} \vdash \emptyset$.
    Hence, by \Cref{t:APCPdlFree} (deadlock-freedom), either $\nu{z\_}\encc{z}{C} \equiv \0$ or $\nu{z\_}\encc{z}{C} \redd Q$ for some $Q$.
    The rest of the analysis depends on which possibility holds:
    \begin{itemize}
        \item We have $\nu{z\_}\encc{z}{C} \equiv \0$.
            We straightforwardly deduce from the well-typedness and translation of $\term{C}$ that $\term{C} \equivC \term{\main\,()}$, proving the thesis.

        \item We have $\nu{z\_}\encc{z}{C} \redd Q$ for some $Q$.
            We argue that this implies that $\encc{z}{C} \redd Q_0$, for some $Q_0$.
            The analysis depends on whether this reduction involves the endpoint $z$ or not:
            \begin{itemize}
                \item The reduction involves $z$.
                    Since $z$ is of type $\bullet$ in $\encc{z}{C}$, then $z$ must occur in a forwarder, and the reduction involves a forwarder $z \fwd x$ for some $x$.
                    Since $x$ is not free in $\encc{z}{C}$, there must be a restriction on $x$.
                    Hence, the forwarder $z \fwd x$ can also reduce with this restriction, instead of with the restriction $\nu{z\_}$.
                    This means that $\encc{z}{C} \redd Q_0$ for some $Q_0$.

                \item The reduction does not involve  $z$, in which case the reduction must be internal to $\encc{z}{C}$.
                    That is, $\encc{z}{C} \redd Q_0$ for some $Q_0$ and $Q \equiv \nu{z\_}Q_0$.
            \end{itemize}

            We have just established that $\encc{z}{C} \redd Q_0$ for some $Q_0$.
            By \Cref{t:confTransReddL}, this implies that $\encc{z}{C} \reddL Q'$ for some $Q'$.
            Then, by \Cref{t:soundness} (soundness of reduction for configurations), there exists $\term{D}$ such that $\term{C} \reddC^+ \term{D}$.
            Hence, there exists $\term{D_0}$ such that $\term{C} \reddC \term{D_0}$, proving the thesis.
            \qedhere
    \end{itemize}
\end{proof}

\section{Example: Transference of Deadlock-freedom from APCP to \texorpdfstring{\Cref{x:gvRed}}{Example~\ref{x:gvRed}}}
\label{a:example}

Consider again the configuration $\term{C_1}$ from \Cref{x:gvRed}:
\[
    \term{C_1} = \term{\main\, (\sff{let}\, (f,g) = \sff{new}\, \sff{in}\,
    \sff{let}\, (h,k) = \sff{new}\, \sff{in}\,
    \sff{spawn}~\left(\begin{tarr}[c]{l}
            \sff{let}\, f' = (\sff{send}~(u,f))\, \sff{in} \\
            \quad \sff{let}\, (v',h') = (\sff{recv}~h)\, \sff{in}\, (),
            \\[4pt]
            \sff{let}\, k' = (\sff{send}~(v,k))\, \sff{in} \\
            \quad \sff{let}\, (u',g') = (\sff{recv}~g)\, \sff{in}\, ()
    \end{tarr}\right))}
\]
We verify the deadlock-freedom of $\term{C_1}$ by translating its typing derivation into APCP, and checking whether priority requirements can be satisfied.
First, the typing derivation of $\term{C_1}$ (only showing types and rules applied), where $\type{S} = \type{{!}\sff{end} \sdot \sff{end}}$:
\[
    \lambda_1 =
    \inferrule*[Right=$\sccsm{T-App}$,vcenter]
    {
        \inferrule*[Right=$\sccsm{T-Abs}$]
        {
            \inferrule*[Right=$\sccsm{T-EndL}$]
            {
                \inferrule*[Right=$\sccsm{T-Split}$]
                {
                    \inferrule*[Right=$\sccsm{T-Recv}$]
                    {
                        \inferrule*[Right=$\sccsm{T-Var}$]
                        { }
                        { \term{h}:\type{\ol{S}} \vdashM \type{\ol{S}} }
                    }
                    { \term{h}:\type{\ol{S}} \vdashM \type{\sff{end} \times \sff{end}} }
                    \\
                    \inferrule*[Right=$\sccsm{T-EndL}$]
                    {
                        \inferrule*[Right=$\sccsm{T-EndL}$]
                        {
                            \inferrule*[Right=$\sccsm{T-Unit}$]
                            { }
                            { \type{\emptyset} \vdashM \type{\1} }
                        }
                        { \term{v'}:\type{\sff{end}} \vdashM \type{\1} }
                    }
                    { \term{v'}:\type{\sff{end}}, \term{h'}:\type{\sff{end}} \vdashM \type{\1} }
                }
                { \term{h}:\type{\ol{S}} \vdashM \type{\1} }
            }
            { \term{h}:\type{\ol{S}}, \term{f'}:\type{\sff{end}} \vdashM \type{\1} }
        }
        { \term{h}:\type{\ol{S}} \vdashM \type{\sff{end} \lolli \1} }
        \\
        \inferrule*[Right=$\sccsm{T-Send}$]
        {
            \inferrule*[Right=$\sccsm{T-Pair}$]
            {
                \inferrule*[Right=$\sccsm{T-EndR}$]
                { }
                { \type{\emptyset} \vdashM \type{\sff{end}} }
                \\
                \inferrule*[Right=$\sccsm{T-Var}$]
                { }
                { \term{f}:\type{S} \vdashM \type{S} }
            }
            { \term{f}:\type{S} \vdashM \type{\sff{end} \times S} }
        }
        { \term{f}:\type{S} \vdashM \type{\sff{end}} }
    }
    { \term{f}:\type{S}, \term{h}:\type{\ol{S}} \vdashM \type{\1} }
\]
\[
    \lambda_2 =
    \inferrule*[Right=$\sccsm{T-App}$,vcenter]
    {
        \inferrule*[Right=$\sccsm{T-Abs}$]
        {
            \inferrule*[Right=$\sccsm{T-EndL}$]
            {
                \inferrule*[Right=$\sccsm{T-Split}$]
                {
                    \inferrule*[Right=$\sccsm{T-Recv}$]
                    {
                        \inferrule*[Right=$\sccsm{T-Var}$]
                        { }
                        { \term{g}:\type{\ol{S}} \vdashM \type{\ol{S}} }
                    }
                    { \term{g}:\type{\ol{S}} \vdashM \type{\sff{end} \times \sff{end}} }
                    \\
                    \inferrule*[Right=$\sccsm{T-EndL}$]
                    {
                        \inferrule*[Right=$\sccsm{T-EndL}$]
                        {
                            \inferrule*[Right=$\sccsm{T-Unit}$]
                            { }
                            { \type{\emptyset} \vdashM \type{\1} }
                        }
                        { \term{g'}:\type{\sff{end}} \vdashM \type{\1} }
                    }
                    { \term{u'}:\type{\sff{end}}, \term{g'}:\type{\sff{end}} \vdashM \type{\1} }
                }
                { \term{g}:\type{\ol{S}} \vdashM \type{\1} }
            }
            { \term{g}:\type{\ol{S}}, \term{k'}:\type{\sff{end}} \vdashM \type{\1} }
        }
        { \term{g}:\type{\ol{S}} \vdashM \type{\sff{end} \lolli \1} }
        \\
        \inferrule*[Right=$\sccsm{T-Send}$]
        {
            \inferrule*[Right=$\sccsm{T-Pair}$]
            {
                \inferrule*[Right=$\sccsm{T-EndR}$]
                { }
                { \type{\emptyset} \vdashM \type{\sff{end}} }
                \\
                \inferrule*[Right=$\sccsm{T-Var}$]
                { }
                { \term{k}:\type{S} \vdashM \type{S} }
            }
            { \term{k}:\type{S} \vdashM \type{\sff{end} \times S} }
        }
        { \term{k}:\type{S} \vdashM \type{\sff{end}} }
    }
    { \term{g}:\type{\ol{S}}, \term{k}:\type{S} \vdashM \type{\1} }
\]
\[
    \inferrule*[right=$\sccsm{T-Main}$]
    {
        \inferrule*[Right=$\sccsm{T-Split}$]
        {
            \inferrule*[Right=$\sccsm{T-New}$]
            { }
            { \type{\emptyset} \vdashM \type{S \times \ol{S}} }
            \\
            \inferrule*[Right=$\sccsm{T-Split}$]
            {
                \inferrule*[Right=$\sccsm{T-New}$]
                { }
                { \type{\emptyset} \vdashM \type{\ol{S} \times S} }
                \\
                \inferrule*[Right=$\sccsm{T-Spawn}$]
                {
                    \inferrule*[Right=$\sccsm{T-Pair}$]
                    {
                        \lambda_1
                        \\
                        \lambda_2
                    }
                    { \term{f}:\type{S}, \term{g}:\type{\ol{S}}, \term{h}:\type{\ol{S}}, \term{k}:\type{S} \vdashM \type{\1 \times \1} }
                }
                { \term{f}:\type{S}, \term{g}:\type{\ol{S}}, \term{h}:\type{\ol{S}}, \term{k}:\type{S} \vdashM \type{\1} }
            }
            { \term{f}:\type{S}, \term{g}:\type{\ol{S}} \vdashM \type{\1} }
        }
        { \type{\emptyset} \vdashM \type{\1} }
    }
    { \type{\emptyset} \vdashC{\main} \term{C_1} : \type{\1} }
\]

To verify the deadlock-freedom of $\term{C_1}$, we have to translate this typing derivation into APCP, and check that its priority requirements are satisfiable.
We avoid writing out whole typing derivations as per the translation.
Instead, we first inspect the derivations in \Cref{a:transFull}, annotate all types with priorities, and write out the priority requirements that would need to be satisfied.
For example, annotating the derivation of the translation of Rule~\scc{T-Abs}:
\begin{mathpar}
    \enc{z}{
        \inferrule[T-Abs]{
            \type{\Gamma}, \term{x}: \type{T} \vdashM \term{M}: \type{U}
        }{
            \type{\Gamma} \vdashM \term{\lambda x \sdot M}: \type{T \lolli U}
        }
    }
    = \inferrule{
        \inferrule*{
            \inferrule{
                \inferrule*{
                    \inferrule{
                        \inferrule{ }{
                            a[c,e] \vdash a: \ol{\enct{\type{T}}} \tensor^{\pri_1} \bullet, c: \enct{\type{T}}, e: \bullet
                        }
                    }{
                        a[c,e] \vdash a: \ol{\enct{\type{T}}} \tensor^{\pri_1} \bullet, c: \enct{\type{T}}, e: \bullet, f: \bullet
                    }
                }{
                    \nu{ef}a[c,e] \vdash a: \ol{\enct{\type{T}}} \tensor^{\pri_1} \bullet, c: \enct{\type{T}}
                }
                \\
                \encc{b}{M} \vdash \ol{\enct{\type{\Gamma}}}, x: \ol{\enct{\type{T}}}, b: \enct{\type{U}}
            }{
                \nu{ef}a[c,e] \| \encc{b}{M} \vdash \ol{\enct{\type{\Gamma}}}, a: \ol{\enct{\type{T}}} \tensor^{\pri_1} \bullet, b: \enct{\type{U}}, c: \enct{\type{T}}, x: \ol{\enct{\type{T}}}
            }
        }{
            \nuf{cx}(\nu{ef}a[c,e] \| \encc{b}{M}) \vdash \ol{\enct{\type{\Gamma}}}, a: \ol{\enct{\type{T}}} \tensor^{\pri_1} \bullet, b: \enct{\type{U}}
        }
        \pri_2 < \pr(\ol{\enct{\Gamma}})
    }{
        z(a,b) \sdot \nuf{cx}(\nu{ef}a[c,e] \| \encc{b}{M}) \vdash \ol{\enct{\type{\Gamma}}}, z: (\ol{\enct{\type{T}}} \tensor^{\pri_1} \bullet) \parr^{\pri_2} \enct{\type{U}}
    }
\end{mathpar}
Now we can annotate the short version of the translation of Rule~\scc{T-Abs} from \Cref{f:transTermShort} with the priority requirement:
\[
    \enc{z}{
        \inferrule[T-Abs]{
            \type{\Gamma}, \term{x}: \type{T} \vdashM \term{M}: \type{U}
        }{
            \type{\Gamma} \vdashM \term{\lambda x \sdot M}: \type{T \lolli U}
        }
    }
    =
    \inferrule*[fraction={===},vcenter]
    {
        \encc{b}{M} \vdash \ol{\enct{\type{\Gamma}}}, x: \ol{\enct{\type{T}}}, b: \enct{\type{U}}
        \\
        \pri_2 < \pr(\ol{\enct{\Gamma}})
    }
    { z(a,b) \sdot \nuf{cx}(\nu{ef}a[c,e] \| \encc{b}{M}) \vdash \ol{\enct{\type{\Gamma}}}, z: (\ol{\enct{\type{T}}} \tensor^{\pri_1} \bullet) \parr^{\pri_2} \enct{\type{U}} }
\]
We can do this for all typing rules.
It then suffices to only translate the types of a \ourGV configuration, assigning new priorities whenever new connectives are introduced, and annotating the priority requirements.
The configuration is then deadlock-free if all priority requirements are satisfiable.

Let us apply this to $\term{C_1}$.
We repeat its typing derivation, but this time translate the types to APCP, and add priorities and priority requirements.
Here, we need two versions of $\enct{S}$, as they need different priorities: $A_1 = \enct{S} = (\bullet \tensor^{\pri_1} \bullet) \parr^{\pri_2} \bullet$ and $A_2 = \enct{S} = (\bullet \tensor^{\pri_3} \bullet) \parr^{\pri_4} \bullet$.
\[
    \pi_1 =
    \inferrule*[right=$\sccsm{T-New}$]
    { \scriptstyle \pri_5,\pri_7 < \pri_2 }
    { a_1:(A_1 \parr^{\pri_5} \bullet) \tensor^{\pri_6} (\ol{A_1} \parr^{\pri_7} \bullet) }
\]
\[
    \pi_2 =
    \inferrule*[right=$\sccsm{T-New}$]
    { \scriptstyle \pri_8,\pri_{10} < \pri_4 }
    { a_2:(\ol{A_2} \parr^{\pri_8} \bullet) \tensor^{\pri_9} (A_2 \parr^{\pri_{10}} \bullet) }
\]
\[
    \pi_3 =
    \inferrule*[right=$\sccsm{T-Abs}$]
    {
        \inferrule*[Right=$\sccsm{T-EndL}$]
        {
            \inferrule*[Right=$\sccsm{T-Split}$]
            {
                \inferrule*[Right=$\sccsm{T-Recv}$]
                {
                    \inferrule*[Right=$\sccsm{T-Var}$]
                    { }
                    { h:A_2, a_6:\ol{A_2} }
                    \\
                    {\scriptstyle \pri_4 < \pri_{17} }
                }
                { h:A_2, a_5:(\bullet \parr^{\pri_{16}} \bullet) \tensor^{\pri_{17}} (\bullet \parr^{\pri_{18}} \bullet) }
                \\
                \inferrule*[Right=$\sccsm{T-EndL}$]
                {
                    \inferrule*[Right=$\sccsm{T-EndL}$]
                    {
                        \inferrule*[Right=$\sccsm{T-Unit}$]
                        { }
                        { b_1:\bullet }
                    }
                    { v':\bullet, b_1:\bullet }
                }
                { v':\bullet, h':\bullet, b_1:\bullet }
            }
            { h:A_2, b_1:\bullet }
        }
        { h:A_2, f':\bullet, b_1:\bullet }
        \\
        {\scriptstyle \pri_{15} < \pri_4 }
    }
    { h:A_2, a_4:(\bullet \tensor^{\pri_{14}} \bullet) \parr^{\pri_{15}} \bullet }
\]
\[
    \pi_4 =
    \inferrule*[right=$\sccsm{T-App}$]
    {
        \pi_3
        \\
        \inferrule*[Right=$\sccsm{T-Send}$]
        {
            \inferrule*[Right=$\sccsm{T-Pair}$]
            {
                \inferrule*[Right=$\sccsm{T-EndR}$]
                { }
                { e_3:\bullet }
                \\
                \inferrule*[Right=$\sccsm{T-Var}$]
                { }
                { f:\ol{A_1}, g_2:A_1 }
                \\
                {\scriptstyle \pri_{21} < \pri_2 }
            }
            { f:\ol{A_1}, a_7:(\bullet \parr^{\pri_{19}} \bullet) \tensor^{\pri_{20}} (A_1 \parr^{\pri_{21}} \bullet) }
        }
        { f:\ol{A_1}, e_2:\bullet }
        \\
        {\scriptstyle \pri_{14} < \pri_2 }
    }
    { f:\ol{A_1}, h:A_2, e_1:\bullet }
\]
\[
    \pi_5 =
    \inferrule*[Right=$\sccsm{T-Abs}$]
    {
        \inferrule*[Right=$\sccsm{T-EndL}$]
        {
            \inferrule*[Right=$\sccsm{T-Split}$]
            {
                \inferrule*[Right=$\sccsm{T-Recv}$]
                {
                    \inferrule*[Right=$\sccsm{T-Var}$]
                    { }
                    { g:A_1, a_{10}:\ol{A_1} }
                    \\
                    {\scriptstyle \pri_2 < \pri_{25}}
                }
                { g:A_1, a_9:(\bullet \parr^{\pri_{24}} \bullet) \tensor^{\pri_{25}} (\bullet \parr^{\pri_{26}} \bullet) }
                \\
                \inferrule*[Right=$\sccsm{T-EndL}$]
                {
                    \inferrule*[Right=$\sccsm{T-EndL}$]
                    {
                        \inferrule*[Right=$\sccsm{T-Unit}$]
                        { }
                        { b_2:\bullet }
                    }
                    { g':\bullet, b_2:\bullet }
                }
                { u':\bullet, g':\bullet, b_2:\bullet }
            }
            { g:A_1, b_2:\bullet }
        }
        { g:A_1, k':\bullet, b_2:\bullet }
        \\
        {\scriptstyle \pri_{23} < \pri_2 }
    }
    { g:A_1, a_8:(\bullet \tensor^{\pri_{22}} \bullet) \parr^{\pri_{23}} \bullet }
\]
\[
    \pi_6 =
    \inferrule*[right=$\sccsm{T-App}$]
    {
        \pi_5
        \\
        \inferrule*[Right=$\sccsm{T-Send}$]
        {
            \inferrule*[Right=$\sccsm{T-Pair}$]
            {
                \inferrule*[Right=$\sccsm{T-EndR}$]
                { }
                { e_5:\bullet }
                \\
                \inferrule*[Right=$\sccsm{T-Var}$]
                { }
                { k:\ol{A_2}, g_3:A_2 }
                \\
                {\scriptstyle \pri_{29} < \pri_4 }
            }
            { k:\ol{A_2}, a_{11}:(\bullet \parr^{\pri_{27}} \bullet) \tensor^{\pri_{28}} (\bullet \parr^{\pri_{29}} \bullet) }
        }
        { k:\ol{A_2}, e_4:\bullet }
        \\
        {\scriptstyle \pri_{22} < \pri_4 }
    }
    { g:A_1, k:\ol{A_2}, g_1:\bullet }
\]
\[
    \inferrule*[right=$\sccsm{T-Main}$]
    {
        \inferrule*[Right=$\sccsm{T-Split}$]
        {
            \pi_1
            \\
            \inferrule*[Right=$\sccsm{T-Split}$]
            {
                \pi_2
                \\
                \inferrule*[Right=$\sccsm{T-Spawn}$]
                {
                    \inferrule*[Right=$\sccsm{T-Pair}$]
                    {
                        \pi_4
                        \\
                        \pi_6
                        \\
                        { \scriptstyle
                            \pri_{11},\pri_{13} < \pri_2,\pri_4
                        }
                    }
                    { f:\ol{A_1}, g:A_1, h:A_2, k:\ol{A_2}, a_3:(\bullet \parr^{\pri_{11}} \bullet) \tensor^{\pri_{12}} (\bullet \parr^{\pri_{13}} \bullet) }
                }
                { f:\ol{A_1}, g:A_1, h:A_2, k:\ol{A_2}, z:\bullet }
                \\
                { \scriptstyle
                    \pri_9 < \pri_2
                }
            }
            { f:\ol{A_1}, g:A_1, z:\bullet }
        }
        { z:\bullet }
    }
    { z:\bullet }
\]
In the end, we have the following priority requirements:
\begin{align*}
    \pri_5,\pri_7,\pri_9,\pri_{11},\pri_{13},\pri_{14},\pri_{21},\pri_{23} < \pri_2 < \pri_{25}
    \\
    \pri_8,\pri_{10},\pri_{11},\pri_{13},\pri_{15},\pri_{22},\pri_{29} < \pri_4 < \pri_{17}
\end{align*}
It is easy to see that these requirements are satisfiable, so $\term{C_1}$ is indeed deadlock-free.

\fi

\end{document}